\documentclass[aps,prx,twocolumn, notitlepage,longbibliography,superscriptaddress,nofootinbib]{revtex4-2}

\usepackage[margin=1in]{geometry}
\usepackage{amsmath,amssymb,amsthm,mathtools}
\usepackage{hyperref}
\usepackage{enumitem}
\usepackage{tikz}
\usepackage{tikz-cd}
\usepackage{booktabs}
\usepackage{array}
\usepackage{stmaryrd}
\usepackage{physics}
\usepackage[ruled,vlined,linesnumbered]{algorithm2e}

\theoremstyle{plain}
\newtheorem{theorem}{Theorem}[section]
\newtheorem{lemma}[theorem]{Lemma}
\newtheorem{proposition}[theorem]{Proposition}
\newtheorem{corollary}[theorem]{Corollary}
\theoremstyle{definition}
\newtheorem{definition}[theorem]{Definition}
\newtheorem{example}[theorem]{Example}
\newtheorem{remark}[theorem]{Remark}
\newtheorem{construction}[theorem]{Construction}
\newtheorem{assumption}[theorem]{Assumption}
\newcommand{\Fq}{\mathbb{F}_q}
\newcommand{\Fp}{\mathbb{F}_p}

\newcommand{\rk}{\mathrm{rk}}
\newcommand{\im}{\mathrm{Im}}

\usepackage[colorinlistoftodos, color=green!40, prependcaption]{todonotes}

\usepackage{pst-all}
\usepackage{leftidx}

\def\Hs{\mathsf{H}}

\def\Z{\mathbb{Z}}
\def\C{\mathcal{C}}

\def\B{\mathcal{B}}
\def\R{\mathcal{R}}
\def\N{\mathcal{N}}
\def\L{\mathcal{L}}
\def\T{\mathcal{T}}
\def\SS{\mathcal{S}}

\newcommand{\ZZ}{\mathbb{Z}}

\newcommand{\FF}
{\mathbb{F}}
\newcommand{\Fc}
{\mathcal{F}}
\newcommand{\I}{\mathbb{I}}
\newcommand{\CC}{\mathbb{C}}
\newcommand{\vc}[1]{\mathbf{#1}}

\newcommand{\bket}[2]{\langle \, #1 \,|\, #2 \, \rangle}
\newcommand{\boket}[3]{\langle\, #1 \,|\, #2 \,|\, #3 \,\rangle}

\newcommand{\be}{\begin{equation}}
\newcommand{\ee}{\end{equation}}
\newcommand{\bc}{\begin{center}}
\newcommand{\ec}{\end{center}}

\newcommand{\lo}{\overline}

\newcommand{\as}{\mathsf{a}}
\newcommand{\bs}{\mathsf{b}}
\newcommand{\cs}{\mathsf{c}}
\newcommand{\ds}{\mathsf{d}}

\newcommand{\fs}{\mathsf{f}}

\newcommand{\ts}{\mathsf{\tau}}

\newcommand{\cC}{\mathcal{C}}
\newcommand{\cF}{\mathcal{F}}

\newcommand{\cI}{\mathcal{I}}

\newcommand{\cS}{\mathcal{S}}\newcommand{\cT}{\mathcal{T}}
\newcommand{\cV}{\mathcal{V}}
\newcommand{\cX}{\mathcal{X}}
\newcommand{\cZ}{\mathcal{Z}}

\newcommand{\bvec}[1]{\boldsymbol{#1}}

\newcommand{\hv}{\bvec{h}}
\newcommand{\lv}{\bvec{\ell}}
\newcommand{\bone}{\bvec{1}}

\newcommand{\precdot}{\mathrel{\prec\!\cdot}}
\newcommand{\dotsucc}{\mathrel{\cdot\!\succ}}

\usepackage{graphicx}

\definecolor{mydarkgreen}{rgb}{0.0, 0.37, 0.1}

\renewcommand{\red}[1]{\textcolor{red}{#1}}
\renewcommand{\blue}[1]{\textcolor{blue}{#1}}
\renewcommand{\green}[1]{\textcolor{mydarkgreen}{#1}}

\newcommand{\rd}{\textcolor{red}{\texttt{r}}}
\newcommand{\bl}{\textcolor{blue}{\texttt{b}}}
\newcommand{\gr}{\textcolor{mydarkgreen}{\texttt{g}}}

\usetikzlibrary{arrows.meta}
\makeatletter
\newlength\lrvec@height
\newlength\lrvec@width
\newif\iflrvec@same@height
\def\lrvec{\@ifstar\slrvec@\lrvec@}
\newcommand{\slrvec@}[2][.4ex]{
  \lrvec@same@heighttrue
  \mathpalette\lrvec@@{{#1}{#2}}
}
\newcommand{\lrvec@}[2][.4ex]{
  \lrvec@same@heightfalse
  \mathpalette\lrvec@@{{#1}{#2}}
}
\def\lrvec@@#1#2{\lrvec@@@#1#2}
\def\lrvec@@@#1#2#3{%
  \iflrvec@same@height
    \settoheight{\lrvec@height}{$\m@th#1 \mathbf{T}#3$}
  \else
    \settoheight{\lrvec@height}{$\m@th#1#3$}
  \fi
  \settowidth{\lrvec@width}{$\m@th#1#3$}
  \kern.08em
  \raisebox{#2}{\raisebox{\lrvec@height}{\rlap{%
    \kern-.05em
    \begin{tikzpicture}[<-> /.tip={To[width=.4em, length=.2em]}]
      \draw [<->] (-.05em,0)--(\lrvec@width+.05em,0);
    \end{tikzpicture}%
  }}}%
  #3
  \kern.08em
}
\makeatother

\usepackage[bottom]{footmisc}

\begin{document}

\title{Sipser-Spielman meets Dijkgraaf-Witten: non-Abelian qLDPC codes via twisted sheaf gauge theory and almost-constant-overhead magic state fountain}
% \author{Guanyu Zhu, Louis Golowich, Shi Jie Samuel Tan, Ryohei Kobayashi, Po-Shen Hsin}

\author{Guanyu Zhu}
\affiliation{IBM Quantum, IBM T.J. Watson Research Center, Yorktown Heights, NY 10598 USA}

\author{Shi Jie Samuel Tan}
\affiliation{IBM Quantum, IBM T.J. Watson Research Center, Yorktown Heights, NY 10598 USA}
\affiliation{Joint Center for Quantum Information and Computer Science, University of Maryland, College Park, Maryland 20742, USA}

\author{Ryohei Kobayashi}
\affiliation{Department of Applied Physics, University of Tokyo, Bunkyo-ku, Tokyo 113-8656, Japan}

\author{Po-Shen Hsin}
\affiliation{Department of Mathematics, King’s College London, Strand, London WC2R 2LS, UK}

\begin{abstract}
Sipser-Spielman codes and Dijkgraaf-Witten twisted gauge theories are among the most profound ideas in the last few decades in the areas of computer science and mathematical physics respectively.  This work unifies them in the same framework using the language of sheaf cohomology, and produces a new framework of twisted sheaf guage theories describing a class of quantum low-density parity-check (qLDPC) codes with non-Abelian $D_4$ topological order. 
The corresponding twisted qLDPC code can be obtained from gauging a new type of 0-form sub-complex $\Z_2^3$ symmetry from a qLDPC code defined on a sheaf complex, and can be interpreted as a topological defect network of non-Abelain $D_4$ patches glued together with proper gapped interfaces.  As an application, one can use this to realize a \textit{magic state fountain} via the gauging measurement of the  addressable logical CZ gates as 0-form subcomplex symmetries in a 2D hypergraph-product on a sheaf complex.  This includes a scheme of subdividing an arbitrary constant-rate 2D hypergraph-product code with parameter $[[n,\Theta(n), \Omega(n^{1/2})]]$ into a quantum sheaf code which allows preparation of $\Theta(n^{1/2})$ in parallel, equivalent to the recent geometric construction using the code-to-manifold mapping in \cite{ZhuKobayashiHsin2026}.  Moreover, using the recent sheaf complex and algebraic code constructions by Golowich-Tamo-Zhu \cite{GolowichTamoZhu2026} with parameter $[[n,\Theta(n^{1-\epsilon}), \Omega(n^{(1-\epsilon)/2})]]$ for arbitrary small $\epsilon$, one can prepare $\Theta(n^{1-\epsilon})$ CZ magic states in parallel and hence achieve an almost-constant magic rate.    

\end{abstract}

\maketitle

\tableofcontents
\newpage

%======================================================================
\section{Introduction}
\label{sec:introduction}
%======================================================================

The rapid development of quantum computing and information science in recent years have stimulated the interaction between different scientific fields including quantum physics, computer science and mathematics.  One of the most influential ideas in theoretical computer science and coding theory in the past few decades is the notion of Sipser-Spielman code \cite{Sipser_Spielman}, which has been widely applied to the study of classical and quantum low-density parity-check (LDPC) codes, as well as classical and quantum complexity theory.  Meanwhile,   another  profound idea in the realm of mathematical physics is the Dijkgraaf-Witten twisted gauge theory \cite{Dijkgraaf:1989pz}, which has huge impact in the fields of topological quantum field theory (TQFT) and topological phases of matter.  In this work, we unify these two ideas in the mathematical language of sheaf cohomology, inspired by the recent development in quantum sheaf codes \cite{PanteleevKalachev2024, pkldpc22, lin2024transversal}.   This leads to the discovery of a new type of twisted sheaf gauge theory and the corresponding quantum low-density parity-check (qLDPC) codes with non-Abelian $D_4$ topological order.

One of the underlying motivations behind this work is the quest of low-overhead fault-tolerant quantum computation.  In recent years, a sustained
effort on qLDPC codes
\cite{fiberbundlecode21, hastingswr21, pkldpc22, Quantum_tanner, lh22,
guefficient22, dhlv23, lzdecoding23, gusingleshot23, Bravyi:2024wc} produced
asymptotically good families~\cite{pkldpc22}, which achieves constant encoding rate and linear distance and thereby bring the space overhead of a
quantum memory down to a constant. Meanwhile, ongoing efforts have been made to minimize the spacetime-overhead of fault-tolerant logical gates on the encoded quantum memory.

A large amount of recent studies investigate the logical Clifford gates, through
code surgery (see e.g.~\cite{cohen22, cross2024improved, williamson2024low,
swaroop2024universal}), homomorphic
measurement~\cite{huang2023homomorphic, xu2024fast}, fold-transversal
constructions~\cite{breuckmann2022fold}, and related techniques. Clifford gates
are not universal on their own, and supplying the missing non-Clifford resource
by magic state distillation reintroduces precisely the overhead that a high-rate
code was chosen to eliminate; a \emph{native} logical non-Clifford operation
would remove that cost at its source. A second difficulty is peculiar to the
high-rate setting. A single qLDPC block carries many logical qubits, so a gate
acting on all of them at once is of limited computational use: what one needs are
gates that are \emph{addressable}, acting on a chosen small set of logical
qubits, and \emph{parallelizable}, so that many disjoint gates can be applied in
one shot. In contrast, topological codes, holding one or a few logical qubits per patch, never
have to confront this.

The established route to a native non-Clifford gate is to raise the dimension. A
transversal gate at the $\ell^\text{th}$ level of the Clifford hierarchy is governed
by an $\ell$-fold cup product and therefore calls for a chain complex with $\ell$
degrees available above the qubits; accordingly, the recent constructions of
transversal non-Clifford gates on qLDPC codes are built on 3D product construction, i.e.\ on
4-term chain complexes (see \cite{zhu2023non, scruby2024quantum, golowich2024quantum,
lin2024transversal, breuckmann2024cups, Hsin2024:classifying, zhu2025topological,
zhu2025transversal, menon2025magic, golowich2025constant,
tan2025single, gulshen2025quantum, li2025poincar,  GolowichTamoZhu2026} etc.). Within this family,
Refs.~\cite{zhu2023non, zhu2025topological, zhu2025transversal} introduced the notion of 
`\textit{magic state fountain}', in which a transversal non-Clifford gate
fault-tolerantly prepares a large number of magic states in parallel in a single
shot and hence transfers the high encoding rate of the qLDPC code to a high magic rate.  

The dimension requirement is not an artifact of these particular
constructions: it echoes the Bravyi-K\"onig bound~\cite{Bravyi:2013dx}, which for
topological codes forces the spatial dimension to be at least $n$ for a
transversal gate at the $n^\text{th}$ level of the hierarchy, and analogues have
been sought for $n$-dimensional hypergraph-product codes as
well~\cite{fu2025no}. The extra dimension is paid for in connectivity: a 3D
product code demands longer-range couplings than a 2D one once laid out on a
planar chip, which is the binding constraint for most architectures, and the
resulting tension between universality and geometric locality is a basic open
question. Following Ref.~\cite{ZhuKobayashiHsin2026}, this paper shows that the
third dimension can be dispensed with --- provided one is willing to leave behind
transversal gates and Pauli stabilizer codes.

The ingredient that makes this possible has been available from the very
beginning, but has seen little use in the qLDPC context. In the same work that
introduced the toric code more than two decades ago, Kitaev also wrote down the
quantum double models~\cite{kitaev2003}: a large class of 2D \emph{non-Abelian}
topological codes whose anyons support universal quantum computation by braiding
alone. They achieve universality in two dimensions not by adding a third, but by
giving up the Abelian --- that is, Pauli stabilizer --- structure. Whether the
same bargain is available to qLDPC codes, which would pair non-Abelian
topological order with the low space overhead and large distance of an
expander-based construction, is the question that motivates this work and its predecessor \cite{ZhuKobayashiHsin2026}.

Several fundamental and technical barriers exist along the way. First of all,
non-Abelian codes are always associated with an underlying topological quantum
field theory (TQFT). However, a TQFT is typically defined on a manifold, with its
combinatorial version on the triangulation of a manifold. Moreover, the path
integral of a TQFT, also called a state sum (e.g.~\cite{walker2021universalstatesum}),
is a topological invariant that is only determined by the manifold topology and is
invariant under re-triangulation, or equivalently under local geometric
deformation. Some well-known examples of such manifold invariants include the
Turaev-Viro invariants~\cite{Turaev:1992hq} and the Dijkgraaf-Witten
invariants~\cite{Dijkgraaf:1989pz}. On the other hand, a qLDPC code is often
defined on a general chain complex based on expander graphs in order to boost the
rate and distance. The general chain complex is typically \emph{not} a
triangulation or a usual simplicial complex: its 1-cells are hyperedges and its 2-cells
are hyperfaces, and the cochain-level operations --- cup products, cap products,
Poincar\'e duality --- on which a twisted (non-Abelian) gauge theory is built simply do not exist there. 

The first construction of non-Abelian qLDPC codes, obtained in
Ref.~\cite{ZhuKobayashiHsin2026}, circumvents these barriers by \emph{leaving}
the chain complex of the code: a skeleton classical code is first mapped to a
high-dimensional manifold by a generalized~\cite{zhu2025topological} version of
the Freedman-Hastings mapping~\cite{freedman:2020_manifold_from_code}, then
deformation-retracted to a Poincar\'e CW complex, and only afterwards is the
twisted gauge theory of Ref.~\cite{Hsin2024_non-Abelian} written down, using the
cup product that the CW structure supplies; the resulting Clifford-stabilizer code and path integral
are finally pulled back to the 2D skeleton.  Another related work in Ref.~\cite{christos2026non} later with similar but different Clifford-stabilizer models  applies to the  non-Abelian fracton models defined on 3D geometric local lattice or graph/hypergraph gauging which generalizes the lattice surgery approach, while Ref.~\cite{ZhuKobayashiHsin2026} applies to expander-based qLDPC codes beyond geometric locality \footnote{The difference of the construction in Ref.~\cite{ZhuKobayashiHsin2026} and the present paper from the graph/hypergraph gauging in Ref.~\cite{christos2026non} is that the twisted (non-Abelian) and untwisted (Abelian) qLDPC codes constructed here share the same underlying complex and hence have similar code parameter scaling.}.   

In this work we take the opposite route, and obtain the cup product
\emph{intrinsically}, from the code itself. The observation that makes this
possible is due to Meshulam~\cite{Meshulam:2018}: a Sipser-Spielman (Tanner)
code~\cite{Sipser_Spielman} on a graph $G$ with a system of local codes is nothing but the first homology $H_1(G,\Fc)$ of a
\emph{sheaf} $\Fc$ --- a local coefficient system whose stalks are the local
check spaces.
We then show (Sec.~\ref{sec:classical}) that an \emph{arbitrary} classical code,
given by a parity-check matrix with no zero rows or columns and hence by a hypergraph rather than a
graph, can be \emph{subdivided} into a sheaf code on an honest graph, by splitting
each hyperedge into ordinary edges meeting at a new vertex that carries a
repetition local code (Fig.~\ref{fig:subdivision}). The subdivision preserves the
code dimension exactly and does not decrease the distance. The point is that the
subdivided object is a genuine one-dimensional simplicial complex with local
coefficients, so products of such objects are square complexes on which cup
products, cap products and the chain-cochain pairing are all defined at the
cochain level. 

More generally, a wide range of qLDPC codes correspond to quantum sheaf codes defined on a simplicial or more generally cell complex \cite{PanteleevKalachev2024} with sheaf structure, including the hypergraph product of classical Sipser-Spielman codes and the asymptotically good qLDPC code \cite{pkldpc22}.   The cohomology operations including cup products are well defined on such sheaf complexes and the twisted gauge theory can be well-defined. 

Equipped with this, we construct in Sec.~\ref{sec:non-Abelian_LDPC} the spacetime
path integral of a twisted $\ZZ_2^3$ \emph{sheaf gauge theory} on an arbitrary sheaf complex, corresponding to a twisted (non-Abelian) qLDPC codes. Its action
consists of three sheaf BF terms together with a type-III Dijkgraaf-Witten twist
$\pi\int_{\tilde\eta_3} a^1 \cup b^1 \cup c^1$ supported on a distinguished cycle
$\tilde\eta_3$, and we show that the resulting path integral is a cohomology
invariant, i.e.\ gauge invariant. The corresponding code is a \emph{Clifford}
stabilizer code~\cite{Hsin2024_non-Abelian, Davydova:2025ylx} rather than a Pauli
stabilizer code: its $X$-type generators are dressed by CZ gates
(Fig.~\ref{fig:dressed_stabilizer}), and they commute only on the zero-flux
subspace. We derive it in two independent ways --- by gauging the $0$-form
subcomplex symmetry of a sheaf cluster state and its associated
sheaf symmetry-protected topological (SPT) phase, and, in
Sec.~\ref{sec:gauging_untwisted}, by gauging the 0-form CZ symmetry directly in the
untwisted qLDPC code. We also observe that the gapped interfaces along which the
several faces of a branched square complex meet realize a \emph{topological defect
network}~\cite{Aasen2020} 
(Figs.~\ref{fig:gapped_boundary_picture} and~\ref{fig:gapped_boundary_dual}),
which gives a physical interpretation of the sheaf gauge theory.

A crucial question when generalizing gauging measurement to qLDPC codes which allows preparation of logical magic states is
addressability and parallelizability. There exists a folklore (see
e.g.~Refs.~\cite{JochymOConnor:2021ih, lin2024transversal, he2025quantum}) in the
quantum error correction community that in order to get an addressable (targeted)
transversal $\text{C}^{n-1}\text{Z}$ gate, one needs to conjugate a logical $X$
with a global logical $\text{C}^{n}\text{Z}$ gate, which would suggest that an
addressable logical CZ requires a 3D product code with a transversal CCZ.
Surprisingly, this statement is true only for topological codes defined on 
manifolds. A 2D manifold has a unique top-dimensional cycle, but a general 2D
chain complex has a large number of them: a single edge can be adjacent to many
faces (a `book-like' structure, Fig.~\ref{fig:gauging_region}), so there are many
inequivalent 2-cycles $\eta_2$ available. Each supports a symmetry operator, and
we call the resulting symmetries \textit{$0$-form subcomplex symmetries}: they act
on a codimension-$0$ subcomplex, rather than on a subdimensional one as for the
higher-form ($k$-form) symmetries responsible for addressable transversal gates in
the manifold case~\cite{zhu2023non, Hsin2024:classifying, ZhuKobayashiHsin2026}. In
the sheaf complex $\L$, the support of the transversal CZ labeled by
a sheaf $0$-cocycle $\rho^0$ is the \emph{capped} $2$-chain
$\eta_2^{\rho} = \rho^0 \frown \eta_2$ obtained from the cap product
(Fig.~\ref{fig:subcomplex_illustration_nested}). The number of
independently addressable transversal logical CZ gates is then
$\dim H^0(\L,\Fc)$, which in general grows with the system size.

These operators are what the protocol of Sec.~\ref{sec:logical_operation} measures.
Generalizing the gauging measurement of Ref.~\cite{Davydova:2025ylx} from
topological codes to sheaf qLDPC codes, we start from two decoupled untwisted
hypergraph-product sheaf codes, measure all dressed $X$-stabilizers of the third
colour for $O(d)$ rounds --- a single round of \emph{commuting}, weight-$O(1)$
checks, which is the whole point of gauging: a global transversal gate is
factorized into local Gauss's law operators --- and then ungauge. The product of
the outcomes along any basis $0$-cocycle returns the eigenvalue of the
corresponding transversal logical CZ, so all of the addressable logical CZ
measurements are obtained simultaneously, ancilla-free and without post-selection
coupling different pairs. With the initialization read off from the hypergraph of
the gate structure (Fig.~\ref{fig:interaction-hypergraph}), this realizes the
magic state fountain. The entire logical action is then rederived from first
principles in Sec.~\ref{sec:spacetime_path_integral} using the spacetime path
integral, in which the protocol appears as a slab of twisted theory bounded by two
spacetime domain walls (Fig.~\ref{fig:protocol}). The path integral is a powerful
formalism for studying fault-tolerant logical gates; its connection to error
correction in the context of topological codes was first established in
Refs.~\cite{Bauer:2023awl, Bauer:2024qpc, Bauer:2024alh} and applied to the
gauging measurement protocol for non-Abelian topological codes in
Ref.~\cite{Davydova:2025ylx}.

Two further generalizations occupy the second half of the paper. First, in
Sec.~\ref{sec:Fq_generalization} we lift the entire construction --- path
integral, twisted stabilizers, subcomplex symmetry and gauging --- from $\FF_2$ to
an arbitrary finite field $\Fq$ with $q=p^u$, using the trace pairing
$\mathrm{Tr}_{\Fq/\Fp}$ to define the generalized Pauli operators and the
$\Fq$-valued cup product. We emphasize that the resulting gauge theory is
\emph{not} a $\ZZ_q$ gauge theory: the additive group of $\Fq$ is
$(\ZZ_p)^u$, and what the field structure contributes is the Frobenius
multiplication tensor $T_{ijk}=\mathrm{Tr}(\theta_i\theta_j\theta_k)$ that
controls which of the $u^3$ component CZ gates appear in the twist. Second, in
Sec.~\ref{sec:high_yield} we address the rate at which magic states are produced.
Measuring the usefulness of the fountain  by the \emph{magic rate} $\varrho=s/n$ per physical qubit ($s$ being the disjoint magic
state count), the
construction of Sec.~\ref{sec:code_instantiation} gives only
$\varrho=\Theta(n^{-1/2})$, because one tensor factor of the sheaf is constant.
Removing that factor requires classical LDPC codes whose local codes have a
genuine multiplication property, i.e.\ local codes closed in the appropriate sense
under the Schur product. Such codes have recently been constructed by Golowich,
Tamo and Zhu~\cite{GolowichTamoZhu2026}, over an alphabet of large constant size
--- which is precisely why the $\Fq$ generalization was needed. Feeding them into
our construction yields an \emph{almost constant} magic rate,
$\varrho \ge n^{-\epsilon}$ for arbitrarily small $\epsilon>0$. 

Besides the application to the magic state fountain, exploring non-Abelian qLDPC
codes broadens the landscape of highly entangled quantum matter. qLDPC codes have
a deep connection to quantum complexity theory and provide a stronger form of
topologically ordered state: all states below a certain energy density are
non-trivial, i.e.\ there are no low-energy trivial states (NLTS), as conjectured in
Ref.~\cite{Freedman:2013zfj} and proven in Ref.~\cite{Anshu:2022hsn}. In contrast,
most topological codes only have non-trivial topological order in the ground
states. Recently, a ``no low-energy trivial magic'' (NLTM) conjecture was also
proposed in Ref.~\cite{wei2025long}, which is a stronger necessary condition for
quantum PCP than NLTS. The Clifford stabilizer qLDPC codes constructed here
naturally have non-trivial magic in their ground states due to the
non-stabilizerness of those states~\cite{parham2025quantum, wei2025long}. In
addition, the low-energy states of qLDPC codes give rise to topological quantum
spin glasses~\cite{Placke:2024wey}, where a constant rate gives rise to an
exponential number of stable, topologically ordered Gibbs states that persist at
finite temperature --- a quantum analogue of the classical spin glasses with
complex free-energy landscapes that provide a mechanism for long-lived
memory~\cite{ANDERSON1970549, PhysRevLett.43.1754, Binder:1986zz}. Studying the
non-Abelian generalization of these phases of matter is an interesting future
direction.

\subsection{Summary of results}
\label{sec:summary_of_results}

We summarize the main results and the flow of the paper as follows. The statements
below are informal; precise versions are given in the indicated places.

\smallskip
\noindent\textit{(i) An intrinsic simplicial model for an arbitrary classical
code.} Sections~\ref{sec:classical} and~\ref{sec:prelim} set up the sheaf
formalism. Propositions~\ref{Prop:sheaf_homology} and~\ref{Prop:sheaf_cohomology}
record Meshulam's dictionary, $\cT(G,\{\cC_v\})=H_1(G,\Fc)$ and
$\cT^*(G,\{\cC_v\})=H^0(G,\Fc)$, in the form we use throughout: a
\emph{global} homological condition is a \emph{local} codeword condition, one per
vertex. Subdivision then removes the restriction to graph-based codes.

\begin{theorem}[Subdivision into a sheaf code; informal version of
Proposition~\ref{prop:subdivision-params}]
\label{thm:informal_subdivision}
Let $\Hs\in\FF_2^{m\times n}$ have rank $r$ and no zero
rows or columns, and let $\C=\ker\Hs$.
Write $w_j=w_c(v'_j)$ for the weight of column $j$.
Replacing bit $j$ by $w_j$ incident edges, with repetition
constraints at the original bit vertices and SPC constraints
at the original check vertices, produces a sheaf code on a graph
satisfying
\[
H_1(G,\Fc)\cong\C,\qquad
H_1(G,\Fc^\perp)\cong\ker\Hs^T.
\]
Their dimensions are $n-r$ and $m-r$, respectively.
For nonzero $\C$, the subdivided edge-cycle distance is
\[
d_{\mathrm{sbd}}:=d_1(\Fc)
=\min_{0\ne x\in\C}\sum_{j:x_j=1}w_j
\ge w_c^{\min}d(\C),
\]
where $w_c^{\min}=\min_j w_j$.
The transpose construction uses the same graph with the
two local codes interchanged.
\end{theorem}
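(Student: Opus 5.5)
The plan is to make the sheaf chain complex explicit and reduce everything to linear algebra at each vertex, using the dictionary of Proposition~\ref{Prop:sheaf_homology}. The subdivided graph $G$ has one vertex $u_j$ for each bit $j$ and one vertex $c_i$ for each check $i$. It has one edge $e_{ij}$ for each nonzero entry $\Hs_{ij}=1$, so $G$ is exactly the Tanner graph of $\Hs$. Each edge carries a copy of $\FF_2$, and the local code at $u_j$ is the repetition code $\mathrm{Rep}_{w_j}\subset\FF_2^{w_j}$ on its $w_j$ incident edges. The local code at $c_i$ is the single-parity-check code $\mathrm{SPC}$ on its incident edges. By Proposition~\ref{Prop:sheaf_homology}, $H_1(G,\Fc)$ is the Tanner code: the set of edge vectors $y\in\FF_2^{E}$ whose restriction to every vertex neighbourhood lies in the local code. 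There are no 2-cells, so no boundaries need to be quotiented out.

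\textbf{Identification with $\C$.} The repetition constraint at $u_j$ forces $y_{e_{ij}}=x_j$ for a single bit $x_j$ on all edges at $u_j$. The SPC constraint at $c_i$ then reads $\sum_{j:\Hs_{ij}=1}x_j=0$, i.e.\ $(\Hs x)_i=0$. Hence the map $x\mapsto y$ with $y_{e_{ij}}=x_j$ is a linear bijection $\C\to H_1(G,\Fc)$. It is injective because every column is nonzero, so each $u_j$ has at least one edge. This gives $\dim H_1(G,\Fc)=n-r$.

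\textbf{The dual sheaf.} For $\Fc^\perp$ the local codes are replaced by their duals. The dual of $\mathrm{Rep}_{w_j}$ is the even-weight code at $u_j$, and the dual of $\mathrm{SPC}$ is the repetition code at $c_i$. The same argument with the roles of the two vertex sets swapped then applies. The repetition constraint at $c_i$ assigns a value $z_i$ to every edge at $c_i$; this step uses that $\Hs$ has no zero rows. The even-weight constraint at $u_j$ becomes $\sum_i \Hs_{ij}z_i=0$, i.e.\ $\Hs^T z=0$. So $H_1(G,\Fc^\perp)\cong\ker\Hs^T$, of dimension $m-r$. This is literally the transpose construction on the same graph with the two local codes interchanged, which proves the final sentence of the statement.

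\textbf{Distance.} Under the bijection $x\mapsto y$, the Hamming weight of $y$ counts the edges $e_{ij}$ with $x_j=1$. Each such bit contributes exactly $w_j$ edges, so $|y|=\sum_{j:x_j=1}w_j$. Minimizing over nonzero $x\in\C$ gives the stated formula for $d_{\mathrm{sbd}}=d_1(\Fc)$. The lower bound follows by estimating each term by $w_c^{\min}$: this gives $|y|\ge w_c^{\min}|x|\ge w_c^{\min}d(\C)$.

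The only step that needs care is confirming that $d_1(\Fc)$ is measured as plain edge weight of 1-cycles, with no quotient by boundaries. I expect this is the intended convention for a graph, since $C_2=0$. No step is a genuine obstacle; the bookkeeping reduces to using the nonzero-column and nonzero-row hypotheses to guarantee that the identifications are bijective.
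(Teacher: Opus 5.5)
Your proposal is correct and follows essentially the same route as the paper's proof of Proposition~\ref{prop:subdivision-params}. Both proofs identify $G$ with the Tanner graph and note that $H_1=Z_1$ because there are no $2$-cells. Both use the copy maps $x\mapsto(x_j)_{(i,j)}$ and $z\mapsto(z_i)_{(i,j)}$, inverting them by reading off the common repetition value, which is well defined because there are no zero columns or zero rows. Both then obtain the weight $\sum_{j:x_j=1}w_j$ directly from the copy map.
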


\smallskip
\noindent\textit{(ii) Twisted sheaf gauge theory and non-Abelian qLDPC codes.}
Section~\ref{sec:untwisted_code} builds the untwisted hypergraph-product sheaf
code, whose $X$- and $Z$-stabilizers are the columns of $d^0$ and $d^1$
(Fig.~\ref{fig:square_complex}), and observes that the branched structure of the
complex is a topological defect network of surface-code patches
(Figs.~\ref{fig:gapped_boundary_picture} and~\ref{fig:gapped_boundary_dual}).
Section~\ref{sec:non-Abelian_LDPC} then introduces the twist.

\begin{theorem}[Twisted sheaf gauge theory; informal version of
Lemma~\ref{lemma:invariant} and Sec.~\ref{sec:Clifford_stabilizer_codes}]
\label{thm:informal_twisted}
On the spacetime sheaf complex $\tilde\L$ of a hypergraph-product sheaf code,
the path integral of the twisted $\ZZ_2^3$ sheaf gauge theory
\be
\cZ[\tilde\L,\tilde\eta_3]=\sum_{\red{a^1},\blue{b^1},\green{c^1}\in H^1(\tilde\L)}
(-1)^{\int_{\tilde\eta_3}\red{a^1}\cup\blue{b^1}\cup\green{c^1}}
\ee
is a cohomology invariant, i.e.\ invariant under the gauge transformations
$\red{a^1} \mapsto \red{a^1}+d\eta^0$ and likewise for $\blue{b^1}$, $\green{c^1}$. Gauging the $0$-form
subcomplex symmetry of the corresponding sheaf cluster state and SPT --- or,
equivalently, of the untwisted qLDPC code itself
(Sec.~\ref{sec:gauging_untwisted}) --- produces a \emph{Clifford} stabilizer
qLDPC code $\tilde\C$ whose $X$-type generators $\tilde A_{v,s}$ are dressed by
CZ gates acting on the two other colours, with the dressing pattern determined by
the triple cup product and the choice of $2$-cycle $\eta_2$. The generators
commute on the zero-flux subspace, and the resulting code carries non-Abelian
topological order.
\end{theorem}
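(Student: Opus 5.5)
The plan is to reduce the cohomology invariance to a cocycle identity on the chain level and then use that $\tilde\eta_3$ is a cycle. For the gauge transformation $\red{a^1}\mapsto\red{a^1}+d\lambda^0$ with $\red{a^1},\blue{b^1},\green{c^1}$ sheaf $1$-cocycles, the phase changes by $(-1)^{\int_{\tilde\eta_3} d\lambda^0\cup\blue{b^1}\cup\green{c^1}}$. The cup product on the product (square/cube) sheaf complex comes from the local Schur-product/multiplication structure of the stalks, so the first step is to record the graded Leibniz rule
\[
d(\alpha^p\cup\beta^q)=d\alpha^p\cup\beta^q+(-1)^p\alpha^p\cup d\beta^q
\]
at the cochain level on $\tilde\L$. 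On each tensor factor (a subdivided graph with repetition and SPC local codes) the cup product is the standard simplicial one with coefficientwise multiplication, and the product complex inherits it through the Eilenberg--Zilber/Alexander--Whitney compatibility. Given Leibniz and $d\blue{b^1}=d\green{c^1}=0$, we get $d\lambda^0\cup\blue{b^1}\cup\green{c^1}=d(\lambda^0\cup\blue{b^1}\cup\green{c^1})$.

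Next I pair with the distinguished chain: $\int_{\tilde\eta_3} d(\cdot)=\int_{\partial\tilde\eta_3}(\cdot)$, by the chain--cochain adjointness of $d$ and $\partial$. This requires $\tilde\eta_3$ to be a cycle for the \emph{dual} coefficient system. In other words, its coefficients must annihilate the image of $d$ under the evaluation pairing, which is exactly the condition defining $\tilde\eta_3$ as a distinguished $3$-cycle. The same argument handles $\blue{b^1}$ and $\green{c^1}$, with the sign $(-1)^1$, which is irrelevant mod 2 anyway. By induction, or simply by applying the three shifts sequentially, the summand depends only on the classes in $H^1(\tilde\L)$, so $\cZ$ is well defined.

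For the stabilizer statement, I would derive $\tilde A_{v,s}$ from the gauge transformation at a single vertex-stalk basis element $\lambda^0=\delta_{v,s}$. The transformation shifts $\red{a^1}$ by $d\delta_{v,s}$, which is the bare $X$ part. The induced phase $(-1)^{\int_{\tilde\eta_3}d\delta_{v,s}\cup \blue{b}\cup\green{c}}$, restricted to a time slice, becomes a quadratic form in $\blue{b},\green{c}$ supported on the capped chain $(d\delta_{v,s})\frown\eta_2$; this form is exactly the CZ dressing. The same construction arises from gauging the $0$-form subcomplex CZ symmetry of the sheaf cluster state. To show commutation on the zero-flux subspace, compute the group commutator of two dressed generators: it is a product of $Z$'s equal to $(-1)^{\int(d\delta\cup d\delta'\cup \cdot)}$-type terms, and these vanish exactly when the fluxes $d\blue{b},d\green{c}$ are zero. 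Finally, for the non-Abelian order, I would identify the stabilizer group on each face patch with the $D_4$ (type-III twisted $\ZZ_2^3$) quantum double. Equivalently, I would check that a magnetic operator acquires a CZ-dressed, non-Pauli string, giving quantum dimension $2$.

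The main obstacle is the Leibniz rule for sheaf-valued cup products on a branched product complex. The coefficients live in different stalks, so the product needs a multiplication map $\Fc_\sigma\otimes\Fc_\tau\otimes\Fc_\rho\to\FF_2$ compatible with the restriction maps. I would first verify this for the repetition/SPC local codes, where the relevant products are coordinatewise and the repetition code is multiplicatively closed. I would then argue that the failure terms are killed after pairing with $\tilde\eta_3$, since it lies in the dual sheaf's cycles.
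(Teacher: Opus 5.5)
Your gauge-invariance argument is the paper's: the Leibniz rule plus $\langle d\omega,\tilde\eta_3\rangle=\langle\omega,\partial\tilde\eta_3\rangle=0$. However, the ``main obstacle'' you identify is not one, and your fallback for it would not work. In the paper the cup product of Definition~\ref{def:cup-simplicial} lands in the tensor sheaf $\Fc^{\rd}\otimes\Fc^{\bl}\otimes\Fc^{\gr}$, so no multiplication map on stalks is needed. The Leibniz rule (Proposition~\ref{prop:cup-properties}) then follows from functoriality of the co-restriction maps alone, for arbitrary local codes. Also, $\tilde\eta_3$ is a cycle of that tensor sheaf itself, not of a dual-code sheaf $\Fc^\perp$. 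The Schur-product/multiplication property enters only in whether a non-zero such cycle exists (Lemma~\ref{lemma:local_cycle}, Proposition~\ref{prop:multiplication_property_equivalence}), i.e.\ in non-triviality, not in invariance. Finally, if Leibniz did fail, pairing with a cycle would not remove the failure terms, since a cycle annihilates only exact cochains.

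The genuine gap is in the stabilizer half. You take the CZ dressing to be the gauge-variation phase $(-1)^{\int_{\tilde\eta_3}d\delta_{v,s}\cup b\cup c}$, supported on $(d\delta_{v,s})\frown\eta_2$. This cannot be right. By the invariance you just proved, $d\delta_{v,s}\cup b\cup c=d(\delta_{v,s}\cup b\cup c)$ for flat $b,c$, so this phase is identically trivial. The degrees also do not match on a slice: $d\delta_{v,s}\cup b\cup c$ is a $3$-cochain, and $\eta_2\frown d\delta_{v,s}$ is a $1$-chain, which cannot carry the $2$-cochain $b\cup c$.

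The paper's dressing is $(-1)^{\int_{\eta_2}\bar v_s\cup\hat b^1\cup\hat c^1}$, with the \emph{undifferentiated} elementary $0$-cochain and support $\eta_2\frown\bar v_s$. It comes from conjugating $\tilde X^{\rd}_{v,s}$ through the entangler $U=(-1)^{\int_{\eta_2}\hat\lambda^0_{\rd}\cup d\hat\lambda^0_{\bl}\cup d\hat\lambda^0_{\gr}}$, which shifts the undifferentiated slot. This is followed by minimal coupling $d\hat\lambda^0\to\hat a^1$ and the Gauss law $\tilde X^{\rd}_{v,s}=A^{\rd}_{v,s}$ on $\mathcal H_{\rm phys}$. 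In path-integral language it is the coupling of the twist to the temporal component of the gauge field, which enters as $\lambda^0$, not $d\lambda^0$.

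Your commutator ``$d\delta\cup d\delta'\cup\cdot$'' inherits the error. The correct computation (Lemma~\ref{lemma:stabilizer_commutation}) first uses $\bar u_s\cup d\bar v_r+d\bar u_s\cup\bar v_r=d(\bar u_s\cup\bar v_r)$. It then applies Leibniz together with $\partial\eta_2=0$ a second time to move the $d$ onto $\hat c^1$, giving $(-1)^{\int_{\eta_2}(\bar u_s\cup\bar v_r)\cup d\hat c^1}$. This is a product of $B^{\gr}_f$ because the top-cell stalk is a scalar, so only the third colour's flux appears, and same-colour dressed generators commute exactly. Your heuristic for non-Abelian order is at the same level as the paper's, which argues via projector-dressed magnetic operators by analogy with the manifold case.
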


\smallskip
\noindent\textit{(iii) $0$-form subcomplex symmetry and addressable transversal
logical CZ.} Section~\ref{sec:0form_subcomplex} is the conceptual centre of the
paper: it identifies a single family of transversal operators that is a logical
gate in the untwisted code and a product of dressed stabilizers in the twisted one.

\begin{theorem}[Subcomplex symmetry, cap product and addressability; informal
version of Secs.~\ref{sec:transversal_CZ}--\ref{sec:charge_parity_operator}]
\label{thm:informal_subcomplex}
For every sheaf $0$-cocycle $\rho^0\in H^0(\L,\Fc^{\gr})$ there is a transversal,
depth-one circuit of two-qubit CZ gates $\widetilde{\mathrm{CZ}}^{\rd,\bl}(\rho^0)$
supported on the capped codimension-$0$ subcomplex
$\eta_2^{\rho}=\eta_2\frown\rho^0$. In the untwisted code $\C$ it implements a
pattern $\Lambda^{\alpha\beta}(\rho^0)$ of \emph{addressable} logical CZ gates
between the \rd\ and \bl\ copies; in the twisted code $\tilde\C$ the very same
operator is the \gr-type charge parity operator, a product of twisted
$X$-stabilizers, and therefore acts as the identity. The number of independently
addressable such gates is $\dim H^0(\L,\Fc^{\gr})$, which grows with the system
size. No transversal CCZ gate is used, which gives a
counterexample to the folklore that addressable logical CZ requires a 3D product
code.
\end{theorem}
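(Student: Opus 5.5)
The plan has three parts: show that $\widetilde{\mathrm{CZ}}^{\rd,\bl}(\rho^0)$ is a logical operator of the untwisted code $\C$ and compute its action, show that it is a product of dressed stabilizers in $\tilde\C$, and then count the independent gates. Throughout I use the cup/cap structure on the square sheaf complex $\L$ and the dictionary $\cT^*=H^0(G,\Fc)$ of Proposition~\ref{Prop:sheaf_cohomology}.

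\textbf{Logical action in $\C$.} Write the candidate operator as
\[
\widetilde{\mathrm{CZ}}^{\rd,\bl}(\rho^0)=(-1)^{\int_{\eta_2}\rho^0\cup \red{a^1}\cup\blue{b^1}}
=(-1)^{\int_{\eta_2\frown\rho^0}\red{a^1}\cup\blue{b^1}},
\]
acting diagonally on the $\rd$ and $\bl$ qubits, which are $1$-cochains. Each cup product $a^1\cup b^1$ on a square pairs one edge with the opposite-direction edge of the same face. Hence the operator is a depth-one layer of two-qubit CZ gates supported on $\eta_2^\rho=\eta_2\frown\rho^0$.

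To show it is logical, I would check that it preserves the codespace. It commutes with the $Z$-stabilizers trivially. Conjugating an $X$-stabilizer $X(d\lambda^0)$ of the $\rd$ copy produces the phase $(-1)^{\int_{\eta_2}\rho^0\cup d\lambda^0\cup b^1}$. By the Leibniz rule this equals $\int_{\eta_2}d(\rho^0\cup\lambda^0\cup b^1)$ up to the terms $d\rho^0=0$ (since $\rho^0$ is a cocycle) and $db^1=0$ (on the codespace). The remaining exact term integrates to zero because $\partial\eta_2=0$.

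The one subtlety is that $\rho^0$ lives in the $\gr$ sheaf. I therefore need the sheaf cup product $\Fc^{\gr}\otimes\Fc^{\rd}\otimes\Fc^{\bl}\to\FF_2$ to be well defined at the cochain level and to satisfy Leibniz. This is the multiplication property of the local codes (repetition $\times$ SPC), and I expect verifying it to be the main obstacle. My first attempt would be a stalk-by-stalk check on each square, which reduces it to the Schur-product closure of the local codes.

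For the logical action, I would restrict to cohomology representatives $\red{a^1}=\sum\alpha\,\red{e_\alpha}$, $\blue{b^1}=\sum\beta\,\blue{e_\beta}$. This gives $\Lambda^{\alpha\beta}(\rho^0)=\int_{\eta_2^\rho}e_\alpha\cup e_\beta$. In the hypergraph-product basis this pairing factorizes, so that $\rho^0$ selects a subset of logical pairs; this is the sense in which the gates are addressable.

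\textbf{Trivial action in $\tilde\C$.} Take the product $\prod_{s}\tilde A_{v,s}^{\gr}$ over the support of $\rho^0$. The bare $X$ parts multiply to $X(d\rho^0)=1$. The CZ dressings multiply to $(-1)^{\int_{\eta_2}d\rho^0\ldots}$ corrected by the cup-product coboundary. After cancellation, on the zero-flux subspace this leaves exactly $(-1)^{\int_{\eta_2}\rho^0\cup a\cup b}$. This identifies the operator as the $\gr$ charge parity operator, so it acts as the identity on $\tilde\C$.

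\textbf{Counting.} The map $\rho^0\mapsto\Lambda(\rho^0)$ is linear. I would show it is injective on $H^0(\L,\Fc^{\gr})$ by exhibiting a dual basis of logical pairs via the cap-product pairing $\langle\rho^0,\eta_0\rangle$. The count then follows from Theorem~\ref{thm:informal_subdivision} applied to the constant-size factor.
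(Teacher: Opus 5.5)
Your handling of the operator-level claims tracks the paper. Code-space preservation comes from the Leibniz rule together with $d\rho^0=0$ and $\partial\eta_2=0$, and the charge-parity identity from cancelling the bare $X$ part via $d^0\rho^0=0$. You read off $\Lambda^{\alpha\beta}(\rho^0)$ by evaluating the diagonal phase on cohomology representatives. That is legitimate once you note the phase is invariant under coboundary shifts of flat $\red{a^1},\blue{b^1}$. It is a valid alternative to the paper's conjugation $\lo{X}^{\rd}(\red{\alpha^1})\mapsto\lo{X}^{\rd}(\red{\alpha^1})\,\widetilde{Z}^{\bl}_{\zeta_1}$ with $\partial_1\zeta_1=0$, and is essentially the path-integral computation of Sec.~\ref{sec:derivation_logical_action}.

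Two things in this part are misconceived. First, the ``main obstacle'' you flag is misplaced. Definition~\ref{def:cup-simplicial} defines $\cup$ for arbitrary presheaves, with values in the tensor presheaf, and the Leibniz rule of Proposition~\ref{prop:cup-properties} holds with no condition on the local codes. The scalar comes from pairing with $\eta_2\in C_2(\L,\Fc^{\rd}\otimes\Fc^{\bl}\otimes\Fc^{\gr})$. The multiplication property enters only through Lemma~\ref{lemma:local_cycle} and Proposition~\ref{prop:multiplication_property_equivalence}, as what makes a candidate twist such as $[\L]$ a tensor-sheaf cycle. Since $\partial\eta_2=0$ is a standing hypothesis, the operator-level claims hold for arbitrary local codes, and your Schur-product check would verify that hypothesis rather than a step of the proof.

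Second, your charge-parity step is muddled. The dressing of $\tilde{A}^{\gr}_{v,t}$ in Eq.~\eqref{eq:stabilizer_summary} has $\bar{v}_t$ undifferentiated in the last slot. By linearity the dressings therefore multiply \emph{exactly} to $(-1)^{\int_{\eta_2}\red{\hat{a}^1}\cup\blue{\hat{b}^1}\cup\rho^0}$, with no $d\rho^0$ term and no restriction to zero flux. Putting $\rho^0$ in the first slot, as you do, is harmless only for this reason: for a sheaf $0$-cocycle, functoriality makes $\Fc^{\gr}_{v\shortrightarrow\sigma}(\rho^0(v))$ independent of the vertex $v$ of a cell $\sigma$. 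Hence $\rho^0\cup x\cup y$ and $x\cup y\cup\rho^0$ agree at the cochain level, and that, not a coboundary correction, is the fact to invoke.

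The genuine gap is the count. Injectivity of $[\rho^0]\mapsto\Lambda(\rho^0)$ cannot come from a cap-product pairing $\langle\rho^0,\eta_0\rangle$ alone, because it fails for degenerate twists. For example, let $\eta_2$ be supported on one connected component of $\L$ and let $\rho^0$ (constant sheaf) be the indicator of another. Then $\eta_2\frown\rho^0=0$ and the operator is the identity, although $[\rho^0]\neq 0$.

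What you need is a non-degeneracy input for the trilinear form. The paper supplies it only for Construction~\ref{con:2d-hgp}; its general section merely asserts that distinct labels give distinct patterns. The argument runs as follows:
\begin{itemize}
\item Lemma~\ref{lemma:factorization} factorizes $\Lambda$ over $G_x$ and $G_y$.
\item Sheaf Poincar\'e duality on each graph (Theorems~\ref{thm:poincare-duality-sheaf-codes} and~\ref{thm:cap-product-poincare-duality-map}, using the local acyclicity of Appendix~\ref{app:acyclic}) makes the pairing $\mathcal{P}$ of Eq.~\eqref{eq:P-def} non-degenerate.
\item Kronecker-dual bases (Corollary~\ref{cor:kronecker-dual}) then give $\Lambda=\delta_{i_1i_2}\delta_{j_1j_3}$ in Eq.~\eqref{eq:intersection_structure}, which is visibly injective in the \gr\ label.
\end{itemize}
Finally, the growth of $\dim H^0(\L,\Fc^{\gr})$ is not carried by the constant-sheaf factor, which contributes $\dim H^0(G_x,\Fc^0)=1$. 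By K\"unneth it equals $\dim H^0(G_y,\Fc_y)$, which Lemma~\ref{lemma:subdivision-normal-form} identifies with the dimension of the transposed input code. That is the factor the subdivision result must be applied to.
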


\noindent Section~\ref{sec:condition_cohomology} then determines when the
invariant is non-trivial: by the K\"unneth theorem the twist factorizes, and
non-triviality reduces to a \emph{local cycle condition} at each vertex
(Lemma~\ref{lemma:local_cycle}), which Sec.~\ref{sec:multiplication_property}
identifies with a \emph{multiplication property} of the local codes,
$\cC_v^{\rd,\perp}*\cC_v^{\bl,\perp}\subseteq\cC_v^{\gr}$, for the Schur product
$*$. This single algebraic condition is what drives both instantiations below.

\smallskip
\noindent\textit{(iv) A constant-rate instantiation and the magic state
fountain.} Section~\ref{sec:code_instantiation} instantiates the construction from
an \emph{arbitrary} asymptotically good classical LDPC code, using the constant
sheaf for one factor of the \bl\ and \gr\ copies so that the local cycle condition
holds for free, and Sec.~\ref{sec:logical_operation} runs the protocol on it.

\begin{theorem}[Code parameters and magic state fountain; informal version of
Theorems~\ref{thm:parameters} and~\ref{thm:fountain}]
\label{thm:informal_fountain}
From any asymptotically good classical LDPC code, Construction~\ref{con:2d-hgp}
produces untwisted and twisted $2$D hypergraph-product sheaf codes with
parameters $[[\,n,\ \Theta(n),\ \Omega(\sqrt{n})\,]]$, where $d$ is the
subsystem-code distance. There is a gauging measurement protocol on the twisted
code, consisting of $O(d)$ rounds of weight-$O(1)$ commuting dressed
$X$-stabilizer measurements followed by ungauging, that prepares
$\Theta(\sqrt{n})$ logical CZ magic states on \emph{disjoint} pairs of logical
qubits in a single run, protected by the same distance $\Omega(\sqrt{n})$.
\end{theorem}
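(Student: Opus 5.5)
The statement has two parts: the code parameters $[[n,\Theta(n),\Omega(\sqrt n)]]$, and the fountain protocol. I will prove them in that order.

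\textbf{Code parameters.} Start from an asymptotically good classical LDPC code $\Hs$ with parameters $[N,\Theta(N),\Theta(N)]$. Subdivide it using Theorem~\ref{thm:informal_subdivision}, which gives a sheaf $\Fc$ on a graph $G$ with $H_1(G,\Fc)\cong\C$ and $H_1(G,\Fc^\perp)\cong\ker\Hs^T$. Bounded column weight means $G$ still has $\Theta(N)$ edges. The subdivided distance obeys $d_{\mathrm{sbd}}\ge w_c^{\min}d(\C)=\Omega(N)$, and the transpose distance is bounded the same way.

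Next form the square sheaf complex $\L=G\times G'$ of Construction~\ref{con:2d-hgp}, taking the constant sheaf on one factor for the \bl\ and \gr\ copies. The number of qubits is then $n=\Theta(N^2)$. The K\"unneth theorem for sheaf cohomology gives $\dim H^1(\L)=\dim H_1(G)\dim H_0(G')+\dim H_0(G)\dim H_1(G')=\Theta(N^2)=\Theta(n)$. The distance of the untwisted code is the minimum of the two factor distances, as in the standard hypergraph-product argument, which gives $\Omega(N)=\Omega(\sqrt n)$.

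For the twisted code $\tilde\C$, I would show that the logical operators of the Abelian subsystem part are the untwisted ones with CZ dressings attached. Dressing does not change the $Z$-type support, and the relevant $X$-type logicals are unchanged modulo the zero-flux constraint. So the subsystem distance is at least the untwisted distance.

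\textbf{The protocol.} Prepare two decoupled untwisted codes (colours \rd, \bl) in $|+\rangle^{\otimes}$ logical states. Then gauge the \gr\ $0$-form subcomplex symmetry by measuring the dressed stabilizers $\tilde A_{v,s}$ for $O(d)$ rounds, and afterwards ungauge by measuring the \gr\ qubits in $Z$.

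By Theorem~\ref{thm:informal_subcomplex}, the product of outcomes along each basis cocycle $\rho^0\in H^0(\L,\Fc^{\gr})$ equals the eigenvalue of $\widetilde{\mathrm{CZ}}^{\rd,\bl}(\rho^0)$. Measuring the logical CZ on $|{+}{+}\rangle$ and applying a Pauli correction yields a CZ magic state.

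With one factor constant, $\dim H^0(\L,\Fc^{\gr})=\Theta(N)=\Theta(\sqrt n)$. I would choose a basis of cocycles with localized supports $\eta_2^\rho$, and use Lemma~\ref{lemma:local_cycle} to show that each $\Lambda^{\alpha\beta}(\rho^0)$ acts on a disjoint pair of logical qubits. This is done by picking dual logical bases adapted to an information set of $\C$, so that the pattern matrix is diagonal. Fault tolerance of the repeated measurement follows from a spacetime path-integral argument: measurement errors form chains in $\tilde\L$, and a logical failure needs weight $\Omega(d)$.

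\textbf{Main obstacle.} I expect the hardest step to be the distance of the twisted subsystem code together with the disjointness of the $\Theta(\sqrt n)$ CZ pairs. For disjointness, my first attempt would be to diagonalize the cup-product pairing $H^1\times H^1\to\FF_2$ restricted by $\rho^0$ via Gaussian elimination on the information set.
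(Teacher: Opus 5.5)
The weakest step is disjointness: as you describe it, it cannot work for Construction~\ref{con:2d-hgp}. There the constant-sheaf factors contribute only the all-ones classes ($\green{\cs^0}$ on the $x$-factor of \gr, $\blue{\cs'^0}$ on the $y$-factor of \bl). The K\"unneth factorization (Lemma~\ref{lemma:factorization}) together with the Poincar\'e-duality pairing $\mathcal{P}$ of Eq.~\eqref{eq:P-def} then forces $\Lambda=\delta_{i_1 i_2}\delta_{j_1 j_3}$, Eq.~\eqref{eq:intersection_structure}. So every \gr\ operator $\widetilde{\mathrm{CZ}}^{\rd,\bl}(\green{\gamma^0_{\vc{i_3}}})$ is a product of $\Theta(\sqrt n)$ logical CZs, and all \gr\ operators share the same $\Theta(\sqrt n)$ \bl\ partners. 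Lemma~\ref{lemma:local_cycle} is not the relevant tool here: it only certifies that $\eta_2$ is a tensor-sheaf cycle, which is automatic in this construction.

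No choice of logical bases, and no Gaussian elimination on an information set, can repair this. View $\Lambda$ as a tensor with $\Theta(n)$ \rd, $\Theta(\sqrt n)$ \bl\ and $\Theta(\sqrt n)$ \gr\ indices. Its \rd-flattening $e_{(i_1,j_1)}\mapsto e_{i_1}\otimes e_{j_1}$ has rank $\Theta(n)$. A matching tensor $\sum_k e_{r_k}\otimes e_{b_k}\otimes e_{g_k}$ has every flattening of rank at most $\Theta(\sqrt n)$, and flattening ranks are invariant under basis changes. Hence, with every logical qubit in $\lo{\ket{+}}$, your protocol outputs a highly entangled resource state rather than disjoint magic states.

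The missing idea is \emph{gauge-fixing by initialization}. Fix a bijection $\pi$ from $i_1$-labels to $j_1$-labels. Put only the \rd\ logical $(i_1,\pi(i_1))$ of each cluster in $\lo{\ket{+}}$, the other \rd\ logicals in $\lo{\ket{0}}$, and all \bl\ logicals in $\lo{\ket{+}}$. Then label $j_3$ acts as a single CZ on the pair $\bigl((\pi^{-1}(j_3),j_3),\,\pi^{-1}(j_3)\bigr)$, and distinct labels act on disjoint pairs. The information-set matching you have in mind is what works in Construction~\ref{con:high_yield}, where no factor is constant.

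Your claim that a Pauli correction finishes the job is also false. The $-1$ outcome projects $\lo{\ket{++}}$ onto $\lo{\ket{11}}$, a stabilizer state that no Pauli maps to $\lo{\ket{\mathrm{CZ}}}$. Each pair succeeds with probability $3/4$, so the $\Theta(\sqrt n)$ yield holds only on average.

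The distance argument also fails as stated. For the \bl\ and \gr\ copies the constant-sheaf factor is the graph cycle code $H_1(G,\FF_2)$ together with weight-one $H_0$ representatives. The ``standard hypergraph-product minimum'' is therefore controlled by the girth of $G$, not by $\bar d$; these are the spurious short classes of Eq.~\eqref{eq:spurious-cycle}. The bound $\Omega(\sqrt n)$ holds only for a subsystem code built in three steps:
\begin{enumerate}
\item retain conjugate bases of the \rd\ classes and of the \bl\ classes whose constant factor is $\blue{\cs'^0}$;
\item demote the spurious classes and the entire \gr\ copy to gauge qubits (Lemma~\ref{lemma:subsystem});
\item apply the product weight bound to what remains.
\end{enumerate}

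Your count $\dim H^0(\L,\Fc^{\gr})=\Theta(\sqrt n)$ needs $\dim H^0(G,\Fc)=\dim\ker\Hs^{\top}=\Theta(N)$. That is an extra hypothesis on the transposed code, which the paper states explicitly; it is not a consequence of subdivision.

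For the twisted code you omit the rate entirely. You must show that the twist removes only the $\Theta(\sqrt n)$ independent \gr\ charge-parity constraints, giving $\tilde k=k-\Theta(\sqrt n)$ (Lemma~\ref{lemma:rate_twisted}). Finally, ``subsystem distance $\ge$ untwisted distance'' does not follow from exhibiting dressed versions of untwisted logicals, since that only gives upper bounds. A lower bound requires controlling every logical operator of a non-Pauli stabilizer code, including the projector-dressed magnetic operators and condensation descendants. The paper obtains this only under Assumption~\ref{assump:TQFT}.
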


\noindent The logical action of the protocol is rederived from the spacetime path
integral in Sec.~\ref{sec:spacetime_path_integral}, where it appears as a slab of
twisted theory bounded by a gauging and an ungauging domain wall, confirming the
operator-level derivation and exhibiting the protocol as a spacetime object.

\smallskip
\noindent\textit{(v) Generalization to $\Fq$.} Section~\ref{sec:Fq_generalization}
repeats the whole development over an arbitrary finite field.

\begin{theorem}[$\Fq$ sheaf gauge theory; informal version of
Sec.~\ref{sec:Fq_generalization}]
\label{thm:informal_Fq}
Let $q=p^u$ and let $\mathrm{Tr}=\mathrm{Tr}_{\Fq/\Fp}$. With generalized Pauli
operators defined through the trace pairing, the $\Fq$ path integral
$\cZ=\sum\omega^{\int_{\tilde\eta_3}\mathrm{Tr}(a^1\cup b^1\cup c^1)}$,
$\omega=e^{2\pi i/p}$, is again a cohomology invariant, and the twisted code is a
qudit Clifford stabilizer code whose $X$-generators are dressed by
$\mathrm{CZ}(\lambda)$ gates. The underlying gauge group is $(\ZZ_p)^{u}$ per
colour --- \emph{not} $\ZZ_q$ --- and the field structure enters only through the
Frobenius tensor $T_{ijk}=\mathrm{Tr}(\theta_i\theta_j\theta_k)$, which selects
which component gates appear.
\end{theorem}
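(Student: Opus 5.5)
The plan is to mirror the $\FF_2$ argument (Lemma~\ref{lemma:invariant}) component by component, using the trace pairing to reduce every $\Fq$-valued statement to an $\Fp$-valued one.

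\textbf{Algebraic setup.} Fix an $\Fp$-basis $\{\theta_i\}_{i=1}^u$ of $\Fq$ and expand each cochain as $a^1=\sum_i a_i^1\theta_i$, with $a_i^1$ an $\Fp$-cochain, and similarly for $b^1$ and $c^1$. Because the $\Fq$ cup product is multilinear over $\Fp$,
\[
\mathrm{Tr}(a^1\cup b^1\cup c^1)=\sum_{i,j,k}T_{ijk}\,a_i^1\cup b_j^1\cup c_k^1,\qquad T_{ijk}=\mathrm{Tr}(\theta_i\theta_j\theta_k).
\]
Two consequences follow. First, the twist is a $u^3$-component type-III $\Fp$ Dijkgraaf-Witten term with coefficients $T_{ijk}$. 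Second, the gauge fields are $(\ZZ_p)^u$-valued per colour, which already gives the ``not $\ZZ_q$'' claim. The sheaf stalks are $\Fq$-linear, so the coboundary commutes with the decomposition only if the restriction maps are $\Fq$-linear. They are, since $\Fc$ is an $\Fq$-sheaf, but a restriction map mixes the $\Fp$-components through multiplication by field elements. For this reason I will do the gauge-invariance computation at the $\Fq$ level and take the trace only at the end.

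\textbf{Cohomology invariance.} Under $a^1\mapsto a^1+d\alpha^0$, the exponent changes by $\int_{\tilde\eta_3}\mathrm{Tr}(d\alpha^0\cup b^1\cup c^1)$. I will use the Leibniz rule for the sheaf cup product to write this as $\mathrm{Tr}\big(d(\alpha^0\cup b^1\cup c^1)\big)$, using $db=dc=0$ on the flat configurations being summed. Since $\tilde\eta_3$ is a cycle for the appropriate dual (product) sheaf, Stokes/cap duality gives $\int_{\tilde\eta_3}d(\cdot)=\int_{\partial\tilde\eta_3}(\cdot)=0$ in $\Fq$, and hence the trace is $0$ in $\Fp$. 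The same argument applies to $b$ and $c$, with the graded sign $(-1)^{1}$ harmless. The \textbf{main obstacle} is checking that the cochain-level sheaf cup product on the product complex is well defined and satisfies Leibniz with $\Fq$-coefficients. This is where the multiplication property $\cC_v^{\rd,\perp}*\cC_v^{\bl,\perp}\subseteq\cC_v^{\gr}$ is used, and it should be checked over $\Fq$ for Schur products. I would first try to show this by noting that the $\FF_2$ proof only uses bilinearity of the Schur product, so it transfers verbatim.

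\textbf{Stabilizers and the Frobenius selection.} Define $X(\lambda)$ and $Z(\mu)$ through $\omega^{\mathrm{Tr}(\lambda x)}$, and define $\mathrm{CZ}(\lambda)\ket{x,y}=\omega^{\mathrm{Tr}(\lambda xy)}\ket{x,y}$. Gauging the $0$-form subcomplex symmetry as in the $\FF_2$ case gives the following: the Gauss-law operator at $(v,s)$ with parameter $\lambda$ is $X(\lambda)$ times the variation of the twisted phase, which is a product of $\mathrm{CZ}(\lambda\cdot\text{coefficient})$ gates on the other two colours. Expanding in the basis, the component CZ between $b_j$ and $c_k$ for $\lambda=\theta_i$ appears exactly when $T_{ijk}\neq0$. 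Finally, commutation on the zero-flux subspace follows exactly as for $\FF_2$. The commutator of two dressed generators is a $Z$-type phase proportional to $\mathrm{Tr}$ of a coboundary expression, and this phase vanishes when $db=dc=0$.
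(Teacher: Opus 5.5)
Your argument follows the paper's own route. Lemma~\ref{lemma:invariant_Fq} is proved by rerunning Lemma~\ref{lemma:invariant} at the $\Fq$ level: signed Leibniz rule, then Stokes on the cycle $\tilde\eta_3$, and only at the end the $\Fp$-linearity of $\mathrm{Tr}$. The $(\ZZ_p)^u$ and $T_{ijk}$ statements come from the same basis expansion you write down. The $\mathrm{CZ}(\lambda)$ dressing and the on-shell commutation are Lemma~\ref{lemma:commutation_Fq}.

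The one thing to fix is your ``main obstacle'' paragraph, which puts the multiplication property in the wrong place. The cup product here takes values in the tensor sheaf $\Fc^{\rd}\otimes\Fc^{\bl}\otimes\Fc^{\gr}$, whose co-restrictions are tensor products of the factor co-restrictions. It is therefore well defined and obeys the signed Leibniz rule for arbitrary local codes over any field, by functoriality of the co-restrictions alone (Proposition~\ref{prop:cup-properties}). No Schur-product condition is involved, and deriving Leibniz ``from'' $\cC_v^{\rd,\perp}*\cC_v^{\bl,\perp}\subseteq\cC_v^{\gr}$ would be deriving a true statement from an irrelevant hypothesis.

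In the paper the multiplication property enters only in Sec.~\ref{sec:condition_cohomology}. There it is shown to be equivalent to the all-ones chain being a $1$-cycle of the tensor sheaf on each graph factor (Lemma~\ref{lemma:fundamental_class}, Proposition~\ref{prop:multiplication_property_equivalence}). It therefore controls whether the natural choice $\eta_2=[\L]$ is admissible and whether the invariant can be non-trivial, not whether it is gauge invariant. Invariance needs only $\partial\tilde\eta_3=0$, which the statement assumes. Once that paragraph is replaced by a citation of the Leibniz rule, your proof is complete and applies to every tensor-sheaf $3$-cycle, not just to complexes whose local codes are multiplicative.
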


\smallskip
\noindent\textit{(vi) Almost-constant magic rate.} Section~\ref{sec:high_yield}
defines the magic rate $\varrho=s/n$ (Definition~\ref{def:magic_rate}), sets up a
dictionary (Lemma~\ref{lemma:dictionary}) between our sheaves and the
\emph{systems of local codes} of Ref.~\cite{GolowichTamoZhu2026}, and uses their
classical codes with a multiplication property as the input.

\begin{theorem}[Almost-constant magic rate; informal version of
Theorems~\ref{thm:high_yield_parameters} and~\ref{thm:high_yield_fountain}]
\label{thm:informal_high_yield}
For every $\epsilon>0$ and every prime $p$ there is an infinite family of twisted
sheaf codes over $\Fq$, $q=p^{O_\epsilon(1)}$, with constant stabilizer weight
and parameters
$[[\,n,\ k\ge n^{1-\epsilon},\ d\ge n^{(1-\epsilon)/2}\,]]_q$,
on which a single run of the gauging measurement prepares $s\ge n^{1-\epsilon}$
logical CZ magic states on disjoint pairs of logical qudits --- an almost
constant magic rate $\varrho=s/n\ge n^{-\epsilon}$. The disjointness comes from a
\emph{perfect matching} structure of the logical CZ gates
(Lemma~\ref{lemma:matching} and Corollary~\ref{cor:disjoint_CZ}), so no
gauge-fixing of the initial state is needed.
\end{theorem}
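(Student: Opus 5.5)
The plan is to assemble the statement from three ingredients: the Golowich--Tamo--Zhu classical codes, the $\Fq$ version of the twisted sheaf gauge theory, and the perfect-matching structure of the logical CZ gates.

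\emph{Step 1: code parameters.} I would take from Ref.~\cite{GolowichTamoZhu2026} the family of classical LDPC codes over $\Fq$, $q=p^{O_\epsilon(1)}$, presented as systems of local codes on bounded-degree graphs. Its local codes satisfy the Schur-product condition $\cC_v^{\rd,\perp}*\cC_v^{\bl,\perp}\subseteq\cC_v^{\gr}$. Lemma~\ref{lemma:dictionary} turns these into sheaves $\Fc^{\rd},\Fc^{\bl},\Fc^{\gr}$ on a common graph. I would then form the three colour copies of the 2D hypergraph-product sheaf complex $\L$, now with \emph{no} constant-sheaf factor, so that every factor carries rate.
\begin{itemize}
\item Bounded degree and bounded local-code length give constant stabilizer weight. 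The CZ dressings are supported on the stars of vertices, so the dressed stabilizers also have constant weight.
\item The K\"unneth theorem for sheaf cohomology gives $k=\dim H^1(\L,\Fc)$ as a sum of tensor products of the factor $H^0$ and $H^1$. With classical dimension $\Theta(N^{1-\epsilon'})$ and $n=\Theta(N^2)$, this yields $k\ge n^{1-\epsilon}$ after rescaling $\epsilon$.
\item For distance, I would use the standard hypergraph-product argument: the minimum over the classical distances of the code and its transpose. This gives $d\ge N^{1-\epsilon'}\ge n^{(1-\epsilon)/2}$.
\item For the twisted code I would quote the subsystem-distance statement of Theorem~\ref{thm:parameters}, transferred verbatim to $\Fq$ via Sec.~\ref{sec:Fq_generalization}.
\end{itemize}

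\emph{Step 2: non-triviality of the twist.} The multiplication property is exactly the local cycle condition of Lemma~\ref{lemma:local_cycle}, read through Sec.~\ref{sec:multiplication_property}, over $\Fq$ with the trace pairing. This makes the capped chain $\eta_2\frown\rho^0$ a genuine cycle for every $\rho^0\in H^0(\L,\Fc^{\gr})$. Hence the $\Fq$ cohomology invariant of Theorem~\ref{thm:informal_Fq} is well defined. Each such $\rho^0$ gives a transversal $\mathrm{CZ}(\lambda)$ circuit that is a charge-parity operator in the twisted code and a logical CZ pattern $\Lambda^{\alpha\beta}(\rho^0)$ in the untwisted code.

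\emph{Step 3: the fountain.} I would run the gauging protocol of Theorem~\ref{thm:fountain} unchanged:
\begin{enumerate}
\item Prepare the \rd\ and \bl\ copies in the $X$-type logical product state.
\item Perform $O(d)$ rounds of dressed \gr-stabilizer measurement.
\item Ungauge.
\end{enumerate}
Multiplying outcomes along a basis of $H^0(\L,\Fc^{\gr})$ measures all the $\Lambda(\rho^0)$ simultaneously.

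To get $s\ge n^{1-\epsilon}$ magic states on \emph{disjoint} pairs, I would invoke Lemma~\ref{lemma:matching}. It should say that, for suitable dual bases of the \rd\ and \bl\ logical spaces, the interaction hypergraph of the CZ patterns restricted to some $s$ logical pairs is a perfect matching. Corollary~\ref{cor:disjoint_CZ} then gives disjoint CZ magic states without gauge fixing. The count $s$ equals the rank of the trilinear form $(\alpha,\beta,\rho)\mapsto\int\mathrm{Tr}(\alpha\cup\beta\cup\rho)$, which the Golowich--Tamo--Zhu multiplication property makes $\Theta(k)$.

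\emph{Main obstacle.} I expect the hard part to be this last point: showing that the matching has size comparable to $k$ rather than to one tensor factor, and that the Frobenius tensor $T_{ijk}$ does not collapse the rank. I would first try to reduce the question to a non-degeneracy statement for the classical triple product $(x,y,z)\mapsto\sum_i\mathrm{Tr}(x_iy_iz_i)$ on codewords, which Ref.~\cite{GolowichTamoZhu2026} supplies. I would then lift it through K\"unneth, using a triangular change of basis to extract a matching.
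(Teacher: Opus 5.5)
There is a genuine gap, and it sits exactly at the step you flag as the main obstacle. Lemma~\ref{lemma:matching} and Corollary~\ref{cor:disjoint_CZ} are the substance of the theorem, so invoking them with a guessed content assumes what has to be shown.

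The route you sketch for proving them also targets the wrong quantity. A rank bound, or a non-degeneracy statement, for the trilinear form $(\alpha,\beta,\rho)\mapsto\int_{\eta_2}\alpha\cup\beta\cup\rho$ does not give disjoint pairs. What you need is a large \emph{diagonal} restriction of that tensor, i.e.\ a subrank-type statement, and a tensor can have large rank with no large diagonal. The multiplication property is only an inclusion of codes and gives no lower bound on $s$, so your claim $s=\Theta(k)$ is unsupported. The Frobenius tensor $T_{ijk}$ is irrelevant here, because the logical gate is an $\Fq$-CZ with multiplier $\Lambda\in\Fq$.

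The missing idea is combinatorial. It uses the extendable-set part of Theorem~\ref{thm:GTZ_classical}(ii), which your plan never touches.
\begin{itemize}
\item The set $M\subseteq E(\Gamma)$ has $|M|=\dim\T^{\dag}(\Gamma,\Fc)\ge N^{1-\epsilon}$ and supplies codewords $\bs_e$ with $\bs_e|_M=\delta_e$.
\item The multiplication property gives $\T^{\dag}(\Gamma,\Fc)\subseteq\T^{\dag}(\Gamma,(\Fc^{*2})^{\perp})$, so there are also codewords $\cs_e$ with $\cs_e|_M=\delta_e$.
\item You never fix the three sheaves. The paper takes $\Fc^{\rd}=\Fc^{\bl}=\Fc\otimes\Fc$, $\Fc^{\gr}=(\Fc^{*2})^{\perp}\otimes(\Fc^{*2})^{\perp}$ and $\eta_2=[\L]$.
\item The basis is $\alpha^1_{(e,f)}=\bar e\otimes\bs^0_f$, $\beta^1_{(e,f)}=\bs^0_e\otimes\bar f$, $\gamma^0_{(e,f)}=\cs^0_e\otimes\cs^0_f$. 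The degree-one factors are the elementary edge cochains $\bar e$ with $e\in M$.
\end{itemize}
With these choices the cup product on each graph is evaluated edgewise. The $x$-factor collapses to $\bs_{e_2}(e_1)\,\cs_{e_3}(e_1)=\delta_{e_1e_2}\delta_{e_1e_3}$, and likewise for $y$. This gives the perfect matching, $s=|M|^2\ge n^{1-\epsilon}$, and linear independence of the classes in one line. No non-degeneracy argument or triangular change of basis is needed.

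Two secondary points. First, your ``standard hypergraph-product'' distance bound needs the distance of both $H^0(\Gamma,\Fc)\cong\T^{\dag}(\Gamma,\Fc)$ and its transpose $H_1(\Gamma,\Fc)$, the Tanner code with the high-rate local codes $\cC_v$. Theorem~\ref{thm:GTZ_classical}(iii) controls only $\T^{\dag}(\Gamma,\Fc)$ and $\T^{\dag}(\Gamma,\Fc^{*2})$. That is why the paper claims only the subsystem distance of Lemma~\ref{lemma:subsystem} on the chosen conjugate bases.

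Second, the matching covers only the $|M|^2$ chosen classes, and $\dim H^1(\Gamma,\Fc)$ in general exceeds $|M|$. A \rd\ class $a^1\otimes\bs^0_f$ with $a^1\notin\mathrm{span}\{\bar e:e\in M\}$ can pair non-trivially with the cycle $\bs_{e_2}*\cs_{e_3}$ for $e_2\neq e_3$, and would then couple distinct pairs. So you cannot prepare every \rd\ and \bl\ logical qudit in $\lo{\ket{+}}$. The proof of Theorem~\ref{thm:high_yield_fountain} sets the classes outside the chosen basis to $\lo{\ket{0}}$; ``no gauge-fixing'' refers only to the matched classes.
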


\noindent Since each $\Fq$-qudit is a block of $u=O_\epsilon(1)$ qubits and each
$\Fq$-CZ is a depth-one circuit of ordinary qubit CZ gates
(Proposition~\ref{prop:cz_descent_app} of Appendix~\ref{app:alphabet}), the
statement descends to qubits with $\varrho$ changing only by the constant factor
$1/u$.

\subsection{Organization}
\label{sec:organization}

The paper is organized as follows.
Section~\ref{sec:classical} develops the sheaf-theoretic description of classical
codes: Sipser-Spielman codes and their local codes, the identification of the
Tanner code with sheaf homology and of its transpose with sheaf cohomology, and
the subdivision of an arbitrary classical code into a sheaf code on a graph.
Section~\ref{sec:prelim} collects the algebraic-topological tools: graded
incidence posets, presheaves and sheaf codes, and the cup product, cap product and
chain-cochain pairing with local coefficients.
Section~\ref{sec:untwisted_code} constructs the untwisted qLDPC codes on the
hypergraph product of two sheaf codes, introduces the operator-valued sheaf
cochains used throughout, derives the $X$- and $Z$-stabilizers from $d^0$ and
$d^1$, and interprets the resulting model as a topological defect network.
Section~\ref{sec:non-Abelian_LDPC} contains the twisted theory: the spacetime path
integral and its cohomology invariance, the sheaf cluster state and $0$-form
subcomplex SPT, the gauging procedure producing the twisted qLDPC code, the
resulting Clifford stabilizer code with its commutation relations and logical
operators, and an alternative derivation by gauging directly from the untwisted
code.
Section~\ref{sec:0form_subcomplex} studies the $0$-form subcomplex symmetry: the
transversal CZ operators and their capped support, their action as addressable
logical CZ gates in the untwisted code, and their role as charge parity operators
in the twisted code.
Section~\ref{sec:Fq_generalization} generalizes everything from $\FF_2$ to an
arbitrary finite field $\Fq$.
Section~\ref{sec:condition_cohomology} determines when the cohomology invariant is
non-trivial, reducing this to the local cycle condition and thence to the
multiplication property of the local codes.
Section~\ref{sec:code_instantiation} gives the explicit constant-rate
instantiation from arbitrary good classical codes, together with its logical CZ
structure and its rate and distance.
Section~\ref{sec:logical_operation} presents the addressable and parallel gauging
measurement protocol, the magic state fountain, and the derivation of the logical
action from the spacetime path integral.
Section~\ref{sec:high_yield} upgrades the magic rate to almost constant using
local codes with a multiplication property.
Section~\ref{sec:discussion} concludes with a discussion and outlook.
Appendix~\ref{app:acyclic} proves the local acyclicity of the subdivided sheaf,
and Appendix~\ref{app:alphabet} treats alphabet reduction from $\Fq$ to $\FF_2$.

%======================================================================
\section{Generalized Sheaf-theoretical Approach for Classical Codes}
\label{sec:classical}
%======================================================================

A classical code defined on a graph, which we call a graph-based code, is a classical error-correcting code that imposes one even-parity constraint at each vertex. It is known to be a special case of a Tanner code. A constant-rate graph-based code cannot be a good code since the graph's girth which gives the code distance is upper bounded by a logarithmic function. More general Tanner codes replace these parity constraints by local linear codes on the incident edges. The Sipser--Spielman construction uses bipartite expanders and suitable local codes to obtain asymptotically good LDPC codes~\cite{Sipser_Spielman}.   
We express these local constraints through stalk spaces and incidence maps. This allows the same graph structure to enter a chain complex and, later, a product complex. A product of two graphs naturally gives a square complex. Its cup product can be defined with the chosen cubical convention or a specified compatible simplicial construction.

% \subsection{Graphs and expansion}
% \label{subsec:graphs-and-expansion}
% Let $G = (V = L \sqcup R, E)$ be a $\Delta$-regular bipartite graph with $|L| = |R| = n$. For $v \in V$, define the set of incident edges as $E(v)=\{v \in V: e \in E\}$.  Thus, $|E| = \Delta n$. We use the unnormalized second singular value convention: $G$ has second singular value at most $\lambda$ if, for all $A \subseteq L$ and $B \subseteq R$,
% \begin{equation}
%     \left|E(A, B) - \frac{\Delta}{n}|A||B| \right| \leq \lambda\sqrt{|A||B|},
% \end{equation}
% where $E(A,B)$ is the number of edges between $A$ and $B$. For every vertex $v \in L \sqcup R$, fix a bijection $p_v\,:\, E(v) \mapsto [\Delta]$ which labels each edge adjacent to a vertex an index in $[\Delta]$.

\subsection{Sipser--Spielman code}
\label{subsec:sipser-spielman-code}

We now define Sipser--Spielman codes and their local codes, and recall that such
a code is the same object as a classical \emph{sheaf} code. Throughout, $G=(V,E)$
is a finite simple graph. Let $E(v) \subseteq E$ denote the set of edges incident to a vertex $v$.
For $\vc{z} \in \FF_2^{E}$, we let $\vc{z}|_{E(v)}$ denote its restriction to
those edges. Edges are denoted by a single letter $e=[v,v']$ where $v$ and $v'$ are the two vertices that form the endpoint of $e$.

Let $\cC_0 \subseteq \FF_2^{\Delta}$ be a binary linear code with parameters
$[\Delta,k_0,d_0]$ and fix a full-row-rank parity-check matrix
$\Hs_0 \in \FF_2^{m_0 \times \Delta}$ with $m_0 = \Delta-k_0$, so that
$\cC_0 = \ker \Hs_0$.

\begin{definition}[Sipser--Spielman code]
\label{def:sipser-spielman-code}
Let $G$ be a $\Delta$-regular graph and $\cC_0$ a
length-$\Delta$ local code. Fix a bijection
$\iota_v:[\Delta]\longrightarrow E(v)$ at every vertex.
For $\vc{z}\in\FF_2^E$, define its ordered local word by
\[
\vc{z}|_{E(v)}
=(z_{\iota_v(1)},\ldots,z_{\iota_v(\Delta)}).
\]
The associated \emph{Tanner code} is
\begin{align}
\cT(G,\cC_0) = \Big\{\vc{z} \in \FF_2^{E} \ :\ &
\vc{z}|_{E(v)} \in \cC_0 \cr
& \text{for every } v \in V \Big\} ,
\end{align}
equivalently the set of $\vc{z}$ with
$\Hs_0\,\big(\vc{z}|_{E(v)}\big) = 0$ at every vertex $v$.
\end{definition}

Bits sit on edges and checks on vertices, so $\cT(G,\cC_0)$ has $N=|E|=\Delta
|V|/2$ bits and at most $M=m_0|V|$ checks. Note that the checks need not be independent. In particular, \[\dim\cT(G,\cC_0)\ge N-m_0|V| =\left(\frac{2k_0}{\Delta}-1\right)N.\] This rank bound gives a positive rate lower bound when $k_0>\Delta/2$.
Writing $w_{\Hs}$ for the largest row weight of $\Hs_0$,
every global check row has weight at most $w_{\Hs}$ and every global column has weight at most $2m_0$. Thus, the code is LDPC when
$\Delta=O(1)$.

The collection of local codes attached to the vertices is also called a local
system, or a \emph{sheaf} $\Fc$. It is known that a Sipser--Spielman code is
computed by sheaf (co)homology. We recall the associated
sheaf chain and cochain descriptions~\cite{Meshulam:2018}.

\subsubsection{Relation to sheaf homology}\label{sec:sheaf-homology}

We allow a different local code at every vertex: each $v\in V$ carries a linear
subspace $\cC_v \subseteq \FF_2^{E(v)}$, presented as $\cC_v = \ker \Hs_v$ with
$\Hs_v : \FF_2^{E(v)} \to \FF_2^{m(v)}$ a parity-check matrix of $m(v)$ rows. We
write
\be\label{eq:local_col}
\lv_e \ \equiv\ \text{the column of } \Hs_v \text{ indexed by } e \in E(v) ,
\ee
suppressing the vertex on $\lv_e$, which is always the one under discussion. The
sheaf code is
\begin{align}
\C \;=\; \cT\big(G,\{\cC_v\}_{v\in V}\big) = \Big\{
\vc{z} = (z_e)_{e\in E} \in \FF_2^{E} : \cr
\vc{z}|_{E(v)} \in \cC_v \ \ \forall\, v \in V \Big\} ,
\end{align}
and the \emph{local codeword condition} at a vertex $v$ reads
\be\label{eq:local_code_condition_1}
\Hs_v\,\big(\vc{z}|_{E(v)}\big) = 0 ,
\ee
or, expanded in the columns \eqref{eq:local_col},
\be\label{eq:local_code_condition_2}
\sum_{e \in E(v)} z_e\, \lv_e \;=\; 0 .
\ee

The sheaf that computes this code assigns to each cell the space in which the
corresponding degree of freedom lives: an $\FF_2$-bit to each edge and the $m(v)$
local checks to each vertex,
\be\label{eq:local_system}
\Fc(\sigma) =
\begin{cases}
\FF_2^{m(v)} , & \sigma = v \in V \quad (\text{check}) , \\[2pt]
\FF_2 , & \sigma = e \in E \quad (\text{bit}) .
\end{cases}
\ee
Let $\Pi_e:\FF_2^{E(v)}\longrightarrow\FF_2$ be the
coordinate projection. The restriction map is its column
insertion followed by the local parity check:
\begin{equation}\label{eq:restriction_map}
\bar\Fc_{v\shortleftarrow e}
=\Hs_v \circ \Pi_e^T:\Fc(e)\longrightarrow\Fc(v).
\end{equation}
Consequently,
\begin{equation}\label{eq:restriction_map_column}
\bar\Fc_{v\shortleftarrow e}(z_e)=z_e\lv_e.
\end{equation}

 We now define a chain complex of sheaf $\Fc$ on the graph $G$ as
\begin{align}
C_1(G,\Fc)
:= \bigoplus_{e\in E}\Fc(e)
\xrightarrow{\;\partial_1\;}
C_0(G,\Fc)
:= \bigoplus_{v\in V}\Fc(v)
\end{align}
where, for a 1-chain $c_1 = \sum_{e \in E} z_e \cdot e \in C_1(G,\Fc)$, the component of its boundary at \(v\) is
\begin{align}
(\partial_1c_1)(v)
:=\sum_{e\in E(v)}
\bar{\Fc}_{v\shortleftarrow e}(z_e);
\end{align}
the signs are omitted because the coefficients are in $\FF_2$. Write $\Hs_\Fc$ for the matrix of $\partial_1$ in the fixed stalk bases. Its rows are indexed by a vertex and a local check coordinate; its columns are indexed by graph edges. The corresponding cochain differential is $d^0=\Hs_\Fc^T$.

\begin{proposition}\label{Prop:sheaf_homology}
(Meshulam \cite{Meshulam:2018}) The sheaf code is the first sheaf homology group,
\be
\cT\big(G,\{\cC_v\}_{v\in V}\big) \;=\; H_1(G,\Fc) .
\ee
\end{proposition}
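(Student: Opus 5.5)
The plan is to compute $H_1(G,\Fc)$ directly from the two-term chain complex defined above. Since the complex has only the terms $C_1(G,\Fc)\xrightarrow{\partial_1}C_0(G,\Fc)$, there is no $C_2$. Hence $\im\partial_2=0$, and
\[
H_1(G,\Fc)=\ker\partial_1/\im\partial_2=\ker\partial_1 .
\]
The whole statement therefore reduces to identifying $\ker\partial_1$ with $\cT\big(G,\{\cC_v\}_{v\in V}\big)$ as subspaces of $C_1(G,\Fc)=\bigoplus_{e\in E}\Fc(e)=\bigoplus_{e\in E}\FF_2\cong\FF_2^E$. Here the last isomorphism sends the $1$-chain $c_1=\sum_e z_e\cdot e$ to the vector $\vc z=(z_e)_{e\in E}$.

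First I would compute $\partial_1$ componentwise. Since $C_0(G,\Fc)=\bigoplus_v\Fc(v)$ is a direct sum, $\partial_1 c_1=0$ if and only if $(\partial_1c_1)(v)=0$ for every vertex $v$. By the definition of $\partial_1$ and Eq.~\eqref{eq:restriction_map_column},
\[
(\partial_1c_1)(v)=\sum_{e\in E(v)}\bar\Fc_{v\shortleftarrow e}(z_e)=\sum_{e\in E(v)} z_e\,\lv_e .
\]
Next I would rewrite this sum as a matrix product. The vectors $\lv_e$ are the columns of $\Hs_v$, so the sum is exactly $\Hs_v\big(\vc z|_{E(v)}\big)$. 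Equivalently, by Eq.~\eqref{eq:restriction_map}, it equals $\sum_{e}\Hs_v\Pi_e^T z_e=\Hs_v\big(\sum_{e\in E(v)}\Pi_e^Tz_e\big)=\Hs_v(\vc z|_{E(v)})$, using linearity and the fact that $\sum_e\Pi_e^T z_e$ reassembles the local word.

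Finally, vanishing at $v$ is precisely the local codeword condition \eqref{eq:local_code_condition_1}, i.e.\ $\vc z|_{E(v)}\in\ker\Hs_v=\cC_v$. Requiring this at all $v$ gives exactly the defining condition of $\cT$. Since each step is an equivalence, we get $\ker\partial_1=\cT$, and hence the equality in the statement.

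I expect no real obstacle here. The only points needing care are bookkeeping. The first is that the identification $C_1(G,\Fc)\cong\FF_2^E$ is canonical because each edge stalk is one-dimensional. The second is that, for a Sipser--Spielman code with a fixed ordering $\iota_v$, the column $\lv_e$ is the column of $\Hs_0$ indexed by $\iota_v^{-1}(e)$, so the uniform-local-code case is covered as a specialization of the above. No sign issues arise since we work over $\FF_2$, and whether the checks are independent is irrelevant because only $\ker\partial_1$ enters.
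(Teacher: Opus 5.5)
Your proposal is correct and follows the paper's proof essentially verbatim. Both arguments note that $H_1(G,\Fc)=\ker\partial_1$ because $G$ has no $2$-cells, and both evaluate $(\partial_1 c_1)(v)=\sum_{e\in E(v)} z_e\lv_e$ to identify the cycle condition with the local codeword condition \eqref{eq:local_code_condition_2} at every vertex. Your extra bookkeeping — rewriting the sum as $\Hs_v(\vc z|_{E(v)})$ via $\Pi_e^T$, and specializing to a uniform local code through $\iota_v$ — is harmless and slightly more explicit than the paper.
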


\begin{proof}
A $1$-chain is $c_1 = \sum_{e \in E} z_e \cdot e \in C_1(G,\Fc)$, the signs being
irrelevant over $\FF_2$. Its boundary is a $0$-chain, and evaluating it at a
vertex gives, by the definition \eqref{eq:restriction_map} of the restriction
map,
\begin{align}\label{eq:boundary_at_vertex}
&\text{for each vertex } v: \cr
&(\partial_1 c_1)(v) = \sum_{e \in E(v)}
\bar{\Fc}_{v \shortleftarrow e}(z_e)
= \sum_{e \in E(v)} z_e\, \lv_e .
\end{align}
Hence the global $1$-cycle condition $\partial_1 c_1 = 0$ holds if and only if the
local codeword condition \eqref{eq:local_code_condition_2} holds at every vertex,
i.e.\ if and only if $(z_e)_{e\in E}$ is a codeword. Since $G$ has no $2$-cells,
$Z_1 = H_1$, and we obtain
$\cT(G,\{\cC_v\}_{v\in V}) \equiv \cT(G, \Fc)= H_1(G,\Fc)$. Accordingly we also call
$\cT(G,\{\cC_v\}_{v\in V})$ a \emph{$1$-cycle code}.
\end{proof}

Equation \eqref{eq:boundary_at_vertex} is the statement we shall use repeatedly:
\emph{global} homological conditions on a sheaf code are \emph{local} codeword
conditions, one per vertex. In the trivial case
$\lv_e = 1$ and $\Fc(\sigma)=\FF_2$ for all $\sigma$, it reduces to the familiar
parity constraint $\sum_{e \in E(v)} z_e = 0$, which in terms of the Pauli
operators $Z_e$ with eigenvalues $\pm1$ is the stabilizer condition
$\prod_{e \in E(v)} Z_e = 1$. We emphasize that in
Proposition~\ref{Prop:sheaf_homology} the local codes $\cC_v$ need not agree at
different vertices.

\subsubsection{Relation to sheaf cohomology}\label{sec:sheaf-cohomology}

Transposing the same data gives a second code, which will supply the $X$-type
checks of the quantum codes and the cocycles that label their logical operators.
The \emph{co-restriction map} is the transposed column,
\be\label{eq:co-restriction_map}
\Fc_{v \shortrightarrow e} : \Fc(v) \to \Fc(e) ,
\qquad
\Fc_{v \shortrightarrow e} = \Pi_{e} \circ \Hs_v^{\top} \equiv \lv_e^{\top} ,
\ee
i.e.\ the row vector dual to $\lv_e$, and on a $0$-cochain
$f(v) \in \Fc(v) = \FF_2^{m(v)}$ it acts as
\be\label{eq:co-restriction_map_column}
\Fc_{v \shortrightarrow e}\big(f(v)\big) = \lv_e^{\top} f(v) \ \in \FF_2 .
\ee
The sheaf $\Fc$ is the one of Eq.~\eqref{eq:local_system}: now the $m(v)$ bits
sit on the vertex $v$ and the single check sits on the edge $e$, so the roles of
bits and checks are exchanged relative to Sec.~\ref{sec:sheaf-homology}. The
resulting code is the transpose $\C^{\top}=\cT^*(G,\{\cC_v\}_{v\in V})$, with
global parity-check matrix $\Hs_\Fc^T$, presented by the length-one cochain
complex $C^{0} \xrightarrow{\ d^0 = \Hs_\Fc^T\ } C^{1}$.

\begin{proposition}\label{Prop:sheaf_cohomology}
The transposed sheaf code is the zeroth sheaf cohomology group,
\be
\cT^*\big(G,\{\cC_v\}_{v\in V}\big) \;=\; H^0(G,\Fc) .
\ee
\end{proposition}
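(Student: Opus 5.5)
The plan mirrors the proof of Proposition~\ref{Prop:sheaf_homology}, with the roles of bits and checks exchanged. The transposed code $\cT^*(G,\{\cC_v\})$ is, by definition, the kernel of its global parity-check matrix $\Hs_\Fc^T$. This matrix acts on the vertex bits $f=(f(v))_{v\in V}\in C^0(G,\Fc)=\bigoplus_v \FF_2^{m(v)}$, and each row is indexed by an edge $e$. So the first step is to check that $\Hs_\Fc^T$ is precisely the matrix of the coboundary $d^0:C^0\to C^1$ built from the co-restriction maps \eqref{eq:co-restriction_map}. Concretely, for $e=[v,v']$ I will verify
\be
(d^0 f)(e) \;=\; \Fc_{v\shortrightarrow e}\big(f(v)\big)+\Fc_{v'\shortrightarrow e}\big(f(v')\big) \;=\; \lv_e^{\top}f(v)+\lv_e^{\top}f(v') ,
\ee
where each $\lv_e$ is the column of the parity-check matrix at the corresponding endpoint, and signs are irrelevant over $\FF_2$. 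This holds because $\partial_1$ has $(v,e)$-block $\bar\Fc_{v\shortleftarrow e}=\lv_e$ by \eqref{eq:restriction_map_column}. Transposing gives the $(e,v)$-block $\lv_e^\top$, which is exactly the co-restriction map \eqref{eq:co-restriction_map_column}. So the identity $d^0=\Hs_\Fc^T$ amounts to the adjointness $\Fc_{v\shortrightarrow e}=(\bar\Fc_{v\shortleftarrow e})^{\top}$ with respect to the standard bases. This makes the cochain complex the dual of the chain complex used in Proposition~\ref{Prop:sheaf_homology}.

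The second step is the cohomology computation itself. There is no $(-1)$-cochain group, so $B^0=0$ and $H^0(G,\Fc)=Z^0=\ker d^0$. It then follows that
\be
H^0(G,\Fc)=\ker d^0=\ker \Hs_\Fc^T=\cT^*\big(G,\{\cC_v\}\big) .
\ee
Read locally, a $0$-cochain $f$ is a cocycle exactly when, on every edge, the two endpoint values agree after co-restriction: $\lv_e^\top f(v)=\lv_e^\top f(v')$. These per-edge checks are the global condition, in parallel with the per-vertex local codeword condition \eqref{eq:boundary_at_vertex} in the homology case.

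The main obstacle is bookkeeping rather than mathematics. One has to confirm that the fixed stalk bases make the transpose of $\partial_1$ coincide blockwise with the co-restriction maps, which involves no hidden dualization $\Fc(v)^*\cong\Fc(v)$ beyond the standard-basis identification. One also has to take care that $\lv_e$ refers to different vertices' matrices at the two endpoints of $e$. Once this identification is explicit, the statement is immediate.
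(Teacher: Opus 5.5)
Your proposal is correct and follows essentially the same route as the paper. Both arguments evaluate $d^0 f$ on each edge as the sum of the two endpoint co-restrictions, identify $d^0 f=0$ with the per-edge checks of $\C^\top$ (that is, with $\ker \Hs_\Fc^T$), and conclude $H^0=Z^0$ because there are no $(-1)$-cells. Your explicit blockwise check that $d^0=\Hs_\Fc^T$ via adjointness is a step the paper simply builds into its definitions.
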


\begin{proof}
A $0$-cochain is $f = \sum_{v\in V} f(v)\,\bar{v} \in C^0(G,\Fc)$, with
$\bar{v}$ the elementary $0$-cochain supported on $v$. Its coboundary is a
$1$-cochain, and evaluating it on an edge $e=[v,v']$ gives
\begin{align}\label{eq:sheaf_cohomology_condition}
&\text{for each edge } e=[v,v']: \cr
&(d^0 f)(e) = \Fc_{v \shortrightarrow e}\big(f(v)\big)
+ \Fc_{v' \shortrightarrow e}\big(f(v')\big) .
\end{align}
Hence $d^0 f = 0$ if and only if the two co-restrictions agree on every edge,
which is exactly the statement that the checks of $\C^{\top}$, one per edge, are
all satisfied. Since $G$ has no cells of degree $-1$, $Z^0=H^0$, and
$\cT^*(G,\{\cC_v\}_{v\in V}) = H^0(G,\Fc)$. We call it the
\emph{$0$-cocycle code}.
\end{proof}

\subsection{Subdividing Classical Codes into Sheaf Codes}\label{sec:subdivision}

\begin{figure*}
    \centering   \includegraphics[width=0.8\linewidth]{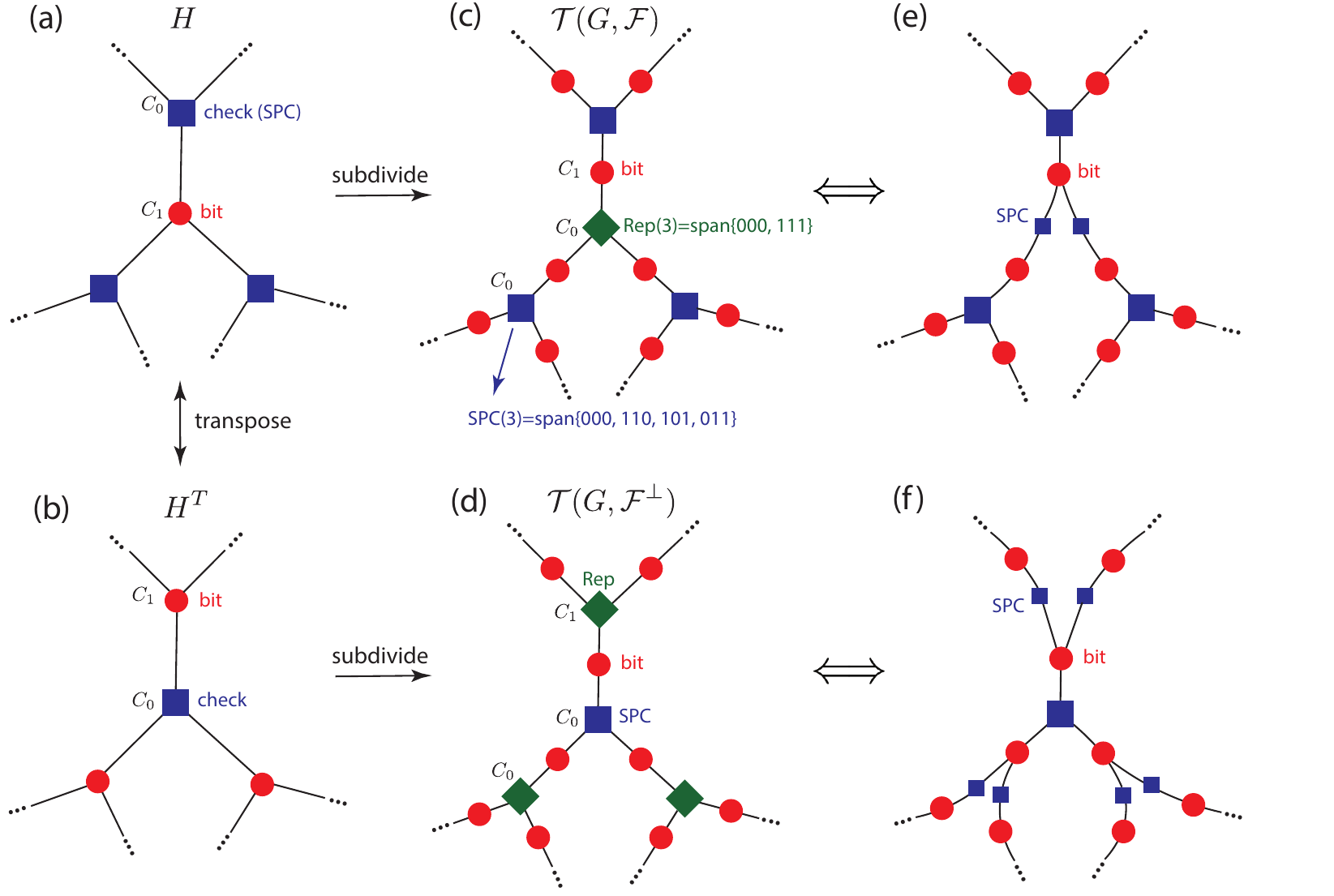}
    \caption{Subdivision of an arbitrary classical code into a sheaf
code. (a) The Tanner graph of $\C=\ker\Hs$, drawn as a hypergraph $G_h$: each
check is a vertex $v_h$ (blue square) carrying a single-parity-check (SPC)
constraint on $C_0$, and each bit is a hyperedge on $C_1$ (red dot) joining the
checks in which it participates. (b) The transposed code
$\C^{\top}=\ker\Hs^{\top}$ on the same hypergraph, with the roles of bits
(hyperedges) and checks (vertices) exchanged. (c) The subdivided sheaf code
$\T(G,\Fc)$: every hyperedge of (a) is split into $w_c$ ordinary edges meeting at
a new vertex $v'$ (green diamond), which carries the repetition local code
$\text{Rep}(w_c(v'))$ --- here $\text{Rep}(3)=\text{span}\{000,111\}$ --- while the
original check vertices (blue squares) carry
$\text{SPC}(w_r(v_h))=\text{Rep}(w_r(v_h))^{\perp}$, here
$\text{SPC}(3)=\text{span}\{000,110,101,011\}$. Bits now live on the edges, so $G$
is an honest graph and the code is a sheaf code. (d) The same subdivision applied
to $\C^{\top}$ produces $\T(G,\Fc^{\perp})$ on the \emph{identical} graph, with the
two families of local codes interchanged; since
$\text{Rep}=\text{SPC}^{\perp}$, this is exactly the dual sheaf. (e), (f)
Equivalent redrawings of (c) and (d) as bipartite graphs, making the
bit/check bipartition explicit.}
    \label{fig:subdivision}
\end{figure*}

We start with an arbitrary classical code $\C= \ker \ \Hs$, where $\Hs \in \FF_2^{m \times n}$ is the parity check matrix. In this work, we assume that $\Hs$
has no zero rows or columns, but allow arbitrary rank.
Full row rank is imposed separately where required.  We consider its associated Tanner graph, or equivalently a hypergraph $G_h=(V_h, E_h)$ with $|V_h| = m$ and $|E_h| = n$.  It corresponds to the following 2-term chain complex: 
\begin{align}
\C:   \qquad     C_1 & \xrightarrow[]{\partial_{1}=\mathsf{H}} C_{0} \ . \cr
            \text{bit}& \qquad \ \ \text{check}
\end{align}
Here, the check on $C_1$ corresponds to a vertex $v_h \in V_h$ (blue square) and the bit on $\cC_0$ corresponds to a hyperedge $ e \in E_h$ composed of several edges meeting at a junction (red dot), as illustrated in Fig.~\ref{fig:subdivision}(a). 
We will also consider the transposed classical code $\C^\top = \ker \ \Hs^\top$, where $\Hs^\top$ is the transpose of the original parity check matrix $\Hs$.  This 
corresponds to the following chain  complex:
\begin{align}
\C^\top:   \qquad     \cC_0 & \mathrel{\overset{\xrightarrow[]{d^0=\mathsf{H}}}{\underset{\xleftarrow[\partial_1=\mathsf{H}^\top]{}}{}}}      C_1 \ , \cr
           \ \text{check} & \qquad \quad \ \ \text{bit}
\end{align} 
where the roles of bits (hyperedges) and checks (vertices) are switched, as illustrated in Fig.~\ref{fig:subdivision}(b).

Our goal here is to transform the hypergraph $G_h$ associated with arbitrary classical code $\C$ into a graph $G$ with local coefficients. In other words, we are transforming the general 2-term chain complex associated with the hypergraph into a subdivided 2-term sheaved simplicial complex that now corresponds to a sheaf code. This would allow us to exploit the cup product defined for simplicial complexes. Note that the local cycle codes are allowed to vary by vertex, unlike the uniform local code in Definition~\ref{def:sipser-spielman-code}.

% Consider a hypergraph $G_h = (V_h, E_h)$ associated with the check matrix $H$ such that $|V_h| = m$ and $|E_h| = n$. Our convention is to let each vertex $v \in V_h$ to correspond to a check in $H$ and each hyperedge $e \in E_h$ correspond to a bit in $H$. 
The subdivision process proceeds in the following way.
For each bit $j \in [n]$, the hyperedge associated with $j$ in $G_h$ is subdivided into a set of $w_c$ edges, where $w_c$ is the column weight of $\Hs$, and a single new vertex $v'_j$ (green diamond) as illustrated in Fig.~\ref{fig:subdivision}(c). Each of the $w_c$ new edges connects $v'_j$ to one of the $w_c(v'_j)$ vertices corresponding to the checks adjacent to bit $j$ in the Tanner graph of $\Hs$.
Let $V'$ be the set of new vertices $\{v'\}$ introduced in this subdivision process.
After performing this subdivision for all $n$ bits, we obtain a subdivided graph $G = (V = V_h \sqcup V', E)$ with $|V| = n + m$ vertices and $|E|= \sum_{v'} w_c(v')$ edges.

We want the new classical code $\C'$ associated with $G$ to be similar to $\C=\ker \Hs$ and has similar codewords. Thus, all the new vertices $v'_j \in V'$ that we introduced have to impose the length-$w_j$ local repetition code on the $w_c(v')$ edges incident to each $v'$, whereas at $v_h$ it is the length-$t_h$ single-parity-check code. The repetition check has $w_j-1$ independent rows, so the associated syndrome stalk is $\FF_2^{w_j-1}$. This syndrome space is different from the length-$w_j$ space containing the incident edge word.Thus, we can define a sheaf structure on $G$:
\be
\Fc(\sigma )= \begin{cases}  \FF_2 & \sigma=e \in E, \\ 
 \FF_2 & \sigma=v_h \in V_h,  \\ 
 \FF_2^{w_c(v')-1} & \sigma=v' \in V', \end{cases}
\ee
where the stalk at each edge $e$ and original vertices $v$ is given a value in $\FF_2$, and every new vertex $v'$ is equipped with a subspace of $\FF_2^{w_c(v')-1}$. To be explicit, the vertex $v'$ with degree $w_c$ is only satisfied if its local coefficient lies in the local code space Rep($w_c$).  On the other hand, the check on each original vertex $v_h$  just imposes the usual single parity check constraint (SPC). We hence call the local code on $v_h$ the SPC code, which is denoted by SPC$(w_r(v_h))$ and has parameter $[w_r(v_h), w_r(v_h)-1, 2]$. Here,  $w_r(v_h)$ is the degree of the original vertex $v_h$ which equals the corresponding row weight of $\Hs$. Note that this is equivalent to the dual of the length-$w_r(v_h)$ repetition code. 
We denote this subdivided sheaf code by 
\be\label{eq:sub_Tanner_def}
\T(G,  \{C_v\}_{v\in V})\coloneqq \{c_1 \in \FF_2^{|E|}\,|\, \forall v \in V\,:\,c_1|_{v} \in C_v\},
\ee
where $c_1|_{v}$ is the restriction of the length-$|E|$ bitstring to the $\FF_2$-valued substring consisting of edges adjacent to the vertex $v$.

\begin{figure}[h]
    \centering   \includegraphics[width=1\linewidth]{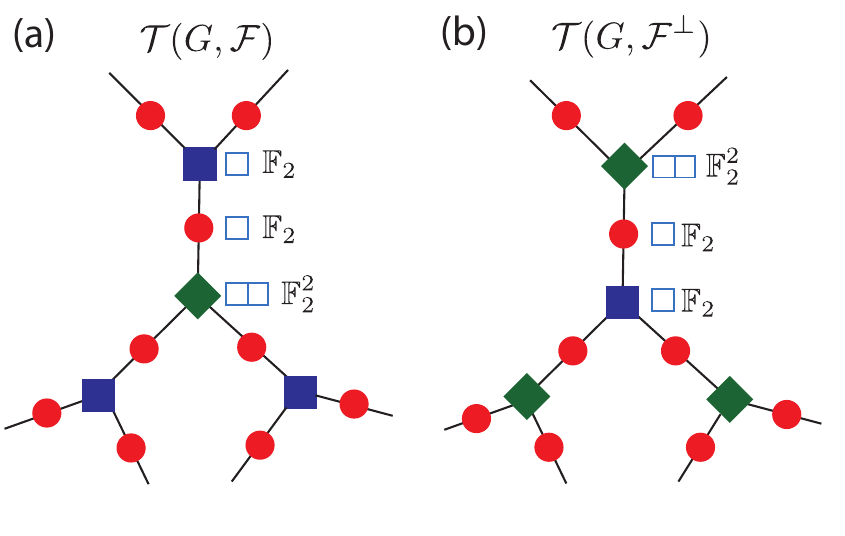}
    \caption{Stalk dimensions of the subdivided sheaf (blue
boxes), for (a) $\T(G,\Fc)$ and (b) $\T(G,\Fc^{\perp})$ on the same graph $G$ as
in Fig.~\ref{fig:subdivision}(c, d). Every edge carries the scalar stalk
$\Fc(e)=\FF_2$, holding the single bit on that edge. A vertex carrying the
single-parity-check code (blue square) imposes one constraint and hence has the
scalar stalk $\FF_2$, whereas a degree-$3$ vertex carrying the repetition code
(green diamond) imposes two independent constraints and hence has the
two-dimensional stalk $\FF_2^{2}$; in general
$\dim\Fc(v)=m(v)$ is the number of rows of the local parity-check matrix $\Hs_v$.
Passing from $\Fc$ to $\Fc^{\perp}$ exchanges the two local codes, and with them
the two stalk dimensions, on every vertex.}
\label{fig:stalk_distribution}
\end{figure}

\begin{figure*}
    \centering   \includegraphics[width=0.9\linewidth]{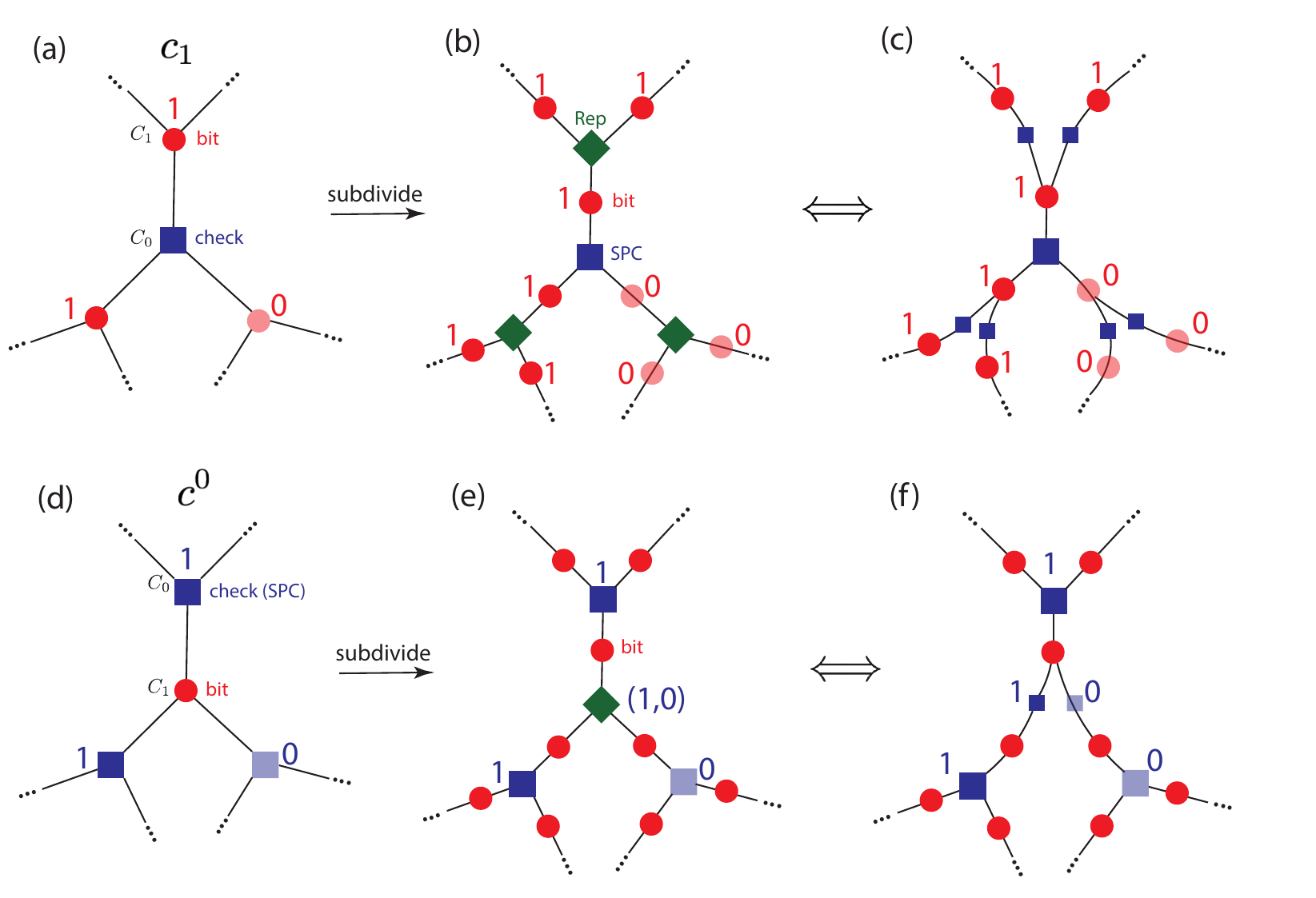}
    \caption{Codewords $c_1 \in \ker(\Hs)$ and
$c^0 \in \ker(\Hs^\top)$ being subdivided. (a) A codeword $c_1$ of the original
code, with bit values $1$ and $0$ on the hyperedges (red dots). (b) Its image
under subdivision: the repetition local code at each new vertex (green diamond)
forces all $w_c$ edges of a hyperedge to carry the common value of that bit, and
the SPC constraint at each original check vertex (blue square) is the original
parity check; the map is the isomorphism $\Phi$ of
Proposition~\ref{prop:subdivision-params}. (c) An equivalent redrawing of (b).
(d)--(f) The same for a codeword $c^0$ of the transposed code
$\C^{\top}=\ker\Hs^{\top}$, which subdivides on the same graph with the local
codes interchanged. Here the $0$-cochain is stalk-valued: the value on the new
vertex is the vector $(1,0) \in \FF_2^{2}$, not a single bit.}
    \label{fig:codeword_subdivided}
\end{figure*}

We can also do exactly the same subdivision to the transposed code $\C^\top = \ker \ \Hs^\top$, which leads to the same graph $G$ as shown in Fig.~\ref{fig:subdivision}(d). Note that the local code types (indicated by blue box and green diamond) on each vertex of the graph have been interchanged in the case of $\C$ and $\C^\top$. In fact, the two types of local codes on each vertex in the two cases are dual to each other: $\text{Rep}=\text{SPC}^{\perp}$, namely for any of the following two local codewords (1-chains) $a \in \text{Rep}(w)$ and $b \in \text{SPC}(w)$, $\sum_{i=1}^w a_i b_i=0$, which can be verified on Fig.~\ref{fig:subdivision}(c, d). 
Due to this dual relation of the local codes on each vertex, the subdivided sheaf code obtained from the transposed code $\C^\top = \ker \ \Hs^\top$ can be represented as $\T(G,  \{C^{\perp}_v\}_{v\in V})$, where $C_v^{\perp}$ denotes the dual code of $C_v$. We denote the coefficient system obtained from these dual local cycle codes by $\Fc^\perp$.

Another way to understand the relationship between the subdivided sheaf codes obtained from $\C$ and $\C^\top$ is from the Tanner graph picture. Instead of representing $\C$ as a hypergraph, we represent it with a bipartite graph with bits and checks on either side of the bipartition. Naturally, $\C^\top$ would just be a relabeling of bits and checks onto the opposite side of the bipartition. The subdivision process simply moves the bits from one half of the vertices onto the edges while turning those vertices into checks that impose repetition-code-like constraints. Because the edges and their connectivity do not change when we take the transpose of the classical code $\C$, we see that the number of bits in the subdivided code remains the same. The only difference between the subdivided sheaf code of $\C$ and $\C^\top$ is the swapping of local codes between the original bit vertices and the original check vertices. And because the repetition code and SPC are dual with respect to each other, it is equivalent to taking the dual of the sheaf.

\begin{proposition}[Parameters of the subdivided sheaf codes]
\label{prop:subdivision-params}
Let $\Hs\in\FF_2^{m\times n}$ have rank $r$ and no zero rows
or columns. Let $w_j=w_c(v'_j), t_h=w_r(v_h)$ and set
\begin{align*}
 \C=\ker\Hs&,\quad \C^\top=\ker\Hs^\top,\quad \\
 k=n-r&,\quad k^\top=m-r, \\
 \quad N=\sum_{j=1}^n w_j&=\sum_{h=1}^m t_h.
\end{align*}
Let $\Fc$ be the subdivision sheaf and let $\Fc^\perp$
use the dual local codes. Write $w_c^{\min}=\min_jw_j$, $w_c^{\max}=\max_jw_j$, $w_r^{\min}=\min_ht_h$, and $w_r^{\max}=\max_ht_h$. Both classical codes have block length $N$, and
\[
 H_1(G,\Fc)\cong\C,\qquad
 H_1(G,\Fc^\perp)\cong\C^\top.
\]
Their dimensions are $k$ and $k^\top$, respectively. Their minimum nonzero cycle weights are
\begin{align}
 d_1(\Fc)&=
 \min_{0\ne x\in\C}\sum_{j:x_j=1}w_j,\\
 d_1(\Fc^\perp)&=
 \min_{0\ne y\in\C^\top}\sum_{h:y_h=1}t_h.
\end{align}
In particular, writing $d=d(\C)$ and $d^\top=d(\C^\top)$,
\begin{align}
 w_c^{\min}d&\le d_1(\Fc)\le w_c^{\max}d,\\ w_r^{\min}d^\top&\le d_1(\Fc^\perp) \le w_r^{\max}d^\top.
\end{align}
\end{proposition}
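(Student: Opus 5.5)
By Proposition~\ref{Prop:sheaf_homology}, $H_1(G,\Fc)$ is exactly the set of edge words $\vc z\in\FF_2^{E}$ whose restriction to every vertex lies in the local code there, since $G$ has no $2$-cells. We therefore identify this space explicitly with $\C$. The edge set of $G$ is in bijection with the nonzero entries of $\Hs$ (the edges of the Tanner graph), so $|E|=N=\sum_j w_j=\sum_h t_h$ for both sheaves. This gives the block length claim, because $\Fc$ and $\Fc^\perp$ live on the same graph.

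Next we construct the map $\Phi:\C\to H_1(G,\Fc)$, $\Phi(x)_{e}=x_j$ for every edge $e=[v'_j,v_h]$. We check the local conditions vertex by vertex. At $v'_j$ the word is constant, hence lies in $\mathrm{Rep}(w_j)$. At $v_h$ the word is $(x_j)_{j:\Hs_{hj}=1}$, whose parity is $(\Hs x)_h=0$, hence lies in $\mathrm{SPC}(t_h)$. Conversely, let $\vc z$ be a $1$-cycle. The repetition condition at $v'_j$ forces all $w_j$ edges at $v'_j$ to carry a common value, which we call $x_j$. This is where the assumption of no zero columns is used: $w_j\ge1$, so $x_j$ is well defined, and no bit is lost. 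The SPC condition at each $v_h$ then reads $(\Hs x)_h=0$. So $\Phi$ is a linear bijection, and $\dim H_1(G,\Fc)=n-r=k$. Because $\Phi(x)$ repeats each bit $x_j$ exactly $w_j$ times, we get $|\Phi(x)|=\sum_{j:x_j=1}w_j$. Minimizing over nonzero $x$ gives the formula for $d_1(\Fc)$. The sandwich bound follows from $w_c^{\min}|x|\le\sum_{j:x_j=1}w_j\le w_c^{\max}|x|$: take the minimizer on each side, and use a minimum-weight codeword of $\C$ for the upper bound.

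For $\Fc^\perp$ we run the same argument with the roles of $V_h$ and $V'$ exchanged. The repetition code now sits at $v_h$, which is well defined because $\Hs$ has no zero rows, and the SPC code sits at $v'_j$. A cycle is then $y\in\FF_2^m$ spread over the $t_h$ edges at $v_h$. The SPC condition at $v'_j$ reads $\sum_{h:\Hs_{hj}=1}y_h=(\Hs^\top y)_j=0$. This yields $H_1(G,\Fc^\perp)\cong\C^\top$ of dimension $m-r=k^\top$, weight $\sum_{h:y_h=1}t_h$, and the analogous bounds with $t_h$.

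There is no serious obstacle. The one point needing care is bookkeeping. We must make sure that the cycle space is computed as edge words satisfying local codeword membership: this is Proposition~\ref{Prop:sheaf_homology} applied with $\Hs_v$ a parity-check matrix of $\mathrm{Rep}(w_j)$, of rank $w_j-1$. It must not be computed from the stalk dimensions, which only affect $C_0$ and hence $H_0$, not $H_1$. We must also handle the degenerate case of degree-one vertices, which is harmless. When $w_j=1$, the code $\mathrm{Rep}(1)=\FF_2$ imposes nothing, and the stalk is $0$-dimensional. When $t_h=1$ in $\Fc^\perp$, the code $\mathrm{SPC}(1)=\{0\}$ correctly forces the corresponding bit to vanish, which matches a weight-one check.
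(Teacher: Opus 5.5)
Your proof is correct and follows essentially the same route as the paper's: the explicit copy map $\Phi(x)_{(h,j)}=x_j$ (and its transposed analogue $y\mapsto (y_h)_{(h,j)}$), inversion by reading off the common value at each repetition vertex using $w_j,t_h\ge 1$, and the weight identity $|\Phi(x)|=\sum_{j:x_j=1}w_j$, which yields both the exact distance formulas and the min/max-degree sandwich bounds. One small slip in your closing remark on degree-one vertices: in $\Fc^\perp$ the vertex $v_h$ carries $\mathrm{Rep}(t_h)$, so the case $\mathrm{SPC}(1)=\{0\}$ occurs for $t_h=1$ in $\Fc$ and for $w_j=1$ in $\Fc^\perp$ (while $t_h=1$ in $\Fc^\perp$ imposes nothing); this does not affect the argument.
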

\begin{proof}
Let $E=\{(h,j):\Hs_{h,j}=1\}$ be the Tanner-graph edge set. Both coefficient systems assign $\FF_2$ to each edge, so both edge codes lie in $\FF_2^E$ and have block length \[N=|E|=\sum_j w_j=\sum_h t_h.\] Since $G$ has no $2$-cells, first homology equals the cycle kernel. Define $\Phi:\C\to\FF_2^E$ by $\Phi(x)_{(h,j)}=x_j$. Its incident values satisfy the repetition constraint at each bit vertex, while the SPC constraint at check vertex $h$ is precisely $(\Hs x)_h=0$. Conversely, every cycle has a common value $x_j$ on the edges at bit vertex $j$. This value is well-defined because $w_j\ge1$. The SPC constraints give $\Hs x=0$, so reading these common values inverts $\Phi$. Thus $H_1(G,\Fc)\cong\C$ and its dimension is $n-r=k$. For $\Fc^\perp$, repetition constraints occur at the original check vertices and SPC constraints at the original bit vertices. The map $\Psi:\C^\top\to\FF_2^E$ defined by $\Psi(y)_{(h,j)}=y_h$ satisfies these constraints because $(\Hs^\top y)_j=0$. Reading the common value at each check vertex gives its inverse, using $t_h\ge1$. Hence $H_1(G,\Fc^\perp)\cong\C^\top$ and its dimension is $m-r=k^\top$. The maps copy each input bit to its incident edges, giving \[|\Phi(x)|=\sum_{j:x_j=1}w_j,\qquad |\Psi(y)|=\sum_{h:y_h=1}t_h.\] Minimizing over nonzero codewords proves the exact distance formulas. Bounding each degree by its minimum gives the lower bounds. Evaluating the weighted sum on a minimum-weight input codeword gives the upper bounds.
\end{proof}

With the notation and hypotheses of
Proposition~\ref{prop:subdivision-params}, fix the following local
parity-check presentations. For a length-$w$ repetition code use
$R_w=[I_{w-1}\mid\mathbf1]$, with $R_1$ the empty $0\times1$
matrix; for a length-$t$ SPC code use $(1,\ldots,1)$.
Use the same conventions for $\Fc^\perp$ after interchanging
the local cycle codes. These local check matrices have full row rank.

\begin{lemma}[The subdivision complex]
\label{lemma:subdivision-normal-form}
The subdivision chain complex is chain-isomorphic to
\[
 \bigl(\FF_2^n\xrightarrow{\Hs}\FF_2^m\bigr)
 \oplus
 \bigl(\FF_2^{N-n}\xrightarrow{I}\FF_2^{N-n}\bigr).
\]
In particular it is chain-homotopy equivalent to the original
complex. The assembled boundary $\Hs_\Fc$ has size
$(N-n+m)\times N$ and rank $N-n+r$. Consequently,
\[
\begin{array}{c|cccc}
 &\dim H_0&\dim H_1&\dim H^0&\dim H^1\\ \hline
 \Fc&m-r&n-r&m-r&n-r\\
 \Fc^\perp&n-r&m-r&n-r&m-r
\end{array}
\]
Here cochains use the transposed boundary matrices in the fixed
stalk bases. In particular, if $\Hs$ has full row rank, then
\begin{align}
 H^0(G,\Fc)&=0,& \dim H^1(G,\Fc)&=n-m,\\
 \dim H^0(G,\Fc^\perp)&=n-m,& H^1(G,\Fc^\perp)&=0.
\end{align}
\end{lemma}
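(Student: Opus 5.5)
The plan is to write the assembled boundary $\Hs_\Fc$ explicitly in the fixed bases and then perform an invertible change of basis on chains and on $0$-chains that block-diagonalizes it. Order the edges $E=\{(h,j)\}$ by bit $j$. For each bit vertex $v'_j$ of degree $w_j$, choose one distinguished incident edge $(h_j^\ast,j)$, say the last one in the ordering of $\iota_{v'_j}$, so that $R_{w_j}=[I_{w_j-1}\mid\mathbf1]$ has its $\mathbf1$ column on that edge. The $0$-chain space is $\FF_2^m\oplus\bigoplus_j\FF_2^{w_j-1}$, which has dimension $m+N-n$. The $1$-chain space is $\FF_2^N$. In these bases $\Hs_\Fc=\begin{pmatrix}S\\ R\end{pmatrix}$, where $S$ is the $m\times N$ SPC block with $S_{h,(h',j)}=\delta_{hh'}$, and $R=\bigoplus_j R_{w_j}$.

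Next we change basis on $C_1$. For each $j$, replace the elementary edge basis $\{e_{(h,j)}\}_h$ by the vector $u_j=\sum_h e_{(h,j)}$ (the all-ones vector on the edges at $v'_j$), together with the $w_j-1$ non-distinguished edges $e_{(h,j)}$, $h\ne h_j^\ast$. This is invertible because $u_j$ minus the non-distinguished edges recovers the distinguished edge. In the new basis:
\begin{itemize}
\item $R u_j=0$, since $\mathbf1\in\mathrm{Rep}=\ker R_{w_j}$;
\item $S u_j$ is exactly column $j$ of $\Hs$;
\item $R$ restricted to the non-distinguished edges of bit $j$ is $I_{w_j-1}$.
\end{itemize}
So the matrix becomes $\begin{pmatrix}\Hs & S'\\ 0 & I_{N-n}\end{pmatrix}$, where $S'$ is the image of the non-distinguished edges under $S$. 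A further basis change on $C_0$ subtracts $S'$ times the $I$-block rows; equivalently, apply the row operation $\begin{pmatrix}I&-S'\\0&I\end{pmatrix}$, which over $\FF_2$ is just an addition. This clears $S'$ and yields $\Hs\oplus I_{N-n}$. Both maps are isomorphisms of the graded spaces commuting with the differentials, hence a chain isomorphism. The summand $\FF_2^{N-n}\xrightarrow{I}\FF_2^{N-n}$ is contractible, which gives the homotopy equivalence.

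The remaining statements are bookkeeping. We have $\rk\Hs_\Fc=r+(N-n)$, so $\dim H_1=N-(N-n+r)=n-r$ and $\dim H_0=(N-n+m)-(N-n+r)=m-r$. Cochains use the transposes, so $\dim H^0=\dim\ker\Hs_\Fc^T=m-r$ and $\dim H^1=N-\rk=n-r$. For $\Fc^\perp$ the same argument applies with the roles of rows and columns exchanged. We take the distinguished edges at the check vertices $h$ (now carrying $R_{t_h}$), and the SPC rows are now one per bit $j$; the complex becomes $(\FF_2^m\xrightarrow{\Hs^\top}\FF_2^n)\oplus(\FF_2^{N-m}\xrightarrow{I}\FF_2^{N-m})$. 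This gives the swapped table, using $\rk\Hs^\top=r$. Specializing to $r=m$ yields the final display.

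The only step needing care is confirming that the two basis changes are compatible, i.e.\ that the composite is a chain map and not just an equivalence of matrices. Since the complex has only two terms, any pair of invertible maps $(P_1,P_0)$ with $P_0\Hs_\Fc P_1^{-1}$ equal to the target differential is automatically a chain isomorphism, so this step is immediate. The other thing to check is the degenerate case $w_j=1$, where $R_1$ is empty and $u_j$ is the single edge; the same formulas cover it.
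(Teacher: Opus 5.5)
Your proposal is correct and is essentially the paper's proof: your new $C_1$ basis ($\mathbf{1}_{w_j}$ together with the non-distinguished edges) is exactly the paper's coordinate split $z_j=x_j\mathbf{1}_{w_j}+(u_j,0)$, your block $S'$ is the paper's map $Q$, your row operation is its codomain change $(s,u)\mapsto(s+Qu,u)$ giving $\operatorname{diag}(\Hs,I_{N-n})$, and the transposed argument for $\Fc^\perp$ is the same. The only difference is cosmetic: the paper writes out an explicit chain inclusion, projection and homotopy, whereas you get the homotopy equivalence from the contractibility of the identity summand, which is equally valid.
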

\begin{proof}
Order the $w_j$ edges at each bit vertex and distinguish the
last one. An edge block has the unique form
\[
 z_j=x_j\mathbf1_{w_j}+(u_j,0),\qquad
 x_j=(z_j)_{w_j},\quad u_j=R_{w_j}z_j.
\]
Thus the repetition syndromes are precisely $u=(u_j)_j$.
The original-check syndromes are $\Hs x+Qu$, where $Q$ sends
each auxiliary coordinate to the check incident to its edge.
Therefore, in these domain coordinates,
\[
 \Hs_\Fc(x,u)=(\Hs x+Qu,u).
\]
The invertible codomain change $(s,u)\mapsto(s+Qu,u)$ gives
$\operatorname{diag}(\Hs,I_{N-n})$. The dimension and rank
claims follow. Applying the same construction to $\Hs^\top$
gives $\operatorname{diag}(\Hs^\top,I_{N-m})$ for $\Fc^\perp$.
The cohomology dimensions equal the corresponding homology
dimensions because the cochain matrices are transposes.
For completeness, write $Lx=(x_j\mathbf1_{w_j})_j$ and
$Su=((u_j,0))_j$, so that $Q=AS$ for the original-check
incidence map $A$. A chain inclusion and projection are
\begin{align*}
 i_1x&=Lx,\quad i_0s=(s,0),\qquad \\
 p_1z&=((z_j)_{w_j})_j,\quad p_0(s,u)=s+Qu.
\end{align*}
They satisfy $pi=I$. The map $h(s,u)=Su$ satisfies
$I-ip=\partial h+h\partial$, degree by degree, and hence
exhibits the chain-homotopy equivalence explicitly.
\end{proof}

All Hamming weights count scalar coordinates in the specified
stalk bases. Write $d^0$ for minimum nonzero zero-cocycle
weight. Let $\Delta=\max\{w_c^{\max},w_r^{\max}\}$.
The common-edge-value map gives isomorphisms
\[
 H^0(G,\Fc^\perp)\cong H_1(G,\Fc),\qquad
 H^0(G,\Fc)\cong H_1(G,\Fc^\perp).
\]
Indeed, at each vertex the transposed local check matrix for
$\Fc^\perp$ maps its stalk bijectively onto the local cycle
code for $\Fc$. The cocycle condition says that the endpoint
values agree on every edge. Conversely, every cycle has a
unique lift at every vertex. If $b$ lifts the edge word $c$,
each nonzero edge of $c$ has two nonzero endpoint stalks, so
\[
 2|c|\le\Delta|b|.
\]
Consequently,
\[
 d^0(\Fc^\perp)\ge\frac{2}{\Delta}d_1(\Fc),\qquad
 d^0(\Fc)\ge\frac{2}{\Delta}d_1(\Fc^\perp).
\]
For the systematic local checks fixed above, the lift of
$x\in\ker\Hs$ to $H^0(G,\Fc^\perp)$ retains the scalar $x_j$
at every original bit vertex. Its other coordinates are
selected incident copies of those bits. Hence also
\[
 d\le d^0(\Fc^\perp)\le(1+w_c^{\max})d.
\]
Similarly,
$ d^\top\le d^0(\Fc)\le(1+w_r^{\max})d^\top$.
Replacing a full-row-rank local check matrix by another basis
of the same row space preserves these cohomology spaces up
to an invertible stalk map. It need not preserve coordinate
weights exactly. For bounded $\Delta$, the change is
bounded by a constant.

%======================================================================
\section{Sheaves and Cup Products}
\label{sec:prelim}
%======================================================================

% Throughout, $\ZZ_2$ denotes a finite field with $q = 2^s$ elements (characteristic~$2$). The main results of this paper hold over $\ZZ_2$ ($s=1$); we state definitions for general $\ZZ_2$ but specialize to $\ZZ_2$ in all proofs. We recall the definitions from~\cite{LiShaoWeiLiLiu2026, LiLiLiuNguyen2025} adapted to our setting.

\subsection{Graded incidence posets and cochain complexes}

\begin{definition}[Graded incidence poset]\label{def:poset}
A finite \emph{graded poset} is a set $X$ equipped with a partial order $\preceq$ and a grading function $g: X \longrightarrow \mathbb Z_{\ge 0}$ such that $g(\sigma) \leq g(\tau)$ whenever $\sigma \preceq \tau$. We write $X(i) = \{\sigma \in X : g(\sigma) = i\}$ for the $i$-cells. The poset is a \emph{graded incidence poset} if for every $\sigma \prec \pi$ with $g(\pi) = g(\sigma) + 2$, there exists an \emph{even} number of $\tau$ with $\sigma \prec \tau \prec \pi$. The \emph{dimension} of $X$ is the maximum value of $g$.
\end{definition}

\begin{definition}[Combinatorial cochain complex]\label{def:cochain-scalar}
Given a graded incidence poset $X$ and a finite field $\FF_q$, define $C^i(X, \FF_q) = \FF_q^{|X(i)|}$ with coboundary operator $d^i: C^i \to C^{i+1}$ defined by
\begin{equation}\label{eq:coboundary-scalar}
d^i(\sigma) = \sum_{\tau \in X(i+1),\; \tau \dotsucc \sigma} \tau.
\end{equation}
where $\tau\dotsucc\sigma$ means $g(\tau) = g(\sigma) + 1$.  Note that for simplicity we consider $\FF_q$ with characteristic 2 such that we can ignore all the signs in the cobboundary and boundary maps. 
The even incidence condition ensures $d^{i+1} \circ d^i = 0$. 
\end{definition}

\subsection{Presheaves and sheaf codes}

\begin{definition}[Presheaf / local coefficient system]\label{def:presheaf}
Given a poset $X$, a \emph{system of local coefficients} (or presheaf) $\mathcal{F}$ consists of:
\begin{enumerate}[label=(\roman*)]
\item For each $\sigma \in X$, a finite-dimensional $\FF_q$-vector space $\mathcal{F}(\sigma) = V_\sigma$.
\item For each $\sigma \preceq \tau$, a co-restriction map $\mathcal{F}_{\sigma \shortrightarrow \tau}: V_\sigma \to V_\tau$ satisfying:
\begin{itemize}
\item $\mathcal{F}_{\sigma \shortrightarrow \sigma} = \mathrm{id}_{V_\sigma}$,
\item $\mathcal{F}_{\tau \shortrightarrow  \pi} \circ \mathcal{F}_{\sigma \shortrightarrow  \tau} = \mathcal{F}_{\sigma \shortrightarrow \pi}$ whenever $\sigma \preceq \tau \preceq \pi$ \quad (functoriality). 
\end{itemize}

\item For each $\sigma \preceq \tau$, a restriction map $\bar{\mathcal{F}}_{\sigma \shortleftarrow \tau}: V_\tau \to V_\sigma$ satisfying:
\begin{itemize}
\item $\bar{\mathcal{F}}_{\sigma \shortleftarrow \sigma} = \mathrm{id}_{V_\sigma}$,
\item $  \bar{\mathcal{F}}_{\sigma \shortleftarrow  \tau}  \circ \bar{\mathcal{F}}_{\tau \shortleftarrow  \pi} = \bar{\mathcal{F}}_{\sigma \shortleftarrow \pi}$ whenever $\sigma \preceq \tau \preceq \pi$. 
\end{itemize} \end{enumerate}
Fix a basis of each stalk. We require the two incidence maps to be adjoint for the coordinate pairings:
\[\bar\Fc_{\sigma\shortleftarrow\tau} =\bigl(\Fc_{\sigma\shortrightarrow\tau}\bigr)^T.\]
With matching incidence signs, this gives $\partial_i=(d^{i-1})^T$. 
\end{definition}

\begin{definition}[Cochain complex with local coefficients]\label{def:cochain-sheaf}
Given $X$ and $\mathcal{F}$, define
\begin{equation}\label{eq:Ci-sheaf}
C^i(X, \mathcal{F}) = \bigoplus_{\sigma \in X(i)} V_\sigma.
\end{equation}
An element $i$-cochain $c^i \in C^i(X, \mathcal{F})$ has a \emph{local component} $c^i(\sigma) \in V_\sigma$ for each $\sigma \in X(i)$. The coboundary operator $d^i: C^i(X, \mathcal{F}) \to C^{i+1}(X, \mathcal{F})$ is defined by
\begin{equation}\label{eq:coboundary-sheaf}
(d^i c^i)(\tau) = \sum_{\sigma \in X(i),\; \sigma \precdot \tau} \mathcal{F}_{\sigma \shortrightarrow \tau}(c^i(\sigma)).
\end{equation}
The functoriality condition and even incidence ensure $d^{i+1} \circ d^i = 0$. In matrix form, $d^i$ is a block matrix with entries $\mathcal{F}_{\sigma,\tau}$ for $\sigma \prec \tau$ and zero otherwise.
\end{definition}

The \emph{chain complex} $C_\bullet(X, \mathcal{F})$ has the same underlying spaces $C_i = \bigoplus_\sigma V_\sigma$. An element of the chain group is an $i$-chain $c_i \in C_i(X, \Fc)$ with its local component $c_i(\sigma) \in V_\sigma$ for each $\sigma \in X(i)$. The boundary operator is defined as
\begin{equation}\label{eq:boundary-sheaf}
(\partial_{i} c_i)(\sigma) = \sum_{\tau \in X(i),\; \tau \dotsucc \sigma} \bar{\mathcal{F}}_{\sigma \shortleftarrow \tau}(c_i(\tau)),
\end{equation}
so that $\partial_{i} = (d^{i-1})^T$ in matrix form.

\begin{definition}[Quantum sheaf code]\label{def:sheaf-code}
A \emph{quantum sheaf code} is a CSS code defined by extracting three consecutive terms from the cochain complex $C^\bullet(X, \mathcal{F})$:
\begin{equation}\label{eq:sheaf-code}
C^{p-1}(X, \mathcal{F}) \xrightarrow{d^{p-1}} C^p(X, \mathcal{F}) \xrightarrow{d^p} C^{p+1}(X, \mathcal{F}),
\end{equation}
with $H_Z = d^p$ and $H_X^T = d^{p-1}$.
\end{definition}

\subsection{The cup product}

We define the cup product first for simplicial complexes:

\begin{definition}[Cup product on simplicial complexes]\label{def:cup-simplicial}
Let $X$ be a simplicial complex with presheaves $\mathcal{F}$ and $\mathcal{G}$. The \emph{cup product}
\[
\cup: C^{p_1}(X, \mathcal{F}) \times C^{p_2}(X, \mathcal{G}) \to C^{p_1+p_2}(X, \mathcal{F} \otimes \mathcal{G})
\]
is defined as follows. Let $\sigma$ be a $(p_1+p_2)$-simplex with ordered vertices $v_0, v_1, \ldots, v_{p_1+p_2}$. Define the \emph{front $p_1$-face} $\sigma_{p_1} = [v_0, \ldots, v_{p_1}]$ and the \emph{back $p_2$-face} $\sigma_{p_2} = [v_{p_1}, \ldots, v_{p_1+p_2}]$. Then for $x_1 \in C^{p_1}(X, \mathcal{F})$ and $x_2 \in C^{p_2}(X, \mathcal{G})$:
\begin{equation}\label{eq:cup-simplicial}
(x_1 \cup x_2)(\sigma) = \mathcal{F}_{\sigma_{p_1}, \sigma}(x_1(\sigma_{p_1})) \otimes \mathcal{G}_{\sigma_{p_2}, \sigma}(x_2(\sigma_{p_2})).
\end{equation}
\end{definition}

\begin{proposition}[Properties of the cup product]\label{prop:cup-properties}
The cup product satisfies:
\begin{enumerate}[label=(\alph*)]
\item {Leibniz rule:} $d(x_1 \cup x_2) = (d x_1) \cup x_2 + x_1 \cup (d x_2)$.
\item {Associativity:} $(x_1 \cup x_2) \cup x_3 = x_1 \cup (x_2 \cup x_3)$.
\item {Descent to cohomology:} By the Leibniz rule, the cup product induces a well-defined map $H^{p_1}(X, \mathcal{F}) \times H^{p_2}(X, \mathcal{G}) \to H^{p_1+p_2}(X, \mathcal{F} \otimes \mathcal{G})$.
\end{enumerate}
\end{proposition}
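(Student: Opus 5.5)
We prove the three items directly from the definitions of the sheaf cochain complex and of the cup product, Definitions~\ref{def:cochain-sheaf} and~\ref{def:cup-simplicial}. We work over characteristic $2$, so all signs can be dropped. The only structural facts we need are two. First, the co-restriction maps are functorial: $\Fc_{\tau\shortrightarrow\pi}\circ\Fc_{\sigma\shortrightarrow\tau}=\Fc_{\sigma\shortrightarrow\pi}$. Second, the co-restriction maps of $\Fc\otimes\mathcal G$ are the tensor products $\Fc_{\sigma\shortrightarrow\pi}\otimes\mathcal G_{\sigma\shortrightarrow\pi}$. We state this second fact explicitly as the definition of the tensor presheaf, since it is what makes the target of the cup product a cochain complex at all.

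\textbf{Leibniz rule.} Fix a $(p_1+p_2+1)$-simplex $\pi=[v_0,\ldots,v_{p_1+p_2+1}]$ and expand
\[
(d(x_1\cup x_2))(\pi)=\sum_{k}(\Fc\otimes\mathcal G)_{\partial_k\pi\shortrightarrow\pi}\bigl((x_1\cup x_2)(\partial_k\pi)\bigr),
\]
where $\partial_k\pi$ deletes the vertex $v_k$. By functoriality, each summand equals
\[
\Fc_{\alpha_k\shortrightarrow\pi}x_1(\alpha_k)\otimes\mathcal G_{\beta_k\shortrightarrow\pi}x_2(\beta_k),
\]
with $\alpha_k,\beta_k$ the front and back faces of $\partial_k\pi$. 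We then split the sum into three ranges:
\begin{itemize}
\item For $k\le p_1$, we get $\alpha_k=\partial_k[v_0..v_{p_1+1}]$ and $\beta_k=[v_{p_1+1}..]$.
\item For $k\ge p_1+1$, we get $\alpha_k=[v_0..v_{p_1}]$ and $\beta_k=\partial_{k-p_1}[v_{p_1}..]$.
\end{itemize}
Summing the first range reproduces $(dx_1\cup x_2)(\pi)$, except that it is missing the term $k=p_1+1$ of $dx_1$. Summing the second range reproduces $(x_1\cup dx_2)(\pi)$, except that it is missing the term $k=p_1$ of $dx_2$. The two missing terms are the same element, namely
\[
\Fc x_1([v_0..v_{p_1}])\otimes\mathcal G x_2([v_{p_1+1}..]),
\]
both pushed to $\pi$. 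So adding one copy to each sum changes nothing mod $2$, and the identity follows.

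\textbf{Associativity.} On a simplex of dimension $p_1+p_2+p_3$, both sides evaluate to
\[
\Fc_{\to\sigma}x_1(\text{front})\otimes\mathcal G_{\to\sigma}x_2(\text{middle})\otimes\mathcal H_{\to\sigma}x_3(\text{back}),
\]
again using functoriality to collapse the two-step co-restrictions. Here we identify $(\Fc\otimes\mathcal G)\otimes\mathcal H$ with $\Fc\otimes(\mathcal G\otimes\mathcal H)$ via the canonical isomorphism.

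\textbf{Descent to cohomology.} The Leibniz rule shows that the product of two cocycles is a cocycle. If $x_1=d y$ and $dx_2=0$, then
\[
x_1\cup x_2=d(y\cup x_2),
\]
and symmetrically in the second slot. Hence the product is well defined on classes.

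The main obstacle is the index bookkeeping in the Leibniz rule. Specifically, one has to check that for $k<p_1$ the front face of $\partial_k\pi$ really is $\partial_k$ of the front $(p_1+1)$-face, and that the back face is unchanged. One also has to handle the boundary cases $p_1=0$ and $p_2=0$. I would do the $p_1=p_2=1$ case explicitly first and then write the general reindexing.
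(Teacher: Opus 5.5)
Your proof is correct and follows the same route as the paper's: a direct computation from Definitions~\ref{def:cochain-sheaf} and~\ref{def:cup-simplicial} in characteristic~2, associativity from the fact that the front (back) face of a front (back) face is again a front (back) face, and descent to cohomology from the Leibniz rule. You make explicit the bookkeeping that the paper leaves as ``direct computation''. The two leftover terms $\Fc_{\to\pi}x_1([v_0..v_{p_1}])\otimes\mathcal G_{\to\pi}x_2([v_{p_1+1}..v_{p_1+p_2+1}])$ cancel mod~2, and your reindexing of the faces of $\partial_k\pi$ tacitly assumes, as the paper does, that every face carries the vertex order induced from a fixed global ordering.
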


\begin{proof}
These are standard results in algebraic topology. The Leibniz rule follows by direct computation using Definition~\ref{def:cup-simplicial} and the coboundary formula. Over characteristic~2, all signs are positive. Associativity follows from the fact that the front face of the front face is the front face, and similarly for back faces. Part~(c) follows immediately from (a): if $d x_1 = 0$ and $x_2 = d y_2$, then $x_1 \cup x_2 = x_1 \cup d y_2 = d(x_1 \cup y_2) + (d x_1) \cup y_2 = d(x_1 \cup y_2)$.
\end{proof}

\subsection{The cap product}
\label{subsec:cap-product}
We next introduce the cap product, which is the dual to the cup product and will be used later in the Poincar\'e duality map, following Ref.~\cite{li2025poincar}.

\begin{definition}[Cap product on simplicial complexes]
\label{def:cap-simplicial}
Let $X$ be a simplicial complex with presheaves $\Fc$ and $\mathcal{G}$. The \emph{cap product}
\[\frown\,:\,C^{p_1}(X, \Fc) \times C_{p_1 + p_2}(X, \Fc \otimes \mathcal{G}) \to C_{p_2}(X,\mathcal{G})\]
is defined as follows.
Let $\sigma$ be a $(p_1 + p_2)$-simplex with ordered vertices $v_0,\ldots,v_{p_1 + p_2}$. In addition, let the front $p_1$-face and back $p_2$-face be denoted as
\[\sigma_{p_1}=[v_0,\ldots,v_{p_1}],\qquad
\sigma_{p_2}=[v_{p_1},\ldots,v_{p_1+p_2}].
\]
For $x \in C^{p_1}(X,\Fc)$ and $y \in C_{p_1 + p_2}(X,\Fc \otimes \mathcal{G})$, let $y(\sigma) = \sum_{\ell} a_{\ell} \otimes b_{\ell}$ with $a_\ell \in \Fc(\sigma)$ and $b_\ell \in \mathcal{G}(\sigma)$. Then,
\begin{equation}
    (x \frown y)(\sigma_{p_2}) = \sum_{\ell} \langle \Fc_{\sigma_{p_1}\shortrightarrow\sigma}(x(\sigma_{p_1})),\,a_{\ell} \rangle\,\bar{\mathcal{G}}_{\sigma_{p_2} \shortleftarrow \sigma}(b_\ell).
\end{equation}
\end{definition}

\begin{proposition}[Properties of the cap product]
\label{prop:cap-properties}
    The cap product satisfies:
    \begin{enumerate}[label=(\alph*)]
        \item {Leibniz rule:} $\partial(x \frown y) = (d x) \frown y + x \frown (\partial y)$.
        \item {Compatibility with the cup product:} $(x_1 \cup x_2)\frown y = x_2 \frown (x_1 \frown y)$ whenever the degrees match.
        \item {Descent to (co)homology:} By the Leibniz rule, the cap product induces a well-defined map $H^{p_1}(X,\Fc) \times H_{p_1+p_2}(X,\mathcal{F}\otimes\mathcal{G}) \to H_{p_2}(X,\mathcal{G})$.
    \end{enumerate}
\end{proposition}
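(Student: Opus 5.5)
The plan is to verify the three properties of the cap product (Proposition~\ref{prop:cap-properties}) directly on a single simplex. Each identity is local on simplices, so the argument mirrors the proof of Proposition~\ref{prop:cup-properties}: signs are irrelevant in characteristic~2, and the front/back-face combinatorics do the work. I will repeatedly use two facts.

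\begin{itemize}
\item Functoriality of co-restrictions and restrictions (Definition~\ref{def:presheaf}).
\item The adjointness $\bar\Fc_{\sigma\shortleftarrow\tau}=(\Fc_{\sigma\shortrightarrow\tau})^T$. This moves a map from one side of the pairing $\langle\cdot,\cdot\rangle$ to the other, so $\langle \Fc_{\sigma\shortrightarrow\tau}u, a\rangle=\langle u,\bar\Fc_{\sigma\shortleftarrow\tau}a\rangle$.
\end{itemize}

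By bilinearity, it suffices to take $y$ supported on one $(p_1+p_2+1)$-simplex $\sigma=[v_0,\dots,v_{p_1+p_2+1}]$ with $y(\sigma)=a\otimes b$.

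For the Leibniz rule (a), I expand $\partial(x\frown y)$. The chain $x\frown y$ lives on the back face $[v_{p_1},\dots,v_{p_1+p_2+1}]$, and its boundary removes one vertex $v_{p_1+j}$, $j\ge0$. On the other side, $x\frown\partial y$ sums over removing a vertex $v_i$ of $\sigma$ and then capping. There are three cases.

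\begin{itemize}
\item If $i\ge p_1+1$, the terms match the $j\ge1$ terms of $\partial(x\frown y)$. The restrictions compose by functoriality, and the co-restriction of $x$ to $\sigma$ factors through the face via adjointness.
\item If $i\le p_1$, the terms assemble, after transposing the restrictions onto $x$, into $(dx)\frown y$ evaluated on the front $(p_1+1)$-face $[v_0,\dots,v_{p_1+1}]$.
\item The leftover cross terms are the $i=p_1$ term of $(dx)\frown y$ and the $j=0$ term of $\partial(x\frown y)$. These are the two contributions where $v_{p_1}$ is dropped versus kept as the pivot, and they appear twice, so they cancel mod~2.
\end{itemize}

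This bookkeeping, namely identifying which terms pair off in the presence of nontrivial co-restriction maps, is the main obstacle. I would handle it by pushing every map onto a common stalk $\Fc(\sigma)$ before comparing terms.

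For compatibility with the cup product (b), take $x_1\in C^{p_1}$ and $x_2\in C^{p_2}$, with $y$ on a $(p_1+p_2+p_3)$-simplex. I evaluate $x_1\frown y$ on the back $(p_2+p_3)$-face $\tau=[v_{p_1},\dots]$. I then cap with $x_2$, whose front face in $\tau$ is $[v_{p_1},\dots,v_{p_1+p_2}]$, i.e.\ exactly the back $p_2$-face used in $(x_1\cup x_2)(\sigma_{p_1+p_2})$. After moving the restriction $\bar{\mathcal G}_{\tau\shortleftarrow\sigma}$ across the pairing as a co-restriction, functoriality identifies both sides with the pairing of $\Fc_{\cdot\shortrightarrow\sigma}x_1\otimes\mathcal G_{\cdot\shortrightarrow\sigma}x_2$ against $y(\sigma)$, followed by restriction to the final back $p_3$-face. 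This is the same ``front of front / back of back'' observation used for associativity of the cup product.

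For descent (c), it follows from (a) exactly as in Proposition~\ref{prop:cup-properties}(c).

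\begin{itemize}
\item If $dx=0$ and $\partial y=0$, then $\partial(x\frown y)=0$, so the cap product of a cocycle and a cycle is a cycle.
\item Changing $x$ by $du$ gives $(du)\frown y=\partial(u\frown y)+u\frown\partial y=\partial(u\frown y)$.
\item Changing $y$ by $\partial z$ gives $x\frown\partial z=\partial(x\frown z)+(dx)\frown z=\partial(x\frown z)$.
\end{itemize}

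In each case the change is a boundary, so the class of $x\frown y$ is well defined.
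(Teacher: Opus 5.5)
Your proposal is correct and takes the same route as the paper, whose proof only asserts that (a) and (b) come from a direct simplex-level computation with front/back faces and that (c) follows from (a); you carry out that computation correctly, using functoriality and the adjointness $\bar\Fc_{\sigma\shortleftarrow\tau}=(\Fc_{\sigma\shortrightarrow\tau})^T$. The one slip is in the index bookkeeping for (a): the $i\le p_1$ terms of $x\frown\partial y$ account for every term of $(dx)\frown y$ except the one obtained by deleting $v_{p_1+1}$ (not $v_{p_1}$) from $[v_0,\dots,v_{p_1+1}]$, and it is that term, with coefficient $\langle \Fc_{[v_0\dots v_{p_1}]\shortrightarrow\sigma}\,x([v_0,\dots,v_{p_1}]),a\rangle$ on the face $[v_{p_1+1},\dots,v_{p_1+p_2+1}]$, that equals the $j=0$ term of $\partial(x\frown y)$.
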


\begin{proof}
    These are standard properties dual to those of the cup product. The Leibniz rule follows from direct computation from Definition~\ref{def:cap-simplicial} together with the boundary and coboundary formulas. Compatibility with the cup product follows from the fact that taking successive front faces of a simplex agrees with taking the front face once and then capping the remaining back face. Part~(c) follows immediately from part~(a).
\end{proof}

% \subsection{The fundamental chain}

% \begin{definition}[Fundamental chain]\label{def:fundamental-class}
% The \emph{fundamental chain} of a $t$-dimensional cellular complex $X$ is
% \begin{equation}\label{eq:fundamental-class}
%   [X] := \sum_{\sigma \in X(t)} \sigma \;\in\; C_t(X, \mathcal{F}\otimes \Fc^{\perp}).
% \end{equation}
% \end{definition}
% \begin{remark}
%     The fundamental chain $[X]$ may not be a $t$-cycle.
% \end{remark}

\subsection{The pairing isomorphism}

\begin{definition}[Pairing]\label{def:pairing-sheaf}
The \emph{canonical pairing} $\langle \cdot, \cdot \rangle: C^i(X, \mathcal{F}) \times C_i(X, \mathcal{F}) \to \FF_q$ is defined by
\begin{equation}\label{eq:pairing-sheaf}
\langle x, y \rangle \equiv \int_x y = \sum_{\sigma \in X(i)} x(\sigma)^T y(\sigma),
\end{equation}
where $x(\sigma)^T y(\sigma)$ is the standard inner product on $V_\sigma$. The adjunction $\langle d^i x, z \rangle = \langle x, \partial_{i+1} z \rangle$ holds by $\partial_{i+1} = (d^i)^T$.
\end{definition}

\begin{proposition}[Non-degeneracy on (co)homology]\label{prop:nondegen-sheaf}
The induced pairing $\langle \cdot, \cdot \rangle: H^i(X, \mathcal{F}) \times H_i(X, \mathcal{F}) \to \FF_q$ is non-degenerate.
\end{proposition}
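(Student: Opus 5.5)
The pairing should be reduced to the standard perfect duality between the kernel of a matrix and the cokernel of its transpose. This works over any field because everything here is finite-dimensional linear algebra over $\FF_q$. Fix the stalk bases used to define the pairing in Eq.~\eqref{eq:pairing-sheaf}. Then $C^i(X,\Fc)$ and $C_i(X,\Fc)$ are both identified with $\FF_q^{N_i}$, where $N_i=\sum_{\sigma\in X(i)}\dim V_\sigma$, and the pairing is the standard dot product. By Definition~\ref{def:presheaf} (adjointness of restriction and co-restriction) and Eq.~\eqref{eq:boundary-sheaf}, we have $\partial_{i+1}=(d^i)^T$ and $\partial_i=(d^{i-1})^T$ as matrices.

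The first step is to check that the pairing descends to (co)homology. Take $x\in Z^i=\ker d^i$ and $y\in Z_i=\ker\partial_i$. Changing $x$ by a coboundary gives $\langle d^{i-1}u,y\rangle=\langle u,\partial_i y\rangle=0$. Changing $y$ by a boundary gives $\langle x,\partial_{i+1}w\rangle=\langle d^i x,w\rangle=0$. Both identities are the adjunction already recorded in Definition~\ref{def:pairing-sheaf}.

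The second step proves non-degeneracy in the homology slot. Suppose $x\in Z^i$ satisfies $\langle x,y\rangle=0$ for all $y\in\ker\partial_i$. Then $x\in(\ker\partial_i)^\perp$. For any matrix $A$ over a field, $(\ker A)^\perp=\operatorname{im}A^T$; this holds for the standard bilinear form even when it is not anisotropic, since $\dim(\ker A)^\perp=N-\dim\ker A=\operatorname{rk}A$ and $\operatorname{im}A^T\subseteq(\ker A)^\perp$ trivially. Applying this with $A=\partial_i$ and $A^T=d^{i-1}$ gives $x\in\operatorname{im}d^{i-1}$, so $[x]=0$. The symmetric argument, with $A=d^i$ and $A^T=\partial_{i+1}$, shows that a cycle $y$ orthogonal to every cocycle is a boundary.

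Non-degeneracy on both sides then also yields $\dim H^i=\dim H_i$, consistent with the table in Lemma~\ref{lemma:subdivision-normal-form}. There is no real obstacle. The only point needing care is that the standard form over $\FF_q$ may have isotropic vectors, so one must use the dimension count $(\ker A)^\perp=\operatorname{im}A^T$ rather than any positivity argument. One should also note that the stalk bases must be the same ones in which adjointness $\bar\Fc=\Fc^T$ was imposed.
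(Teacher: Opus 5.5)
Your proof is correct and follows the paper's argument: well-definedness from the adjunction $\partial_{i+1}=(d^i)^T$, then the identity $(\ker A)^\perp=\operatorname{im}A^T$ to show that a cycle orthogonal to all cocycles is a boundary. The only difference is that you apply the identity in both slots and deduce $\dim H^i=\dim H_i$, whereas the paper proves one slot and finishes with the rank count $\dim H_i=\dim H^i=\dim C^i-\rk d^i-\rk d^{i-1}$; your dimension-count justification of the identity, which holds despite isotropic vectors over $\FF_q$, is a small gain in rigor.
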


\begin{proof}
Define $\Phi: H^i(X, \mathcal{F}) \times H_i(X, \mathcal{F}) \to \FF_q$  by $\Phi([\eta])([x]) = \langle x, \eta \rangle$. 

This is well-defined and (co)boundary invariant: if $\eta' = \eta + \partial z$, then $\langle x, \eta' \rangle = \langle x, \eta \rangle + \langle x, \partial z \rangle = \langle x, \eta \rangle + \langle d x, z \rangle = \langle x, \eta \rangle$ (since $x \in \ker d$). Similarly for $x' = x + d w$.

To show $\Phi$ is injective: suppose $\langle x, \eta \rangle = 0$ for all $[x] \in H^i$, i.e., for all $x \in \ker d^i$. Then $\eta \in (\ker d^i)^\perp$. By the standard identity $(\ker A)^\perp = \im(A^T)$ applied to $A = d^i$, we get $\eta \in \im(\partial_{i+1})$, so $[\eta] = 0 \in H_i$. Since $\dim H_i = \dim H^i$ (both equal $\dim C^i - \rk d^i - \rk d^{i-1}$), $\Phi$ is an isomorphism.
\end{proof}

\begin{definition}[Invariant form]\label{def:invariant-form}
Given cocycles $x_{p_1} \in C^{p_1}(X, \mathcal{F}_1), \ldots, x_{p_\rho} \in C^{p_\rho}(X, \mathcal{F}_\rho)$ and a cycle $\eta \in C_{p_1+\cdots+p_\rho}(X, \mathcal{F}_1 \otimes \cdots \otimes \mathcal{F}_\rho)$, an \emph{invariant form}  is
\begin{equation}\label{eq:invariant-form}
 \langle x_{p_1} \cup \cdots \cup x_{p_\rho},\, \eta \rangle \equiv \int_{\eta} x_{p_1} \cup \cdots \cup x_{p_\rho}.
\end{equation}
By the Leibniz rule and the adjunction, $T_\eta$ descends to a multilinear form on cohomology classes and defines a logical multi-controlled-$Z$ gate when the coboundary operators are sparse.
\end{definition}

%======================================================================
\section{Untwisted qLDPC codes on sheaf complex}\label{sec:untwisted_code}
%======================================================================

Before turning to the twisted models, we set up the \emph{untwisted} quantum sheaf
code of Definition~\ref{def:sheaf-code} as an explicit Pauli stabilizer model, and
identify its $X$- and $Z$-stabilizers with the rows of the two sheaf coboundary
operators. Throughout this section we work with a \emph{single} system of local
coefficients; the three copies $\Fc^{\rd}$, $\Fc^{\bl}$, $\Fc^{\gr}$ needed to
define the twist are introduced only in Sec.~\ref{sec:non-Abelian_LDPC}.
Accordingly we drop the colour label and write $\Fc$ for the sheaf, and $A$, $B$
for the stabilizers.

Let $\L$ be a $\mathsf{d}=2$ dimensional cell (simplicial) complex with cell sets
$\L(0)=V$, $\L(1)=E$ and $\L(2)=F$, equipped with a system of local coefficients
$\Fc$ in the sense of Definition~\ref{def:presheaf}. The concrete example we have
in mind is the square complex obtained as the hypergraph product of two subdivided
classical sheaf codes, with the sheaf structure of
Eq.~\eqref{eq:local_system_2D}, but nothing below depends on that choice.
Extracting three consecutive terms of the cochain complex,
\be\label{eq:untwisted_complex}
C^{0}(\L, \Fc) \xrightarrow{\ d^{0}\ } C^{1}(\L, \Fc)
\xrightarrow{\ d^{1}\ } C^{2}(\L, \Fc) ,
\ee
gives a CSS code with $H_X^{T}=d^0$ and $H_Z=d^1$, whose physical qubits sit on the
$1$-cells, $X$-checks on the $0$-cells and $Z$-checks on the $2$-cells. Our task
is to write the corresponding stabilizer generators as Pauli operators.

\subsection{Hypergraph Product of Sheaf Codes}
Hypergraph product (HGP) codes are quantum low-density parity-check codes that can achieve constant rate~\cite{Tillich:2014_hyergraph_product}. These codes allow optimized QEC protocols~\cite{manes2025distance,tan2025effective,berthusen2025adaptive, liu2026achieving} and support many logical gates~\cite{quintavalle2023partitioning,xu2025fast,berthusen2025automorphism,golowich2025constant,tan2025single,li2025transversal,chang2026constant,blue2026full,koh2026achieving}.
Now we can subdivide an arbitrary HGP code into a quantum sheaf code. 
This can be simply achieved using the subdividing procedure presented in Sec.~\ref{sec:subdivision}: we first subdivide each classical code defined on a hypergraph $G_h$ into a sheaf code on a graph $G$, and then take the homological product of the subdivided codes. For input matrices of bounded row and column weight, the maps $i,p$ in Lemma~\ref{lemma:subdivision-normal-form}, and their transposes, send every scalar coordinate to $O(1)$ coordinates. Tensoring these maps, with the same degree and transpose conventions as the original hypergraph-product complex, therefore preserves its encoded dimension and changes its $X$- and $Z$-distances by at most constant factors. Indeed, a nonzero homology class projects to a nonzero original class, with at most constant weight increase, and inclusion gives the reverse case. The transposed maps prove the same statement for cohomology. The physical block length also changes by only a constant factor. These conclusions use the explicit chain maps, not the number of newly introduced cells alone.

For generality, we consider a 2D hypergraph product code formed by the tensor  product of two classical sheaf codes on  graphs: $\T (G_1, \{\cC_v\}_{v \in V_1})  $ and $ \T (G_2, \{\cC_{v'}\}_{{v'} \in V_2})$, where $G_1=(V_1, E_1)$ and $G_2=(V_2,E_2)$.  

Since we take a product of two graphs, i.e., $G_1$$\times$$G_2$, the cells $\tilde \sigma$ in the product complex  are composed of the product of cells in the graphs in the form of $\sigma_i \otimes \sigma_j$ where $i,j=0,1$. There are hence the following types of cells: 
\begin{enumerate}
\item 
Vertex: $ v \otimes v' \in \tilde V=V_1\otimes V_2$;
\item 
Horizontal edge: $e \otimes v' \in \tilde{E}_x = E_1 \otimes V_2$; 
\item
Vertical edge $v \otimes e' \in \tilde{E}_y = V_1 \otimes E_2$;
\item
Face: $ e$$\otimes$$e' \in \tilde{F}=E_1\otimes E_2$.
\end{enumerate}

The vertex stalk is the tensor product of the two local check spaces.
It does not impose separate horizontal and vertical cycle conditions.
Write a one-chain as $z=(z_x,z_y)$, with
\[
z_x\in C_1(G_1,\Fc_1)\otimes C_0(G_2,\Fc_2),\qquad
z_y\in C_0(G_1,\Fc_1)\otimes C_1(G_2,\Fc_2).
\]
Its cycle condition is
\begin{equation}\label{eq:product_one_cycle}
(\partial_1^{(1)}\otimes I)z_x
+(I\otimes\partial_1^{(2)})z_y=0.
\end{equation}
In particular, at $v\otimes v'$ it reads
\begin{align*}
&\sum_{e\succ v}
\bigl(\bar\Fc_{1,v\shortleftarrow e}\otimes I\bigr)
z_x(e,v')\\
&\quad+
\sum_{e'\succ v'}
\bigl(I\otimes\bar\Fc_{2,v'\shortleftarrow e'}\bigr)
z_y(v,e')=0
\quad\text{in }\Fc_1(v)\otimes\Fc_2(v').
\end{align*}
Here $z_x(e,v')\in\Fc_2(v')$ and
$z_y(v,e')\in\Fc_1(v)$ when the graph edge stalks are scalar.
The two directional contributions may cancel. The stalks themselves are
therefore as follows.
\be\label{eq:local_system_2D}
\mathcal{F}(\tilde{\sigma}) = \begin{cases}
  \FF_2^{m(v)} \otimes \FF_2^{m(v')},   \quad  \tilde{\sigma} = v \otimes v' \in \tilde V \\      
  \FF_2 \otimes  \FF_2^{m(v')}=\FF_2^{m(v')},  \quad \tilde{\sigma} = e \otimes v' \in \tilde{E}_x\\
     \FF_2^{m(v)}  \otimes\FF_2=\FF_2^{m(v)},  \quad \ \tilde{\sigma} = v \otimes e' \in \tilde{E}_y \\
   \FF_2 \otimes \FF_2 \cong \FF_2,  \qquad \quad \ \ \tilde{\sigma} \in \tilde{F}.  
\end{cases}
\ee
Note that the degree of freedom associated with  $X$-check, qubit $Q$ and $Z$-check attached to vertex, edge and face corresponds to a rank-2 tensor, vector, and a scalar respectively.

\begin{figure*}
    \centering   \includegraphics[width=1\linewidth]{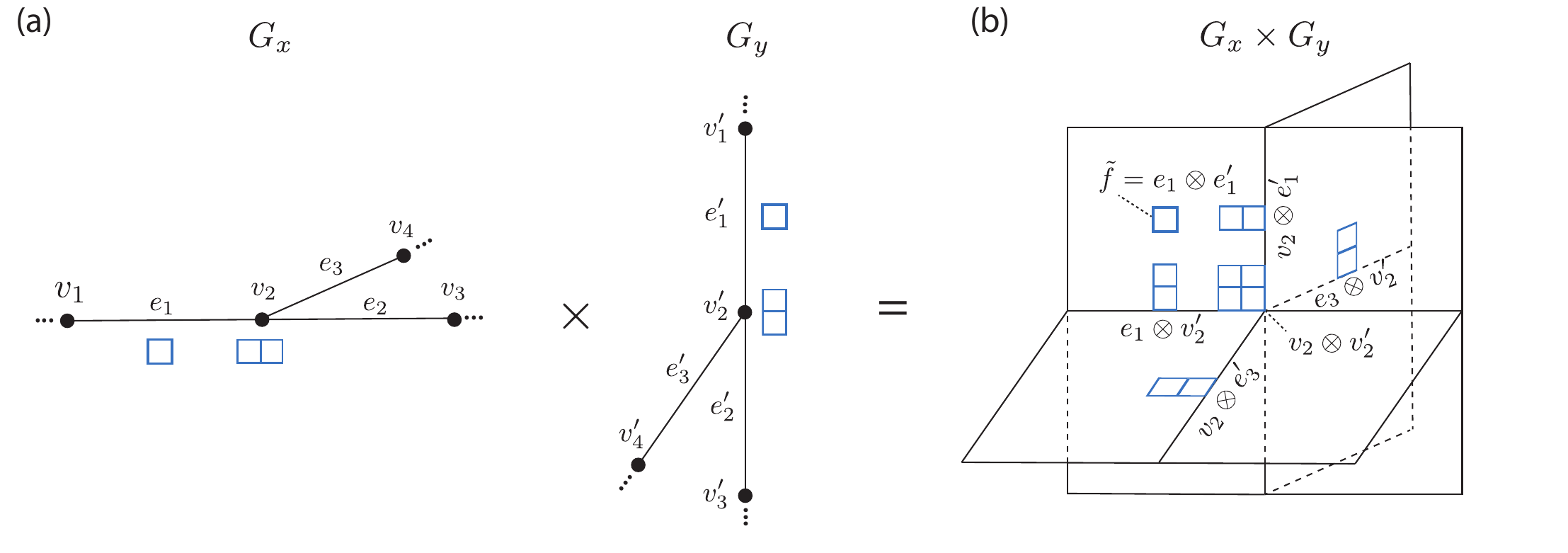}
    \caption{Hypergraph product of two sheaf codes. (a) The two
factor graphs $G_x$ (vertices $v_1,\dots,v_4$, edges $e_1,e_2,e_3$) and $G_y$
(vertices $v'_1,\dots,v'_4$, edges $e'_1,e'_2,e'_3$), each carrying a sheaf whose
stalk dimensions are indicated by the blue boxes: $\FF_2$ on an edge and
$\FF_2^{m(v)}$ on a vertex. (b) The product complex $\L=G_x \times G_y$, a
branched square complex whose cells are products of cells of the factors: faces
$\tilde f = e \otimes e'$, horizontal edges $e \otimes v'$, vertical edges
$v \otimes e'$, and vertices $v \otimes v'$. The stalks multiply accordingly
[Eq.~\eqref{eq:local_system_2D}]: a face carries the scalar
$\FF_2 \otimes \FF_2 \cong \FF_2$, an edge the vector space $\FF_2^{m(v')}$ or
$\FF_2^{m(v)}$, and a vertex the rank-two tensor
$\FF_2^{m(v)} \otimes \FF_2^{m(v')}$, drawn as a grid of boxes. Because a vertex
of $G_x$ may have degree larger than two, a single edge of $\L$ is contained in
more than two faces, so $\L$ is not the cellulation of a manifold.}
    \label{fig:HGP_illustration}
\end{figure*}

\subsection{Operator-valued sheaf cochains}\label{sec:op_valued_cochain}

Since the stalk $\Fc(e)$ on an edge is in general higher-dimensional, an edge
carries several qubits rather than one. We fix once and for all the standard basis
$\{\vec{e}_s\}_{s=1}^{m(\sigma)}$ of the stalk
$\Fc(\sigma)=\FF_2^{m(\sigma)}$, with $m(\sigma) \equiv \dim \Fc(\sigma)$ and
$\vec{e}_s$ the unit vector whose $s^\text{th}$ component equals $1$ and whose
other components vanish,
\be\label{eq:unit_vector_sheaf}
\vec{e}_s = \big[\,\underbrace{0, \ldots, 0}_{s-1},\, 1,\,
\underbrace{0, \ldots, 0}_{m(\sigma)-s}\,\big]^{T} .
\ee
The physical qubits are then labelled by pairs $(e,s)$ with $e \in \L(1)$ and
$s \in [\,m(e)\,]$, and we denote the associated Pauli operators by
$X_{e,s}, Y_{e,s}, Z_{e,s}$. We refer to $s$ as the \emph{vector index} of the
stalk; which stalk is meant is always fixed by the cell appearing alongside it.

Following the operator-valued cochain formalism of
Refs.~\cite{Hsin2024_non-Abelian, zhu2023non, Hsin2024:classifying,
zhu2025topological}, extended here to a system of local coefficients, we introduce
an operator-valued sheaf $k$-cochain $\hat{\lambda}^k \in C^k(\L, \Fc)$ which
assigns to each $k$-cell $\sigma$ a vector $\hat{\lambda}^k(\sigma) \in
\Fc(\sigma)$ whose components are commuting operators with eigenvalues $0$ or $1$.
For the $\ZZ_2$ gauge field on the edges, which is the case of interest here, we
write $\hat{a}^1 \in C^1(\L,\Fc)$ and relate its components to the Pauli-$Z$
operators by
\be\label{eq:Z_cochain_untwisted}
(-1)^{[\hat{a}^1(e)]_s}=Z_{e,s},
\qquad
[\hat{a}^1(e)]_s = \big[1-Z_{e,s}\big]/2 .
\ee
It is convenient to introduce the \emph{elementary sheaf cochain}
$\bar{\sigma}_s \in C^k(\L,\Fc)$, which takes the value $\vec{e}_s$ on the single
$k$-cell $\sigma$ and vanishes on all other $k$-cells; this is the sheaf
replacement for the indicator cochain $\bar{\sigma}$ of the scalar theory, and the
elementary cochains are in one-to-one correspondence with the physical qubits. Any
operator-valued cochain can then be expanded as
\be\label{eq:cochain_expansion_untwisted}
\hat{\lambda}^{k} = \sum_{\sigma \in \L(k)} \sum_{s=1}^{m(\sigma)}
\hat{N}_{\sigma,s}\, \bar{\sigma}_s ,
\qquad
(-1)^{\hat{N}_{\sigma,s}} = Z_{\sigma,s},
\ee
where $\hat{N}_{\sigma,s}$ is a quantum operator with eigenvalues
$N_{\sigma,s} \in \{0,1\}$ and $\bar{\sigma}_s$ is a classical variable. Recall
from Eq.~\eqref{eq:coboundary-sheaf} that the sheaf coboundary reads
\be\label{eq:coboundary_untwisted}
(d^{k}\hat{\lambda}^{k})(\tau)
= \sum_{\sigma \in \L(k),\ \sigma \precdot \tau}
\Fc_{\sigma \shortrightarrow \tau}\big(\hat{\lambda}^{k}(\sigma)\big),
\ee
so that, unlike the scalar case, every incidence $\sigma \precdot \tau$ is dressed
by the co-restriction map, e.g.\ by a column of the local parity-check matrix in
the hypergraph product complex.

Two elementary facts will be used repeatedly. First, the map $x \mapsto (-1)^x$ is
a homomorphism from $\FF_2$ to $\{\pm 1\}$, so a mod-$2$ sum of components
exponentiates into a \emph{product} of signs; at the operator level this requires
the components involved to commute, which they do, being simultaneously diagonal.
Second, an exponent taking values in $\{0,1\}$ acts as a selector on a Pauli
operator, $P^0=\I$ and $P^1=P$. Together these convert linear algebra over $\FF_2$
into products of Pauli operators.

\begin{figure}
    \centering   \includegraphics[width=1\linewidth]{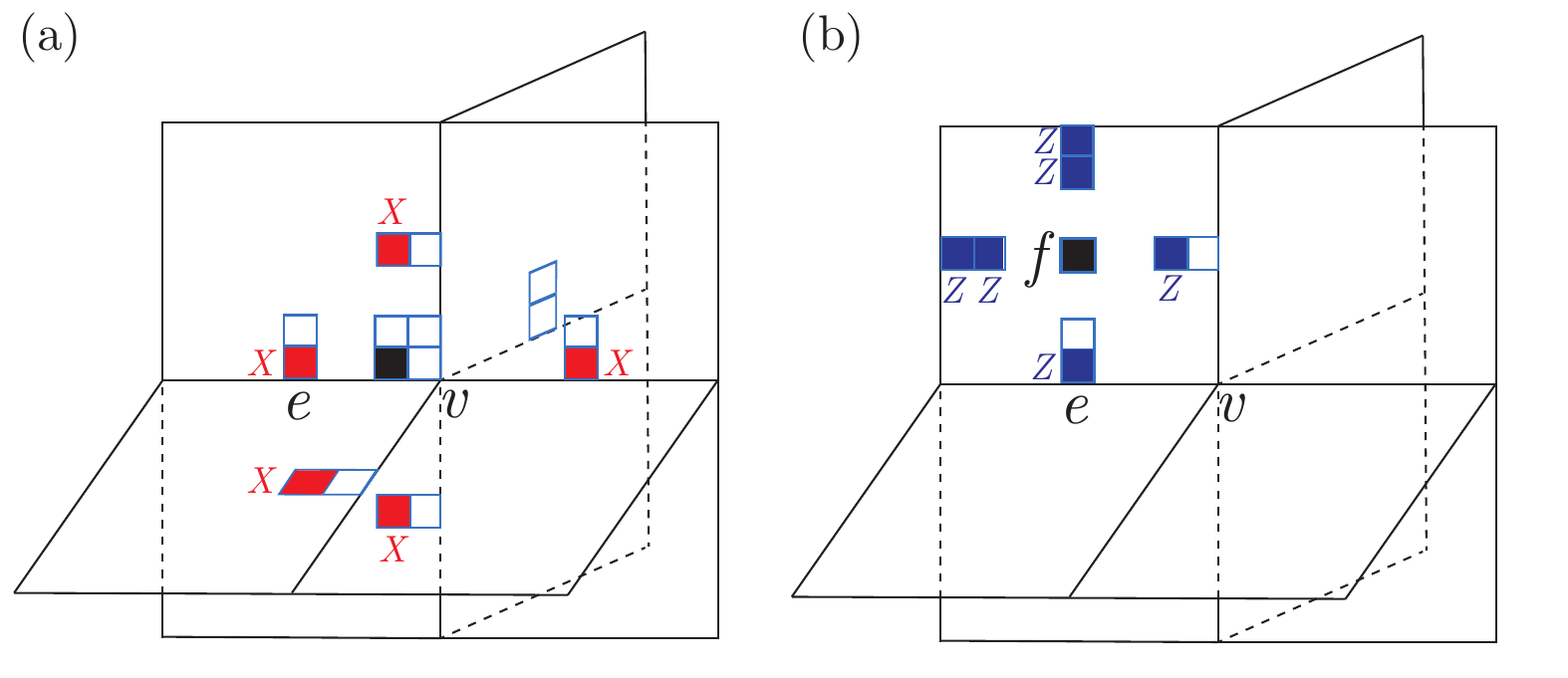}
    \caption{The two stabilizer types of the untwisted sheaf code on
a branched square complex, in which the edge $e$ is shared by three faces. Filled
boxes are stalk components; the qubits sit on the edge stalks. (a) The $X$-type
generator $A_{v,s}$ of Eq.~\eqref{eq:A_stabilizer_untwisted} is labelled by a
vertex $v$ together with a component $s$ of its stalk (black box at $v$), and
applies $X$ to those components of the incident edge stalks that are selected by
the co-restriction $\Fc_{v \shortrightarrow e}(\vec{e}_s)$ (red boxes). It is the
row $(v,s)$ of $H_X=(d^0)^{T}$. (b) The $Z$-type generator $B_f$ of
Eq.~\eqref{eq:B_stabilizer_untwisted} is labelled by a face $f$ alone, with no
vector index, because the stalk on a top cell is the scalar
$\Fc(f)\cong\FF_2$ (black box on $f$); it applies $Z$ to the edge stalk
components weighted by the functional $\Fc_{e \shortrightarrow f}$ (blue boxes),
and is the row $f$ of $H_Z=d^1$. Both weights are $O(1)$ for bounded vertex degree
and bounded stalk dimension, which is the qLDPC property.}
    \label{fig:square_complex}
\end{figure}

\subsection{$X$-stabilizers from $d^0$}\label{sec:X_stabilizer}

The $X$-type stabilizer generators are labelled by a vertex together with a vector
index of its stalk,
\be\label{eq:A_stabilizer_untwisted}
A_{v,s} = \prod_{e \dotsucc v} \prod_{s'}
\big(X_{e,s'}\big)^{\left[\Fc_{v \shortrightarrow e}(\vec{e}_s)\right]_{s'}} ,
\ee
where $s'$ runs over the stalk of the edge $e$. The structure of
Eq.~\eqref{eq:A_stabilizer_untwisted} is dictated by $d^0$. Evaluating the
coboundary \eqref{eq:coboundary_untwisted} on the elementary cochain $\bar{v}_s$, which
vanishes on every vertex other than $v$ where it takes the value $\vec{e}_s$, the
sum collapses to a single term,
\be\label{eq:A_coboundary}
\big(d^0 \bar{v}_s\big)(e) =
\begin{cases}
\Fc_{v \shortrightarrow e}(\vec{e}_s) \in \Fc(e), & e \dotsucc v,\\[2pt]
0, & \text{otherwise},
\end{cases}
\ee
which is why $A_{v,s}$ is supported on the edges incident to $v$, with the
exponents shown. Because $X_{e,s'}$ flips the $\FF_2$ variable
$[\hat{a}^1(e)]_{s'}$ diagonalized by $Z_{e,s'}$, the operator $A_{v,s}$
implements precisely the coboundary shift
$\hat{a}^1 \rightarrow \hat{a}^1 + d^0\bar{v}_s$:
\be\label{eq:A_conjugation}
A_{v,s}\, Z_{e,s'}\, A_{v,s}
= (-1)^{\left[(d^0 \bar{v}_s)(e)\right]_{s'}} Z_{e,s'} .
\ee
In matrix form, using
$[\Fc_{v \shortrightarrow e}(\vec{e}_s)]_{s'} =
[\Fc_{v \shortrightarrow e}]_{s' s} = [d^0]_{(e,s'),(v,s)}$, the operator
$A_{v,s}$ is the row $(v,s)$ of the $X$-type parity-check matrix
$H_X = (d^0)^{T}$. Its weight is bounded by
$\sum_{e \dotsucc v}\dim\Fc(e)=O(1)$ for bounded vertex degree and bounded stalk
dimension, which is one half of the qLDPC property.

\subsection{$Z$-stabilizers from $d^1$}\label{sec:Z_stabilizer}

The $Z$-type generators are labelled by a $2$-cell alone,
\be\label{eq:B_stabilizer_untwisted}
B_{f} = \prod_{e \precdot f} \prod_{s'}
\big(Z_{e,s'}\big)^{\left[\Fc_{e \shortrightarrow f}\right]_{s'}} ,
\ee
with \emph{no} vector index attached to $f$. The reason is that a $2$-cell is a
top cell of $\L$ and the stalk there is one-dimensional: for the hypergraph
product complex of Eq.~\eqref{eq:local_system_2D} one has
$\Fc(f)=\FF_2 \otimes \FF_2 \cong \FF_2$, i.e.\ the $Z$-check attached to a face
is a scalar rather than a vector. This is a structural feature of the sheaf codes
constructed in Sec.~\ref{sec:classical} rather than an accident: the local codes
are attached to the low-dimensional cells, so the stalk dimension decreases with
increasing degree, and the top cells of a graph code already carry $\FF_2$; taking
the product of two such graph codes gives $\FF_2 \otimes \FF_2$ on the $2$-cells.
Consequently the co-restriction $\Fc_{e \shortrightarrow f}: \Fc(e) \rightarrow
\FF_2$ is a linear \emph{functional}, i.e.\ a row vector carrying the single index
$s'$ labelling the stalk of the edge $e$, and there is exactly one $Z$-stabilizer
per face. (In $\mathsf{d}\ge 3$, where the $Z$-checks live on cells below the top
dimension, the stalk is in general not a scalar and the outer index reappears.)

The relation to the coboundary operator makes the content of
Eq.~\eqref{eq:B_stabilizer_untwisted} transparent. Evaluating
Eq.~\eqref{eq:coboundary_untwisted} on a $2$-cell and using $\Fc(f)\cong\FF_2$ gives the
single $\FF_2$-valued quantity
\begin{align}\label{eq:flux_coboundary}
\big(d\hat{a}^1\big)(f)
&= \sum_{e \precdot f} \Fc_{e \shortrightarrow f}\big(\hat{a}^1(e)\big) \cr
&= \sum_{e \precdot f} \sum_{s'}
\big[\Fc_{e \shortrightarrow f}\big]_{s'}\, \big[\hat{a}^1(e)\big]_{s'} .
\end{align}
Exponentiating the mod-$2$ sum into a product of signs and using
$Z_{e,s'} = (-1)^{[\hat{a}^1(e)]_{s'}}$ from
Eq.~\eqref{eq:Z_cochain_untwisted}, with the exponent
$[\Fc_{e \shortrightarrow f}]_{s'} \in \{0,1\}$ read as a selector, we obtain
\be\label{eq:B_is_flux}
B_{f} = (-1)^{\big(d\hat{a}^1\big)(f)} .
\ee
Hence $B_f$ is not merely correlated with flatness but literally \emph{measures}
it. It is a Hermitian Pauli operator with $B_f^2=\I$, and $B_f=+1$ if and only if
the flux vanishes on $f$. Equivalently, in matrix form $B_f$ is the row $f$ of the
sheaf coboundary matrix $d^1$, i.e.\ of the $Z$-type parity-check matrix
$H_Z=d^1$, and its weight is bounded by $\sum_{e \precdot f}\dim\Fc(e)=O(1)$.

The two stabilizer types therefore play complementary roles: $B_f$ \emph{measures}
a component of the flux, while $A_{v,s}$ \emph{generates} a gauge transformation.
This also accounts for an apparent asymmetry between
Eqs.~\eqref{eq:A_stabilizer_untwisted} and \eqref{eq:B_stabilizer_untwisted}: the
former involves a \emph{column} of $\Fc_{v \shortrightarrow e}$, because
$H_X=(d^0)^{T}$, whereas the latter involves a \emph{row} of
$\Fc_{e \shortrightarrow f}$, because $H_Z=d^1$.

\subsection{The untwisted stabilizer Hamiltonian}\label{sec:untwisted_Hamiltonian}

Collecting the two families, the untwisted sheaf code $\C$ is the Pauli stabilizer
model with Hamiltonian
\be\label{eq:H_untwisted}
H_0 = -\sum_{v \in \L(0)} \sum_{t} A_{v,s} \; - \sum_{f \in \L(2)} B_{f} ,
\ee
whose ground-state subspace is the code space of \eqref{eq:untwisted_complex}. Two
consistency properties are worth recording.

First, the stabilizers commute. The $A$'s commute among themselves, being
$X$-type, and likewise the $B$'s, being $Z$-type. For a mixed pair, $A_{v,s}$ and
$B_f$ overlap on the qubits $(e,s')$ with $e \dotsucc v$ and $e \precdot f$, and
their commutator vanishes if and only if that overlap has even parity, i.e.
\be\label{eq:commutation_d2}
\big[d^1 d^0\big]_{f,(v,s)} = \sum_{e}\sum_{s'}
\big[\Fc_{e \shortrightarrow f}\big]_{s'}
\big[\Fc_{v \shortrightarrow e}\big]_{s' s} = 0 ,
\ee
which is precisely $d^1 \circ d^0 = 0$. This holds by the functoriality of the
co-restriction maps together with the even-incidence condition of
Definition~\ref{def:poset}. In other words, the $X$- and $Z$-terms are compatible
exactly because the sheaf cochain sequence is a complex.

Second, on the ground-state subspace $B_f=+1$ for all $f$, so by
Eq.~\eqref{eq:B_is_flux} the gauge field is flat, $d\hat{a}^1=0$, and $\hat{a}^1$
is an operator-valued sheaf $1$-\emph{cocycle}; the $A_{v,s}$ then identify
configurations differing by a coboundary $d^0\bar{v}_s$, by
Eq.~\eqref{eq:A_conjugation}. The logical qubits are therefore counted by the
sheaf cohomology,
\be\label{eq:k_untwisted}
k = \dim H^1(\L,\Fc) = \dim C^1(\L,\Fc) - \rk\, d^0 - \rk\, d^1 .
\ee
Note that the generators listed in Eq.~\eqref{eq:H_untwisted} need not be
independent: the numbers of independent $X$- and $Z$-checks are $\rk\, d^0$ and
$\rk\, d^1$, which is what enters Eq.~\eqref{eq:k_untwisted}. Redundant generators
merely shift the ground-state energy by a constant and are harmless, but the
distinction matters when counting $k$.

Everything in Sec.~\ref{sec:non-Abelian_LDPC} is built on top of this model: the
twisted code is obtained by keeping the $Z$-stabilizers $B_f$ untouched and
dressing each $X$-stabilizer $A_{v,s}$ with a product of CZ gates acting on two
further copies of the code.

\subsection{Sheaf gauge theory as a topological defect network}

In this subsection we discuss the underlying physics of the quantum sheaf code and its associated sheaf gauge theory.
The sheaf structure naturally describes the physics of a topological defect network \cite{Aasen2020}.

\begin{figure*}[t]
    \centering   \includegraphics[width=0.8\linewidth]{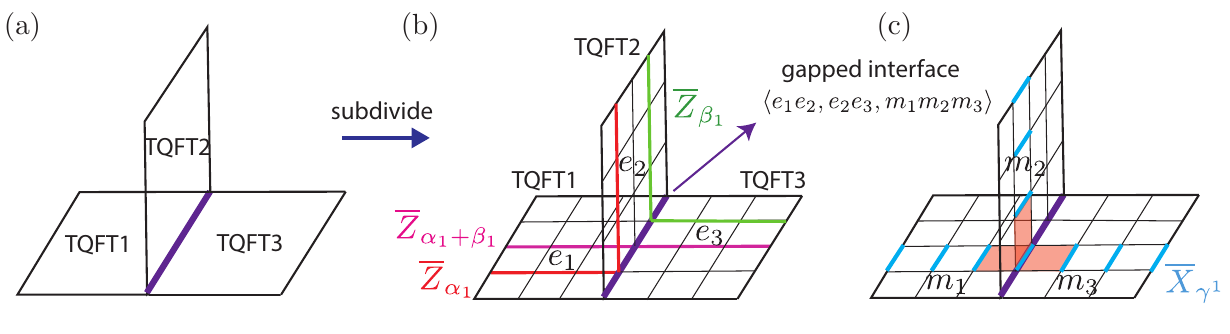}
    \caption{The sheaf gauge theory as a topological defect
network, for the constant sheaf. (a) On a general $2$D cell complex a single edge
(purple) can be adjacent to more than two faces, unlike in the cellulation of a
$2$-manifold. Each face is a TQFT patch --- here a $\ZZ_2$ topological order,
i.e.\ a surface code --- and the shared edge is a gapped interface between them.
(b) After subdividing each patch into $O(1)$ faces, each patch becomes a lattice
model. Only an even number of logical-$Z$ strings, supported on $1$-cycles, may
terminate on the interface without violating the $X$-stabilizers $A_v$ located
there: reading the $Z$-strings on patches $1,2,3$ as the worldlines of electric
anyons $e_1,e_2,e_3$, the interface condenses the pairs $e_1e_2$, $e_2e_3$,
$e_1e_3$, as illustrated by $\lo{Z}_{\alpha_1}$, $\lo{Z}_{\beta_1}$ and
$\lo{Z}_{\alpha_1+\beta_1}$ on the edges $e_1$, $e_2$, $e_3$. (c) Dually, three
logical-$X$ strings supported on a $1$-cocycle $\gamma^1$ must terminate together
at the interface in order to preserve the $Z$-stabilizers on the three adjacent
faces (shaded), so the interface condenses the triple $m_1m_2m_3$ of magnetic
anyons. The anyon condensation data is the Lagrangian subgroup
$L=\langle e_1e_2,\,e_2e_3,\,m_1m_2m_3 \rangle$.}
    \label{fig:gapped_boundary_picture}
\end{figure*}

This comes naturally from the geometric property of a general simplicial or cell complex. As illustrated in Fig.~\ref{fig:gapped_boundary_picture}(a), on a general 2D simplicial or cell complex, a single edge (purple) can be adjacent to more than two faces, in contrast to the case of the cellulation of a 2-manifold where a single edge is always adjacent to exactly two faces.  In the language of topological quantum field theory (TQFT), we can view such an edge as a gapped interface, while each of the adjacent face is a TQFT patch corresponding to a $\ZZ_2$-topological order in the case of the usual (untwisted) qLDPC code.   To increase the clarity, we can imagine subdivide each face (patch) into multiple faces and a single edge into multiple edges, as illustrated in Fig.~\ref{fig:gapped_boundary_picture}(b), such that each original face becomes a patch of lattice models described by the corresponding TQFT and in this case equivalent to a surface code patch.  We note that one should still consider subdividing each face into only $O(1)$ faces, such that the global code parameter scaling will still be preserved.  

As the simplest example, we consider the cell (simplicial) complex with a constant sheaf (i.e., trivial local codes), namely the ordinary cell (simplicial) complex $C^{\bullet}(\L; \ZZ_2)$, as illustrated in Fig.~\ref{fig:gapped_boundary_picture}.  As we can see from Fig.~\ref{fig:gapped_boundary_picture}(b), at the gapped interface (purple), only even number of logical-$Z$ strings (supported on the 1-cycles) can terminate on it to preserve the $X$-stabilizer $A_v$ defined on the vertices $v$ located on the interface.  In the TQFT language, we consider the $Z$-strings appearing on TQFT patch 1, 2 and 3 as the worldlines of three types of electric anyons $e_1$, $e_2$ and $e_3$ respectively.  We hence say the interface condenses pairs of anyons $e_1 e_2$, $e_2 e_3$ and $e_1 e_3$. Similarly, the $X$-strings on TQFT patch 1, 2 and 3 are considered as the wordlines of three types of magnetic anyons  $m_1$, $m_2$ and $m_3$ respectively. As we can see from Fig.~\ref{fig:gapped_boundary_picture}(c), three logical-X strings (supported on the 1-cocycle $\gamma^1$) has to terminate together at the interface to preserve the $Z$-stabilizers supported on the three faces (red) adjacent to the interface.  The interface hence condense a triple of magnetic anyons: $m_1 m_2 m_3$.  Therefore, we can represent the Lagrangian subgroup containing the anyon condensation data of the interface by its generators as:
\be\label{eq:Lagrangian-subgroup}
L=\langle e_1e_2, e_2e_3, m_1m_2m_3 \rangle.
\ee

We then consider an example with non-constant sheaf (non-trivial local codes). As illustrated in Fig.~\ref{fig:gapped_boundary_dual}(a),  we start with the dual complex $\L^*$ (dahsed lines) of the original complex  in $\L$ Fig.~\ref{fig:gapped_boundary_picture} (solid lines). Here, the duality corresponds to the map $\mathcal{D}: \L(i) \rightarrow \L^*(2-i)$, which exchanges the $i$-cell with a $(2-i)$-cell. The original and dual 1-cells intersect with each other.   
In this case, the edges in the original gapped interface becomes hyperedges (green) which are adjacent to 3 vertices, and those between the neighboring hyperedges are hyperfaces (red).  Due to the presence of hyperedges and hyperfaces, $\L^*$ is not a cell complex. Nevertheless, one can still define a meaningful stabilizer code with small modification along the interface, where the hyperface is associated with a $Z$-stabilizers containing qubits defined on the adjacent edges and hyperedges.  There are also $X$-stabilizers on the vertices on these hyperfaces which have qubits supported on the adjacent edges and hyperedges. With this definition,  a logical-$X$ operator (blue and yellow) supported on 1-cocycle of $\L^*$ correponding to the worldline of $m$ anyon can turn at the interface, while the logical-$Z$ operator (red) corresponding to the worldline of $e$ anyon has to branch into all the adjacent patches.  The corresponding gapped interface has the anyon condensation data described by the following Lagrangian subgroup:
\be
L'=\langle m_1 m_2, m_2 m_3, e_1 e_2 e_3 \rangle,
\ee
which exhibits the $e$-$m$ duality when compared to the condensation data Eq.~\eqref{eq:Lagrangian-subgroup} in the original complex $\L$.

Now we want to subdivide the general chain complex $\L^*$ into a cell complex $\L^*_c$ using our prescription in Sec.~\ref{sec:classical}. We first represent $\L^*$ as the  product of two 1D complexes  $G_{x, h} \times G_y$, i.e., a hypergraph $G_{x, h}$ and and a graph $G_y$ as shown in Fig.~\ref{fig:gapped_boundary_dual}(c), where $G_{x,h}$ contains a hyperedge (green) at the junction.  One can then subdivide the  hypergrpah $G_{x,h}$ into a graph $G_x$ by splitting the hyperedge into 3 edges and attaching a repetition local code at the middle as illustrated in Fig.~\ref{fig:gapped_boundary_dual}(d).  This gives rise to a classical sheaf code where the edge stalk is the scalar $\FF_2$, and all the vertices (blue squares) associated with the single parity check (SPC) also has a scalar stalk $\FF_2$ while the new vertex (green dimond) associated with the Rep local code has a vector stalk $\FF^2_2$.  We can then decompose the blue 1-cocycle in (b) into a tensor product of the sheaf 0-cocycle (blue) on $G_x$ and a single-edge 1-cocycle on $G_y$ as illustrated in (d).  Note that the value on the new vertex of the 0-cocycle is a vector $(1, 0)$
in this case.

\begin{figure}[t]
    \centering   \includegraphics[width=1\linewidth]{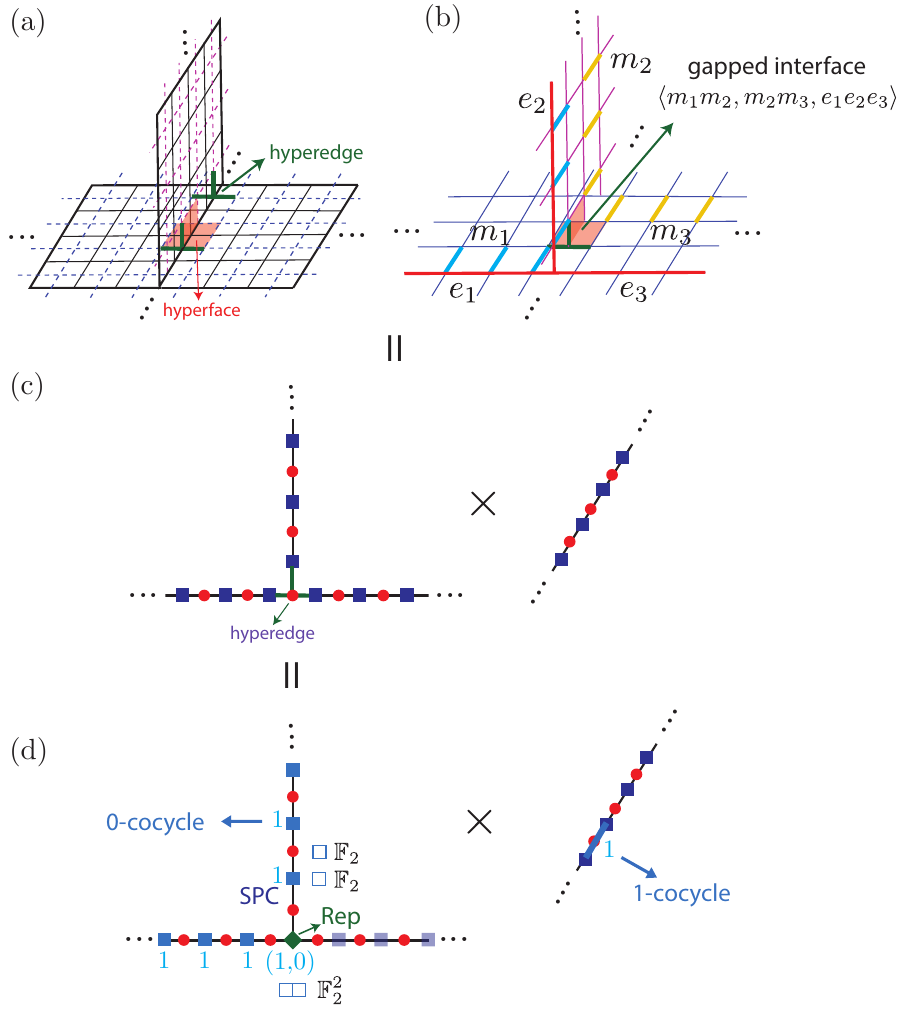}
    \caption{The same defect network for a non-constant sheaf,
obtained by dualizing Fig.~\ref{fig:gapped_boundary_picture}. (a) The dual complex
$\L^{*}$ (dashed) of the complex $\L$ (solid) under
$\mathcal{D}:\L(i)\rightarrow\L^{*}(2-i)$: the junction carries a hyperedge
(green) adjacent to three vertices, and the cells between neighbouring hyperedges
are hyperfaces (red), so $\L^{*}$ is not a cell complex. (b) The corresponding
gapped interface, with Lagrangian subgroup
$L'=\langle m_1m_2,\,m_2m_3,\,e_1e_2e_3 \rangle$; comparing with
Fig.~\ref{fig:gapped_boundary_picture}(c) exhibits the $e$--$m$ duality. A
logical-$X$ string (blue, yellow) on a $1$-cocycle of $\L^{*}$ can turn at the
interface, while a logical-$Z$ string (red) must branch into all adjacent patches.
(c) $\L^{*}$ written as a product $G_{x,h}\times G_y$ of a hypergraph $G_{x,h}$,
carrying the hyperedge (green) at the junction, with a graph $G_y$. (d) After
subdividing $G_{x,h}$ into the graph $G_x$ by splitting the hyperedge into three
edges and attaching a repetition local code at the new vertex (green diamond,
stalk $\FF_2^{2}$), while the SPC vertices (blue squares) and the edges keep the
scalar stalk $\FF_2$. The blue $1$-cocycle of (b) then factorizes as a tensor
product of a sheaf $0$-cocycle on $G_x$, whose value on the new vertex is the
vector $(1,0)$, with a single-edge $1$-cocycle on $G_y$.}
    \label{fig:gapped_boundary_dual}
\end{figure}

%======================================================================
\section{Non-Abelian qLDPC codes as twisted sheaf gauge theories}\label{sec:non-Abelian_LDPC}
%======================================================================

\subsection{Spacetime path integral as a cohomology invariant}\label{sec:Feynman_path_integral}

We consider the twisted $\Z_2^3$ sheaf gauge theory with the following action:
\begin{align}\label{eq:non-Abelian_action}
S=& \pi \int_{\tilde{\L}} \red{ a^1} \cup \red{d \check{a}^1 } +  \pi \int_{\tilde{\L}} \blue{b^1} \cup \blue{ d\check{b}^1} + \pi \int_{\tilde{\L}} \green{c^1} \cup \green{d\check{c}^1} \cr
&+ \pi \int_{\tilde\eta_3} \red{a^1} \cup \blue{b^1} \cup \green{c^1},
\end{align}
where $\red{ a^1}$$\in$$C^1(\tilde \L, \Fc^{\rd} )$, $\blue{b^1}$$\in$$C^1(\tilde \L, \Fc^{\bl})$, $\green{c^1}$$\in$$ C^1(\tilde \L, \Fc^{\gr})$ are sheaf 1-cochains physically corresponding to the electric gauge fields, and $\red{ \check a^1} $$\in$$C^1(\tilde \L, \Fc^{\rd} )$, $\blue{\check b^1} $$\in$$C^1(\tilde \L, \Fc^{\bl} )$, $\green{\check c^1} $$\in$$C^1(\tilde \L, \Fc^{\gr} )$  are dual magnetic gauge fields. Here, $\tilde \L$ represents the entire (2+1)D-spacetime sheaf complex, while $\tilde \eta_3 \in C^1(\tilde \L, \Fc^{\rd} \otimes \Fc^{\bl} \otimes \Fc^{\gr} )$ represents a 3-cycle in spacetime complex equipped with tensor sheaf $\Fc^{\rd} \otimes \Fc^{\bl} \otimes \Fc^{\gr}$. 
The first three terms are the sheaf generalization of the BF theory, while adding the last term gives rise to the sheaf generalization of the Dijkgraaf-Witten twisted gauge theory.

\begin{figure}[hbt]
    \centering   \includegraphics[width=1\linewidth]{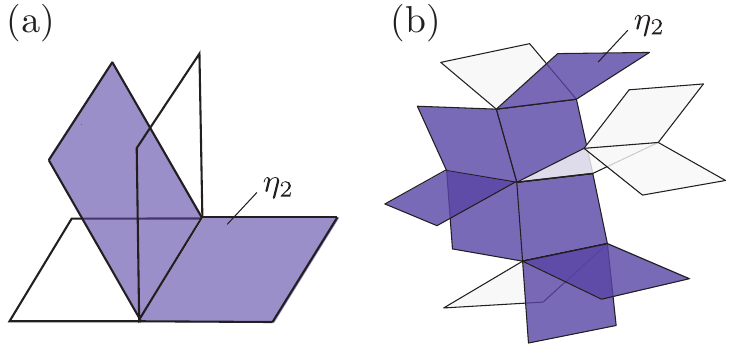}
    \caption{The $2$-cycle $\eta_2$ (purple) supporting the
Dijkgraaf-Witten twist $\int_{\tilde\eta_3}a^1\cup b^1\cup c^1$, shown on two
branched square complexes: (a) a small `book-like' complex in which one edge is
shared by three faces, and (b) a larger, irregular complex with several
generations of branching. Unlike a closed $2$-manifold, which has a unique
top-dimensional cycle, such a complex carries a large number of inequivalent
$2$-cycles; each of them defines a different twist, and, after gauging, a
different family of $0$-form subcomplex symmetries. The spacetime cycle
$\tilde\eta_3 = \eta_2 \otimes I_t$ is the corresponding region of the
$(2+1)$D complex $\tilde\L=\L\otimes I_t$.}
    \label{fig:gauging_region}
\end{figure}

We then switch to the study of the \textit{path integral} $\mathcal{Z}[\tilde{\L}, \eta_3]$, also called the \textit{partition function}, which is related to the action by
\be
\mathcal{Z}[\tilde{\L}, \eta_3] = \frac{1}{\N} \sum_{\red{a^1}, \blue{b^1}, \green{c^1}, \red{\tilde{a}^1}, \blue{\tilde{b}^1}, \green{\tilde{c}^1} \in {C^1}(\tilde{\L})} e^{iS[\red{a^1}, \blue{b^1}, \green{c^1}, \red{\tilde{a}^1}, \blue{\tilde{b}^1}, \green{\tilde{c}^1}]},
\ee
where $\N$ is a normalization constant.  Note that the cochains with flavor $\rd$, $\gr$ and $\bl$ belongs to the sheaf cochain groups $C^1(\tilde \L, \Fc^{\rd} )$, $C^1(\tilde \L, \Fc^{\bl})$ and $C^1(\tilde \L, \Fc^{\gr})$ respectively, where $C^1(\tilde \L)$ above is just a short-hand notation without specifying the specific sheaf (local systems of coefficients).  Summing over the Lagrange-multiplier fields in the first three BF terms of $S$ [Eq.~\eqref{eq:non-Abelian_action}] produces an overall factor of $1$ while enforcing the cocycle condition on the sheaf cochains:
\be
\red{da^1}=\blue{db^1}=\green{dc^1}=0,
\ee
i.e., $\red{a^1}, \blue{b^1}, \green{c^1}$ are constrained to be flat gauge fields (cocycles). The path integral therefore reduces to
\be\label{eq:path_inegral_cup}
\mathcal{Z}[\tilde{\L}, \tilde\eta_3] = \sum_{\red{a^1}, \blue{b^1}, \green{c^1} \in {H^1(\tilde{\L})}} (-1)^{\int_{\tilde \eta_3} \red{a^1} \cup \blue{b^1} \cup \green{c^1}},
\ee
where we have set the normalization constant to $\N=1$, as it plays no role in any of the results below.

We now present the following lemma:
\begin{lemma}\label{lemma:invariant} 
The path integral $\mathcal{Z}[\tilde{\L}, \eta_3]$ defined on the simplicial complex $\tilde{\L}$ is a cohomology invariant (physically corresponds to the gauge invariance).  
\end{lemma}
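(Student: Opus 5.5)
We need to show that the summand $(-1)^{\int_{\tilde\eta_3} a^1\cup b^1\cup c^1}$ depends only on the cohomology classes $[a^1],[b^1],[c^1]$. Equivalently, it is invariant under the gauge transformations $a^1\mapsto a^1+d\alpha^0$, $b^1\mapsto b^1+d\beta^0$, $c^1\mapsto c^1+d\gamma^0$ with $\alpha^0\in C^0(\tilde\L,\Fc^{\rd})$, and similarly for the other colours. Once this holds, the sum over flat cochains factorizes as (number of coboundaries)$^3$ times a sum over $H^1$, so \eqref{eq:path_inegral_cup} is well defined on classes. The strategy is the standard cochain-level argument: apply the Leibniz rule of Proposition~\ref{prop:cup-properties}(a), then the adjunction $\langle dx,z\rangle=\langle x,\partial z\rangle$ of Definition~\ref{def:pairing-sheaf}, together with $\partial\tilde\eta_3=0$. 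Over $\FF_2$ no signs appear.

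\textbf{Step 1 (one colour at a time).} By trilinearity of the cup product, it suffices to shift a single field, say $a^1\mapsto a^1+d\alpha^0$; the other shifts are handled identically. The change in the exponent is $\int_{\tilde\eta_3} d\alpha^0\cup b^1\cup c^1$. Using associativity (Proposition~\ref{prop:cup-properties}(b)) and Leibniz,
\[
d\bigl(\alpha^0\cup b^1\cup c^1\bigr)=d\alpha^0\cup b^1\cup c^1+\alpha^0\cup db^1\cup c^1+\alpha^0\cup b^1\cup dc^1 .
\]
Since $db^1=dc^1=0$, the difference equals $\int_{\tilde\eta_3} d(\alpha^0\cup b^1\cup c^1)=\int_{\partial\tilde\eta_3}\alpha^0\cup b^1\cup c^1=0$. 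Shifting $b^1$ or $c^1$ gives the analogous expressions $a^1\cup d\beta^0\cup c^1=d(a^1\cup\beta^0\cup c^1)$ and $a^1\cup b^1\cup d\gamma^0=d(a^1\cup b^1\cup\gamma^0)$, both valid because $a^1$ and $b^1$ are cocycles and signs are irrelevant in characteristic~2. Simultaneous shifts follow by applying the single-colour shifts sequentially, since after each one the fields remain cocycles.

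\textbf{Step 2 (the main obstacle).} The obstacle is justifying Leibniz and adjunction for a \emph{triple-sheaf} cup product on the actual spacetime complex $\tilde\L=\L\otimes I_t$. This complex is a product of (subdivided) graphs, i.e.\ a cubical complex, rather than a simplicial one. There are two routes. The first is to fix a compatible simplicial subdivision with a branching order, on which Definition~\ref{def:cup-simplicial} applies verbatim; functoriality of co-restrictions then guarantees that $\Fc_{\sigma_{p}\shortrightarrow\sigma}$ commutes with $d$, giving Leibniz. The second is to use the cubical cup product on the product complex, where Leibniz follows from the Leibniz rules on each factor via the Künneth/Eilenberg--Zilber structure. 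In either case one must also check two things. First, the coboundary on the tensor sheaf $\Fc^{\rd}\otimes\Fc^{\bl}\otimes\Fc^{\gr}$ is the tensor product of the co-restrictions, which is what makes $d$ of a triple cup product decompose termwise. Second, the pairing of Definition~\ref{def:pairing-sheaf} on the tensor sheaf satisfies adjunction with $\partial=(d)^T$, which holds because the restriction maps are defined as transposes of the co-restrictions.

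\textbf{Step 3.} Finally, I would verify that $\tilde\eta_3$ is a genuine cycle, $\partial\tilde\eta_3=0$, in the tensor sheaf. This is part of the hypothesis on the twist support, and for $\tilde\eta_3=\eta_2\otimes I_t$ it requires appropriate temporal boundary conditions, e.g.\ periodic time. Combining Steps 1--3 shows that the exponent is a function on $H^1\times H^1\times H^1$, which proves Lemma~\ref{lemma:invariant}.
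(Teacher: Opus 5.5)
Your proposal is correct and follows the same route as the paper: you write the change in the exponent under a gauge shift as a coboundary via the Leibniz rule, and you kill it by Stokes/adjunction against the cycle $\tilde\eta_3$. Your one-colour-at-a-time treatment makes explicit the $2$-cochain $\omega^2$ whose existence the paper only asserts. Your Steps~2--3 (justifying the cup product on the product complex, and using periodic time so that $\partial\tilde\eta_3=0$) are extra care about points the paper takes as hypotheses.
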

\begin{proof}

We apply the following coboundary transofrmation or equivalently the gauge transformation on the gauge fields:  $\red{a^1} \rightarrow \red{a^1} + d\tilde\eta^{0}$, $\blue{b^1} \rightarrow \blue{b^1} + d \tilde \xi^{0}$, $\green{c^1} \rightarrow \green{c^1} + d\tilde\zeta^{0}$, and need to show that $\mathcal{Z}[\tilde{\L}, \eta_3]$ is invariant under this transformation.

The triple cup product induces the following trilinear map on the cohomology group:
\begin{align}\label{eq:trilinear_map}
\cup :&  H^{1}(\tilde \L, \Fc^{\rd}) \times H^{1}(\tilde \L, \Fc^{\bl}) \times H^{1}(\tilde \L, \Fc^{\gr})  \cr
& \longrightarrow  H^{3}(\tilde \L, \Fc^{\rd} \otimes \Fc^{\bl} \otimes \Fc^{\gr}).
\end{align}
We hence obtain 
\begin{align}
 & (\red{a^1} + d\tilde\eta^0 )\cup (\blue{b^1} +d\tilde \xi^0 )\cup (\green{c^1} + d\tilde\zeta^0 ) \cr
=&  \red{a^1} \cup \blue{b^1} \cup \green{c^1} + d\omega^2 \in   H^{3}(\tilde \L, \Fc^{\rd} \otimes \Fc^{\bl} \otimes \Fc^{\gr}) 
\end{align}
The exponent in $\mathcal{Z}[\tilde{\L}]$ hence becomes:
\begin{align}
& \int_{\tilde \eta_3} {(\red{a^1} + d\tilde\eta^0 )\cup (\blue{b^1} +d\tilde \xi^0 )\cup (\green{c^1} + d\tilde\zeta^0 )}   \cr
=& \int_{\tilde \eta_3}  \red{a^1} \cup \blue{b^1} \cup \green{c^1} +  \int_{\tilde \eta_3} d\omega^2 \cr
=& \int_{\tilde \eta_3}  \red{a^1} \cup \blue{b^1} \cup \green{c^1},
\end{align}
where  we have used the Stoke's theorem in the second equality: 
\be
\int_{\tilde \eta_3} d\omega^{2} = \int_{ \partial 
\tilde\eta_3 }\omega^{2}=0,
\ee
since $\tilde{\eta}_3$ has no boundary, i.e., $\tilde{\eta}_3=0$.
\end{proof}

\begin{figure}[t]
    \centering   \includegraphics[width=0.8\linewidth]{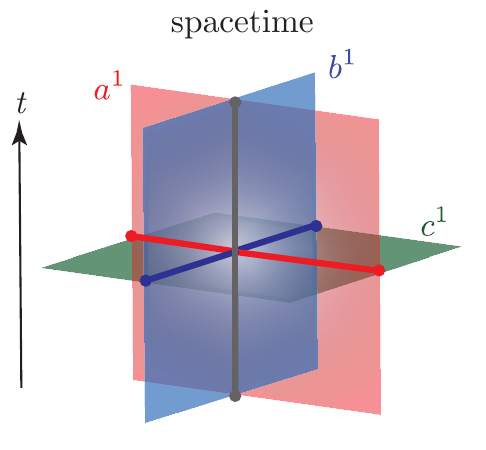}
    \caption{Triple intersection of the three gauge-field
worldsheets in $(2+1)$D spacetime, with time $t$ running upward. The flat
$1$-cocycles $\red{a^1}$ (red), $\blue{b^1}$ (blue) and $\green{c^1}$ (green) are
Poincar\'e dual to the worldsheets of the magnetic excitations of the three
colours; the twisted term $\pi\int_{\tilde\eta_3}\red{a^1}\cup\blue{b^1}\cup
\green{c^1}$ in the action \eqref{eq:non-Abelian_action} counts the points at
which all three sheets meet. A non-vanishing triple intersection is exactly what
makes the path integral $\cZ[\tilde\L,\tilde\eta_3]$ a non-trivial cohomology
invariant, and it is the mechanism behind both the non-Abelian braiding statistics
of the twisted code and the non-trivial logical action of the gauging measurement
protocol.}
    \label{fig:triple_intersection}
\end{figure}

\subsection{Sheaf cluster state and sub-complex $\Z_2^3$ SPT}
\label{sec:sheaf_SPT}

We now construct the microscopic lattice model whose topological response is the
twisted term of the action in Eq.~\eqref{eq:non-Abelian_action}.  The resulting state is a sheaf generalization of the
$2$D cluster state, dubbed as a
\emph{sheaf cluster state}, with one essential new feature: it is supported not on the
whole spatial complex but on a \emph{codimension-$0$ subcomplex} determined by a
$2$-cycle $\eta_2$.  As we will see in Sec.~\ref{sec:subcomplex_symmetry}, this sheaf cluster state is protected by a $\red{\ZZ_2} \times \blue{\ZZ_2} \times \green{\ZZ_2} \equiv \Z_2^3$ symmetry supported on this codimension-0 subcomplex, or equivalnetly the $2$-cycle $\eta_2$. We will hence call it a \emph{$0$-form subcomplex
$\red{\ZZ_2} \times \blue{\ZZ_2} \times \green{\ZZ_2}$   symmetry-protected topological} (\emph{SPT}) \emph{state} (or abbreviated as $\Z_2^3$ SPT).  Note that by `\emph{$k$-form}' symmetry we mean the symmetry operator is defined on a codimension-$k$ subcomplex (with dimension $\ds-k$, where $\ds$ is the total dimension of the spatial complex $\L$), which is a generalization with the usual definition in terms of codimension-$k$ submanifold in the previous literature.

Let $\L$ be the $\mathsf{d}=2$ dimensional spatial cell  complex, with cell sets
$\L(0)=V$, $\L(1)=E$ and $\L(2)=F$; for example, $\L$ can be a 2D simplicial complex, or the square complex obtained
as the hypergraph product of two subdivided classical sheaf codes constructed in
Sec.~\ref{sec:subdivision}. The spacetime complex is
$\tilde{\L} \equiv \L \times S^1_t$, and the $3$-cycle appearing in the twist term
is $\tilde{\eta}_3 = \eta_2 \otimes S^1_t$, where
$\eta_2 \in Z_2(\L, \Fc^{\rd} \otimes \Fc^{\bl} \otimes \Fc^{\gr})$ is a spatial $2$-\emph{cycle}
carrying the tensor sheaf.

\begin{remark}[\emph{Why the SPT is supported on a $2$-cycle $\eta_2$ and not the whole 2-complex $\L$}]
\label{rem:why_cycle}
In the conventional construction of $k$-form SPTs one integrates the response
action over the entire spatial complex, $\int_{\L}$, which implicitly uses the
\textit{fundamental chain} $[\L]=\sum_{\sigma \in \L(2)}\sigma$ as the integration cycle, also called the \textit{fundamental class}.
This is legitimate only when the fundamental class $[\L]$ has no boundary, i.e., $\partial [\L]=0$, such as in the case when $\L$ is the cellulation of a manifold. Cell (simplicial) complexes, such as the square complexes produced by
the hypergraph product code of Sec.~\ref{sec:subdivision} are in general \emph{not} cellulation of  manifolds:
their fundamental chain generally has $\partial[\L]\neq 0$, and moreover with
system of local coefficients $[\L]$ need not even be a chain in
$C_2(\L,\Fc^{\rd}\otimes\Fc^{\bl}\otimes\Fc^{\gr})$. One must therefore replace $[\L]$ by a
genuine tensor-sheaf $2$-cycle $\eta_2 \in Z_2(\L, \Fc^{\rd} \otimes \Fc^{\bl} \otimes \Fc^{\gr})$. The consequence is that the
resulting SPT is supported on the codimension-$0$ subcomplex
\be\label{eq:subcomplex}
\L_{\eta_2} \coloneqq \overline{\mathrm{supp}\,\eta_2} \subseteq \L
\ee
(the closure of the support of $\eta_2$, i.e.\ all $2$-cells on which $\eta_2$ is
nonvanishing together with all their faces) rather than on all of $\L$. The
condition $\partial \eta_2 = 0$ is exactly what makes the response action gauge
invariant, cf.\ Lemma~\ref{lemma:invariant}.
\end{remark}

We start with three independent copies of qubit lattice models defined on the same 2D cell (simplicial complex) $\L$, denoted
\red{red} (\rd), \blue{blue} (\bl) and \green{green} (\gr), each carrying its own
system of local coefficients $\Fc^{\rd}$, $\Fc^{\bl}$ and $\Fc^{\gr}$ respectively, in the sense
of Definition~\ref{def:presheaf}. The colour of a copy is thus carried by a
superscript, and whenever a statement holds for each of the three copies
separately we write it once using the generic colour label $i$, i.e.\ $\Fc^{i}$
with $i \in \{\rd, \bl, \gr\}$. The matter qubits of all
three copies sit on the \emph{vertices} $v \in V$ of $\L$. The crucial difference
from the scalar (constant-sheaf) case is that a vertex no longer carries a single
qubit: the stalk $\Fc^{\rd}(v)=\FF_2^{m(v)}$ carries $m(v) \equiv \dim \Fc^{\rd}(v)$ qubits.
We therefore label the physical qubits of the \rd, \bl \ and \gr \ copies by pairs
\be\label{eq:qubit_labels}
(v,s), \qquad (v,r), \qquad (v,t),
\ee
where $v \in V$ and the stalk indices run over $s \in [\dim \Fc^{\rd}(v)]$,
$r \in [\dim \Fc^{\bl}(v)]$ and $t \in [\dim \Fc^{\gr}(v)]$. The corresponding Pauli
operators are denoted $\{\tilde X^{\rd}_{v,s}, \tilde Y^{\rd}_{v,s},
\tilde Z^{\rd}_{v,s}\}$ and likewise for \bl \ and \gr, where the tilde
distinguishes the operators of the ungauged SPT model from those of the gauged
model introduced later in Sec.~\ref{sec:gauging}.

\subsubsection{Operator-valued sheaf cochains}

We now extend the operator-valued cochain formalism with scalar coefficients of
Refs.~\cite{Hsin2024_non-Abelian, zhu2023non, Hsin2024:classifying,
zhu2025topological} to system of local coefficients (sheaf). An operator-valued sheaf $k$-cochain
$\hat{\lambda}^k_i \in C^k(\L, \Fc^{i})$ assigns to each $k$-cell $\sigma$ a vector
$\hat{\lambda}^k_i(\sigma) \in \Fc^{i}(\sigma)$ whose components are commuting
operators with eigenvalues $0$ or $1$, where $i \in \{\rd, \bl, \gr\}$; physically these are the $\FF_2$-valued
matter fields transformed under the $\ZZ_2$ global symmetry. Fixing the
standard basis $\{\vec{e}_s\}_{s=1}^{m_i(\sigma)}$ of the stalk
$\Fc^{i}(\sigma)=\FF_2^{m_i(\sigma)}$, with $m_i(\sigma) \equiv \dim
\Fc^{i}(\sigma)$ and $\vec{e}_s$ the unit vector whose $s^\text{th}$ component
equals $1$ and whose other components vanish,
\be\label{eq:unit_vector}
\vec{e}_s = \big[\,\underbrace{0, \ldots, 0}_{s-1},\, 1,\,
\underbrace{0, \ldots, 0}_{m_i(\sigma)-s}\,\big]^{T} ,
\ee
the components are related to the Pauli-$Z$ operators by
\be\label{eq:Pauli_relation}
(-1)^{[\hat{\lambda}^k_i(\sigma)]_s}=\tilde Z^{i}_{\sigma,s},
\qquad
[\hat{\lambda}^k_i(\sigma)]_s = \big[1-\tilde Z^{i}_{\sigma,s}\big]/2 .
\ee
It is convenient to introduce the \emph{elementary sheaf cochain}
$\bar{\sigma}_s \in C^k(\L,\Fc^{i})$, which takes the value $\vec{e}_s$ on the
single $k$-cell $\sigma$ and vanishes on all other $k$-cells. This is the sheaf
replacement for the indicator cochain $\bar{\sigma}$ of the scalar theory,
and the elementary cochains are in one-to-one correspondence with the physical
qubits. Any operator-valued cochain can then be expanded as
\be\label{eq:cochain_expansion}
\hat{\lambda}^{k}_i = \sum_{\sigma \in \L(k)} \sum_{s=1}^{\dim \Fc^{i}(\sigma)}
\hat{N}^{i}_{\sigma,s}\, \bar{\sigma}_s ,
\qquad
(-1)^{\hat{N}^{i}_{\sigma,s}} = \tilde{Z}^{i}_{\sigma,s},
\ee
where $\hat{N}^{i}_{\sigma,s}$ is a quantum operator with eigenvalues
$N^{i}_{\sigma,s} \in \{0,1\}$, and $\bar{\sigma}_s$ is a classical 
cochain (i.e., a classical variable). In what follows we write the stalk index as $s$, $r$ or $t$ according to
whether the cochain belongs to the \rd, \bl \ or \gr \ copy. Recall from
Eq.~\eqref{eq:coboundary-sheaf} that the coboundary of a sheaf cochain is
\be\label{eq:coboundary}
(d^{k}\hat{\lambda}^{k}_i)(\tau)
= \sum_{\sigma \in X(k),\ \sigma \precdot \tau}
\Fc^{i}_{\sigma \shortrightarrow \tau}\big(\hat{\lambda}^{k}_i(\sigma)\big),
\ee
so that, unlike the scalar case, each incidence is dressed by the co-restriction
map, e.g.\ by a column of the local parity-check matrix in the HGP complex.

\subsubsection{Topological  response action}

The topological response of the SPT is the triple cup product of the background
gauge fields, integrated over the spacetime $3$-cycle:
\be\label{eq:SPT_action}
S_\text{SPT}
= \pi \int_{\tilde{\eta}_3} \red{A^1} \cup \blue{B^1} \cup \green{C^1},
\ee
where $\red{A^1}\in C^1(\tilde\L,\Fc^{\rd})$, $\blue{B^1}\in C^1(\tilde\L,\Fc^{\bl})$ and
$\green{C^1}\in C^1(\tilde\L,\Fc^{\gr})$ are classical (static) background gauge fields. As in
Definition~\ref{def:invariant-form}, the sum (integral) over a cycle is equivalent to 
the canonical pairing of Definition~\ref{def:pairing-sheaf},
\be\label{eq:integral_notation}
\int_{\tilde{\eta}_3} x^1 \cup y^1 \cup z^1
\;\equiv\; \big\langle x^1 \cup y^1 \cup z^1, \, \tilde{\eta}_3 \big\rangle
\in \FF_2 ,
\ee
which reduces to the usual sum $\int_{\tilde\L}$ over the whole complex when
the sheaf is constant and $\tilde{\eta}_3 = [\tilde{\L}]$ is the fundamental
chain. Eq.~\eqref{eq:SPT_action} is the sheaf generalization of the ``type-III
cocycle'' SPT \cite{deWildPropitius:1995cf, Yoshida:2015cia} of the
group-cohomology classification \cite{Chen:2013foa}; setting
$\Fc^{\rd}=\Fc^{\bl}=\Fc^{\gr}=\FF_2$ and $\eta_2 = [\L]$ recovers the ordinary
$\ZZ_2^3$ type-III SPT in $(2+1)$D, i.e.\ the $2$D cluster state.

Written out explicitly on the spatial complex, with
$\eta_2(\sigma) = \sum_{\ell} a_\ell \otimes b_\ell \otimes c_\ell$ where
$a_\ell \in \Fc^{\rd}(\sigma)$, $b_\ell \in \Fc^{\bl}(\sigma)$, $c_\ell \in \Fc^{\gr}(\sigma)$,
the degree-$(0,1,1)$ cup product on a $2$-cell $\sigma = [v_0,v_1,v_2]$ gives
\begin{align}\label{eq:integral_explicit}
\int_{\eta_2} x^0 \cup y^1 \cup z^1 =& \sum_{\sigma \in \L(2)} \sum_{\ell}
\big\langle \Fc^{\rd}_{[v_0]\shortrightarrow \sigma}\, x^0([v_0]),\, a_\ell \big\rangle
\cr
&\times
\big\langle \Fc^{\bl}_{[v_0 v_1]\shortrightarrow \sigma}\, y^1([v_0 v_1]),\, b_\ell
\big\rangle \cr
&\times
\big\langle \Fc^{\gr}_{[v_1 v_2]\shortrightarrow \sigma}\, z^1([v_1 v_2]),\, c_\ell
\big\rangle ,
\end{align}
following Definition~\ref{def:cup-simplicial}, and analogously for the splittings
$(1,0,1)$ and $(1,1,0)$, for which the front and back faces are
$\{[v_0v_1],[v_1],[v_1v_2]\}$ and $\{[v_0v_1],[v_1v_2],[v_2]\}$ respectively.
(For the cubical complex obtained from the hypergraph product one uses the
corresponding cubical cup product; only the combinatorics of the front/back faces
changes.) Note that the sum over $2$-cells is effectively restricted to
$\L_{\eta_2}$, since $\eta_2(\sigma)=0$ off the subcomplex.

\subsubsection{The CCZ Entangler}

Following the gauging strategy of Refs.~\cite{Yoshida:2015cia,
barkeshli2023codimension}, we start from the trivial paramagnet
\begin{align}\label{eq:trivial_H}
H^0 = -\sum_{v \in V}\sum_{t} \tilde X^{\rd}_{v,s}
      -\sum_{v \in V}\sum_{r} \tilde X^{\bl}_{v,r}
      -\sum_{v \in V}\sum_{t} \tilde X^{\gr}_{v,t},
\end{align}
where $s$, $r$ and $t$ run over bases of $\Fc^{\rd}(v)$, $\Fc^{\bl}(v)$ and $\Fc^{\gr}(v)$. The
SPT entangler is the unitary
\begin{equation} \label{eq:entangler}
U = (-1)^{\int_{\eta_2} \hat{\lambda}^{0}_{\rd} \cup
d\hat{\lambda}^{0}_{\bl} \cup d\hat{\lambda}^{0}_{\gr}},
\end{equation}
whose exponent has total degree $0+1+1=2=\mathsf{d}$, as required for it to pair
with the spatial $2$-cycle $\eta_2$. Inserting the expansion
\eqref{eq:cochain_expansion} and using multilinearity of the cup product and of
the pairing, we obtain
\begin{align}\label{eq:entangler_re-express}
 U =& (-1)^{\int_{\eta_2} \hat{N}^{\rd}_{u,s}\,\bar{u}_s \cup
 \hat{N}^{\bl}_{v,r}\, d\bar{v}_r \cup
 \hat{N}^{\gr}_{w,t}\, d\bar{w}_t} \cr
 =& \prod_{(u,s),(v,r),(w,t)}
 \Big[\text{CCZ}^{\rd,\bl,\gr}_{(u,s),(v,r),(w,t)}\Big]^{
 \int_{\eta_2} \bar{u}_s \cup d\bar{v}_r \cup d\bar{w}_t} \cr
\equiv & \prod_{\substack{(u,s),(v,r),(w,t):\\
 \int_{\eta_2} \bar{u}_s \cup d\bar{v}_r \cup d\bar{w}_t \neq 0}}
 \text{CCZ}^{\rd,\bl,\gr}_{(u,s),(v,r),(w,t)} ,
\end{align}
where the summation over repeated cell and stalk labels in the exponent of the
first line is implicit, and we have used the identity
\be
(-1)^{\hat{N}^{\rd}_{u,s}\hat{N}^{\bl}_{v,r}\hat{N}^{\gr}_{w,t}}
\equiv \text{CCZ}^{\rd,\bl,\gr}_{(u,s),(v,r),(w,t)},
\ee
which follows from
\begin{align}
&\text{CCZ}^{\rd,\bl,\gr}_{(u,s),(v,r),(w,t)}
\ket{N^{\rd}_{u,s}, N^{\bl}_{v,r}, N^{\gr}_{w,t}} \cr
=& (-1)^{N^{\rd}_{u,s} N^{\bl}_{v,r} N^{\gr}_{w,t}}
\ket{N^{\rd}_{u,s}, N^{\bl}_{v,r}, N^{\gr}_{w,t}} .
\end{align}
As in the scalar theory, the evaluation of the cup-product integral in the
exponent of the second line of Eq.~\eqref{eq:entangler_re-express} takes the
value $1$ or $0$ and therefore determines whether the CCZ gate is applied,
leading to the condition in the product in the third line: the circuit consists
of one CCZ gate on each qubit triple $\{(u,s),(v,r),(w,t)\}$ satisfying the
sheaf-weighted triple-intersection condition
\be\label{eq:triple_intersection}
\int_{\eta_2} \bar{u}_s \cup d\bar{v}_r \cup d\bar{w}_t = 1 .
\ee
Two properties of Eq.~\eqref{eq:triple_intersection} deserve emphasis. First, it
can only be satisfied when $u$, $v$, $w$ are mutually incident to a common
$2$-cell of $\L$, so the circuit is geometrically local and of constant depth.
Second, that common $2$-cell must lie in $\L_{\eta_2}$: \emph{the entangler acts
as the identity outside the subcomplex}. The sheaf cluster state
\be
\ket{\Psi} = U \ket{+}^{\otimes n}
\ee
is therefore a product of a genuine cluster
state on $\L_{\eta_2}$ with unentangled $\ket{+}$ states everywhere else.

\subsubsection{The Sheaf cluster-state and SPT Hamiltonian}

Conjugating the trivial paramagnet by the entangler gives the Hamiltonian of sheaf SPT (cluster state), 
\begin{align}\label{eq:SPT_Hamiltonian}
 H_\text{SPT} =& \, U H^0 U^\dag  \cr
=&  -\sum_{u,s} \tilde X^{\rd}_{u,s}\,
(-1)^{\int_{\eta_2} \bar{u}_s \cup d\hat{\lambda}^{0}_{\bl} \cup
d\hat{\lambda}^{0}_{\gr}} \cr
&-\sum_{v,r} \tilde X^{\bl}_{v,r}\,
(-1)^{\int_{\eta_2} d\hat{\lambda}^{0}_{\rd} \cup \bar{v}_r \cup
d\hat{\lambda}^{0}_{\gr}} \cr
&-\sum_{w,t} \tilde X^{\gr}_{w,t}\,
(-1)^{\int_{\eta_2} d\hat{\lambda}^{0}_{\rd} \cup d\hat{\lambda}^{0}_{\bl} \cup
\bar{w}_t} \cr
=&  -\sum_{u,s} \tilde X^{\rd}_{u,s} \prod_{(v,r),(w,t)}
{\text{CZ}^{\bl,\gr}_{(v,r),(w,t)}}^{\int_{\eta_2} \bar{u}_s \cup d\bar{v}_r
\cup d\bar{w}_t} \cr
&-\sum_{v,r} \tilde X^{\bl}_{v,r} \prod_{(u,s),(w,t)}
{\text{CZ}^{\rd,\gr}_{(u,s),(w,t)}}^{\int_{\eta_2} d\bar{u}_s \cup \bar{v}_r
\cup d\bar{w}_t} \cr
&-\sum_{w,t} \tilde X^{\gr}_{w,t} \prod_{(u,s),(v,r)}
{\text{CZ}^{\rd,\bl}_{(u,s),(v,r)}}^{\int_{\eta_2} d\bar{u}_s \cup d\bar{v}_r
\cup \bar{w}_t}, \cr
\end{align}
where the last equality follows by re-expanding the remaining operator-valued
cochains as in Eq.~\eqref{eq:cochain_expansion} together with the identity
$(-1)^{\hat{N}^{i}_{\sigma,s}\hat{N}^{i'}_{\sigma',p'}} $$\equiv $$
\text{CZ}^{i,i'}_{(\sigma,s),(\sigma',p')}$ with $i,i' \in \{\rd,\bl,\gr\}$.
Taking the exponents as conditions in the products as before, we arrive at the
final form of the sheaf SPT Hamiltonian:
\begin{align}\label{eq:SPT_Hamiltonian_final}
 H_\text{SPT} =& -\sum_{u,s} \tilde X^{\rd}_{u,s}
 \prod_{\substack{(v,r),(w,t):\\
 \int_{\eta_2} \bar{u}_s \cup d\bar{v}_r \cup d\bar{w}_t \neq 0}}
 \text{CZ}^{\bl,\gr}_{(v,r),(w,t)} \cr
 &-\sum_{v,r} \tilde X^{\bl}_{v,r}
 \prod_{\substack{(u,s),(w,t):\\
 \int_{\eta_2} d\bar{u}_s \cup \bar{v}_r \cup d\bar{w}_t \neq 0}}
 \text{CZ}^{\rd,\gr}_{(u,s),(w,t)} \cr
 &-\sum_{w,t} \tilde X^{\gr}_{w,t}
 \prod_{\substack{(u,s),(v,r):\\
 \int_{\eta_2} d\bar{u}_s \cup d\bar{v}_r \cup \bar{w}_t \neq 0}}
 \text{CZ}^{\rd,\bl}_{(u,s),(v,r)} .
\end{align}
This is a Clifford-stabilizer Hamiltonian built from \emph{dressed}
$X$-stabilizers, i.e.\ single-site $X$ operators dressed by a product of CZ
gates, and the sheaf cluster state $\ket{\Psi}=U\ket{+}^{\otimes n}$ is its unique ground state.

\subsubsection{Contrast with the ordinary cluster state}
\label{sec:contrast}

It is worth spelling out, term by term, how the sheaf cluster state differs from
the familiar $2$D cluster state.

\emph{(i) Number of qubits per site.} In the ordinary cluster state each vertex of
each colour hosts exactly one qubit, and the total qubit count is $3|V|$. In the
sheaf cluster state the stalk $\Fc^{i}(v)$ hosts $\dim\Fc^{i}(v)$ qubits, so a single
vertex carries a small register. Equivalently, the ordinary cluster state is the
special case $\Fc^{\rd}=\Fc^{\bl}=\Fc^{\gr}=\FF_2$ of the constant sheaf.

\emph{(ii) Weighted incidences.} The ordinary cluster-state stabilizer
$X_u \prod_{v\sim u}Z_v$ treats all incidences democratically. Here every
incidence $u \precdot e$ is dressed by the co-restriction map
$\Fc^{\rd}_{u\shortrightarrow e}$, i.e.\ by a column of the local parity-check matrix
$\mathsf{H}_v$ [Eq.~\eqref{eq:restriction_map}]. The local code
attached to each vertex is what allows the underlying classical code to be good,
which is impossible for a bare graph code because the girth of a graph is
logarithmically bounded (Sec.~\ref{sec:sheaf-homology}).

\emph{(iii) Support: whole complex versus subcomplex.} This is the essential
structural difference. The ordinary cluster state is built on a closed surface and
its response action uses the fundamental class $[\L]$, which covers every $2$-cell
exactly once. On the non-manifold square complexes of Sec.~\ref{sec:subdivision}
no such class exists (Remark~\ref{rem:why_cycle}), and one instead selects a
$2$-cycle $\eta_2$. Since $\dim \eta_2 = 2 = \mathsf{d}$, the subcomplex
$\L_{\eta_2}$ has codimension $0$, consistent with the symmetry being of $0$-form
type; but it is a \emph{proper} subcomplex, and outside it the state is a trivial
product state. For the hypergraph product one may take $\eta_2 = z\otimes z'$ with
$z,z'$ classical codewords, in which case $\L_{\eta_2}$ is the sparse ``sub-grid''
consisting of the $2$-cells $e\otimes e'$ with $e \in \mathrm{supp}\,z$ and
$e' \in \mathrm{supp}\,z'$.

% \emph{(iv) Size of the symmetry group.} The ordinary cluster state realizes a
% $\ZZ_2\times\ZZ_2\times\ZZ_2$ SPT with exactly three generators. Here, as we show
% next, the group has rank $k+k'+k''$ growing linearly with the number of qubits,
% and its generators are in bijection with codewords of the classical sheaf codes.
% This extensive symmetry is what will supply the extensively many logical qubits of
% the gauged model, and hence the constant rate of the resulting non-Abelian qLDPC
% code.  

\subsubsection{The $0$-form subcomplex symmetry of SPT}
\label{sec:subcomplex_symmetry}

The model is protected by
$\red{\Z_2^{(0)}} \times \blue{\Z_2^{(0)}} \times \green{\Z_2^{(0)}}$ $0$-form
symmetries,where `$(0)$' emphasizes it is 0-form. Because the coefficients are local, the symmetry generators are not
simply the uniform products $\prod_{v \in \L(0)} \tilde X_u$ as in the usual 2D cluster state ($\Z_2^3$ SPT); instead they are labelled by
sheaf $0$-\emph{cocycles}. Explicitly, for $\mu \in Z^0(\L,\Fc^{\rd})$,
$\nu \in Z^0(\L,\Fc^{\bl})$ and $\rho \in Z^0(\L,\Fc^{\gr})$ define
\begin{align}
    \begin{split}
        \red{\Z_2^{(0)}}: & \quad
        V_{\rd}(\mu) = \prod_{v \in V}\prod_{s}
        \big(\tilde X^{\rd}_{v,s}\big)^{[\mu(v)]_s}, \\
        \blue{\Z_2^{(0)}}: & \quad
        V_{\bl}(\nu) = \prod_{v \in V}\prod_{r}
        \big(\tilde X^{\bl}_{v,r}\big)^{[\nu(v)]_r}, \\
        \green{\Z_2^{(0)}}: & \quad
        V_{\gr}(\rho) = \prod_{v \in V}\prod_{t}
        \big(\tilde X^{\gr}_{v,t}\big)^{[\rho(v)]_t} .
        \label{eq:globalsym}
    \end{split}
\end{align}
These operators implement the shifts $\hat{\lambda}^0_{\rd} \to
\hat{\lambda}^0_{\rd}+\mu$, etc. Since $d\mu = d\nu = d\rho = 0$, the
coboundaries $d\hat{\lambda}^0_i$ appearing in Eq.~\eqref{eq:SPT_Hamiltonian} are
left invariant, and one verifies directly that $H_\text{SPT}$ commutes with all
three families of generators.

Two features distinguish this from the usual $0$-form symmetry of a cluster state:

\emph{The symmetry group is the classical sheaf code.} Because there are no
$(-1)$-cells, $Z^0(\L,\Fc^{\rd}) = \ker d^0 = H^0(\L,\Fc^{\rd})$, which by
Proposition~\ref{Prop:sheaf_cohomology} is precisely the $0$-cocycle classical
sheaf code $\T^*(G,\{C_u\}_{v\in V})$. The symmetry group of the sheaf SPT is
therefore
\be\label{eq:symmetry_group}
\red{H^0(\L,\Fc^{\rd})} \times \blue{H^0(\L,\Fc^{\bl})} \times \green{H^0(\L,\Fc^{\gr})}
\cong \ZZ_2^{k} \times \ZZ_2^{k'} \times \ZZ_2^{k''} ,
\ee
i.e.\ symmetry generators are in one-to-one correspondence with \emph{codewords}
of the underlying classical sheaf codes, and the number of independent generators
is the classical code dimension. In the case of usual cluster state defined on the cellulation of a manifold, 
one has $H^0(\L, \FF_2)=\FF_2$ which recovers the single uniform generator
$\prod_{v}\tilde{X}_u$.

\emph{The symmetry is a $0$-form subcomplex symmetry.} Conventionally a $k$-form
symmetry is generated by an operator supported on a codimension-$k$ submanifold,
so a $0$-form symmetry acts on all of space. Here, following the generalization to
codimension-$k$ \emph{subcomplexes} $\L^*_k \subseteq \L^*$, the relevant
generators act on the codimension-$0$ subcomplex
$\L^*_{\eta_2} \subseteq \L^*$ dual to $\L_{\eta_2}$. Concretely, decompose any
cocycle as $\mu = \mu|_{\L_{\eta_2}} + \mu|_{\L \setminus \L_{\eta_2}}$. The
second piece acts on qubits that the entangler never touches, so it acts on a
trivial paramagnet and carries no SPT data; all of the protected structure resides
in the restrictions
\begin{align}\label{eq:subcomplex_sym}
    \begin{split}
        \red{\Z_2^{(0)}[\eta_2]}: & \quad
        V_{\rd}\big(\mu|_{\L_{\eta_2}}\big), \\
        \blue{\Z_2^{(0)}[\eta_2]}: & \quad
        V_{\bl}\big(\nu|_{\L_{\eta_2}}\big), \\
        \green{\Z_2^{(0)}[\eta_2]}: & \quad
        V_{\gr}\big(\rho|_{\L_{\eta_2}}\big).
    \end{split}
\end{align}
Accordingly we refer to Eq.~\eqref{eq:SPT_Hamiltonian_final} as a \emph{$0$-form
subcomplex $\red{\Z_2^{(0)}} \times \blue{\Z_2^{(0)}} \times \green{\Z_2^{(0)}}$ SPT}: the symmetry is $0$-form in the
sense that its generators have codimension $0$, but they are supported on the
proper subcomplex $\L_{\eta_2}$ picked out by the $2$-cycle $\eta_2$, rather than
on the entire spatial complex. Note that it is precisely the cycle condition
$\partial\eta_2=0$ that makes
$\L_{\eta_2}$ a legitimate support for a $0$-form symmetry: it is the
combinatorial substitute for ``closed and codimension $0$'', and it is what
guarantees the gauge invariance of the response action in
Lemma~\ref{lemma:invariant}.

%======================================================================
\subsection{Constructing twisted qLDPC codes by gauging the $0$-form
subcomplex symmetry}
\label{sec:gauging}
%======================================================================

We now gauge the $0$-form subcomplex $\red{\Z_2^{(0)}} \times \blue{\Z_2^{(0)}} \times \green{\Z_2^{(0)}}$ symmetry
identified in Eqs.~\eqref{eq:symmetry_group} and \eqref{eq:subcomplex_sym},
following Refs.~\cite{Yoshida:2015cia, barkeshli2023codimension}. There are two
equivalent notions of gauging:
\begin{enumerate}
\item
Promote the static background gauge fields $A$ to dynamical gauge fields $a$ with
quantum fluctuations.
\item
Promote the global symmetries to local gauge symmetries.
\end{enumerate}
Notion 1 amounts to promoting the background sheaf $1$-cochains $\red{A^1}$,
$\blue{B^1}$, $\green{C^1}$ in $S_\text{SPT}$ of Eq.~\eqref{eq:SPT_action} to
dynamical sheaf gauge fields $\red{a^1}$, $\blue{b^1}$, $\green{c^1}$ in the
twisted sheaf gauge theory of Eq.~\eqref{eq:non-Abelian_action}:
\begin{align}
& S_\text{SPT} = \pi \int_{\tilde{\eta}_3}
\red{A^1} \cup \blue{B^1} \cup \green{C^1} \cr
&\rightarrow \ S = \pi \int_{\tilde{\eta}_3}
\red{a^1} \cup \blue{b^1} \cup \green{c^1} + \text{sheaf BF terms}. \cr
\end{align}
Notion 2 is the one we pursue to construct the Clifford stabilizer models of the
non-Abelian qLDPC codes: we promote the $0$-form subcomplex symmetries of
Eq.~\eqref{eq:globalsym} to local gauge symmetries by introducing new degrees of
freedom---$\ZZ_2$ sheaf gauge fields---and imposing a Gauss law.

It is important to be precise about what is being gauged where. The Gauss law is
imposed on \emph{every} vertex of $\L$, so that the gauge fields occupy all edges
and the resulting code is defined on the whole complex; what is localized to the
subcomplex is the \emph{twist}. Off $\L_{\eta_2}$ the procedure gauges a trivial
paramagnet and produces the untwisted sheaf code of
Eq.~\eqref{eq:H_untwisted}; on $\L_{\eta_2}$ it gauges the nontrivial
$0$-form subcomplex $\ZZ_2^3$ SPT and produces the CZ-dressed stabilizers that
make the code non-Abelian. In this sense the $2$-cycle $\eta_2$ is a tunable
resource: it selects which region of the code carries the twist, and different
homology classes $[\eta_2]$ give inequivalent twisted codes.

\subsubsection{Gauge fields on edges}

The gauge qubits live on the $1$-cells (edges) of $\L$, with
$\dim \Fc^{\rd}(e)$ gauge qubits per edge in copy \rd, and similarly for \bl \ and \gr.
Recall that in the example hypergraph product complex of Eq.~\eqref{eq:local_system_2D} the
edge stalks $\FF_2 \otimes L^y_{v'}$ and $L^x_v \otimes \FF_2$ are typically 
higher-dimensional, so an edge in general carries several gauge qubits. Following the
convention of Eq.~\eqref{eq:qubit_labels}, we label the gauge qubits by $(e,s)$,
$(e,r)$ and $(e,t)$ in the \rd, \bl \ and \gr \ copies, and denote the associated
Pauli operators by $X'^{\rd}_{e,s}, Y'^{\rd}_{e,s}, Z'^{\rd}_{e,s}$ and so on.
The three types of $\ZZ_2$ sheaf gauge fields $\red{\hat{a}'^1} \in C^1(\L,\Fc^{\rd})$,
$\blue{\hat{b}'^1} \in C^1(\L,\Fc^{\bl})$, $\green{\hat{c}'^1} \in C^1(\L,\Fc^{\gr})$ are
operator-valued $1$-cochains related to the Pauli-$Z$ operators by
\begin{align}
    \begin{split}
        (-1)^{[\red{\hat{a}'^1}(e)]_s} = Z'^{\rd}_{e,s}, \quad
        (-1)^{[\blue{\hat{b}'^1}(e)]_r} = Z'^{\bl}_{e,r}, \\
        (-1)^{[\green{\hat{c}'^1}(e)]_t} = Z'^{\gr}_{e,t}.
    \end{split}
\end{align}

The key idea of gauging in notion 2 is that the matter fields $\hat{\lambda}^0_i$
of Eq.~\eqref{eq:Pauli_relation}, previously transformed only under the $0$-form
subcomplex symmetries \eqref{eq:globalsym} labelled by \emph{cocycles}, are now
transformed under local gauge symmetries labelled by arbitrary sheaf
$0$-\emph{cochains} $\mu \in C^0(\L,\Fc^{\rd})$, $\nu \in C^0(\L,\Fc^{\bl})$,
$\rho \in C^0(\L,\Fc^{\gr})$. Under a gauge transformation the matter fields shift as
\begin{align}
    \hat{\lambda}_{\rd}^{0} &\rightarrow \hat{\lambda}_{\rd}^{0}-\mu ~,\cr
    \hat{\lambda}_{\bl}^{0} &\rightarrow \hat{\lambda}_{\bl}^{0}-\nu ~,\cr
    \hat{\lambda}_{\gr}^{0} &\rightarrow \hat{\lambda}_{\gr}^{0}-\rho ~,
    \label{eq:gaugetrans1}
\end{align}
while the gauge fields shift by the corresponding sheaf coboundaries,
\begin{align}
    \red{\hat{a}'^1} &\rightarrow \red{\hat{a}'^1}+d\mu ~,\cr
    \blue{\hat{b}'^1} &\rightarrow \blue{\hat{b}'^1}+d\nu ~,\cr
    \green{\hat{c}'^1} &\rightarrow \green{\hat{c}'^1}+d\rho ~,
    \label{eq:gaugetrans2}
\end{align}
with $d$ the sheaf coboundary of Eq.~\eqref{eq:coboundary-sheaf}.

\subsubsection{Gauss's law}

We choose as generators of the gauge group the elementary cochains
$\mu = \bar{v}_s$, $\nu = \bar{v}_r$, $\rho = \bar{v}_t$, each of which acts on a
single matter qubit. The corresponding Gauss's law operators are
\begin{align}\label{eq:Gauss_law}
    G^{\rd}(v,s) = & \ \tilde X^{\rd}_{v,s}
    \prod_{e \dotsucc v} \prod_{s'}
    \big(X'^{\rd}_{e,s'}\big)^{\left[\Fc^{\rd}_{u \shortrightarrow e}
    (\vec{e}_s)\right]_{s'}} ~,\cr
    G^{\bl}(v,r) = & \ \tilde X^{\bl}_{v,r}
    \prod_{e \dotsucc v} \prod_{r'}
    \big(X'^{\bl}_{e,r'}\big)^{\left[\Fc^{\bl}_{u \shortrightarrow e}
    (\vec{e}_r)\right]_{r'}} ~,\cr
    G^{\gr}(v,t) = & \ \tilde X^{\gr}_{v,t}
    \prod_{e \dotsucc v} \prod_{t'}
    \big(X'^{\gr}_{e,t'}\big)^{\left[\Fc^{\gr}_{u \shortrightarrow e}
    (\vec{e}_t)\right]_{t'}} ~,
\end{align}
where $s'$, $r'$, $t'$ index the edge stalks $\Fc^{\rd}(e)$, $\Fc^{\bl}(e)$, $\Fc^{\gr}(e)$.
Compared to the scalar case, the product over incident edges is now weighted by
the co-restriction map, i.e.\ by the corresponding column of the local
parity-check matrix $\mathsf{H}_v$ of Eq.~\eqref{eq:restriction_map}.
Equivalently, the $X'$ part of
$G^{\rd}(v,s)$ is exactly the row $(v,s)$ of the sheaf coboundary matrix $d^0$,
i.e.\ the $X$-stabilizer $A^{\rd}_{v,s}$ of the quantum sheaf code of
Definition~\ref{def:sheaf-code} at degree $1$. One checks directly that
Eq.~\eqref{eq:Gauss_law} generates the transformations
\eqref{eq:gaugetrans1}--\eqref{eq:gaugetrans2}: for instance $G^{\rd}(v,s)$ acts
by $\hat{\lambda}_{\rd}^{0}\to \hat{\lambda}_{\rd}^{0}-\bar{v}_s$ and
$\red{\hat{a}'^1}\to \red{\hat{a}'^1} + d\bar{v}_s$.

Gauging is carried out by imposing $G^{\rd}=G^{\bl}=G^{\gr}=1$ on every vertex
qubit, which defines the gauge-invariant physical Hilbert space
$\mathcal{H}_{\text{phys}}$ spanned by states with
$G^{i}\ket{\text{phys}} = \ket{\text{phys}}$. Note that the local generators
$\bar{v}_s$ are \emph{not} cocycles; the $0$-form subcomplex symmetry
\eqref{eq:subcomplex_sym} we started from is recovered as the subgroup of gauge
transformations with $d\mu=0$, which act trivially on the gauge fields and
therefore survive as global (logical) operations after gauging.

\subsubsection{Minimal coupling and the gauged Hamiltonian}

Next we minimally couple $H_{\text{SPT}}$ to the sheaf gauge fields so that it
commutes with the Gauss's law operators and acts within
$\mathcal{H}_{\text{phys}}$. This is achieved by the replacement
\begin{align}
d\hat{\lambda}_{\rd}^{0} \rightarrow d\hat{\lambda}_{\rd}^{0}+\red{\hat{a}'^1}
\equiv \red{\hat{a}^1}, \cr
d\hat{\lambda}_{\bl}^{0} \rightarrow d\hat{\lambda}_{\bl}^{0}+\blue{\hat{b}'^1}
\equiv \blue{\hat{b}^1}, \cr
d\hat{\lambda}_{\gr}^{0} \rightarrow d\hat{\lambda}_{\gr}^{0}+\green{\hat{c}'^1}
\equiv \green{\hat{c}^1}.
\end{align}
The new gauge fields $\red{\hat{a}^1}, \blue{\hat{b}^1}, \green{\hat{c}^1}$ are
related to the old ones by the gauge transformation \eqref{eq:gaugetrans2} with
$\mu = \hat{\lambda}^0_{\rd}$, $\nu = \hat{\lambda}^0_{\bl}$,
$\rho = \hat{\lambda}^0_{\gr}$, which eliminates the matter fields entirely.
Accordingly we define the gauge-invariant Pauli operators on each edge qubit:
\begin{align}
    \begin{split}
        X^{\rd}_{e,s} &= X'^{\rd}_{e,s}~, \\
        Z^{\rd}_{e,s} &= (-1)^{[\red{\hat{a}^1}(e)]_s}
        = (-1)^{[(d\hat{\lambda}^0_{\rd} + \red{\hat{a}'^1})(e)]_s}~, \\
        X^{\bl}_{e,r} &= X'^{\bl}_{e,r}~, \\
        Z^{\bl}_{e,r} &= (-1)^{[\blue{\hat{b}^1}(e)]_r}
        = (-1)^{[(d\hat{\lambda}^0_{\bl} + \blue{\hat{b}'^1})(e)]_r}~, \\
        X^{\gr}_{e,t} &= X'^{\gr}_{e,t}~, \\
        Z^{\gr}_{e,t} &= (-1)^{[\green{\hat{c}^1}(e)]_t}
        = (-1)^{[(d\hat{\lambda}^0_{\gr} + \green{\hat{c}'^1})(e)]_t}~.
        \label{eq:gaugeinvariant_paulis}
    \end{split}
\end{align}
We further impose flatness of the sheaf gauge fields, or equivalently the cocycle
condition $\red{d\hat{a}^1}=\blue{d\hat{b}^1}=\green{d\hat{c}^1}=0$, so that
$\red{\hat{a}^1}, \blue{\hat{b}^1}, \green{\hat{c}^1}$ become operator-valued
sheaf $1$-cocycles rather than general $1$-cochains. The gauged SPT Hamiltonian
is then
\begin{align}
    \tilde{H}&=H_\text{Gauss}+H_\text{Flux}~,
\end{align}
where the Gauss's law term reads
\begin{align}\label{eq:Gauss_law_term}
    H_\text{Gauss}
    =& -\sum_{v,s} A^{\rd}_{v,s}\,
    (-1)^{\int_{\eta_2} \bar{v}_s \cup \blue{\hat{b}^1} \cup
    \green{\hat{c}^1}} \cr
    &-\sum_{v,r} A^{\bl}_{v,r}\,
    (-1)^{\int_{\eta_2} \red{\hat{a}^1} \cup \bar{v}_r \cup
    \green{\hat{c}^1}} \cr
    &-\sum_{v,t} A^{\gr}_{v,t}\,
    (-1)^{\int_{\eta_2} \red{\hat{a}^1} \cup \blue{\hat{b}^1} \cup
    \bar{v}_t} ~,
\end{align}
with the bare $X$-stabilizers
\begin{align}\label{eq:A_stabilizer}
A^{\rd}_{v,s} &= \prod_{e \dotsucc v} \prod_{s'}
\big(X^{\rd}_{e,s'}\big)^{\left[\Fc^{\rd}_{u \shortrightarrow e}
(\vec{e}_s)\right]_{s'}} , \cr
A^{\bl}_{v,r} &= \prod_{e \dotsucc v} \prod_{r'}
\big(X^{\bl}_{e,r'}\big)^{\left[\Fc^{\bl}_{u \shortrightarrow e}
(\vec{e}_r)\right]_{r'}} , \cr
A^{\gr}_{v,t} &= \prod_{e \dotsucc v} \prod_{t'}
\big(X^{\gr}_{e,t'}\big)^{\left[\Fc^{\gr}_{u \shortrightarrow e}
(\vec{e}_t)\right]_{t'}} .
\end{align}
Here we have re-expressed the single-site $\tilde X$ operators of $H_\text{SPT}$
as products of edge $X$ operators using the Gauss's law constraints $G^i=1$
according to Eq.~\eqref{eq:Gauss_law}, together with the identification of $X'$
and $X$ in Eq.~\eqref{eq:gaugeinvariant_paulis}. The substitution is exact on the
physical Hilbert space, since $G^{\rd}(v,s)=1$ and
$\big(A^{\rd}_{v,s}\big)^2=\I$ give
$\tilde X^{\rd}_{v,s}\big|_{\mathcal{H}_\text{phys}} =
A^{\rd}_{v,s}\big|_{\mathcal{H}_\text{phys}}$, which is what allows the matter
qubits to be eliminated altogether. The $A^{i}_{v,s}$ are the colour-resolved
versions of the $X$-stabilizers of Eq.~\eqref{eq:A_stabilizer_untwisted}, i.e.\ the
rows of $H_X=(d^0)^T$; see Sec.~\ref{sec:X_stabilizer} for their derivation from
$d^0$ and their interpretation as generators of a coboundary shift.

Note that in the second and
third lines of Eq.~\eqref{eq:Gauss_law_term} the elementary cochain sits in the
middle and last cup slot respectively, so the relevant front/back faces are those
of the $(1,0,1)$ and $(1,1,0)$ splittings. For vertices $u \notin \L_{\eta_2}$ all
three integrals vanish identically and the dressing factor is $1$.

The flux term
\begin{align}\label{eq:Hflux}
    H_\text{Flux} = -\sum_{f \in \L(2)}
    \Big( B^{\rd}_{f} + B^{\bl}_{f} + B^{\gr}_{f} \Big) ~,
\end{align}
with
\begin{align}\label{eq:B_stabilizer}
B^{\rd}_{f} &= \prod_{e \precdot f} \prod_{s'}
\big(Z^{\rd}_{e,s'}\big)^{\left[\Fc^{\rd}_{e \shortrightarrow f}\right]_{s'}} , \cr
B^{\bl}_{f} &= \prod_{e \precdot f} \prod_{r'}
\big(Z^{\bl}_{e,r'}\big)^{\left[\Fc^{\bl}_{e \shortrightarrow f}\right]_{r'}} , \cr
B^{\gr}_{f} &= \prod_{e \precdot f} \prod_{t'}
\big(Z^{\gr}_{e,t'}\big)^{\left[\Fc^{\gr}_{e \shortrightarrow f}\right]_{t'}} ,
\end{align}
energetically enforces the flat sheaf gauge field (cocycle) condition
$\red{d\hat{a}^1}=0, \ \blue{d\hat{b}^1}=0, \ \green{d\hat{c}^1}=0$ on every
$2$-cell $f$ in the ground-state subspace of $\tilde{H}$, in direct analogy with
the plaquette term of the standard ($1$-form) toric code.

The $B^{i}_{f}$ are the colour-resolved versions of the $Z$-stabilizers of
Eq.~\eqref{eq:B_stabilizer_untwisted}; in particular
$B^{i}_{f}=(-1)^{(d\hat{a}^1)(f)}$ is the row $f$ of $H_Z=d^1$, and it carries no
vector index because the stalk on a top cell is a scalar. See
Sec.~\ref{sec:Z_stabilizer} for the derivation, and
Sec.~\ref{sec:untwisted_Hamiltonian} for the commutation
$[A^{i}_{v,s},B^{i}_{f}]=0$, which is equivalent to $d^1 d^0=0$.
Unlike the Gauss term, the flux term is imposed on all of $\L$ and is untouched by
the twist.

\subsubsection{The twisted sheaf code}\label{sec:twisted_code}

Finally we re-express the Gauss's law terms in Eq.~\eqref{eq:Gauss_law_term} by
expanding the gauge fields in elementary cochains,
$\red{\hat{a}^1} = \sum_{e,s} \hat{N}^{\rd}_{e,s}\, \bar{e}_s$,
$\blue{\hat{b}^1} = \sum_{e,r} \hat{N}^{\bl}_{e,r}\, \bar{e}_r$,
$\green{\hat{c}^1} = \sum_{e,t} \hat{N}^{\gr}_{e,t}\, \bar{e}_t$, along with the
identity $(-1)^{\hat{N}^{i}_{e,s}\hat{N}^{i'}_{e',s'}} =
\text{CZ}^{i,i'}_{(e,s),(e',s')}$, which turns the terms containing gauge fields
into CZ operators. The rewritten gauged SPT Hamiltonian
$\tilde{H}=H_\text{Gauss}+H_\text{Flux}$ then gives rise to the Clifford
stabilizer generators of the twisted sheaf code $\tilde{\C}$:
\begin{align}\label{eq:stabilizer_summary}
\tilde{A}^{\rd}_{v,s} =& A^{\rd}_{v,s}
\prod_{\substack{(e,r),(e',t):\\
\int_{\eta_2} \bar{v}_s \cup \bar{e}_r \cup \bar{e}'_t \neq 0}}
\text{CZ}^{\bl,\gr}_{(e,r),(e',t)} , \cr
\tilde{A}^{\bl}_{v,r} =& A^{\bl}_{v,r}
\prod_{\substack{(e,s),(e',t):\\
\int_{\eta_2} \bar{e}_s \cup \bar{v}_r \cup \bar{e}'_t \neq 0}}
\text{CZ}^{\rd,\gr}_{(e,s),(e',t)} , \cr
\tilde{A}^{\gr}_{v,t} =& A^{\gr}_{v,t}
\prod_{\substack{(e,s),(e',r):\\
\int_{\eta_2} \bar{e}_s \cup \bar{e}'_r \cup \bar{v}_t \neq 0}}
\text{CZ}^{\rd,\bl}_{(e,s),(e',r)} , \cr
B^{\rd}_{f}, \quad & B^{\bl}_{f}, \quad B^{\gr}_{f}
\quad \text{as in Eq.~\eqref{eq:B_stabilizer}} .
\end{align}
The $Z$-type generators $B^i$ are unchanged by the twist, while each $X$-type
generator is dressed by CZ gates acting on the two \emph{other} colors, and only
for vertices lying in the subcomplex $\L_{\eta_2}$. The set of qubit pairs
entering each dressing is controlled entirely by the triple cup product together
with the choice of $2$-cycle $\eta_2$, i.e.\ by the invariant form of
Definition~\ref{def:invariant-form}.

\begin{figure}[t]
    \centering   \includegraphics[width=1\linewidth]{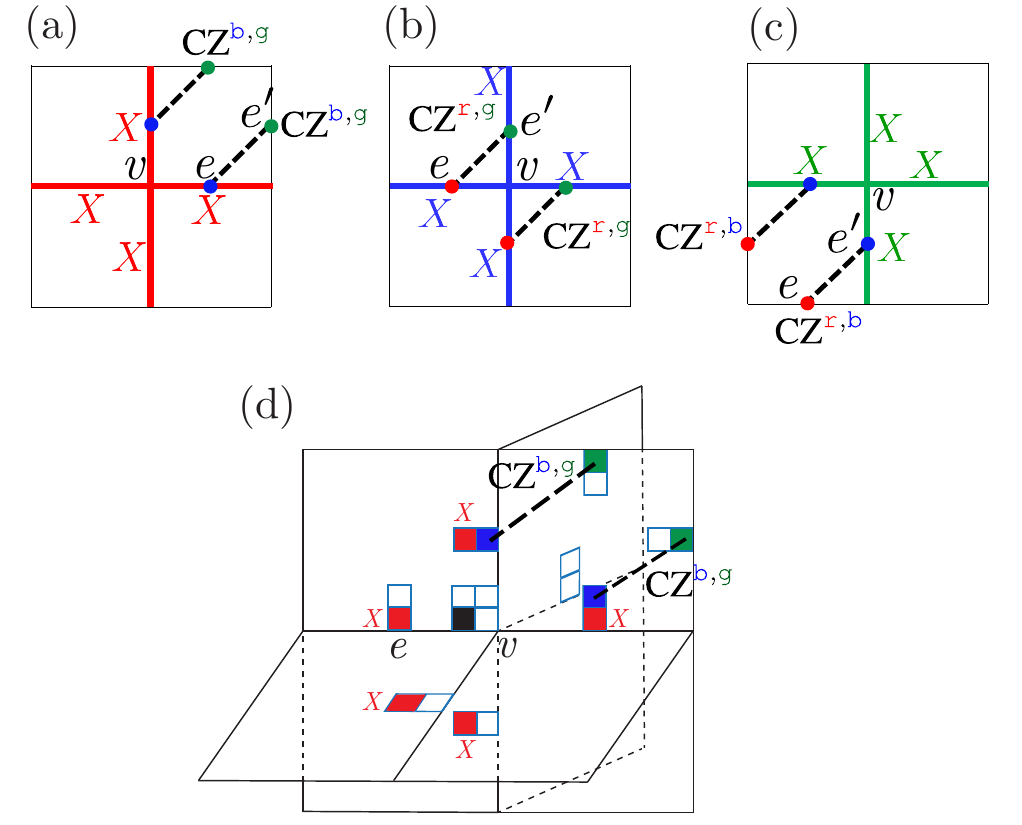}
    \caption{The twisted ($X$-type) stabilizer generators of the
Clifford stabilizer code $\tilde{\C}$, Eq.~\eqref{eq:stabilizer_summary}. The
$Z$-type generators $B_f$ are unchanged by the twist. (a-c) is an illustration with a simplifed case with constant sheaf (trivial local code). (a) The \rd\ generator: the
Pauli part applies $X$ to the \rd\ qubits selected at the vertex $v$ exactly as in
Fig.~\ref{fig:square_complex}(a), and the dressing applies CZ gates between the
\bl\ and \gr\ qubits on the pairs of edges  $e$, $e'$
picked out by the triple cup product on $\eta_2$. (b), (c) The \bl\ and \gr\
generators, obtained by cyclically permuting the colours, so that each generator is
dressed by CZ gates acting on the two \emph{other} copies. (d) The same
generators drawn on the $2$D branched square complex. The dressing is supported on
bounded-size clusters of cells, so the code remains qLDPC; because the dressing is
non-Pauli, the generators commute only on the zero-flux subspace
(Lemma~\ref{lemma:stabilizer_commutation}).}
    \label{fig:dressed_stabilizer}
\end{figure}

Note that the untwisted code $\C$ can be
obtained by directly gauging the trivial paramagnetic Hamiltonian $H^0$ of
Eq.~\eqref{eq:trivial_H} instead of the SPT Hamiltonian $H_\text{SPT}$, with
completely the same gauging procedure discussed above. The CZ terms in the
Gauss's law term $A$ go away because they were absent in $H^0$. The resulting
code is precisely the quantum sheaf code of Definition~\ref{def:sheaf-code} at
degree $1$,
\be
C^{0}(\L, \Fc^{i}) \xrightarrow{\ d^{0}\ } C^{1}(\L, \Fc^{i})
\xrightarrow{\ d^{1}\ } C^{2}(\L, \Fc^{i}),
\ee
with $H_X^T = d^0$ and $H_Z = d^1$, i.e.\ three decoupled copies of the
hypergraph-product sheaf code of Sec.~\ref{sec:classical}. Equivalently, the
twisted code $\tilde{\C}$ agrees with $\C$ outside $\L_{\eta_2}$ and differs from
it only by the CZ dressings inside. The twist by the triple cup product couples
the three copies and, as we show in the next section, promotes the Abelian
$\ZZ_2^3$ topological order to a non-Abelian one while preserving the qLDPC
property, since the cup product and $\eta_2$ are supported on bounded-size
clusters of cells.

%======================================================================
\subsection{Clifford stabilizer codes on sheaf complexes}
\label{sec:Clifford_stabilizer_codes}
%======================================================================

Having derived the twisted stabilizer generators by gauging, we now study the
resulting family of codes in its own right. This subsection is the sheaf
translation of the corresponding analysis for higher-form twisted gauge theories
on Poincar\'e complexes \cite{Hsin2024_non-Abelian, ZhuKobayashiHsin2026},
specialized throughout to the case in which all three form degrees equal one,
i.e.\ to the $0$-form subcomplex
$\red{\Z_2^{(0)}} \times \blue{\Z_2^{(0)}} \times \green{\Z_2^{(0)}}$ symmetry of
Sec.~\ref{sec:subcomplex_symmetry}. Two features distinguish the discussion from
the manifold case. First, all cells carry local coefficients, so every operator
label is a pair (cell, vector index) and every incidence is weighted by a
co-restriction map. Second, the dual complex and Poincar\'e duality are not
needed: on a sheaf complex the $Z$-type logical operators are naturally labelled
by sheaf $1$-\emph{chains} and the $X$-type ones by sheaf $1$-\emph{cochains},
which are paired non-degenerately by Proposition~\ref{prop:nondegen-sheaf}.

\subsubsection{Construction of the codes}\label{sec:code_construction}

The twisted code $\tilde{\C}$ is a \emph{Clifford} stabilizer code: its stabilizer
group $\tilde{\SS}$ is a subgroup of the $n$-qubit Clifford group $Cl_n$ rather
than of the Pauli group, and is generated by the operators of
Eq.~\eqref{eq:stabilizer_summary},
\be\label{eq:Clifford_group_sheaf}
\tilde{\SS} = \big\langle\, \{\,
\tilde{A}^{\rd}_{v,s},\ \tilde{A}^{\bl}_{v,r},\ \tilde{A}^{\gr}_{v,t},\
B^{\rd}_{f},\ B^{\bl}_{f},\ B^{\gr}_{f} \,\}\, \big\rangle ,
\ee
where $v \in \L(0)$ and $f \in \L(2)$, the indices $s$, $r$, $t$ run over the
vertex stalks $\Fc^{\rd}(v)$, $\Fc^{\bl}(v)$, $\Fc^{\gr}(v)$, and the $B$'s carry
no vector index because the stalk on a top cell is a scalar
(Sec.~\ref{sec:Z_stabilizer}). The code space is defined as in the Pauli case,
\be\label{eq:code_space_Clifford}
\tilde{\C}(\tilde{\SS}) = \big\{ \ket{\psi} \in (\CC^2)^{\otimes n} :\;
g \ket{\psi} = \ket{\psi} \ \ \forall g \in \tilde{\SS} \big\} .
\ee
Setting all CZ dressings to the identity returns the untwisted Pauli stabilizer
group
\be\label{eq:stabilizer_untwisted_sheaf}
\SS = \big\langle\, \{\,
A^{\rd}_{v,s},\ A^{\bl}_{v,r},\ A^{\gr}_{v,t},\
B^{\rd}_{f},\ B^{\bl}_{f},\ B^{\gr}_{f} \,\}\, \big\rangle < \mathcal{P}_n ,
\ee
which is three decoupled copies of the untwisted sheaf code of
Sec.~\ref{sec:untwisted_code}, with $A^{i}_{v,s}$ the rows of $H_X=(d^0)^T$ and
$B^{i}_{f}$ the rows of $H_Z=d^1$. The $Z$-stabilizers are identical in
$\tilde{\SS}$ and $\SS$; only the $X$-stabilizers are dressed.

\subsubsection{Constraints on the code space}\label{sec:code_space_constraint}

On a manifold the analogous constraints follow from the equations of motion of the
twisted action. On a sheaf complex they follow directly from the stabilizer group,
which is both simpler and more general.

Let $\mu^0 \in Z^0(\L,\Fc^{\rd})$ be a sheaf $0$-\emph{cocycle}, i.e.\
$d^0\mu^0=0$, and expand it in elementary cochains as
$\mu^0=\sum_{v,s}\mu_{v,s}\,\bar{v}_s$. Consider the corresponding product of red
twisted stabilizers. Because $A^{\rd}_{v,s}$ acts only on the \rd\ copy while its
dressing $\text{CZ}^{\bl,\gr}$ acts only on the \bl\ and \gr\ copies, the two
factors commute and the product splits exactly,
\begin{align}\label{eq:constraint_derivation}
\prod_{v,s} \big(\tilde{A}^{\rd}_{v,s}\big)^{\mu_{v,s}}
=& \prod_{v,s} \big(A^{\rd}_{v,s}\big)^{\mu_{v,s}} \cdot
(-1)^{\int_{\eta_2}\mu^0 \cup \blue{\hat{b}^1} \cup \green{\hat{c}^1}} \cr
=& \; (-1)^{\int_{\eta_2}\mu^0 \cup \blue{\hat{b}^1} \cup \green{\hat{c}^1}} ,
\end{align}
where the $X$-part collapses to the identity because, by
Eq.~\eqref{eq:A_conjugation} applied to each factor, it implements the shift
$\red{\hat{a}^1} \rightarrow \red{\hat{a}^1} + d^0\mu^0 = \red{\hat{a}^1}$. Since
every generator equals $+1$ on $\tilde{\C}$, we obtain the operator constraint
\be\label{eq:constraint_operator_sheaf}
\int_{\eta_2} \mu^0 \cup \blue{\hat{b}^1} \cup \green{\hat{c}^1} = 0
\qquad \text{on } \tilde{\C}, \quad \forall\, \mu^0 \in Z^0(\L,\Fc^{\rd}),
\ee
and, permuting the colours,
\be\label{eq:constraint_operator_sheaf2}
\int_{\eta_2} \red{\hat{a}^1} \cup \nu^0 \cup \green{\hat{c}^1} = 0,
\qquad
\int_{\eta_2} \red{\hat{a}^1} \cup \blue{\hat{b}^1} \cup \rho^0 = 0,
\ee
for all $\nu^0 \in Z^0(\L,\Fc^{\bl})$ and $\rho^0 \in Z^0(\L,\Fc^{\gr})$. Note
that the cocycles $\mu^0,\nu^0,\rho^0$ are precisely the labels of the $0$-form
subcomplex symmetry generators of Eq.~\eqref{eq:globalsym}, so
Eqs.~\eqref{eq:constraint_operator_sheaf}--\eqref{eq:constraint_operator_sheaf2}
are one constraint per symmetry generator.

Expanding the gauge fields in elementary cochains as in
Sec.~\ref{sec:twisted_code} and using
$(-1)^{\hat{N}^{i}_{e,s}\hat{N}^{i'}_{e',s'}} \equiv
\text{CZ}^{i,i'}_{(e,s),(e',s')}$, the same statement takes the form of a
transversal CZ operator being trivial on the code space,
\be\label{eq:CZ_identity_sheaf}
\widetilde{\text{CZ}}^{\bl,\gr}(\mu^0) \equiv
\prod_{\substack{(e,r),(e',t):\\ \int_{\eta_2}\mu^0\cup\bar{e}_r\cup\bar{e}'_s\neq 0}}
\text{CZ}^{\bl,\gr}_{(e,r),(e',t)} = \I \quad \text{on } \tilde{\C},
\ee
together with $\widetilde{\text{CZ}}^{\rd,\gr}(\nu^0)=\I$ and
$\widetilde{\text{CZ}}^{\rd,\bl}(\rho^0)=\I$.

\subsubsection{Stabilizer commutation relations}
\label{sec:stabilizer_commutation_sheaf}

The parent Hamiltonian of the twisted code,
\begin{align}\label{eq:H_twisted_sheaf}
\tilde{H} = &-\sum_{v,s}\tilde{A}^{\rd}_{v,s}
-\sum_{v,r}\tilde{A}^{\bl}_{v,r}
-\sum_{v,t}\tilde{A}^{\gr}_{v,t} \cr
&-\sum_{f\in\L(2)}\Big(B^{\rd}_{f}+B^{\bl}_{f}+B^{\gr}_{f}\Big),
\end{align}
is frustration free, but $\tilde{\SS}$ is a \emph{non-commuting} stabilizer group:
its generators fail to commute on the full Hilbert space. They do commute on the
zero-flux subspace,
\be\label{eq:commute_on_PB}
P_B\,[g,g']\,P_B = \I , \qquad \forall\, g,g' \in \tilde{\SS},
\ee
where $[U,V]:=UVU^{-1}V^{-1}$ is the group commutator and $P_B$ projects onto the
$+1$ eigenspace of all $Z$-stabilizers $\{B^{\rd}_f,B^{\bl}_f,B^{\gr}_f\}$.

\begin{lemma}\label{lemma:stabilizer_commutation}
The commutator of two twisted $X$-stabilizers of different colours is a product of
$Z$-stabilizers of the third colour. All other pairs of generators commute
exactly. Hence $\tilde{\SS}$ is commuting on the zero-flux subspace $B=1$.
\end{lemma}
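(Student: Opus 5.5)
Two cases follow directly from the gauge-field language of Eq.~\eqref{eq:Gauss_law_term}: write $\tilde A^{\rd}_{v,s}=A^{\rd}_{v,s}\,(-1)^{\int_{\eta_2}\bar v_s\cup\blue{\hat b^1}\cup\green{\hat c^1}}$, and similarly for the other colours.

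\emph{Pairs that commute exactly.} All $B$'s are diagonal in the $Z$ basis, so they commute among themselves. Each $B^{i}_f$ commutes with $A^{i}_{v,s}$ because $d^1d^0=0$, as in Eq.~\eqref{eq:commutation_d2}. It also commutes with the CZ dressings, which are diagonal. Finally, $B^{i}_f$ acts on colour $i$ only, while the Pauli $X$ part of $\tilde A^{j}$ for $j\neq i$ acts on colour $j$, so these commute too. For two twisted $X$-stabilizers of the \emph{same} colour, say \rd, the dressings $(-1)^{\int\bar v_s\cup\hat b\cup\hat c}$ act on \bl, \gr\ and commute with each other. The Pauli parts $A^{\rd}$ commute with each other and with the \bl--\gr\ dressing, so the whole pair commutes.

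\emph{Different colours.} Take $\tilde A^{\rd}_{v,s}$ and $\tilde A^{\bl}_{w,r}$.

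The Pauli part $A^{\bl}_{w,r}$ shifts $\blue{\hat b^1}\to\blue{\hat b^1}+d\bar w_r$ by Eq.~\eqref{eq:A_conjugation}. Conjugating the \rd\ dressing by it therefore produces the extra phase $(-1)^{\int_{\eta_2}\bar v_s\cup d\bar w_r\cup\green{\hat c^1}}$.

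Symmetrically, $A^{\rd}_{v,s}$ shifts $\red{\hat a^1}\to\red{\hat a^1}+d\bar v_s$. Conjugating the \bl\ dressing by it produces $(-1)^{\int_{\eta_2}d\bar v_s\cup\bar w_r\cup\green{\hat c^1}}$.

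The two dressings commute with each other, since both are diagonal. So the group commutator equals the product of these two phases,
\[
[\tilde A^{\rd}_{v,s},\tilde A^{\bl}_{w,r}]=(-1)^{\int_{\eta_2}\left(d\bar v_s\cup\bar w_r+\bar v_s\cup d\bar w_r\right)\cup\green{\hat c^1}} .
\]
I would verify this phase bookkeeping carefully, including that the \gr\ field is untouched by both Pauli parts.

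Next apply the Leibniz rule, Proposition~\ref{prop:cup-properties}(a), in characteristic $2$:
\[
d\bar v_s\cup\bar w_r+\bar v_s\cup d\bar w_r=d(\bar v_s\cup\bar w_r).
\]
Then use the Leibniz rule once more together with the adjunction of Definition~\ref{def:pairing-sheaf} and $\partial\eta_2=0$:
\[
\int_{\eta_2}d(\bar v_s\cup\bar w_r)\cup\green{\hat c^1}=\int_{\eta_2}d\big(\bar v_s\cup\bar w_r\cup\green{\hat c^1}\big)+\int_{\eta_2}(\bar v_s\cup\bar w_r)\cup d\green{\hat c^1}.
\]
The first term vanishes because $\eta_2$ is a cycle, exactly as in Lemma~\ref{lemma:invariant}. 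The second term is linear in the components of $d\green{\hat c^1}$ on faces. Exponentiating it gives $\prod_f (B^{\gr}_f)^{\kappa_f}$ with $\kappa_f\in\FF_2$, read off from $\bar v_s\cup\bar w_r$ capped against $\eta_2$ via Eq.~\eqref{eq:B_is_flux}. This is a product of $Z$-stabilizers of the third colour, as claimed. The other colour pairs follow by cyclic permutation, using the $(1,0,1)$ and $(1,1,0)$ splittings. Since every $B=1$ on the image of $P_B$ and $P_B$ commutes with all generators, Eq.~\eqref{eq:commute_on_PB} follows.

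\emph{Main obstacle.} The delicate step is justifying the Leibniz rule and the Stokes step for the sheaf-valued cup product with mixed coefficient sheaves on the cubical product complex. There, $\bar v_s\cup\bar w_r$ lands in $\Fc^{\rd}\otimes\Fc^{\bl}$, and the pairing with $\eta_2$ uses the tensor sheaf. I would rely on Proposition~\ref{prop:cup-properties}, noting that the chosen cubical cup product is assumed to satisfy it. If the cubical convention only satisfies Leibniz up to homotopy, I would instead verify the identity face by face on each square $e\otimes e'$, using functoriality of the co-restriction maps.
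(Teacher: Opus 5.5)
Your proposal is correct and follows the paper's proof essentially step for step. It uses the same case split for the exactly commuting pairs and the same two conjugation phases from the coboundary shifts $\hat{b}^1\to\hat{b}^1+d\bar w_r$ and $\hat{a}^1\to\hat{a}^1+d\bar v_s$; you derive these by explicit bookkeeping, whereas the paper quotes a general commutator identity. It then applies the characteristic-$2$ Leibniz rule, the Stokes step via $\partial\eta_2=0$, and identifies $\int_{\eta_2}(\bar v_s\cup\bar w_r)\cup d\hat{c}^1$ with a product of $B^{\gr}_f$. Your caveat about the cubical cup product is the same reliance on Proposition~\ref{prop:cup-properties} that the paper makes implicitly.
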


\begin{proof}
The $B$'s commute among themselves, being $Z$-type. Each $B^{i}_{f}$ commutes
exactly with every $\tilde{A}^{i'}_{v,s}$: for $i=i'$ this is the untwisted
statement $d^1 d^0=0$ of Eq.~\eqref{eq:commutation_d2}, for $i\neq i'$ the two act
on different copies, and the CZ dressing is diagonal and therefore commutes with
any product of $Z$'s. Two twisted $X$-stabilizers of the \emph{same} colour also
commute exactly, since $A^{\rd}_{v,s}$ and $A^{\rd}_{v',s'}$ are both $X$-type
while the dressings are diagonal and act on the other two copies.

It remains to compute the commutator of two twisted $X$-stabilizers of different
colours. We use the identity, valid for a gauge field $a$ diagonalized by
$Z=(-1)^a$ and for arbitrary $\FF_2$-valued functions $f,g$,
\begin{align}\label{eq:commutator_identity_sheaf}
&\big[\, X_{(\sigma,s)}(-1)^{f(a)},\; X_{(\sigma',p')}(-1)^{g(a)} \,\big] \cr
&\qquad = (-1)^{f(a+\bar{\sigma}'_{p'})-f(a)}\,
(-1)^{g(a+\bar{\sigma}_{s})-g(a)} ,
\end{align}
which follows from $[X_{(\sigma,s)},(-1)^{f(a)}]
=(-1)^{f(a+\bar{\sigma}_s)-f(a)}$ \cite{Hsin2024_non-Abelian}, now with
$\bar{\sigma}_s$ the elementary \emph{sheaf} cochain. Since $A^{\bl}_{v,r}$
implements $\blue{\hat{b}^1}\to\blue{\hat{b}^1}+d^0\bar{v}_r$ and $A^{\rd}_{u,s}$
implements $\red{\hat{a}^1}\to\red{\hat{a}^1}+d^0\bar{u}_s$
[Eq.~\eqref{eq:A_conjugation}], Eq.~\eqref{eq:commutator_identity_sheaf} gives
\begin{align}\label{eq:AA_commutator}
\big[\tilde{A}^{\rd}_{u,s},\, \tilde{A}^{\bl}_{v,r}\big]
&= (-1)^{\int_{\eta_2}\left[\,\bar{u}_s \cup d\bar{v}_r
+ d\bar{u}_s \cup \bar{v}_r\,\right]\cup \green{\hat{c}^1}} \cr
&= (-1)^{\int_{\eta_2} d\left(\bar{u}_s \cup \bar{v}_r\right)
\cup \green{\hat{c}^1}} \cr
&= (-1)^{\int_{\eta_2} \left(\bar{u}_s \cup \bar{v}_r\right)
\cup \green{d\hat{c}^1}} ,
\end{align}
where the second line is the Leibniz rule of Proposition~\ref{prop:cup-properties}
in characteristic $2$, and the third line uses the Leibniz rule once more together
with $\langle d\omega, \eta_2\rangle = \langle \omega, \partial \eta_2\rangle = 0$,
which holds because $\eta_2$ is a \emph{cycle}. The cycle condition
$\partial\eta_2=0$ is therefore not only what makes the response action gauge
invariant, but also what makes the twisted stabilizer group consistent.

Finally we convert the last line of Eq.~\eqref{eq:AA_commutator} into $Z$-type
stabilizers. The degrees are $0+0+2=2$, so the cup product is evaluated on
$2$-cells; since $\green{(d\hat{c}^1)}(f)\in\Fc^{\gr}(f)\cong\FF_2$ is a scalar,
the pairing with $\eta_2(f)$ factorizes and
\be\label{eq:Theta_def}
\int_{\eta_2} \left(\bar{u}_s \cup \bar{v}_r\right) \cup \green{d\hat{c}^1}
= \sum_{f \in \L(2)} \Theta^{(u,s),(v,r)}(f)\ \green{(d\hat{c}^1)}(f) ,
\ee
where $\Theta^{(u,s),(v,r)} \in \FF_2^{|\L(2)|}$ collects the co-restriction
weights and the local components of $\eta_2$. Using
$B^{\gr}_{f}=(-1)^{(\green{d\hat{c}^1})(f)}$ from
Eq.~\eqref{eq:B_stabilizer_untwisted}, we obtain
\be\label{eq:AA_commutator_B}
\big[\tilde{A}^{\rd}_{u,s},\, \tilde{A}^{\bl}_{v,r}\big]
= \prod_{f:\ \Theta^{(u,s),(v,r)}(f)\,=\,1} B^{\gr}_{f} ,
\ee
and similarly
$[\tilde{A}^{\rd}_{u,s},\tilde{A}^{\gr}_{w,t}]$ is a product of $B^{\bl}_f$ and
$[\tilde{A}^{\bl}_{v,r},\tilde{A}^{\gr}_{w,t}]$ a product of $B^{\rd}_f$. All
these equal $\I$ on the subspace $B^{\rd}=B^{\bl}=B^{\gr}=1$.
\end{proof}

\subsubsection{Logical operators}\label{sec:logical_operators_sheaf}

\paragraph*{Electric (logical-$Z$) operators.} These are unchanged by the twist
and are the sheaf Wilson operators. For a sheaf $1$-chain
$\eta_1 \in C_1(\L,\Fc^{\rd})$ define
\be\label{eq:logical_Z_sheaf}
\widetilde{Z}^{\rd}_{\eta_1} \equiv W^{\rd}(\eta_1)
= (-1)^{\int_{\eta_1} \red{\hat{a}^1}}
= \prod_{(e,s)} \big(Z^{\rd}_{e,s}\big)^{[\eta_1(e)]_s} ,
\ee
and likewise $\widetilde{Z}^{\bl}_{\eta_1}$, $\widetilde{Z}^{\gr}_{\eta_1}$ with
$\eta_1 \in C_1(\L,\Fc^{\bl})$, $C_1(\L,\Fc^{\gr})$. Here
$\int_{\eta_1}(\cdot)\equiv\langle\cdot,\eta_1\rangle$ is the canonical pairing of
Definition~\ref{def:pairing-sheaf}. Such an operator commutes with all $B$'s
trivially, and by Eq.~\eqref{eq:A_conjugation} together with the adjunction
$\langle d^0 x, \eta_1\rangle = \langle x, \partial_1 \eta_1\rangle$ its
commutator with $A^{\rd}_{v,s}$ is $(-1)^{\langle \bar{v}_s, \partial_1\eta_1
\rangle}$. Hence
\be\label{eq:Z_logical_condition}
\big[\widetilde{Z}^{\rd}_{\eta_1},\, \tilde{A}^{\rd}_{v,s}\big] = \I
\quad \Longleftrightarrow \quad \partial_1 \eta_1 = 0 ,
\ee
i.e.\ the Wilson operator is a logical operator precisely when $\eta_1$ is a sheaf
$1$-cycle, and it acts trivially when $\eta_1$ is a boundary. The logical-$Z$
operators are therefore labelled by the sheaf homology classes
$[\eta_1] \in H_1(\L,\Fc^{i})$, and by Proposition~\ref{prop:nondegen-sheaf}
their number matches $\dim H^1(\L,\Fc^{i})=k$. Choosing a basis
$\{\red{\alpha_1}\}$, $\{\blue{\beta_1}\}$, $\{\green{\gamma_1}\}$ of cycle
representatives, we write the logical operators of individual logical qubits as
\be\label{eq:logical_Z_basis_sheaf}
\lo{Z}^{\rd}(\red{\alpha}) \equiv \widetilde{Z}^{\rd}_{\red{\alpha_1}} ,
\quad
\lo{Z}^{\bl}(\blue{\beta}) \equiv \widetilde{Z}^{\bl}_{\blue{\beta_1}} ,
\quad
\lo{Z}^{\gr}(\green{\gamma}) \equiv \widetilde{Z}^{\gr}_{\green{\gamma_1}} .
\ee
Note that no dual complex or Poincar\'e duality is invoked: the electric operators
live on sheaf $1$-chains and the magnetic ones, below, on sheaf $1$-cochains.

\paragraph*{Magnetic (dressed-$X$) operators.} An $X$-type operator on the \rd\
copy is labelled by a sheaf $1$-cochain $\zeta^1 \in C^1(\L,\Fc^{\rd})$ through
$\prod_{(e,s)}(X^{\rd}_{e,s})^{[\zeta^1(e)]_s}$. Its commutator with $B^{\rd}_f$
is $(-1)^{(d^1\zeta^1)(f)}$, so it commutes with all $Z$-stabilizers precisely
when $\zeta^1$ is a $1$-\emph{cocycle}. In the twisted code such an operator must
additionally be dressed. When $\zeta^1=d^0\nu^0$ is \emph{exact}, the dressed
operator is
\begin{align}\label{eq:magnetic_contractible_sheaf}
\widetilde{\cX}^{\rd}(\nu^0)
=& \prod_{(e,s)}\big(X^{\rd}_{e,s}\big)^{[(d^0\nu^0)(e)]_s} \cdot
(-1)^{\int_{\eta_2}\nu^0\cup\blue{\hat{b}^1}\cup\green{\hat{c}^1}} \cr
=& \prod_{v,s} \big(\tilde{A}^{\rd}_{v,s}\big)^{\nu_{v,s}} ,
\end{align}
i.e.\ it factorizes into a product of twisted stabilizers and therefore acts
trivially on $\tilde{\C}$. This is the sheaf statement that contractible magnetic
operators carry no logical action. It also shows that the dressing is
well defined: if $\nu^0$ and $\nu'^0$ satisfy $d^0\nu^0=d^0\nu'^0=\zeta^1$, then
$\nu^0-\nu'^0$ is a $0$-cocycle and the two dressings differ by
$(-1)^{\int_{\eta_2}(\nu^0-\nu'^0)\cup\blue{\hat{b}^1}\cup\green{\hat{c}^1}}=\I$
on $\tilde{\C}$ by Eq.~\eqref{eq:constraint_operator_sheaf}.

Genuine logical operators correspond to cocycles with $[\zeta^1]\neq 0$ in
$H^1(\L,\Fc^{\rd})$, for which no global $\nu^0$ with $d^0\nu^0=\zeta^1$ exists
and the dressing of Eq.~\eqref{eq:magnetic_contractible_sheaf} is unavailable.
Following Ref.~\cite{Hsin2024_non-Abelian}, one instead decorates the bare $X$
string with projectors enforcing the constraint
\eqref{eq:constraint_operator_sheaf}:
\begin{align}\label{eq:magnetic_projected_sheaf}
\cX^{\rd}(\zeta^1) &= P^{\rd}\bigg[ 
\prod_{(e,s)}\big(X^{\rd}_{e,s}\big)^{[\zeta^1(e)]_s} \cr
& \cdot \prod_{\substack{(e,r),\,(e',t):\\ e,e' \in
\mathrm{supp}\,\zeta^1}}
\big(\I+Z^{\bl}_{e,r}\big)\big(\I+Z^{\gr}_{e',t}\big) \bigg] P^{\rd}, \cr
\end{align}
and similarly for $\cX^{\bl}$ and $\cX^{\gr}$ by permuting colours. Here
$P^{\rd}$ projects onto the $+1$ eigenspace of the untwisted $X$-stabilizers
$A^{\bl}_{v,r}$ and $A^{\gr}_{v,t}$ of the two other copies, and analogously for
$P^{\bl}$, $P^{\gr}$. These operators commute with $\tilde{\SS}$ on the code
space: the $Z$-type generators are handled as in the untwisted code, $\cX^{\rd}$
commutes with $\tilde{A}^{\rd}$ by construction, and for the other two colours
\begin{align}\label{eq:projector_absorption_sheaf}
&P^{\rd}A^{\bl}_{v,r}=P^{\rd}A^{\gr}_{v,t}=P^{\rd}
=A^{\bl}_{v,r}P^{\rd}=A^{\gr}_{v,t}P^{\rd} \cr
&\qquad \Longrightarrow \quad
\cX^{\rd}A^{\bl}_{v,r}=\cX^{\rd}=A^{\bl}_{v,r}\cX^{\rd} ,
\end{align}
and likewise for $A^{\gr}_{v,t}$. It is Eq.~\eqref{eq:magnetic_projected_sheaf}
that underlies the non-Abelian fusion rules of the twisted sheaf code, in direct
analogy with the manifold case \cite{Hsin2024_non-Abelian}.

%======================================================================
\section{$0$-form subcomplex symmetry in untwisted and twisted qLDPC codes}
\label{sec:0form_subcomplex}
%======================================================================

We now turn to the generalized symmetries of the sheaf codes constructed above.
The central object is a family of transversal CZ operators, one for each sheaf
$0$-cocycle. We first show that in the \emph{untwisted} code $\C$ such an operator
is an addressable transversal \emph{logical CZ gate}, and then show that in the
\emph{twisted} code $\tilde{\C}$ the very same operator is a charge parity
operator, i.e.\ a product of twisted stabilizer generators. Twisting therefore
absorbs the logical gate into the stabilizer group, which is the operator-level
counterpart of gauging the symmetry.

Throughout this section we illustrate everything with the transversal CZ gate
between the \red{red} and \blue{blue} copies, $\widetilde{\text{CZ}}^{\rd,\bl}$,
which by Eq.~\eqref{eq:stabilizer_summary} is the dressing carried by the
\green{green} $X$-stabilizers and therefore measures the \gr-type charge. The
operators of the other two colour pairings are obtained by permuting colours
throughout. This is the case we will use in the explicit construction of the
addressable gauging measurement of Sec.~\ref{sec:logical_operation}.

On a manifold the analogous operators are higher-form ($k$-form) symmetries
supported on submanifolds of codimension $k$; here, we consider instead the
$0$-form symmetries, i.e.\ of codimension $0$, but supported on a \emph{proper}
subcomplex rather than on all of $\L$. This is the same phenomenon already met for
the SPT in Sec.~\ref{sec:subcomplex_symmetry}, now realized at the level of the
codes. Following Refs.~\cite{ZhuKobayashiHsin2026, zhu2025topological} we
call such operators $0$-form \emph{subcomplex} symmetries; they have no
counterpart in conventional gauge theory on a manifold, where the top-dimensional
cycle is unique.

\subsection{Transversal CZ operators and their support}
\label{sec:transversal_CZ}

\begin{figure}[t]
    \centering   \includegraphics[width=1\linewidth]{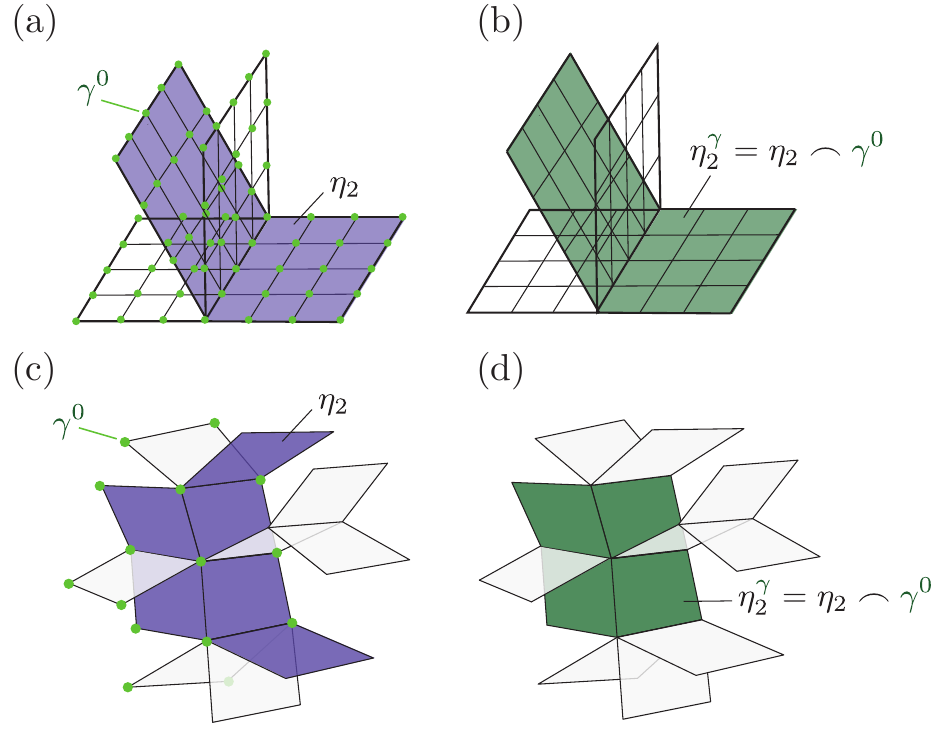}
    \caption{Capping a $2$-cycle with a
$0$-cocycle, which is the sheaf replacement for ``supported on the Poincar\'e dual
cycle''. (a) The $2$-cycle $\eta_2$ (purple) carrying the twist, together with a
sheaf $0$-cocycle $\gamma^0 \in Z^0(\L,\Fc^{\gr})$ whose values are marked at the
vertices (green). (b) The capped $2$-chain $\eta_2^{\gamma}=\eta_2\frown\gamma^0$
(green), a codimension-$0$ subcomplex of $\L$ on which the corresponding
transversal CZ operator $\widetilde{\text{CZ}}^{\rd,\bl}(\gamma^0)$ is supported.
(c), (d) The same construction on a larger and more irregular complex. Since
$\partial(\eta_2\frown\gamma^0)=(\partial\eta_2)\frown\gamma^0+
\eta_2\frown(d\gamma^0)=0$, the capped chain is again a cycle, and distinct
classes $[\gamma^0]\in H^0(\L,\Fc^{\gr})$ give distinct subcomplexes --- which is
the source of addressability.}
\label{fig:subcomplex_illustration_nested}
\end{figure}

Let $\rho^0 \in Z^0(\L,\Fc^{\gr}) = H^0(\L,\Fc^{\gr})$ be a sheaf $0$-cocycle of
the \gr\ copy, expanded in elementary cochains as
$\rho^0=\sum_{v,t}\rho_{v,t}\,\bar{v}_t$. Define the associated \emph{transversal
CZ operator} acting on the \rd\ and \bl\ copies,
\begin{align}\label{eq:transversal_CZ_def}
\widetilde{\text{CZ}}^{\rd,\bl}(\rho^0)
\equiv& \;(-1)^{\int_{\eta_2}\red{\hat{a}^1} \cup \blue{\hat{b}^1} \cup \rho^0}
\cr
=& \prod_{\substack{(e,s),(e',r):\\
\int_{\eta_2}\bar{e}_s\cup\bar{e}'_q\cup\rho^0\neq 0}}
\text{CZ}^{\rd,\bl}_{(e,s),(e',r)} ,
\end{align}
where the second line follows by expanding the gauge fields in elementary
cochains, $\red{\hat{a}^1}=\sum_{e,s}\hat{N}^{\rd}_{e,s}\bar{e}_s$ and
$\blue{\hat{b}^1}=\sum_{e,r}\hat{N}^{\bl}_{e,r}\bar{e}_r$, and using
$(-1)^{\hat{N}^{\rd}_{e,s}\hat{N}^{\bl}_{e',r}} \equiv
\text{CZ}^{\rd,\bl}_{(e,s),(e',r)}$. Note that the label $\rho^0$ occupies the
last cup slot, matching the position of the elementary cochain $\bar{v}_t$ in the
\gr-type stabilizer $\tilde{A}^{\gr}_{v,t}$ of Eq.~\eqref{eq:stabilizer_summary},
and that it is exactly a label of the SPT symmetry generators of
Eq.~\eqref{eq:globalsym}. The operator is manifestly \emph{transversal}: it is a
depth-one circuit of two-qubit CZ gates between two distinct code blocks.

\medskip
\noindent\textit{Operator support.}
On a manifold one identifies the support of such an operator with the Poincar\'e
dual of the label cocycle. On a sheaf complex Poincar\'e duality is not available
in general, but the \emph{cap product} of Definition~\ref{def:cap-simplicial} does
exactly the required job and is intrinsic. Since $\rho^0$ sits in the last cup
slot, we use the mirror version of Definition~\ref{def:cap-simplicial} in which
the cochain caps the \emph{back} face and the surviving chain sits on the front
face; writing $\eta_2(f)=\sum_\ell a_\ell\otimes b_\ell\otimes c_\ell$ on a
$2$-cell $f=[v_0v_1v_2]$, it reads
\be\label{eq:back_cap}
\big(\eta_2 \frown \rho^0\big)(f) = \sum_{\ell}
\Big\langle \Fc^{\gr}_{[v_2]\shortrightarrow f}\big(\rho^0([v_2])\big),\,
c_\ell \Big\rangle\; a_\ell \otimes b_\ell .
\ee
By construction this satisfies
\be
\big\langle x^1 \cup y^1,\, \eta_2 \frown \rho^0 \big\rangle = \big\langle x^1 \cup y^1 \cup \rho^0,\, \eta_2  \big\rangle
\equiv \int_{\eta_2} x^1 \cup y^1 \cup \rho^0
\ee
for all $1$-cochains $x^1$, $y^1$, so
\begin{align}\label{eq:cap_support}
\widetilde{\text{CZ}}^{\rd,\bl}(\rho^0)
&= (-1)^{\int_{\eta_2^{\rho}} \red{\hat{a}^1} \cup \blue{\hat{b}^1}} , \cr
\eta_2^{\rho} &\equiv \eta_2 \frown \rho^0
\;\in\; C_2\big(\L, \Fc^{\rd}\!\otimes\!\Fc^{\bl}\big) .
\end{align}
The operator is therefore a pairing between the operator-valued $2$-cochain
$\red{\hat{a}^1}\cup\blue{\hat{b}^1}$ and the $2$-chain $\eta_2^{\rho}$ obtained
by capping the twist cycle $\eta_2$ with the symmetry label $\rho^0$. This is the
sheaf replacement for ``supported on the Poincar\'e dual cycle''.

Two properties follow immediately. First, $\eta_2^\rho$ is a $2$-\emph{cycle}: by
the Leibniz rule for the cap product,
Proposition~\ref{prop:cap-properties}(a),
\be\label{eq:capped_cycle_closed}
\partial\big(\eta_2 \frown \rho^0\big)
= \big(\partial \eta_2\big) \frown \rho^0 + \eta_2 \frown \big(d\rho^0\big) = 0 ,
\ee
using $\partial\eta_2=0$ and $d\rho^0=0$. The cycle condition on the twist and the
cocycle condition on the symmetry label are thus exactly the two inputs needed for
the support of the operator to be closed. Second, since
$\dim\eta_2^\rho = 2 = \mathsf{d}$, the closure of its support
\be\label{eq:subcomplex_support}
\L_{\eta_2^{\rho}} \equiv \overline{\mathrm{supp}\,\eta_2^{\rho}} \subseteq \L
\ee
is a codimension-$0$ subcomplex of $\L$, and in general a \emph{proper} one:
$\eta_2^\rho$ vanishes on every $2$-cell where either $\eta_2$ vanishes or
$\rho^0$ does. 

Therefore we have now justified  the name of the \textit{$0$-form subcomplex symmetry}: the operator is $0$-form because its support has
codimension $0$, and subcomplex because that support is not all of $\L$.

\subsection{Transversal logical CZ in the untwisted code}
\label{sec:logical_CZ_untwisted}

We first work in the untwisted code $\C$ of
Eq.~\eqref{eq:stabilizer_untwisted_sheaf}, i.e.\ three decoupled copies of the
sheaf code of Sec.~\ref{sec:untwisted_code}.

\medskip
\noindent\textit{The operator preserves the code space.}
Being diagonal, $\widetilde{\text{CZ}}^{\rd,\bl}(\rho^0)$ commutes exactly with
every $Z$-stabilizer $B^{i}_{f}$ and with the $X$-stabilizers of the \gr\ copy,
which act on different qubits. For the remaining case, $A^{\rd}_{v,s}$ implements
$\red{\hat{a}^1}\to\red{\hat{a}^1}+d^0\bar{v}_s$ by
Eq.~\eqref{eq:A_conjugation}, so
\begin{align}\label{eq:CZ_commutator}
\big[\widetilde{\text{CZ}}^{\rd,\bl}(\rho^0),\, A^{\rd}_{v,s}\big]
&= (-1)^{\int_{\eta_2} d\bar{v}_s \cup \blue{\hat{b}^1} \cup \rho^0} \cr
&= (-1)^{\int_{\eta_2} d\left(\bar{v}_s \cup \blue{\hat{b}^1} \cup
\rho^0\right)} \cr
&\quad \times (-1)^{\int_{\eta_2} \bar{v}_s \cup \blue{d\hat{b}^1}
\cup \rho^0} \cr
&= (-1)^{\int_{\eta_2} \bar{v}_s \cup \blue{d\hat{b}^1} \cup \rho^0} ,
\end{align}
where the second equality is the Leibniz rule of
Proposition~\ref{prop:cup-properties} together with $d\rho^0=0$, and the third
uses $\langle d\omega,\eta_2\rangle=\langle\omega,\partial\eta_2\rangle=0$. As in
Eq.~\eqref{eq:Theta_def}, the surviving factor is a product of $B^{\bl}_{f}$
operators and therefore equals $\I$ on the zero-flux subspace. Hence
\be\label{eq:CZ_is_symmetry}
P_B\,\big[\widetilde{\text{CZ}}^{\rd,\bl}(\rho^0),\, g\big]\,P_B = \I ,
\qquad \forall\, g \in \SS ,
\ee
so the operator maps $\C$ to itself. Note that $d\rho^0=0$ is indispensable: for a
general $0$-cochain the Leibniz rule leaves an extra term
$\int_{\eta_2}\bar{v}_s\cup\blue{\hat{b}^1}\cup d\rho^0$ which is not a product of
$Z$-stabilizers, and the code space is not preserved. Note also that
$\widetilde{\text{CZ}}^{\rd,\bl}(\rho^0)$ is \emph{not} an element of the
untwisted stabilizer group $\SS$, which contains only Pauli operators; it is a
genuine logical gate, as we now verify.

\medskip
\noindent\textit{The logical action is a CZ.}
Recall from Sec.~\ref{sec:logical_operators_sheaf} that the logical-$X$ operators
of a single copy are labelled by sheaf $1$-cocycles,
$\lo{X}^{\rd}(\red{\alpha^1})
=\prod_{(e,s)}(X^{\rd}_{e,s})^{[\red{\alpha^1}(e)]_s}$ with
$\red{\alpha^1} \in Z^1(\L,\Fc^{\rd})$, and the logical-$Z$ operators by sheaf
$1$-cycles. Since $\lo{X}^{\rd}(\red{\alpha^1})$ implements
$\red{\hat{a}^1}\to\red{\hat{a}^1}+\red{\alpha^1}$, conjugation gives
\begin{align}\label{eq:CZ_conjugation_X}
&\widetilde{\text{CZ}}^{\rd,\bl}(\rho^0)\ \lo{X}^{\rd}(\red{\alpha^1})\
\widetilde{\text{CZ}}^{\rd,\bl}(\rho^0) \cr
&\qquad = \lo{X}^{\rd}(\red{\alpha^1})\cdot
(-1)^{\int_{\eta_2}\red{\alpha^1} \cup \blue{\hat{b}^1} \cup \rho^0} ,
\end{align}
while the logical-$Z$ operators are left invariant because the operator is
diagonal. The factor generated on the right-hand side is diagonal and linear in
$\blue{\hat{b}^1}$, hence it is a $Z$-type operator on the \bl\ copy,
\be\label{eq:CZ_generated_Z}
(-1)^{\int_{\eta_2}\red{\alpha^1} \cup \blue{\hat{b}^1} \cup \rho^0}
= \widetilde{Z}^{\bl}_{\zeta_1} ,
\qquad
\big[\zeta_1(e)\big]_r \equiv
\int_{\eta_2}\red{\alpha^1} \cup \bar{e}_r \cup \rho^0 ,
\ee
with $\zeta_1 \in C_1(\L,\Fc^{\bl})$. Moreover $\zeta_1$ is a $1$-\emph{cycle}:
for any $0$-cochain $x^0 \in C^0(\L,\Fc^{\bl})$, the Leibniz rule together with
$d\red{\alpha^1}=0$ and $d\rho^0=0$ gives
$d(\red{\alpha^1}\cup x^0\cup\rho^0)=\red{\alpha^1}\cup d^0x^0\cup\rho^0$, whence
\begin{align}\label{eq:zeta_is_cycle}
\big\langle x^0, \partial_1 \zeta_1 \big\rangle
&= \big\langle d^0 x^0, \zeta_1 \big\rangle
= \int_{\eta_2}\red{\alpha^1} \cup d^0 x^0 \cup \rho^0 \cr
&= \big\langle \cdot\,, \partial\eta_2 \big\rangle = 0 ,
\end{align}
for all $x^0$, so $\partial_1\zeta_1=0$. Hence
$\widetilde{Z}^{\bl}_{\zeta_1}$ is a logical-$Z$ operator of the \bl\ copy, and
Eq.~\eqref{eq:CZ_conjugation_X} is precisely the defining action of a logical CZ
gate: $\lo{X} \mapsto \lo{X}\,\lo{Z}$ on the other block, with $\lo{Z}$'s fixed.

Labelling the logical qubits of the \rd\ and \bl\ copies by cocycle bases
$\{\red{\alpha^1}\}$ and $\{\blue{\beta^1}\}$ as in
Sec.~\ref{sec:logical_operators_sheaf}, the coefficient matrix of the induced
logical gate is the invariant form of Definition~\ref{def:invariant-form}
evaluated on the capped cycle,
\be\label{eq:logical_CZ_coefficient}
\Lambda^{\red{\alpha}\blue{\beta}}(\rho^0)
= T_{\eta_2^{\rho}}\big(\red{\alpha^1},\blue{\beta^1}\big)
= \int_{\eta_2}\red{\alpha^1} \cup \blue{\beta^1} \cup \rho^0 \in \FF_2 ,
\ee
so that
\be\label{eq:logical_CZ_action}
\widetilde{\text{CZ}}^{\rd,\bl}(\rho^0) \;\widehat{=}\;
\prod_{\red{\alpha},\blue{\beta}}
\big[\lo{\text{CZ}}^{\rd,\bl}_{\red{\alpha},\blue{\beta}}
\big]^{\Lambda^{\red{\alpha}\blue{\beta}}(\rho^0)}
\ee
on the code space. This is the promised statement: a depth-one physical CZ circuit
between the \rd\ and \bl\ blocks implements a product of logical CZ gates, with the
pattern of logical gates determined by the triple cup product. By
Proposition~\ref{prop:cup-properties}(c) the coefficient
\eqref{eq:logical_CZ_coefficient} depends only on the cohomology classes
$[\red{\alpha^1}]$, $[\blue{\beta^1}]$, $[\rho^0]$, so the logical action is well
defined.

\medskip
\noindent\textit{Addressability.}
There is one such operator for each class $[\rho^0]\in H^0(\L,\Fc^{\gr})$, and by
Eq.~\eqref{eq:subcomplex_support} each is supported on its own codimension-$0$
subcomplex $\L_{\eta_2^{\rho}}$. Distinct labels therefore give distinct patterns
$\Lambda^{\red{\alpha}\blue{\beta}}(\rho^0)$ of logical CZ's, which can be applied
independently. By Proposition~\ref{Prop:sheaf_cohomology}, $H^0(\L,\Fc^{\gr})$ is
the $0$-cocycle classical sheaf code, of dimension $k_{\gr}$ growing linearly with
system size, so the number of independently addressable transversal logical CZ
gates grows linearly as well. This is the practical payoff of the subcomplex
formulation, and it is what replaces the higher-form mechanism available on
manifolds.

\begin{figure}[t]
    \centering   \includegraphics[width=0.8\linewidth]{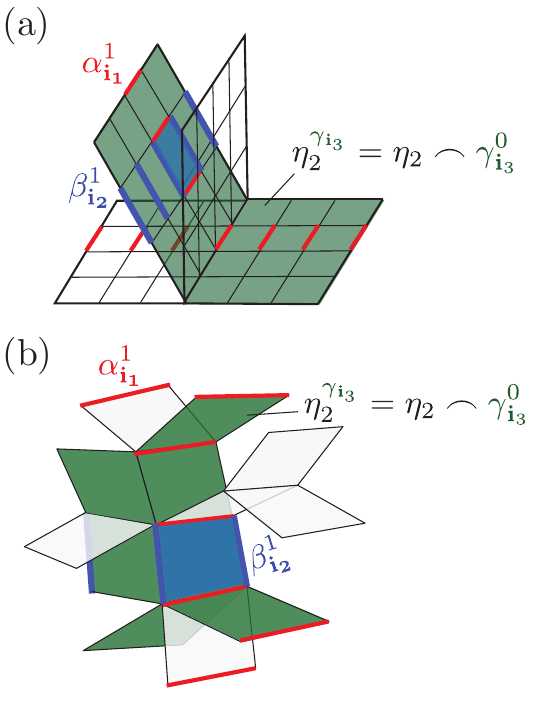}
    \caption{Addressability of the transversal logical
CZ gates. The capped codimension-$0$ subcomplex
$\eta_2^{\gamma_{i_3}}=\eta_2\frown\gamma^0_{\vc{i_3}}$ (green) supports the
symmetry operator labelled by $\green{\vc{i_3}}$; drawn on it are a \rd\
$1$-cocycle $\red{\alpha^1_{\vc{i_1}}}$ and a \bl\ $1$-cocycle
$\blue{\beta^1_{\vc{i_2}}}$ representing the logical operators of the two copies.
The logical action is the triple pairing
$\Lambda(\red{\vc{i_1}},\blue{\vc{i_2}},\green{\vc{i_3}})=
\int_{\eta_2}\red{\alpha^1_{\vc{i_1}}}\cup\blue{\beta^1_{\vc{i_2}}}\cup
\green{\gamma^0_{\vc{i_3}}}$, which in the instantiation of
Sec.~\ref{sec:code_instantiation} is non-zero precisely when the two cocycles meet
in a single face inside the subcomplex (blue square). (a), (b) The same
configuration on two complexes of increasing irregularity. Different labels
$\green{\vc{i_3}}$ select different subcomplexes and hence different, independently
applicable patterns of logical CZ gates.}
\label{fig:subcomplex_illustration_cup}
\end{figure}

\subsection{Charge parity operators in the twisted code}
\label{sec:charge_parity_operator}

We now pass to the twisted code $\tilde{\C}$ and show that the same transversal
operator reappears there in an entirely different role.

For $\rho^0 \in Z^0(\L,\Fc^{\gr})$ define the \gr-type \emph{charge parity
operator} as the corresponding product of twisted $X$-stabilizers,
\be\label{eq:charge_parity_sheaf}
\mathsf{C}^{\gr}(\rho^0) \equiv
\prod_{v,t}\big(\tilde{A}^{\gr}_{v,t}\big)^{\rho_{v,t}} ,
\ee
and analogously $\mathsf{C}^{\rd}(\mu^0)$, $\mathsf{C}^{\bl}(\nu^0)$ for the other
two colours. The key point is that all Pauli-$X$ content cancels. In the untwisted
code the $X$-stabilizers obey the redundancy relations
\be\label{eq:X_redundancy_sheaf}
\prod_{v,t}\big(A^{\gr}_{v,t}\big)^{\rho_{v,t}} = \I ,
\qquad \forall\, \rho^0 \in Z^0(\L,\Fc^{\gr}) ,
\ee
which follow from Eq.~\eqref{eq:A_conjugation}: the product implements the shift
$\green{\hat{c}^1}\to\green{\hat{c}^1}+d^0\rho^0$, and $d^0\rho^0=0$. Since the
$A$'s and the CZ dressings act on different copies and therefore commute, the
product in Eq.~\eqref{eq:charge_parity_sheaf} reduces to the dressing alone, which
is exactly the operator of Eq.~\eqref{eq:transversal_CZ_def}:
\be\label{eq:charge_is_CZ_sheaf}
 \mathsf{C}^{\gr}(\rho^0)
= (-1)^{\int_{\eta_2}\red{\hat{a}^1} \cup \blue{\hat{b}^1} \cup \rho^0}
= \widetilde{\text{CZ}}^{\rd,\bl}(\rho^0) \ 
\ee
and by permuting colours
\be\label{eq:charge_is_CZ_sheaf2}
\mathsf{C}^{\rd}(\mu^0) = \widetilde{\text{CZ}}^{\bl,\gr}(\mu^0) ,
\qquad
\mathsf{C}^{\bl}(\nu^0) = \widetilde{\text{CZ}}^{\rd,\gr}(\nu^0) .
\ee
Thus the charge parity operator of one colour \emph{is} the transversal CZ
operator acting on the other two colours --- for the \gr-type charge, the very
operator that implements a logical CZ gate between the \rd\ and \bl\ copies of the
untwisted code, Eq.~\eqref{eq:logical_CZ_action}.

The consequence is that its status changes completely. In $\tilde{\C}$ the
operator belongs to the stabilizer group by construction, being a product of
generators of $\tilde{\SS}$, so by Eq.~\eqref{eq:constraint_operator_sheaf2} it
acts as the identity,
\be\label{eq:charge_parity_plus_one}
\mathsf{C}^{\gr}(\rho^0) = \mathsf{C}^{\rd}(\mu^0) = \mathsf{C}^{\bl}(\nu^0) = \I
\qquad \text{on } \tilde{\C} .
\ee
The same transversal circuit that acts as a nontrivial logical CZ on $\C$ acts
trivially on $\tilde{\C}$. This is the operator-level statement of gauging: the
CZ-dressing of the $X$-stabilizers in Eq.~\eqref{eq:stabilizer_summary} absorbs
the symmetry generated by $\widetilde{\text{CZ}}^{\rd,\bl}$ into the stabilizer
group, so that what was a logical gate becomes a constraint. It is also the reason
the twisted code is non-Abelian: the price of promoting a Clifford logical gate to
a stabilizer is that the resulting stabilizer group is no longer Pauli, and its
generators fail to commute off the zero-flux subspace
(Lemma~\ref{lemma:stabilizer_commutation}).

\begin{figure}[t]
    \centering   \includegraphics[width=1\linewidth]{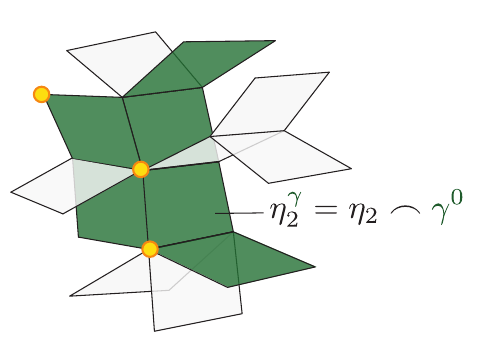}
    \caption{What the charge parity operator
measures. The capped subcomplex $\eta_2^{\gamma}=\eta_2\frown\gamma^0$ (green)
encloses a set of \gr-type gauge charges (orange dots), and the operator
$\mathsf{C}^{\gr}(\gamma^0)=\prod_{v,t}(\tilde{A}^{\gr}_{v,t})^{\gamma_{v,t}}$
returns their parity. In the twisted code $\tilde{\C}$ this operator is a product
of stabilizer generators and hence acts as the identity, whereas in the untwisted
code $\C$ the very same transversal circuit
$\widetilde{\text{CZ}}^{\rd,\bl}(\gamma^0)$ is a non-trivial addressable logical
CZ gate: gauging has absorbed the logical gate into the stabilizer group. On a
manifold the analogous operator would be supported on the Poincar\'e dual of the
symmetry label; here the cap product does the same job intrinsically.}
\label{fig:subcomplex_illustration_charge}
\end{figure}

\subsection{Charge interpretation and contrast with manifolds}
\label{sec:charge_interpretation}

The name ``charge parity'' comes from the following reading of
Eq.~\eqref{eq:charge_parity_sheaf}. An electric charge of colour \gr\ is a
violated $X$-stabilizer, $A^{\gr}_{v,t}=-1$, located at the vertex $v$; it is
created at the endpoints of an open sheaf $Z$-string. Indeed for a $1$-chain
$\zeta_1\in C_1(\L,\Fc^{\gr})$ the Wilson operator $\widetilde{Z}^{\gr}_{\zeta_1}$
of Eq.~\eqref{eq:logical_Z_sheaf} anticommutes with $A^{\gr}_{v,t}$ precisely when
$\langle \bar{v}_t, \partial_1\zeta_1\rangle=1$, so the charge configuration is
the $0$-chain $\partial_1\zeta_1$. Consequently
\be\label{eq:charge_measured}
\mathsf{C}^{\gr}(\rho^0)\ \widetilde{Z}^{\gr}_{\zeta_1}\ket{\psi}
= (-1)^{\langle \rho^0,\, \partial_1\zeta_1\rangle}\
\widetilde{Z}^{\gr}_{\zeta_1}\ket{\psi} , \qquad \ket{\psi}\in\tilde{\C},
\ee
i.e.\ $\mathsf{C}^{\gr}(\rho^0)$ measures the total colour-\gr\ charge parity
weighted by $\rho^0$, or equivalently the total charge contained in the subcomplex
$\L_{\eta_2^{\rho}}$. By the adjunction
$\langle \rho^0,\partial_1\zeta_1\rangle = \langle d^0\rho^0,\zeta_1\rangle = 0$
this parity is unaffected by any Pauli-$Z$ error, which is the operator statement
that Eq.~\eqref{eq:X_redundancy_sheaf} is a redundancy among the $X$-checks: the
charge parity operators are `global checks' of the code, and their violation signals a
state in a different super-selection sector outside the code space rather than a correctable error.

Finally we record the contrast with the manifold case, which is the reason these
symmetries are new. On a triangulated closed $\mathsf{d}$-manifold the top
homology and bottom cohomology are one-dimensional, $b_{\mathsf{d}}=b_0=1$, so
there is a \emph{unique} codimension-$0$ cycle, namely the manifold itself, and
correspondingly a single global transversal CZ operator, measuring the total charge
of the whole system and implementing a single fixed pattern of logical CZ's. A
general sheaf $2$-complex has no such uniqueness: by
Proposition~\ref{Prop:sheaf_cohomology}, $H^0(\L,\Fc^{\gr})$ is the $0$-cocycle
classical sheaf code, whose dimension $k_{\gr}$ grows linearly with the system
size. There are therefore extensively many independent codimension-$0$ cycles
$\eta_2^{\rho}$, one per basis class of $[\rho^0]$, and extensively many
independent transversal CZ operators. The point-like charges of colour \gr\ can be
supported on any basis class of $H_0(\L,\Fc^{\gr})$, and the parity within each
class is measured separately. It is precisely this multiplicity, absent on a
manifold, that makes the transversal logical CZ gates individually addressable,
and it is the code analogue of the extensive symmetry group of the sheaf cluster
state found in Eq.~\eqref{eq:symmetry_group}.

\subsection{Alternative route: gauging the subcomplex symmetry from the untwisted qLDPC codes}
\label{sec:gauging_untwisted}
%======================================================================

In Sec.~\ref{sec:gauging} we obtained the twisted code by gauging the $0$-form
subcomplex symmetry of the sheaf cluster state, i.e.\ by starting from an SPT and
gauging the symmetry that protects it. There is a second and logically independent
route to the same code, which starts one step further along: one may begin
directly with the \emph{untwisted} qLDPC codes of Sec.~\ref{sec:untwisted_code}
--- two decoupled $\ZZ_2$ sheaf gauge theories --- and gauge the transversal CZ
symmetry acting between them. We develop this route here in the path-integral
language. It is worth doing carefully for three reasons: it makes the appearance
of the twisted action \eqref{eq:non-Abelian_action} inevitable rather than
postulated, it exhibits the code-space constraints of
Sec.~\ref{sec:code_space_constraint} as an equation of motion, and it is the form
in which the gauging is actually carried out as a fault-tolerant measurement
protocol later on.

Throughout, $\tilde{\L}$ is the $(2+1)$D spacetime sheaf complex and
$\tilde{\eta}_3 \in Z_3(\tilde{\L},\Fc^{\rd}\otimes\Fc^{\bl}\otimes\Fc^{\gr})$ the
twist cycle of Sec.~\ref{sec:Feynman_path_integral}. When a time direction needs
to be made explicit we take the product form $\tilde{\L}=\L\otimes S^1_t$, with
$I_t$ the chain complex of a $1$D circle carrying the constant sheaf, so that
$\tilde{\eta}_3=\eta_2\otimes S^1_t$ with $\eta_2$ the spatial $2$-cycle of
Sec.~\ref{sec:sheaf_SPT}; this decomposition is used systematically in
Sec.~\ref{sec:condition_cohomology}.

\medskip
\noindent\textit{The ungauged theory.}
The starting point is the untwisted theory of the \rd\ and \bl\ copies, whose
action consists of the first two BF terms of Eq.~\eqref{eq:non-Abelian_action}
with the twisted term absent. Summing over the magnetic Lagrange multipliers
$\red{\check{a}^1}$, $\blue{\check{b}^1}$ imposes flatness, so that
\be\label{eq:Z_untwisted}
\cZ_0[\tilde{\L}] = \sum_{\substack{\red{a^1}\in H^1(\tilde{\L},\Fc^{\rd}) \\
\blue{b^1}\in H^1(\tilde{\L},\Fc^{\bl})}} 1 ,
\ee
an unweighted sum over flat sheaf gauge fields. This is the spacetime description
of two decoupled copies of the untwisted sheaf code of
Sec.~\ref{sec:untwisted_code}, $\C = \C^{\rd}\otimes\C^{\bl}$, whose stabilizer
group is the \rd\ and \bl\ part of Eq.~\eqref{eq:stabilizer_untwisted_sheaf}.

\medskip
\noindent\textit{The symmetry and its defects.}
Fix a third system of local coefficients $\Fc^{\gr}$ on the same complex. At this
stage it is only a bookkeeping device --- there is no \gr\ gauge field in
Eq.~\eqref{eq:Z_untwisted} --- and it becomes the coefficient system of a genuine
third gauge field only after gauging. Let
$\{\green{\rho^{0}_{k}}\}$ be a basis of $Z^0(\L,\Fc^{\gr})=H^0(\L,\Fc^{\gr})$ and
let
\be\label{eq:spacetime_label}
\green{\tilde{\rho}^{1}_{k}} = \green{\rho^{0}_{k}} \otimes {\ts}^1
\;\in\; Z^1\big(\tilde{\L},\Fc^{\gr}\big) ,
\ee
with ${\ts}^1$ the generator of $H^1(S^1_t)$, be the corresponding spacetime
labels. The associated symmetry defect operator of the theory
\eqref{eq:Z_untwisted} is
\begin{align}\label{eq:symmetry_defect}
U\big(\green{\tilde{\rho}^{1}_{k}}\big)
=&\ (-1)^{\int_{\tilde{\eta}_3}\red{a^1}\cup\blue{b^1}\cup
\green{\tilde{\rho}^{1}_{k}}} \cr
=&\ (-1)^{\int_{\tilde{\eta}_3^{\,k}}\red{a^1}\cup\blue{b^1}} ,
\qquad
\tilde{\eta}_3^{\,k} \equiv \tilde{\eta}_3 \frown
\green{\tilde{\rho}^{1}_{k}},   \cr
\end{align}
the second line using the cap product of
Definition~\ref{def:cap-simplicial} to convert the label into a support. Since
$\dim\tilde{\eta}_3^{\,k}=3-1=2$ while $D=3$, the defect is supported on a
\emph{codimension-one} locus of spacetime, as a $0$-form symmetry generator must
be. Restricting it to a single time slice replaces
$\green{\tilde{\rho}^{1}_{k}}$ by its spatial part $\green{\rho^{0}_{k}}$ and
returns the equal-time operator
\be\label{eq:defect_equal_time}
U\big(\green{\tilde{\rho}^{1}_{k}}\big)\Big|_{t\ \text{fixed}}
= (-1)^{\int_{\eta_2}\red{\hat{a}^1}\cup\blue{\hat{b}^1}\cup
\green{\rho^{0}_{k}}}
= \widetilde{\text{CZ}}^{\rd,\bl}\big(\green{\rho^{0}_{k}}\big) ,
\ee
the transversal CZ operator already met in Eq.~\eqref{eq:CZ_identity_sheaf}: a
codimension-zero operator in space, supported on a proper subcomplex of $\L$. That
$\widetilde{\text{CZ}}^{\rd,\bl}(\green{\rho^{0}})$ commutes with the untwisted
stabilizer group, and hence really generates a symmetry of $\C$, is verified
directly at the operator level in Sec.~\ref{sec:logical_CZ_untwisted}; in the path
integral it follows from the topological invariance of the defect, i.e.\ from
Lemma~\ref{lemma:invariant} together with $\partial\tilde{\eta}_3=0$.

Because
$U(\green{\tilde{\rho}^{1}})\,U(\green{\tilde{\rho}'^{1}})
=U(\green{\tilde{\rho}^{1}}+\green{\tilde{\rho}'^{1}})$, the symmetry group being
gauged is
\be\label{eq:symmetry_group_gauged}
\Big\{U\big(\green{\tilde{\rho}^{1}}\big)\Big\}
\;\cong\; H^0\big(\L,\Fc^{\gr}\big) \;\cong\; \ZZ_2^{\,k_{\gr}} ,
\ee
which by Proposition~\ref{Prop:sheaf_cohomology} is the $0$-cocycle classical
sheaf code, of dimension $k_{\gr}$ growing linearly with the system size. This is
the same extensive symmetry group as Eq.~\eqref{eq:symmetry_group}, now realized
on the codes rather than on the cluster state, and it is to be contrasted with the
single $\ZZ_2$ available on a closed manifold, where $b_0=1$.

\medskip
\noindent\textit{Coupling to a background gauge field.}
Gauging means summing over insertions of the defect network
\eqref{eq:symmetry_defect}. The first step is to couple the theory to a
\emph{background} gauge field for the symmetry group
\eqref{eq:symmetry_group_gauged}. Such a background is a flat sheaf $1$-cochain
$\green{C^1}\in Z^1(\tilde{\L},\Fc^{\gr})$, and inserting the defect network it
labels multiplies the weight of each configuration by
\be\label{eq:background_coupling}
(-1)^{\int_{\tilde{\eta}_3}\red{a^1}\cup\blue{b^1}\cup\green{C^1}} .
\ee
Two features of Eq.~\eqref{eq:background_coupling} deserve emphasis, since neither
is present in the manifold case. First, the coefficient system is forced: the
symmetry labels live in $Z^0(\L,\Fc^{\gr})$, so the background must be
$\Fc^{\gr}$-valued in order that the triple cup product take values in
$\Fc^{\rd}\otimes\Fc^{\bl}\otimes\Fc^{\gr}$, which is the sheaf against which
$\tilde{\eta}_3$ is paired [Eq.~\eqref{eq:trilinear_map}]. The third copy of the
code therefore enters carrying its own local system, and not as a scalar $\FF_2$
field. Second, the coupling is non-trivial only if the tensor sheaf
$\Fc^{\rd}\otimes\Fc^{\bl}\otimes\Fc^{\gr}$ admits a non-zero $1$-cycle on the
underlying classical codes; this is the \emph{local cycle condition}, analyzed in
Sec.~\ref{sec:condition_cohomology}, and it has no counterpart for constant
coefficients. Gauge invariance of Eq.~\eqref{eq:background_coupling} under
$\red{a^1}\to\red{a^1}+d\lambda^0$ and its analogues is
Lemma~\ref{lemma:invariant}, whose proof uses $\partial\tilde{\eta}_3=0$: the
cycle condition on the twist is what makes the symmetry gaugeable in the first
place.

\medskip
\noindent\textit{Making the background dynamical.}
Gauging is completed by promoting $\green{C^1}$ to a dynamical field
$\green{c^1}\in C^1(\tilde{\L},\Fc^{\gr})$, summing over it, and adding the BF
term $\pi\int_{\tilde{\L}}\green{c^1}\cup\green{d\check{c}^1}$ that enforces its
flatness. The total action is then exactly Eq.~\eqref{eq:non-Abelian_action}, and
the partition function is
\be\label{eq:gauged_partition}
\cZ\big[\tilde{\L},\tilde{\eta}_3\big] =
\sum_{\red{a^1},\blue{b^1},\green{c^1}\in H^1(\tilde{\L})}
(-1)^{\int_{\tilde{\eta}_3}\red{a^1}\cup\blue{b^1}\cup\green{c^1}} ,
\ee
which is Eq.~\eqref{eq:path_inegral_cup}. In words: \emph{gauging the $0$-form
subcomplex CZ symmetry of the untwisted sheaf code produces the twisted sheaf
code.} At the operator level this is the passage from
Eq.~\eqref{eq:stabilizer_untwisted_sheaf} to
Eq.~\eqref{eq:Clifford_group_sheaf}, the CZ dressing of
Eq.~\eqref{eq:stabilizer_summary} being the minimal coupling of the symmetry
current to $\green{c^1}$ --- the same replacement performed microscopically in
Sec.~\ref{sec:gauging}, where the corresponding Gauss law is
Eq.~\eqref{eq:Gauss_law}.

\medskip
\noindent\textit{The equation of motion for $\green{c^1}$.}
The field $\green{c^1}$ enters Eq.~\eqref{eq:gauged_partition} linearly, so it
does not propagate: it is a Lagrange multiplier. Expanding it in the basis
\eqref{eq:spacetime_label}, $\green{c^1}=\sum_{k}\green{l_{k}}\,
\green{\tilde{\rho}^{1}_{k}}$ with $\green{l_{k}}\in\FF_2$, the sum factorizes and
each component gives
\begin{align}\label{eq:c_eom}
&\sum_{\green{l_{k}}\in\FF_2}
(-1)^{\green{l_{k}}\int_{\tilde{\eta}_3}\red{a^1}\cup\blue{b^1}\cup
\green{\tilde{\rho}^{1}_{k}}} \cr
&\qquad = 2\,\delta\Big[\int_{\tilde{\eta}_3}\red{a^1}\cup\blue{b^1}\cup
\green{\tilde{\rho}^{1}_{k}} = 0\Big] .
\end{align}
The gauged theory therefore retains only those configurations of the \rd\ and \bl\
gauge fields that are invariant under every symmetry generator. This is exactly
the code-space constraint \eqref{eq:constraint_operator_sheaf2}, equivalently the
statement $\widetilde{\text{CZ}}^{\rd,\bl}(\green{\rho^{0}})=\I$ on $\tilde{\C}$
of Eq.~\eqref{eq:CZ_identity_sheaf}: gauging a symmetry projects onto its
invariant sector, and in code language each projection removes one logical qubit.
The number of such projections is the rank $k_{\gr}$ of
Eq.~\eqref{eq:symmetry_group_gauged}, and the resulting reduction of the logical
dimension is computed for the explicit construction in
Lemma~\ref{lemma:rate_twisted}.

\medskip
\noindent\textit{Wilson operators and twisted sectors.}
The theory \eqref{eq:gauged_partition} projects onto the invariant sector, but the
gauging can equally well land in any other sector, and this too is visible in the
path integral. Let $\green{\tilde{\rho}_{1,k}}$ be a spacetime $1$-cycle pairing
non-trivially with $\green{\tilde{\rho}^{1}_{k}}$, and insert the \gr\ Wilson
operator
\be\label{eq:wilson_insertion}
W_{\green{w_{k}}} =
(-1)^{\green{w_{k}}\int_{\green{\tilde{\rho}_{1,k}}}\green{c^1}}
= (-1)^{\green{w_{k}}\green{l_{k}}} ,
\qquad \green{w_{k}}\in\{0,1\} .
\ee
Repeating the sum \eqref{eq:c_eom} with this extra sign replaces the constraint by
\be\label{eq:constraint_shifted}
\int_{\tilde{\eta}_3}\red{a^1}\cup\blue{b^1}\cup
\green{\tilde{\rho}^{1}_{k}} = \green{w_{k}} ,
\ee
i.e.\ the value of the transversal CZ operator \eqref{eq:defect_equal_time} is
pinned to $(-1)^{\green{w_{k}}}$. The sectors of the twisted theory labelled by
$\green{\vec{w}}=\{\green{w_{k}}\}$ are thus exactly the joint eigenspaces of the
operators $\widetilde{\text{CZ}}^{\rd,\bl}(\green{\rho^{0}_{k}})$, and
$\green{w_{k}}$ counts the parity of \gr\ charge worldlines linking
$\green{\tilde{\rho}_{1,k}}$. This is what makes the gauging useful as a logical
operation: performing it by measurement rather than by post-selection produces a
random $\green{\vec{w}}$, and the recorded outcomes tell us onto which joint
eigenspace of a collection of transversal CZ operators the state has been
projected. That protocol is the subject of
Sec.~\ref{sec:logical_operation}.

\begin{remark}[Two routes to the twisted code]\label{rem:two_routes}
The twisted sheaf code has now been obtained twice. In
Secs.~\ref{sec:sheaf_SPT}--\ref{sec:gauging} we began with the sheaf cluster
state, a $\red{\Z_2^{(0)}}\times\blue{\Z_2^{(0)}}\times\green{\Z_2^{(0)}}$ SPT,
and gauged its full protecting symmetry, all three colours at once; the twisted
term arose as the topological response \eqref{eq:globalsym} of the SPT. Here we
began instead with two copies of an ordinary untwisted qLDPC code and gauged only
the single $\ZZ_2$ symmetry generated by the transversal CZ operator between them;
the twisted term arose as the minimal coupling of that symmetry to a third gauge
field. The two constructions agree because gauging is associative: gauging
$\red{\Z_2}\times\blue{\Z_2}$ in the SPT first produces
Eq.~\eqref{eq:Z_untwisted}, with the residual $\green{\Z_2}$ symmetry acting as
Eq.~\eqref{eq:defect_equal_time}, and gauging that residual symmetry is the step
carried out above. The second route is the more useful one operationally: its
input, the untwisted code, is a standard CSS qLDPC code rather than a cluster
state, and the symmetry being gauged is generated by a depth-one transversal
circuit.
\end{remark}

%======================================================================
\section{Generalization to $\Fq$: qudit path integral and Clifford stabilizer
codes}
\label{sec:Fq_generalization}
%======================================================================

The untwisted and twisted qLDPC codes and the associated path integrals and symmetry operators have so far been written for $\FF_2$, i.e.\ for qubits and for sheaves
of $\FF_2$-vector spaces. On the other hand, the sheaves and the corresponding  cup and cap products of
Sec.~\ref{sec:prelim}, and the local cycle condition of
Sec.~\ref{sec:condition_cohomology} have already been developed over an arbitrary finite field $\Fq$. In this section we
lift the
spacetime path integral, the untwisted qLDPC code, and the twisted Clifford
stabilizer code --- from $\FF_2$ to $\Fq$.

There are two reasons to do this. The first is structural: the twisted term
$\int_{\tilde\eta_3}\red{a^1}\cup\blue{b^1}\cup\green{c^1}$ is a type-III
Dijkgraaf--Witten cocycle for the three copies, and nothing in the derivation
used $q=2$ beyond the replacement of $\pm 1$ phases by $p$-th roots of unity. We
stress at the outset that the gauge group is \emph{not} cyclic of order $q$: the
additive group of $\Fq$ is $(\Z_p)^{u}$, so what we obtain is a $\Z_p$ gauge
theory with several flavours, carrying a twist built from the field's
multiplication. Remark~\ref{rem:which_gauge_theory} makes this precise. The
second reason is practical, and is the subject of
Sec.~\ref{sec:high_yield}: the classical codes with a non-trivial multiplication
property that we shall need there to achieve almost-constant magic rate are algebraic, and exist only over alphabets of
size $q = p^{u}$ with $u$ large enough. Working over $\Fq$ is therefore not a
generalization for its own sake but a prerequisite for the construction of magic state fountain with better rate.

\subsection{Qudits, generalized Paulis, and the trace pairing}
\label{sec:qudits}

Fix a prime $p$ and a power $q=p^{u}$. A physical \emph{$\Fq$-qudit} is a
$q$-dimensional system with computational basis $\{\ket{x}\}_{x\in\Fq}$; concretely
it is a block of $u$ qudits of dimension $p$. Write
\begin{align}\label{eq:omega_trace}
\omega &= e^{2\pi i/p} ,
\qquad
\mathrm{Tr} \equiv \mathrm{Tr}_{\Fq/\FF_p} : \Fq \to \FF_p , \cr
\mathrm{Tr}(x) &= \sum_{l=0}^{u-1} x^{p^{l}} .
\end{align}
The trace is $\FF_p$-linear and the bilinear form
$(\alpha,\beta)\mapsto\mathrm{Tr}(\alpha\beta)$ is non-degenerate on $\Fq$; this is
the only property of $\mathrm{Tr}$ we use, and it is what converts an
$\Fq$-valued cohomology invariant into a $p$-th root of unity. Two consequences
will be used repeatedly:
\be\label{eq:trace_delta}
\sum_{\alpha\in\Fq}\omega^{\mathrm{Tr}(\alpha\beta)} = q\,\delta_{\beta,0} ,
\qquad
\omega^{\mathrm{Tr}(\alpha+\beta)} = \omega^{\mathrm{Tr}\alpha}\,
\omega^{\mathrm{Tr}\beta} .
\ee
The first is the $\Fq$ replacement for $\sum_{l\in\FF_2}(-1)^{lx}=2\delta_{x,0}$
and will produce the equations of motion; the second is the homomorphism property
that turns sums of cochains into products of operators, exactly as in
Sec.~\ref{sec:op_valued_cochain}.

The generalized Pauli operators on a single $\Fq$-qudit are
\begin{align}\label{eq:qudit_pauli}
X(\alpha)\ket{x} &= \ket{x+\alpha} , \cr
Z(\beta)\ket{x} &= \omega^{\mathrm{Tr}(\beta x)}\ket{x} ,
\qquad \alpha,\beta\in\Fq ,
\end{align}
which satisfy $X(\alpha)X(\alpha')=X(\alpha+\alpha')$,
$Z(\beta)Z(\beta')=Z(\beta+\beta')$ and the commutation relation
\be\label{eq:qudit_commutation}
Z(\beta)\,X(\alpha) = \omega^{\mathrm{Tr}(\alpha\beta)}\, X(\alpha)\,Z(\beta) .
\ee
For $q=2$ these reduce to $X$ and $Z$, and Eq.~\eqref{eq:qudit_commutation} to
anticommutation. The two-qudit generalized CZ gate is the diagonal Clifford gate
\be\label{eq:qudit_CZ}
\text{CZ}(\lambda)\ket{x,y} = \omega^{\mathrm{Tr}(\lambda x y)}\ket{x,y} ,
\qquad \lambda\in\Fq ,
\ee
and similarly $\text{CCZ}(\lambda)\ket{x,y,z}=\omega^{\mathrm{Tr}(\lambda
xyz)}\ket{x,y,z}$. Regarded as a gate on two blocks of $u$ elementary
$\FF_p$-qudits, $\text{CZ}(\lambda)$ is a depth-one circuit of $u$ elementary CZ
gates, by Proposition~\ref{prop:cz_descent_app} of
Appendix~\ref{app:alphabet}; the analogous statement fails for
$\text{CCZ}(\lambda)$, which is the entire content of alphabet reduction and is
discussed there. Conjugation by $\text{CZ}(\lambda)$ maps
$X_1(\alpha)\mapsto X_1(\alpha)Z_2(\lambda\alpha)$, so
$\text{CZ}(\lambda)$ is Clifford but not Pauli, and $\text{CCZ}(\lambda)$ is
non-Clifford; these are the $\Fq$ statements of the two facts on which the whole
construction rests.

The physical qudits of the code sit, as before, on the pairs $(e,s)$ with
$e\in\L(1)$ and $s$ a stalk index, and we write $X^{i}_{e,s}(\alpha)$,
$Z^{i}_{e,s}(\beta)$ for the corresponding operators of colour
$i\in\{\rd,\bl,\gr\}$. Operator-valued sheaf cochains are defined exactly as in
Sec.~\ref{sec:op_valued_cochain}, with Eq.~\eqref{eq:Z_cochain_untwisted} replaced
by
\be\label{eq:Z_cochain_Fq}
Z_{e,s}(\beta) = \omega^{\mathrm{Tr}\left(\beta\,[\hat{a}^1(e)]_s\right)} ,
\ee
so that $\hat{a}^1 \in C^1(\L,\Fc)$ now has components valued in $\Fq$ rather than
in $\FF_2$, and the elementary cochain $\bar{\sigma}_s$ in 
Eq.~\eqref{eq:cochain_expansion_untwisted} is unchanged.

\begin{remark}[Signs and orientations]\label{rem:signs}
For $p=2$ every sign is $+1$ and orientations may be ignored, which is why they
were invisible above. For odd $p$ the coboundary of
Eq.~\eqref{eq:coboundary-sheaf} carries the incidence signs of the complex, and
the Leibniz rule of Proposition~\ref{prop:cup-properties} takes its general form
\be\label{eq:leibniz_signed}
d\big(x^{i}\cup y^{j}\big) = d x^{i} \cup y^{j}
+ (-1)^{i}\, x^{i} \cup d y^{j} .
\ee
All the manipulations below involve a $0$-cochain in the leftmost slot, for which
$(-1)^{i}=+1$, so the signs never obstruct the argument; we indicate the one place
where a global minus sign survives. Throughout this section the graphs $G_\nu$ are
the bipartite Tanner graphs of Sec.~\ref{sec:classical}, oriented from bit
vertices to check vertices, so that each vertex sees edges of a single orientation
and the local co-restrictions $\Fc_{v\shortrightarrow e}$ carry a common sign; this
is the standard hypothesis under which Tanner codes and sheaf cocycle codes agree
over a general field;  Alternatively, we can consider more general graph $G_\nu$ but restrict to the case where $\FF_q$ has characteristic 2, i.e., $p=2$.
\end{remark}

\subsection{The $\Fq$ path integral}
\label{sec:Fq_path_integral}

The twisted $\Fq$ sheaf gauge theory has action
\begin{align}\label{eq:action_Fq}
S =& \frac{2\pi}{p}\,\mathrm{Tr}\bigg[
\int_{\tilde{\L}} \red{a^1}\cup\red{d\check{a}^1}
+ \int_{\tilde{\L}} \blue{b^1}\cup\blue{d\check{b}^1}
+ \int_{\tilde{\L}} \green{c^1}\cup\green{d\check{c}^1} \bigg] \cr
&+ \frac{2\pi}{p}\,\mathrm{Tr}
\int_{\tilde{\eta}_3} \red{a^1}\cup\blue{b^1}\cup\green{c^1} ,
\end{align}
with $\red{a^1}\in C^1(\tilde{\L},\Fc^{\rd})$ and its analogues now
$\Fq$-valued sheaf cochains. Summing over the magnetic Lagrange multipliers
imposes $\red{da^1}=\blue{db^1}=\green{dc^1}=0$ by the first identity of
Eq.~\eqref{eq:trace_delta}, and the partition function reduces to
\be\label{eq:path_integral_Fq}
\cZ\big[\tilde{\L},\tilde{\eta}_3\big] =
\sum_{\red{a^1},\blue{b^1},\green{c^1}\in H^1(\tilde{\L})}
\omega^{\,\mathrm{Tr}\int_{\tilde{\eta}_3}
\red{a^1}\cup\blue{b^1}\cup\green{c^1}} ,
\ee
which reduces to Eq.~\eqref{eq:path_inegral_cup} at $q=2$. The exponent is the
$\Fq$-valued invariant already studied in
Sec.~\ref{sec:condition_cohomology}: the triple cup product lands in
$H^3(\tilde{\L},\Fc^{\rd}\otimes\Fc^{\bl}\otimes\Fc^{\gr})$ by
Eq.~\eqref{eq:trilinear_map}, and $\tilde{\eta}_3$ is a cycle of the tensor sheaf.

\begin{lemma}[Gauge invariance over $\Fq$]\label{lemma:invariant_Fq}
$\cZ[\tilde{\L},\tilde{\eta}_3]$ of Eq.~\eqref{eq:path_integral_Fq} is a
cohomology invariant.
\end{lemma}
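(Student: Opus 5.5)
The argument follows the proof of Lemma~\ref{lemma:invariant}, with $(-1)^{(\cdot)}$ replaced by $\omega^{\mathrm{Tr}(\cdot)}$ and the Leibniz rule now carrying signs. The plan is to show that the exponent $\mathrm{Tr}\int_{\tilde\eta_3}\red{a^1}\cup\blue{b^1}\cup\green{c^1}\in\FF_p$ is unchanged under $\red{a^1}\to\red{a^1}+d\tilde\eta^0$, $\blue{b^1}\to\blue{b^1}+d\tilde\xi^0$, $\green{c^1}\to\green{c^1}+d\tilde\zeta^0$, with $\Fq$-valued sheaf $0$-cochains. Then each summand in Eq.~\eqref{eq:path_integral_Fq} depends only on the cohomology classes, and the sum is a well-defined cohomology invariant.

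\emph{Step 1: one colour at a time.} Since the three shifts compose, and the integrand is $\Fq$-multilinear by multilinearity of the cup product and of the pairing, it suffices to treat one shift while the other two fields are cocycles. For the \rd\ shift, the change is $\int_{\tilde\eta_3} d\tilde\eta^0\cup\blue{b^1}\cup\green{c^1}$. I will apply the signed Leibniz rule \eqref{eq:leibniz_signed} with the $0$-cochain in the leftmost slot, where the sign is $+1$:
\[
d(\tilde\eta^0\cup\blue{b^1}\cup\green{c^1}) = d\tilde\eta^0\cup\blue{b^1}\cup\green{c^1} + \tilde\eta^0\cup d(\blue{b^1}\cup\green{c^1}),
\]
where $d(\blue{b^1}\cup\green{c^1})=d\blue{b^1}\cup\green{c^1}-\blue{b^1}\cup d\green{c^1}=0$ by flatness. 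Hence the change is an exact $3$-cochain $d\omega^2$ of the tensor sheaf. The \bl\ and \gr\ shifts go the same way, writing $\red{a^1}\cup d\tilde\xi^0\cup\green{c^1}=-d(\red{a^1}\cup\tilde\xi^0\cup\green{c^1})$ and $\red{a^1}\cup\blue{b^1}\cup d\tilde\zeta^0=d(\red{a^1}\cup\blue{b^1}\cup\tilde\zeta^0)$. These are the places where a global sign survives, and it is harmless because the term still vanishes.

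\emph{Step 2: Stokes.} By the adjunction $\langle d\omega^2,\tilde\eta_3\rangle=\langle\omega^2,\partial\tilde\eta_3\rangle$ of Definition~\ref{def:pairing-sheaf}, valid for the tensor sheaf since $\partial=d^T$, together with $\partial\tilde\eta_3=0$, the change in the $\Fq$-valued integral is zero. Applying the $\FF_p$-linear map $\mathrm{Tr}$ and the homomorphism property \eqref{eq:trace_delta}, the phase $\omega^{\mathrm{Tr}(\cdot)}$ is invariant. The cross terms that arise when several fields are shifted simultaneously are handled by doing the shifts sequentially, since each intermediate configuration still consists of cocycles.

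\emph{Main obstacle.} The delicate point is the sign bookkeeping for odd $p$. It requires checking that the signed Leibniz rule holds for the sheaf cup product of Definition~\ref{def:cup-simplicial}, or its cubical analogue on the product complex, under the orientation convention of Remark~\ref{rem:signs}. It also requires checking that $\tilde\eta_3$ is a cycle for the correspondingly signed boundary. I would verify Leibniz directly on a single cell using functoriality of the co-restrictions; as Remark~\ref{rem:signs} notes, only the leftmost-$0$-cochain case, or an overall sign, is ever needed.
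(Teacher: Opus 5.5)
Your proposal is correct and is essentially the paper's proof: the signed Leibniz rule makes the change of $a^1\cup b^1\cup c^1$ under a coboundary shift exact, the adjunction with $\partial\tilde\eta_3=0$ kills its integral, and the $\FF_p$-linearity of $\mathrm{Tr}$ carries this to the phase, with your colour-by-colour treatment simply writing out the exact cochain that the paper leaves as ``some $\varpi^2$''. One small correction to your closing remark: the \bl\ and \gr\ shifts use the Leibniz rule with left factors of degree $1$ and $2$, not only the leftmost-$0$-cochain case, although the signs you actually wrote for those two cases are right.
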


\begin{proof}
Identical to Lemma~\ref{lemma:invariant}. Under
$\red{a^1}\to\red{a^1}+d\tilde{\eta}^0$ and its analogues, the signed Leibniz rule
\eqref{eq:leibniz_signed} gives
$\red{a^1}\cup\blue{b^1}\cup\green{c^1}\to
\red{a^1}\cup\blue{b^1}\cup\green{c^1}+d\varpi^2$ for some $\varpi^2$, and
$\int_{\tilde{\eta}_3}d\varpi^2=\int_{\partial\tilde{\eta}_3}\varpi^2=0$. The
exponentiation to a phase then uses only the $\FF_p$-linearity of $\mathrm{Tr}$,
Eq.~\eqref{eq:trace_delta}.
\end{proof}

\begin{remark}[Which gauge theory is this?]\label{rem:which_gauge_theory}
The gauge group is elementary abelian, not cyclic. Gauge transformations act by
$\red{a^1}\to\red{a^1}+d^0\lambda^0$ with
$\lambda^0\in C^0(\L,\Fc^{\rd})$, so the group of gauge transformations of one
colour is $\bigoplus_{v}\Fc^{\rd}(v)\cong\Z_p^{\,u\sum_v m(v)}$, since the
additive group of $\Fq$ is $(\Z_p)^{u}$ and not $\Z_{p^{u}}$. The theory is
therefore a type-III twisted $\Z_p$ gauge theory with many flavours; it is a
$\Z_q$ gauge theory only in the prime case $u=1$.

What the field structure supplies is not the gauge group but the \emph{twist}.
Fixing an $\FF_p$-basis $\{\theta_1,\ldots,\theta_u\}$ of $\Fq$ and writing
$\red{a^1}=\sum_i \red{a^1_i}\theta_i$ and similarly for the other colours,
\begin{align}\label{eq:T_ijk}
\mathrm{Tr}\big(\red{a}\,\blue{b}\,\green{c}\big) &=
\sum_{i,j,k} T_{ijk}\ \red{a_i}\,\blue{b_j}\,\green{c_k} , \cr
T_{ijk} &= \mathrm{Tr}\big(\theta_i\theta_j\theta_k\big) \in \FF_p ,
\end{align}
a totally symmetric $\FF_p$-valued three-tensor --- the structure constants of
$\Fq$ regarded as a commutative Frobenius algebra over $\FF_p$, the trace form
being the Frobenius pairing. The twisted term of Eq.~\eqref{eq:action_Fq} is thus
a specific superposition of $u^{3}$ ordinary type-III $\Z_p^{3}$
Dijkgraaf--Witten cocycles, with the field's multiplication table as
coefficients. It is non-trivial: taking $\theta_1=1$ gives
$T_{1jk}=\mathrm{Tr}(\theta_j\theta_k)$, which is the non-degenerate trace form.

Two consequences are worth recording. First, the classical input genuinely needs
a ring: flatness and Gauss's law involve only the additive group, whereas the
multiplication property of Sec.~\ref{sec:condition_cohomology} is a statement
about Schur products and therefore about the field's multiplication. As an
$\FF_p$-sheaf, an $\Fq$-sheaf code is merely an $\FF_p$-sheaf code with stalks
$u$ times larger; the $\Fq$-module structure is extra structure on top, invisible
to the gauge group, and it is exactly what makes the twist available. Second,
for $p=2$ the physical systems remain \emph{qubits}: each $\Fq$-qudit is a block
of $u$ qubits and, by Proposition~\ref{prop:cz_descent_app} of
Appendix~\ref{app:alphabet}, the CZ dressings descend to depth-one qubit CZ
circuits, so no qudit hardware is implied
(Remark~\ref{rem:qubit_realization}).

Finally, if one did want a genuinely cyclic $\Z_{p^{u}}$ gauge theory, one would
work over the ring $\Z/p^{u}\Z$ rather than over a field. The cochain complexes,
cup products and twisted stabilizer construction are all well defined over any
commutative ring; what fails is the classical coding theory, since the
algebraic-geometry codes of Sec.~\ref{sec:high_yield} require a field.
\end{remark}

\begin{remark}[Which $q$ makes the twist non-trivial]\label{rem:nontrivial_q}
The classification of the invariant in Sec.~\ref{sec:condition_cohomology} is
unchanged: by Lemma~\ref{lemma:factorization} the invariant factorizes over the
tensor factors, and by Lemma~\ref{lemma:necessary_condition} it is non-zero only
if the tensor sheaf admits a non-zero $1$-cycle on each spatial factor, i.e.\ only
if the local cycle condition of Lemma~\ref{lemma:local_cycle} holds. That
condition, and its reformulation as the multiplication property
\eqref{eq:multiplication_property_local}, were already stated over $\Fq$. What
changes with $q$ is only how easy the condition is to satisfy \emph{non-trivially},
and this is precisely where large $q$ helps.
\end{remark}

\subsection{The untwisted qudit sheaf code}
\label{sec:untwisted_Fq}

The untwisted code is the CSS code over $\Fq$ attached to
$C^0(\L,\Fc)\xrightarrow{d^0}C^1(\L,\Fc)\xrightarrow{d^1}C^2(\L,\Fc)$, with
$H_X^{T}=d^0$ and $H_Z=d^1$ now matrices over $\Fq$. It is convenient to label
the stabilizer generators by cochains rather than by cells. For
$\lambda^0\in C^0(\L,\Fc)$ and $\beta\in\Fq$, $f\in\L(2)$, set
\begin{align}\label{eq:AB_Fq}
A(\lambda^0) &= \prod_{e,s} X_{e,s}
\Big(\big[(d^0\lambda^0)(e)\big]_s\Big) , \cr
B_{f}(\beta) &= \omega^{\mathrm{Tr}\left[\beta\,
(d^1\hat{a}^1)(f)\right]} ,
\end{align}
the generators of Eqs.~\eqref{eq:A_stabilizer_untwisted}
and~\eqref{eq:B_stabilizer_untwisted} being recovered as
$A_{v,s}(\alpha)=A(\alpha\,\bar{v}_s)$ and $B_f(\beta)$; as before the $B$'s carry
no stalk index because $\Fc(f)$ is one-dimensional. The map
$\lambda^0\mapsto A(\lambda^0)$ is a homomorphism from $C^0(\L,\Fc)$ into the
Pauli group, since $X(\alpha)X(\alpha')=X(\alpha+\alpha')$ and $d^0$ is linear. By
Eq.~\eqref{eq:qudit_commutation}, $A(\lambda^0)$ implements the shift
\begin{align}\label{eq:A_conjugation_Fq}
&A(\lambda^0)\, Z_{e,s}(\beta)\, A(\lambda^0)^{-1} \cr
&\qquad = \omega^{-\mathrm{Tr}\left(\beta\,[(d^0\lambda^0)(e)]_s\right)}
Z_{e,s}(\beta) , \cr
&\qquad\qquad
\hat{a}^1 \ \longrightarrow\ \hat{a}^1 + d^0\lambda^0 ,
\end{align}
the $\Fq$ form of Eq.~\eqref{eq:A_conjugation}, and
$[A(\lambda^0),B_f(\beta)]=\I$ reduces as in
Eq.~\eqref{eq:commutation_d2} to $d^1d^0=0$. The logical operators are labelled as
in Sec.~\ref{sec:logical_operators_sheaf}, by $H^1(\L,\Fc)$ for the $X$-type and
$H_1(\L,\Fc)$ for the $Z$-type, the two being paired non-degenerately by
Proposition~\ref{prop:nondegen-sheaf}, which holds over any field. Nothing in
Sec.~\ref{sec:untwisted_code} used $q=2$.

\subsection{The twisted Clifford stabilizer code over $\Fq$}
\label{sec:Clifford_Fq}

The twisted generators are obtained, as in Eq.~\eqref{eq:stabilizer_summary}, by
dressing each $X$-type generator with the diagonal operator built from the twisted
term of Eq.~\eqref{eq:action_Fq}. Writing the dressing for colour \rd\ as
\be\label{eq:dressing_Fq}
U^{\rd}(\lambda^0) = \omega^{\,\mathrm{Tr}\int_{\eta_2}
\lambda^0 \cup \blue{\hat{b}^1} \cup \green{\hat{c}^1}} ,
\ee
which by Eq.~\eqref{eq:qudit_CZ} is a depth-one circuit of two-qudit
$\text{CZ}^{\bl,\gr}(\lambda)$ gates, the twisted stabilizer generators are
\be\label{eq:twisted_generators_Fq}
\tilde{A}^{\rd}(\lambda^0) = A^{\rd}(\lambda^0)\, U^{\rd}(\lambda^0) \ \text{and} \ 
\qquad B^{\rd}_{f}(\beta) ,
\ee
together with their colour permutations, and
\begin{align}\label{eq:Clifford_group_Fq}
\tilde{\SS} = \big\langle\, & \tilde{A}^{i}(\lambda^0),\ B^{i}_{f}(\beta) \cr
& :\ i\in\{\rd,\bl,\gr\},\ \lambda^0\in C^0(\L,\Fc^{i}),\ \beta\in\Fq
\,\big\rangle.   \cr
\end{align}
Since $A^{\rd}$ acts on the \rd\ qudits and $U^{\rd}$ on the \bl\ and \gr\ ones,
the two factors commute and
$\tilde{A}^{\rd}(\lambda^0)\tilde{A}^{\rd}(\lambda'^0)=
\tilde{A}^{\rd}(\lambda^0+\lambda'^0)$: same-colour twisted generators commute
exactly and form a group isomorphic to $C^0(\L,\Fc^{\rd})$. As over $\FF_2$,
$\tilde{\SS}$ is a subgroup of the $\Fq$ Clifford group and not of the Pauli
group.

\begin{lemma}[Commutation over $\Fq$]\label{lemma:commutation_Fq}
For $\lambda^0\in C^0(\L,\Fc^{\rd})$ and $\kappa^0\in C^0(\L,\Fc^{\bl})$,
\be\label{eq:AA_commutator_Fq}
\big[\tilde{A}^{\rd}(\lambda^0),\, \tilde{A}^{\bl}(\kappa^0)\big]
= \omega^{-\mathrm{Tr}\int_{\eta_2}
\left(\lambda^0\cup\kappa^0\right)\cup \green{d\hat{c}^1}} ,
\ee
which is a product of $B^{\gr}_{f}$ operators and therefore equals $\I$ on the
zero-flux subspace. All other pairs of generators of $\tilde{\SS}$ commute
exactly, so Lemma~\ref{lemma:stabilizer_commutation} holds verbatim over $\Fq$.
\end{lemma}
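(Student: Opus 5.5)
The plan is to follow the proof of Lemma~\ref{lemma:stabilizer_commutation} step by step, replacing $\pm1$ signs by powers of $\omega$. First I would dispose of the pairs that commute exactly. Two $B$'s are diagonal, so they commute. For $B^{i}_f(\beta)$ with $\tilde A^{i}(\lambda^0)$ of the same colour, the dressing is diagonal and acts on other colours, so the commutator reduces to $[A(\lambda^0),B_f(\beta)]$. By Eq.~\eqref{eq:A_conjugation_Fq} this equals $\omega^{\pm\mathrm{Tr}(\beta\,(d^1d^0\lambda^0)(f))}=1$. For different colours, $B^{i}_f$ and $\tilde A^{i'}$ act through a diagonal dressing and an $X$-part on disjoint qudits, so they commute. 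Same-colour twisted generators multiply as a group, as already noted, hence commute.

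The main computation is the commutator of a red and a blue generator. I would use the $\Fq$ version of the identity~\eqref{eq:commutator_identity_sheaf}. If $X$ implements the shift $\hat b\to\hat b+\delta$, then for any $\Fq$-valued function $f$ one has $X\,\omega^{\mathrm{Tr}f(\hat b)}X^{-1}=\omega^{\mathrm{Tr}f(\hat b-\delta)}$, with the sign fixed by Eq.~\eqref{eq:A_conjugation_Fq}. I would write $\tilde A^{\rd}=A^{\rd}U^{\rd}$ and $\tilde A^{\bl}=A^{\bl}U^{\bl}$. Here $U^{\bl}(\kappa^0)=\omega^{\mathrm{Tr}\int_{\eta_2}\hat a^1\cup\kappa^0\cup\hat c^1}$ depends on $\hat a$, and $U^{\rd}(\lambda^0)$ depends on $\hat b$. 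The $A$'s commute with each other and $\hat c$ is untouched, so the group commutator collects two phases. The first comes from $A^{\rd}$ shifting $\hat a$ inside $U^{\bl}$, giving $\int_{\eta_2}d\lambda^0\cup\kappa^0\cup\hat c^1$. The second comes from $A^{\bl}$ shifting $\hat b$ inside $U^{\rd}$, giving $\int_{\eta_2}\lambda^0\cup d\kappa^0\cup\hat c^1$. Both appear with the same overall sign $-1$, which is the "global minus sign" flagged in Remark~\ref{rem:signs}. Multilinearity of the cup product over $\Fq$ and $\FF_p$-linearity of $\mathrm{Tr}$ let these phases combine additively in the exponent.

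Next I would apply the signed Leibniz rule~\eqref{eq:leibniz_signed}. With a $0$-cochain in the leftmost slot it gives $d(\lambda^0\cup\kappa^0)=d\lambda^0\cup\kappa^0+\lambda^0\cup d\kappa^0$, so the sum collapses to $\int_{\eta_2}d(\lambda^0\cup\kappa^0)\cup\hat c^1$. Applying Leibniz once more, now with a $0$-cochain of degree $0$ in front, gives $d(\lambda^0\cup\kappa^0\cup\hat c^1)=d(\lambda^0\cup\kappa^0)\cup\hat c^1+(\lambda^0\cup\kappa^0)\cup d\hat c^1$. The exact term pairs to zero against $\eta_2$ because $\partial\eta_2=0$. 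This yields Eq.~\eqref{eq:AA_commutator_Fq}.

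The remaining step is to identify the phase with $B^{\gr}$ operators. Because $\Fc^{\gr}(f)$ is one-dimensional, the pairing factorizes face by face as $\sum_f\Theta(f)\,(d\hat c^1)(f)$ with coefficients $\Theta(f)\in\Fq$. The commutator is therefore $\prod_f B^{\gr}_f(-\Theta(f))$, which is the identity on the zero-flux subspace. The other colour pairs follow by permutation, with the cup slots shifted.

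The step I expect to need care is the sign bookkeeping for odd $p$: making sure that both shift contributions carry the same sign, so that they combine into a coboundary rather than a difference. For this I would rely on the bit-to-check orientation convention of Remark~\ref{rem:signs}, which makes every relevant Leibniz sign equal to $+1$.
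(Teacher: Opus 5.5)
Your route (conjugation phases, Leibniz twice, $\partial\eta_2=0$, face-by-face factorization into $B^{\gr}_f$) is the paper's, and your treatment of the exactly commuting pairs is fine. The gap is your claim that the two shift contributions enter with the same sign. For the dressing you wrote down this is false, and for odd $p$ it is exactly the step on which the lemma rests. Take $P=\tilde A^{\rd}(\lambda^0)=A^{\rd}U^{\rd}$ and $Q=\tilde A^{\bl}(\kappa^0)=A^{\bl}U^{\bl}$. Then
\[
\begin{aligned}
PQP^{-1}Q^{-1}&=\big[A^{\rd}U^{\bl}(A^{\rd})^{-1}(U^{\bl})^{-1}\big]\\
&\quad\times\big[(A^{\bl})^{-1}U^{\rd}A^{\bl}(U^{\rd})^{-1}\big],
\end{aligned}
\]
so $Q$'s dressing is evaluated at $\hat a^1-d\lambda^0$ but $P$'s at $\hat b^1+d\kappa^0$. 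The commutator is one conjugation phase times the \emph{inverse} of the other, not their product. With $U^{\bl}(\kappa^0)=\omega^{+\mathrm{Tr}\int_{\eta_2}\hat a^1\cup\kappa^0\cup\hat c^1}$ the exponent is therefore
\[
\begin{aligned}
&\mathrm{Tr}\int_{\eta_2}\big(\lambda^0\cup d\kappa^0-d\lambda^0\cup\kappa^0\big)\cup\hat c^1\\
&\quad=-\mathrm{Tr}\int_{\eta_2}(\lambda^0\cup\kappa^0)\cup d\hat c^1-2\,\mathrm{Tr}\int_{\eta_2}d\lambda^0\cup\kappa^0\cup\hat c^1 .
\end{aligned}
\]
For odd $p$ the last term is not a function of $d\hat c^1$. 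On a pure-gauge configuration $\hat c^1=d\mu^0$ it equals $2\,\mathrm{Tr}\int_{\eta_2}d\lambda^0\cup d\kappa^0\cup\mu^0$. This is the same trilinear form that dresses the \gr\ term of the SPT Hamiltonian~\eqref{eq:SPT_Hamiltonian_final}, and it is non-zero for suitable local choices. So the commutator is not $\I$ on the zero-flux subspace. The orientation convention of Remark~\ref{rem:signs} cannot fix this, because the relative sign comes from the structure of the group commutator, not from an incidence or Leibniz sign.

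What is missing is the sign of the \bl\ dressing. Gauging actually produces $U^{\bl}(\kappa^0)=\omega^{-\mathrm{Tr}\int_{\eta_2}\hat a^1\cup\kappa^0\cup\hat c^1}$. Conjugating $\tilde X^{\bl}$ by the entangler gives the phase $\omega^{\mathrm{Tr}\int_{\eta_2}\hat\lambda^0_{\rd}\cup d\kappa^0\cup d\hat\lambda^0_{\gr}}$. Before minimal coupling this must be rewritten, using Leibniz and $\partial\eta_2=0$, as $\omega^{-\mathrm{Tr}\int_{\eta_2}d\hat\lambda^0_{\rd}\cup\kappa^0\cup d\hat\lambda^0_{\gr}}$, whereas the \rd\ and \gr\ dressings keep the $+$ sign. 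So the ``colour permutations'' of Eqs.~\eqref{eq:dressing_Fq}--\eqref{eq:twisted_generators_Fq} must be read with this sign when $p$ is odd. With it, both contributions enter with $+$ and sum to $\mathrm{Tr}\int_{\eta_2}d(\lambda^0\cup\kappa^0)\cup\hat c^1$. The second Leibniz step then supplies the global minus sign of Eq.~\eqref{eq:AA_commutator_Fq}; this is exactly the sign assignment in the paper's proof.

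Your bookkeeping is also inconsistent with the target. If both phases carried $-1$, the second Leibniz step would give $\omega^{+\mathrm{Tr}\int_{\eta_2}(\lambda^0\cup\kappa^0)\cup d\hat c^1}$. The same care is needed in your ``by permutation'' step. With signs $(+,-,+)$ the \bl--\gr\ commutator is $\omega^{-\mathrm{Tr}\int_{\eta_2}d\hat a^1\cup(\kappa^0\cup\rho^0)}$, a product of $B^{\rd}_f$'s. With all signs $+$ it acquires the non-flux term $-2\,\mathrm{Tr}\int_{\eta_2}\hat a^1\cup d\kappa^0\cup\rho^0$. At $p=2$ none of this is visible, and your argument is complete there.
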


\begin{proof}
$A^{\rd}(\lambda^0)$ shifts $\red{\hat{a}^1}\to\red{\hat{a}^1}+d^0\lambda^0$ by
Eq.~\eqref{eq:A_conjugation_Fq} and leaves $\blue{\hat{b}^1},\green{\hat{c}^1}$
alone, so
$A^{\rd}(\lambda^0)\,U^{\bl}(\kappa^0)\,A^{\rd}(\lambda^0)^{-1}
= U^{\bl}(\kappa^0)\,\omega^{\mathrm{Tr}\int_{\eta_2}d^0\lambda^0\cup\kappa^0
\cup\green{\hat{c}^1}}$, and symmetrically for the other order. Multiplying the
two phases and using the Leibniz rule \eqref{eq:leibniz_signed} for the two
$0$-cochains, where the sign is $+1$,
\be\label{eq:leibniz_zero_cochains}
d^0\lambda^0\cup\kappa^0 + \lambda^0\cup d^0\kappa^0
= d\big(\lambda^0\cup\kappa^0\big) ,
\ee
gives the exponent $\mathrm{Tr}\int_{\eta_2}
d(\lambda^0\cup\kappa^0)\cup\green{\hat{c}^1}$. Applying
Eq.~\eqref{eq:leibniz_signed} once more to
$d[(\lambda^0\cup\kappa^0)\cup\green{\hat{c}^1}]$, again with $+1$ sign since
$\lambda^0\cup\kappa^0$ has degree $0$, and using
$\langle d\varpi,\eta_2\rangle=\langle\varpi,\partial\eta_2\rangle=0$, one obtains
Eq.~\eqref{eq:AA_commutator_Fq}; the surviving minus sign comes from moving the
second Leibniz term across the equality and is invisible at $q=2$. Since
$(\green{d\hat{c}^1})(f)\in\Fc^{\gr}(f)\cong\Fq$ is a scalar, the exponent
factorizes over $2$-cells exactly as in Eq.~\eqref{eq:Theta_def} and the result is
a product of $B^{\gr}_{f}(\beta)$'s.
\end{proof}

\begin{lemma}[Code-space constraints over $\Fq$]\label{lemma:constraints_Fq}
Let $\mu^0\in Z^0(\L,\Fc^{\rd})$ be a sheaf $0$-cocycle. Then
$\tilde{A}^{\rd}(\mu^0)=U^{\rd}(\mu^0)$, and hence on the twisted code space
$\tilde{\C}$
\be\label{eq:constraint_Fq}
\int_{\eta_2}\mu^0\cup\blue{\hat{b}^1}\cup\green{\hat{c}^1} = 0 \in \Fq ,
\ee
together with the two colour-permuted statements. Equivalently, the transversal
$\Fq$ CZ operators
\be\label{eq:CZ_Fq}
\widetilde{\text{CZ}}^{\rd,\bl}(\rho^0) \equiv
\omega^{\,\mathrm{Tr}\int_{\eta_2}\red{\hat{a}^1}\cup\blue{\hat{b}^1}\cup\rho^0} ,
\qquad \rho^0 \in Z^0\big(\L,\Fc^{\gr}\big) ,
\ee
act as the identity on $\tilde{\C}$.
\end{lemma}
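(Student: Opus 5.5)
The plan is to mirror the $\FF_2$ argument of Sec.~\ref{sec:code_space_constraint}, now using the cochain-labelled generators $\tilde{A}^{\rd}(\lambda^0)=A^{\rd}(\lambda^0)U^{\rd}(\lambda^0)$ of Eq.~\eqref{eq:twisted_generators_Fq}. First I will show $A^{\rd}(\mu^0)=\I$ whenever $d^0\mu^0=0$. By Eq.~\eqref{eq:AB_Fq}, $A^{\rd}(\mu^0)=\prod_{e,s}X_{e,s}([(d^0\mu^0)(e)]_s)=\prod X_{e,s}(0)=\I$. This is the $\Fq$ analogue of the redundancy relation \eqref{eq:X_redundancy_sheaf}. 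Hence $\tilde{A}^{\rd}(\mu^0)=U^{\rd}(\mu^0)$.

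Since $\tilde{A}^{\rd}(\mu^0)\in\tilde{\SS}$, it acts as $\I$ on $\tilde{\C}$. There are two ways to see this. One is to use the homomorphism property $\tilde{A}^{\rd}(\lambda^0)\tilde{A}^{\rd}(\lambda'^0)=\tilde{A}^{\rd}(\lambda^0+\lambda'^0)$ noted after Eq.~\eqref{eq:Clifford_group_Fq}, together with each generator fixing the code space. The other is simply the definition of the code space as the common $+1$ eigenspace of $\tilde{\SS}$.

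Next I will turn the operator identity into the stated constraint. Because $\mu^0\in Z^0(\L,\Fc^{\rd})$ is an $\Fq$-subspace, $\alpha\mu^0$ is also a cocycle for every $\alpha\in\Fq$. Then $U^{\rd}(\alpha\mu^0)=\omega^{\mathrm{Tr}(\alpha X)}=\I$ on $\tilde{\C}$, where $X=\int_{\eta_2}\mu^0\cup\hat{b}^1\cup\hat{c}^1$ is a diagonal $\Fq$-valued operator, and I use $\Fq$-linearity of the cup product and pairing. On each joint computational-basis sector appearing in a code state, this gives $\mathrm{Tr}(\alpha x)=0$ for all $\alpha$. Nondegeneracy of the trace form, Eq.~\eqref{eq:trace_delta}, then forces $x=0$. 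Working sector by sector is legitimate: $U^{\rd}$ is diagonal, so $U^{\rd}\ket{\psi}=\ket{\psi}$ forces $\omega^{\mathrm{Tr}(\alpha x)}=1$ on every basis component of $\ket{\psi}$. This is the only genuinely new point compared with $q=2$, where $\alpha=1$ suffices. The colour-permuted statements follow identically, using the slot positions $(1,0,1)$ and $(1,1,0)$. No Leibniz signs enter, since no coboundary of a cup product is taken.

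For the equivalent CZ formulation, take $\rho^0\in Z^0(\L,\Fc^{\gr})$. Then $\widetilde{\text{CZ}}^{\rd,\bl}(\rho^0)=U^{\gr}(\rho^0)=\tilde{A}^{\gr}(\rho^0)$ by the green version of the first step, so it is an element of $\tilde{\SS}$ and acts as $\I$ on $\tilde{\C}$. Conversely, the constraint with zero value gives $\omega^0=1$ directly.

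The main obstacle is an ordering subtlety. The cup $\rho^0$ sits in the last slot, whereas $U^{\gr}$ must be defined with the elementary cochain last, as in Eq.~\eqref{eq:stabilizer_summary}. I will therefore fix $U^{\gr}(\lambda^0)=\omega^{\mathrm{Tr}\int_{\eta_2}\hat{a}^1\cup\hat{b}^1\cup\lambda^0}$ to match, and note that the $\Fq$ multiplication is commutative, so the tensor-sheaf pairing is consistent.
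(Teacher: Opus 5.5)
Your proposal is correct and takes the same route as the paper's proof. The steps match: $A^{\rd}(\mu^0)=\I$ because $d^0\mu^0=0$, so $\tilde{A}^{\rd}(\mu^0)$ reduces to its dressing; running the argument for $\alpha\mu^0$ with every $\alpha\in\Fq$ and invoking non-degeneracy of the trace form then upgrades the $\FF_p$ phase condition to the $\Fq$-valued constraint. Your additional details are sound and only make explicit what the paper leaves implicit: the sector-by-sector justification via diagonality, and the identification $\widetilde{\text{CZ}}^{\rd,\bl}(\rho^0)=U^{\gr}(\rho^0)=\tilde{A}^{\gr}(\rho^0)$ with the label in the last cup slot.
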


\begin{proof}
The $X$-part of $\tilde{A}^{\rd}(\mu^0)$ is $A^{\rd}(\mu^0)$, which implements the
shift by $d^0\mu^0=0$ and is therefore $\I$, so the generator reduces to its
dressing. Every generator equals $+1$ on $\tilde{\C}$, so
$\omega^{\mathrm{Tr}[\,\cdot\,]}=1$ for the exponent of
Eq.~\eqref{eq:dressing_Fq}; since the same holds for $\mu^0$ replaced by
$\alpha\mu^0$ for every $\alpha\in\Fq$, the non-degeneracy of the trace form,
Eq.~\eqref{eq:trace_delta}, upgrades the statement from
$\mathrm{Tr}[\,\cdot\,]=0$ in $\FF_p$ to Eq.~\eqref{eq:constraint_Fq} in $\Fq$.
\end{proof}

Lemma~\ref{lemma:constraints_Fq} is the $\Fq$ form of
Eqs.~\eqref{eq:constraint_operator_sheaf}--\eqref{eq:CZ_identity_sheaf}, and the
step through $\alpha\mu^0$ is worth noting: over $\Fq$ the code-space constraint is
an equation in $\Fq$, not merely a $\FF_p$ phase condition, and it is the
non-degeneracy of the trace pairing that makes the two equivalent. The remaining
statements of Sec.~\ref{sec:Clifford_stabilizer_codes} transfer without change:
the electric logical operators are the sheaf Wilson operators
\begin{align}\label{eq:logical_Z_Fq}
\widetilde{Z}^{\rd}_{\eta_1}(\beta) &=
\omega^{\mathrm{Tr}\left(\beta\int_{\eta_1}\red{\hat{a}^1}\right)} \cr
&= \prod_{(e,s)} Z^{\rd}_{e,s}\big(\beta\,[\eta_1(e)]_s\big) ,
\qquad \eta_1\in Z_1(\L,\Fc^{\rd}), \cr
\end{align}
the contractible magnetic operators factorize into twisted stabilizers as in
Eq.~\eqref{eq:magnetic_contractible_sheaf}, and the non-contractible ones are
dressed with projectors as in Eq.~\eqref{eq:magnetic_projected_sheaf}, with
$\I+Z$ replaced by $\sum_{\beta\in\Fq}Z(\beta)$.

\subsection{Subcomplex symmetry and gauging over $\Fq$}
\label{sec:subcomplex_Fq}

The operator $\widetilde{\text{CZ}}^{\rd,\bl}(\rho^0)$ of Eq.~\eqref{eq:CZ_Fq} is
again a $0$-form subcomplex symmetry: its support is the capped cycle
$\eta_2^{\rho}=\eta_2\frown\rho^0$ of Eq.~\eqref{eq:cap_support}, the cap product
of Definition~\ref{def:cap-simplicial} being defined over any field, and the whole
of Sec.~\ref{sec:0form_subcomplex} goes through with $(-1)^{x}$ replaced by
$\omega^{\mathrm{Tr}(x)}$. Two points are specific to $\Fq$ and worth recording.

First, the symmetry group is now $\Fq$-linear rather than $\FF_2$-linear:
\begin{align}\label{eq:symmetry_group_Fq}
&\Big\{\widetilde{\text{CZ}}^{\rd,\bl}(\rho^0)\ :\
\rho^0\in Z^0(\L,\Fc^{\gr})\Big\} \cr
&\qquad \cong\; H^0\big(\L,\Fc^{\gr}\big) \;\cong\; \Fq^{\,k_{\gr}} ,
\end{align}
so each of the $k_{\gr}$ independent generators carries an $\Fq$-valued, rather
than binary, label; correspondingly the logical action of
Eq.~\eqref{eq:logical_CZ_action} becomes
\begin{align}\label{eq:logical_CZ_Fq}
\widetilde{\text{CZ}}^{\rd,\bl}(\rho^0) \;\widehat{=}\;&
\prod_{\red{\alpha},\blue{\beta}}
\lo{\text{CZ}}^{\rd,\bl}_{\red{\alpha},\blue{\beta}}
\Big(\Lambda^{\red{\alpha}\blue{\beta}}(\rho^0)\Big) , \cr
\Lambda^{\red{\alpha}\blue{\beta}}(\rho^0) =&\
\int_{\eta_2}\red{\alpha^1}\cup\blue{\beta^1}\cup\rho^0 \ \in\ \Fq ,
\end{align}
a product of $\Fq$ logical CZ gates with multipliers given by the invariant form.
Second, gauging proceeds as in Sec.~\ref{sec:gauging_untwisted}: promoting the
background to a dynamical $\green{c^1}$ and summing, the equation of motion
\eqref{eq:c_eom} becomes, by the first identity of Eq.~\eqref{eq:trace_delta},
\be\label{eq:c_eom_Fq}
\sum_{\green{l}\in\Fq}
\omega^{\,\mathrm{Tr}\left[\green{l}\int_{\tilde{\eta}_3}\red{a^1}\cup\blue{b^1}
\cup\green{\tilde{\rho}^{1}}\right]}
= q\,\delta\Big[\int_{\tilde{\eta}_3}\red{a^1}\cup\blue{b^1}
\cup\green{\tilde{\rho}^{1}} = 0\Big] ,
\ee
and inserting a \gr\ Wilson operator of charge $\green{w}\in\Fq$ shifts the
right-hand constraint to $\green{w}$. The measurement outcomes of the gauging
protocol of Sec.~\ref{sec:logical_operation} therefore take values in $\Fq$ rather
than in $\FF_2$, and the projectors in Eq.~\eqref{eq:output_logical_state} project
onto the corresponding $\Fq$ eigenspaces. Everything else in that protocol ---
including the fact that the $\gr$-type twisted stabilizers are simultaneously
measurable, Lemma~\ref{lemma:commutation_Fq} --- is unchanged.

% \begin{remark}[What is gained]\label{rem:why_Fq}
% No qualitatively new phenomenon appears at $q>2$: the $\Fq$ theory is the same
% Dijkgraaf--Witten twist with $\ZZ_2$ replaced by $\Z_p$ throughout. The gain is
% that the \emph{classical} input is now allowed to be an algebraic code over a
% large alphabet. The local cycle condition
% \eqref{eq:multiplication_property_local} asks for local codes whose Schur product
% stays inside a prescribed code, and over $\FF_2$ the only easy solutions are the
% degenerate ones --- the repetition/single-parity-check pair used in
% Construction~\ref{con:2d-hgp}, which costs a factor of $\sqrt{n}$ in the number of
% available symmetry generators. Over $\Fq$ with $q$ a large constant, Reed--Solomon
% and algebraic-geometry codes satisfy a multiplication property with both factors
% of large dimension and distance, and this is what we exploit next.
% \end{remark}

%======================================================================
\section{Condition for non-trivial cohomology invariants in hypergraph-product constructions}
\label{sec:condition_cohomology}
%======================================================================

The twisted theory of Sec.~\ref{sec:non-Abelian_LDPC} is non-trivial only if the
cohomology invariant appearing in the path integral of
Eq.~\eqref{eq:path_inegral_cup} actually evaluates to a non-zero value for some
choice of cocycles. In this section we determine a \emph{necessary} condition for
this to happen when the spatial complex is a hypergraph product of classical sheaf
codes. The answer turns out to be surprisingly concrete. The spacetime invariant
factorizes, by the K\"unneth theorem, into a product of contributions from the two
spatial tensor factors and one from the temporal factor. The temporal contribution
is automatically non-trivial, so the whole condition is carried by the spatial
factors, and within each of them the triple cup product must be paired against a
non-trivial $1$-cycle of the \emph{tensor sheaf}. Requiring such a cycle to exist
gives a purely local linear-algebraic constraint on the local codes, which we call
the \emph{local cycle condition}. Specializing the cycle to the fundamental class
turns this condition into the \emph{multiplication property} of the local codes
with respect to the Schur product.

Two conventions are used throughout this section. First, we work over a general
finite field $\Fq$ rather than only $\FF_2$; all statements below hold
verbatim for any $q$, and $q=2$ recovers the qubit case of the previous sections.
Since the invariant is $\Fq$-valued, the phase in
Eq.~\eqref{eq:path_inegral_cup} is written with a non-degenerate additive
character $\chi:\Fq \to \CC^{\times}$, e.g.\ $\chi(x)=\omega_p^{\,\mathrm{tr}(x)}$
with $\omega_p=e^{2\pi i/p}$ and $\mathrm{tr}$ the field trace for $q=p^{s}$; the
statement ``the invariant is non-trivial'' always means that the $\Fq$-valued
pairing itself is non-zero. Second, following the convention of
Ref.~\cite{ZhuKobayashiHsin2026} we use a
tilde to distinguish \emph{spacetime} (co)cycles from \emph{spatial} ones:
$\red{\tilde{\alpha}^1}$,
$\blue{\tilde{\beta}^1}$, $\green{\tilde{\gamma}^1}$ are spacetime cocycles, while
$\red{\alpha^1}$, $\blue{\beta^1}$, $\green{\gamma^0}$ denote their spatial parts.

% Third, we label edges directly by $e$ rather than by their endpoints $e$, so
% that the restriction map of Eq.~\eqref{eq:restriction_map} is written
% \be\label{eq:restriction_concise}
% \bar{\Fc}_{v \shortleftarrow e} = \lv_e ,
% \ee
% the $e^\text{th}$ column of the local parity-check matrix $\mathsf{H}_v$ at the
% vertex $v$. This is more compact and is the notation we use for the rest of the
% section.

\subsection{Setup: spatial hypergraph product and the temporal factor}
\label{sec:triple_HGP}

The spatial complex is the $2$D square complex as the Cartesian product of \emph{two} graphs, as in Sec.~\ref{sec:classical}
\be\label{eq:spatial_HGP}
\L = G_x \times G_y , \qquad G_\nu = (V_\nu, E_\nu) \quad (\nu = x,y) ,
\ee
and the spacetime complex adjoins one further  factor, the temporal complex
$S^1_t$, which is the chain complex of a repetition code, i.e.\ a $1$D line with
time slices on $0$-cells and time steps on $1$-cells,
\be\label{eq:spacetime_complex}
\tilde{\L} = \L \times S^1_t = G_x \times G_y \times S^1_t .
\ee
Thus $\mathsf{d}=2$, $D=3$, in agreement with the all-degrees-one case of
Sec.~\ref{sec:sheaf_SPT}. Each colour $i \in \{\rd,\bl,\gr\}$ carries its own
system of local coefficients on each spatial factor, denoted $\Fc^{i}_{\nu}$, so
that the cochain complex of colour $i$ is
\be\label{eq:colour_complex}
\mathcal{A}^{i} = C^\bullet\big(G_x,\Fc^{i}_x\big) \otimes
C^\bullet\big(G_y,\Fc^{i}_y\big) \otimes C^\bullet\big(S^1_t,\Fq\big) ,
\ee
the temporal factor carrying the constant sheaf $\Fq$ since the repetition code
has trivial local code. On each spatial factor the sheaf has the structure of
Eq.~\eqref{eq:local_system}, generalized to $\Fq$,
\be\label{eq:stalk_Fq}
\Fc^{i}_{\nu}(\sigma) =
\begin{cases}
\Fq^{\,m^{i}_{\nu}(v)} , & \sigma = v \in V_\nu \quad (\text{check}), \\[2pt]
\Fq , & \sigma = e \in E_\nu \quad (\text{bit}) ,
\end{cases}
\ee
with restriction maps $\bar{\Fc}^{i}_{\nu,\,v \shortleftarrow e} = \lv^{i}_{e}$
given by the columns of the local parity-check matrix $\mathsf{H}^{i}_{v}$, whose
kernel is the local code $\cC^{i}_{v} = \ker \mathsf{H}^{i}_{v}$. We suppress the
factor label $\nu$ on $\lv$ and $\mathsf{H}$ whenever a single factor is under
discussion. Throughout, $\nu\in\{x,y\}$ denotes the spatial factor index, while the
subscripted symbols $i_1,i_2,i_3$ and $j_1,j_2,j_3$ introduced below are
K\"unneth basis labels of the $x$- and $y$-factor respectively.

Recall from Proposition~\ref{Prop:sheaf_homology} that the classical sheaf code on
a graph is the first sheaf homology, and note that the graph chain complex has only
two terms, $C_1(G,\Fc) \xrightarrow{\partial_1} C_0(G,\Fc)$. Consequently
\be\label{eq:H1_is_Z1}
H_1(G,\Fc) = \ker \partial_1 = Z_1(G,\Fc) ,
\ee
i.e.\ on a graph every non-zero $1$-cycle is automatically non-trivial in
homology, there being no $2$-chains to bound it. This will let us replace
``non-trivial in $H_1$'' by ``non-zero'' below.

\subsection{Factorization of the invariant by the K\"unneth theorem}
\label{sec:kunneth_factorization}

The three spacetime gauge fields entering Eq.~\eqref{eq:path_inegral_cup} are
$1$-cocycles, so their triple cup product is a $3$-cochain, matching
$D=3$. By the K\"unneth theorem each spacetime $1$-cocycle decomposes into
components labelled by how the total degree $1$ is distributed among the three
tensor factors of Eq.~\eqref{eq:spacetime_complex}. The only assignment producing
a non-vanishing triple product is the one in which each colour carries its
degree-$1$ piece on a \emph{different} factor, so that every factor receives
exactly one degree-$1$ and two degree-$0$ cochains. We therefore take
\begin{align}\label{eq:cocycle_decomposition}
\red{\tilde{\alpha}^{1}_{\vc{i_1}}} &= \red{\as^1_{i_1}} \otimes
\red{\bs^0_{j_1}} \otimes \ts^0
\;\equiv\; \red{\alpha^{1}_{\vc{i_1}}} \otimes \ts^0 , \cr
\blue{\tilde{\beta}^{1}_{\vc{i_2}}} &= \blue{\bs^0_{i_2}} \otimes
\blue{\as^1_{j_2}} \otimes \ts^0
\;\equiv\; \blue{\beta^{1}_{\vc{i_2}}} \otimes \ts^0 , \cr
\green{\tilde{\gamma}^{1}_{\vc{i_3}}} &= \green{\bs^0_{i_3}} \otimes
\green{\bs^0_{j_3}} \otimes {\ts}^1
\;\equiv\; \green{\gamma^{0}_{\vc{i_3}}} \otimes {\ts}^1 ,
\end{align}
where we have fixed a K\"unneth basis and labelled its elements as follows. On the
$x$-factor, $\as^1_{i} \in Z^1(G_x,\Fc^{i}_x)$ and
$\bs^0_{i} \in Z^0(G_x,\Fc^{i}_x)$ are basis degree-$1$ and degree-$0$ cocycles
carrying the scalar labels $i_1$, $i_2$, $i_3$ for the \rd, \bl \ and \gr \ copy
respectively; on the $y$-factor, $\as^1_{j} \in Z^1(G_y,\Fc^{i}_y)$ and
$\bs^0_{j} \in Z^0(G_y,\Fc^{i}_y)$ are the corresponding basis cocycles, carrying
the scalar labels $j_1$, $j_2$, $j_3$. No prime is needed on the cocycles: the
index family alone already records the factor, an $i$-family subscript meaning the
$x$-factor and a $j$-family subscript the $y$-factor, while the colour records
which copy the cocycle belongs to. The composite spacetime basis cocycles then
carry the \emph{multi-indices}
\be\label{eq:multi_index}
\vc{i_1} = (i_1, j_1) , \qquad
\vc{i_2} = (i_2, j_2) , \qquad
\vc{i_3} = (i_3, j_3) ,
\ee
written in boldface to distinguish them from the scalar labels of the individual
factors; no temporal label is needed since $H^0(S^1_t;\Fq)$ and $H^1(S^1_t;\Fq)$ are
one-dimensional, with generators $\ts^0$ and ${\ts}^1$. The spatial parts are
therefore
\begin{align}\label{eq:spatial_parts}
\red{\alpha^{1}_{\vc{i_1}}} &= \red{\as^1_{i_1}} \otimes \red{\bs^0_{j_1}} ,
\quad
\blue{\beta^{1}_{\vc{i_2}}} = \blue{\bs^0_{i_2}} \otimes \blue{\as^1_{j_2}} ,
\cr
\green{\gamma^{0}_{\vc{i_3}}} &= \green{\bs^0_{i_3}} \otimes
\green{\bs^0_{j_3}} ,
\end{align}
of spatial degrees $1$, $1$ and $0$ respectively, adding up to
$\mathsf{d}=2$. It is worth noting that $\green{\gamma^0_{\vc{i_3}}}$ is exactly
the type of object that labels the $0$-form subcomplex symmetry of
Sec.~\ref{sec:transversal_CZ}: the green $0$-cocycle $\green{\gamma^0_{\vc{i_3}}}$ appearing in
$\widetilde{\text{CZ}}^{\rd,\bl}(\green{\gamma^0_{\vc{i_3}}})$ is the spatial part of the green
spacetime cocycle $\green{\tilde \gamma^0_{\vc{i_3}}}$, the temporal direction having been absorbed into ${\ts}^1$.

Correspondingly the spacetime $3$-cycle is the product of the spatial $2$-cycle of
Sec.~\ref{sec:sheaf_SPT} with the temporal $1$-cycle, and the spatial $2$-cycle
itself factorizes by K\"unneth into a pair of tensor-sheaf $1$-cycles, one per
spatial factor,
\begin{align}\label{eq:eta_product}
\tilde{\eta}_3 &= \eta_2 \otimes S^1_t , \qquad
\eta_2 = \zeta \otimes \zeta' , \cr
\zeta &\in H_1\big(G_x,\ \Fc^{\rd}_x \otimes \Fc^{\bl}_x \otimes
\Fc^{\gr}_x\big) , \cr
\zeta' &\in H_1\big(G_y,\ \Fc^{\rd}_y \otimes \Fc^{\bl}_y \otimes
\Fc^{\gr}_y\big) ,
\end{align}
where, having fixed a cycle basis once and for all, we suppress the basis label on
the cycles and simply write $\zeta$ for the chosen $1$-cycle on the $x$-factor and
$\zeta'$ for the one on the $y$-factor; here the prime distinguish the two
factors. Note that both are
$1$-cycles of the \emph{tensor sheaf} on the corresponding graph, since they must
pair with cochains valued in
$\Fc^{\rd}_\nu\otimes\Fc^{\bl}_\nu\otimes\Fc^{\gr}_\nu$.

We first dispose of the temporal factor.

\begin{lemma}[The temporal factor is automatically non-trivial]
\label{lemma:temporal_factor}
With $\ts^0$ and ${\ts}^1$ the generators of $H^0(S^1_t;\Fq)$ and
$H^1(S^1_t;\Fq)$,
\be\label{eq:temporal_triple}
\ts^0 \cup \ts^0 \cup {\ts}^1 \neq 0 \in H^1(S^1_t;\Fq) ,
\qquad
\int_{S^1_t} \ts^0 \cup \ts^0 \cup {\ts}^1 = 1 .
\ee
\end{lemma}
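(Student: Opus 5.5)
This is a direct computation on a cellulated circle with the constant sheaf $\Fq$, so all co-restriction maps are identities. The plan is as follows.

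First, I fix a cellulation of $S^1_t$ with vertices $t_0,\dots,t_{L-1}$ and edges $[t_l,t_{l+1}]$ (indices mod $L$), oriented forward in time. Since $S^1_t$ is connected, $Z^0(S^1_t;\Fq)=H^0$ is spanned by the constant cochain $\ts^0\equiv 1$ on every vertex. Because there are no $2$-cells, $H^1=C^1/\mathrm{im}\,d^0$. This group is one-dimensional: $d^0$ has rank $L-1$, with kernel exactly the constants. A representative $\ts^1$ is the indicator cochain of a single edge, say $\bar e_0$ on $e_0=[t_0,t_1]$. The fundamental $1$-cycle $[S^1_t]=\sum_l e_l$ is closed, since each vertex is hit once as source and once as target, and the signs cancel. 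Its pairing with $\ts^1$ is $\langle \bar e_0,[S^1_t]\rangle=1$, which shows $\ts^1$ is not a coboundary.

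Second, I evaluate the triple product using Definition~\ref{def:cup-simplicial} with degrees $(0,0,1)$. On an edge $[v_0,v_1]$ this gives
\[
(\ts^0\cup\ts^0\cup\ts^1)([v_0v_1])=\ts^0(v_0)\,\ts^0(v_0)\,\ts^1([v_0v_1]).
\]
Since $\ts^0\equiv1$, this equals $\ts^1([v_0v_1])$. Hence $\ts^0\cup\ts^0\cup\ts^1=\ts^1$ at the cochain level; equivalently, $\ts^0$ is the unit of the cup product. Integrating over the fundamental cycle then gives $\int_{S^1_t}\ts^1=1$, and the nonvanishing in $H^1$ is inherited from $\ts^1$ via the first step. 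Proposition~\ref{prop:cup-properties}(c) ensures that the answer is independent of the chosen representatives. As a consistency check, replacing $\ts^1$ by $\ts^1+d\lambda^0$ changes the integral by $\langle\lambda^0,\partial[S^1_t]\rangle=0$.

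The only delicate point is the sign convention for odd $p$. I need to check that $\partial[S^1_t]=0$ with the orientations used, and that the normalization of $\ts^1$ is the one giving $+1$ rather than $-1$. Choosing the coherent orientation and $\ts^1=\bar e_0$ settles both. Everything else is routine.
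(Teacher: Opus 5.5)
Your proposal is correct and follows the same route as the paper: $\ts^0$ is the unit of the cup product for the constant sheaf on $S^1_t$, so $\ts^0\cup\ts^0\cup\ts^1=\ts^1$, and pairing $\ts^1$ with the fundamental $1$-cycle gives $1$. The difference is only in detail. You make explicit, through a concrete cellulation, the $(0,0,1)$ front-face formula and the normalization $\ts^1=\bar e_0$, what the paper states in one line.
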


\begin{proof}
The temporal complex carries the constant sheaf, so $\ts^0$ is the unit of the
cohomology ring, $\ts^0 \cup x = x$ for any $x$. Hence
$\ts^0\cup\ts^0\cup{\ts}^1 = {\ts}^1$, and
$\int_{S^1_t}{\ts}^1 = \langle {\ts}^1, S^1_t\rangle = 1$ by the duality between
the generator of $H^1(S^1_t;\Fq)$ and the fundamental $1$-cycle of $S^1_t$.
\end{proof}

\begin{lemma}[Factorization of the cup product]\label{lemma:factorization}
With the decomposition \eqref{eq:cocycle_decomposition} and the product cycle
\eqref{eq:eta_product}, the triple cup product factorizes over the tensor
factors,
\begin{align}\label{eq:cup_factorization}
\red{\tilde{\alpha}^1_{\vc{i_1}}} \cup \blue{\tilde{\beta}^1_{\vc{i_2}}} \cup
\green{\tilde{\gamma}^1_{\vc{i_3}}}
=& \big(\red{\as^1_{i_1}} \cup \blue{\bs^0_{i_2}} \cup
\green{\bs^0_{i_3}}\big) \otimes \big(\red{\bs^0_{j_1}} \cup \blue{\as^1_{j_2}} \cup
\green{\bs^0_{j_3}}\big) \cr
&\otimes \big(\ts^0 \cup \ts^0 \cup {\ts}^1\big) ,
\end{align}
the first two factors being $1$-cocycles of the tensor sheaf on the corresponding
graph, and the invariant factorizes into a product of $\Fq$-valued pairings,
\begin{align}\label{eq:invariant_factorization}
&\int_{\tilde{\eta}_3} \red{\tilde{\alpha}^1_{\vc{i_1}}} \cup
\blue{\tilde{\beta}^1_{\vc{i_2}}} \cup \green{\tilde{\gamma}^1_{\vc{i_3}}} \cr
=&\ \int_\zeta \red{\as^1_{i_1}} \cup \blue{\bs^0_{i_2}} \cup
\green{\bs^0_{i_3}} 
\cdot \int_{\zeta'} \red{\bs^0_{j_1}} \cup \blue{\as^1_{j_2}}
\cup \green{\bs^0_{j_3}}  \cr
&\cdot \int_{S^1_t} \ts^0 \cup \ts^0 \cup {\ts}^1, \cr
\end{align}
where the last factor $\int_{S^1_t} \ts^0 \cup \ts^0 \cup {\ts}^1=1$ using Lemma~\ref{lemma:temporal_factor}. Equivalently, the
spacetime invariant equals the spatial one,
\begin{align}\label{eq:spacetime_equals_spatial}
&\int_{\tilde{\eta}_3} \red{\tilde{\alpha}^1_{\vc{i_1}}} \cup
\blue{\tilde{\beta}^1_{\vc{i_2}}} \cup \green{\tilde{\gamma}^1_{\vc{i_3}}} 
= \int_{\eta_2} \red{\alpha^1_{\vc{i_1}}} \cup
\blue{\beta^1_{\vc{i_2}}} \cup \green{\gamma^0_{\vc{i_3}}} .
\end{align}
\end{lemma}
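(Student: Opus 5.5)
The plan is to prove the factorization at the cochain level and then pair with the product cycle. The basic tool is the product (Eilenberg--Zilber / cubical) cup product on a tensor product of cochain complexes. For cochains $x=x_1\otimes x_2\otimes x_3$ and $y=y_1\otimes y_2\otimes y_3$ on $G_x\times G_y\times S^1_t$, with local systems the external tensor products of the factor sheaves, the cubical cup product satisfies
\[
(x_1\otimes x_2\otimes x_3)\cup(y_1\otimes y_2\otimes y_3)=\pm\,(x_1\cup y_1)\otimes(x_2\cup y_2)\otimes(x_3\cup y_3).
\]
The sign is the Koszul sign $(-1)^{|x_2||y_1|+|x_3||y_1|+|x_3||y_2|}$.

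The first step is to verify this interchange law for a single product cell. On $\sigma=\sigma_x\times\sigma_y\times\sigma_t$, the front/back faces of a product cell, in the chosen cubical convention, are the products of the front/back faces of the factors. The co-restriction maps of the product sheaf are tensor products of factor co-restrictions. The value of each side of the identity is therefore a tensor product of the factor evaluations, which is the definition of the factor cup products.

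The second step applies this twice to the three cocycles of Eq.~\eqref{eq:cocycle_decomposition} and tracks the degree distribution.
\begin{itemize}
\item On the $x$-factor the degrees are $(1,0,0)$; on the $y$-factor, $(0,1,0)$; on $S^1_t$, $(0,0,1)$.
\item Each factor cup product therefore has degree $1$, which gives Eq.~\eqref{eq:cup_factorization}.
\item Over characteristic $2$ the signs are irrelevant.
\item For odd $p$, the Koszul sign from moving $\as^1_{i_1}$ of $x$-degree $1$ past the degree-$1$ $y$-piece of $\blue{\tilde\beta}$ gives an overall $\pm1$. This sign is fixed by the chosen ordering and can be absorbed into the orientation convention of $\tilde\eta_3$. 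I would note it rather than hide it.
\end{itemize}
Each factor is a cocycle by the Leibniz rule of Proposition~\ref{prop:cup-properties}, since its inputs are cocycles.

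The third step is the pairing. The canonical pairing of Definition~\ref{def:pairing-sheaf} on a product cell is the product of the stalk inner products, because the stalk of a product cell is a tensor product of factor stalks with the product basis. Hence, for $\tilde\eta_3=\zeta\otimes\zeta'\otimes[S^1_t]$,
\[
\langle c_x\otimes c_y\otimes c_t,\ \zeta\otimes\zeta'\otimes[S^1_t]\rangle=\langle c_x,\zeta\rangle\,\langle c_y,\zeta'\rangle\,\langle c_t,[S^1_t]\rangle.
\]
This is bilinearity of the inner product on tensor products, extended over sums of elementary tensors. It yields Eq.~\eqref{eq:invariant_factorization}, and Lemma~\ref{lemma:temporal_factor} sets the last factor to $1$.

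For Eq.~\eqref{eq:spacetime_equals_spatial}, run the same argument on the purely spatial cochains $\red{\alpha^1_{\vc{i_1}}}$, $\blue{\beta^1_{\vc{i_2}}}$, $\green{\gamma^0_{\vc{i_3}}}$ of Eq.~\eqref{eq:spatial_parts}, paired with $\eta_2=\zeta\otimes\zeta'$. It gives the same product of the two spatial integrals, so the two sides agree.

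The main obstacle is the first step. The paper defines the cup product only simplicially, and only mentions a "corresponding cubical cup product" for square complexes. I will fix that convention explicitly: front face = product of front faces, back face = product of back faces, with the vertex ordering induced from bit$\to$check orientations. I will then check that it is compatible with the sheaf co-restrictions, i.e.\ that co-restriction along a product of incidences factors as the tensor product of factor co-restrictions. Functoriality of each factor sheaf ensures this. I would state this compatibility as the convention that defines the cup product on $\tilde\L$.
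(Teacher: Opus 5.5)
Your proposal is correct and follows essentially the paper's route. You apply the interchange law for cup products on a tensor product of cochain complexes twice (the paper invokes associativity for this), factorize the canonical pairing against the product cycle $\tilde\eta_3=\zeta\otimes\zeta'\otimes S^1_t$, and regroup the two spatial factors. One correction to your sign remark: for odd $p$ your own Koszul exponent $|x_2||y_1|+|x_3||y_1|+|x_3||y_2|$ vanishes in both applications (multidegrees $(1,0,0),(0,1,0)$ and then $(1,1,0),(0,0,1)$), and $\as^1_{i_1}$, as the first factor of the left argument, is never moved past anything, so there is no sign to absorb into the orientation of $\tilde\eta_3$.
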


\begin{proof}
Equation \eqref{eq:cup_factorization} is the statement that the cup product on a
tensor product of cochain complexes is the tensor product of the cup products on
the factors, applied twice using associativity
(Proposition~\ref{prop:cup-properties}(b)); the degrees work out because each
factor receives $1+0+0=1$. Equation \eqref{eq:invariant_factorization} then
follows because the canonical pairing of Definition~\ref{def:pairing-sheaf} on a
tensor product of complexes is the product of the pairings on the factors, and
both the cochain \eqref{eq:cup_factorization} and the chain
\eqref{eq:eta_product} are product elements. Equation
\eqref{eq:spacetime_equals_spatial} follows by regrouping the first two pairings
using $\eta_2=\zeta\otimes\zeta'$ and
Eq.~\eqref{eq:spatial_parts}.
\end{proof}

\begin{lemma}[Necessary condition]\label{lemma:necessary_condition}
The invariant
$\int_{\tilde{\eta}_3}\red{\tilde{\alpha}^1_{\vc{i_1}}}\cup
\blue{\tilde{\beta}^1_{\vc{i_2}}}\cup\green{\tilde{\gamma}^1_{\vc{i_3}}}$ is
non-zero only if the tensor-sheaf $1$-cycles on \emph{both} spatial factors are
non-zero,
\begin{align}\label{eq:necessary_condition}
\zeta \;\in\;&  H_1\big(G_x,\ \Fc^{\rd}_x \otimes \Fc^{\bl}_x \otimes
\Fc^{\gr}_x\big) \setminus \{0\} , \cr
\zeta' \;\in\;&  H_1\big(G_y,\ \Fc^{\rd}_y \otimes \Fc^{\bl}_y \otimes
\Fc^{\gr}_y\big) \setminus \{0\} .
\end{align}
In particular, if the tensor sheaf on either spatial factor admits no non-zero
$1$-cycle, then the invariant vanishes identically and the twisted theory is
trivial. No condition arises from the temporal factor.
\end{lemma}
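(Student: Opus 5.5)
The plan is to read the statement off the factorization of Lemma~\ref{lemma:factorization}. By Eq.~\eqref{eq:invariant_factorization}, the spacetime invariant equals a product of three $\Fq$-valued pairings: one over $\zeta$ on $G_x$, one over $\zeta'$ on $G_y$, and the temporal pairing, which equals $1$ by Lemma~\ref{lemma:temporal_factor}. Since $\Fq$ is a field, the product is non-zero if and only if both spatial factors $\int_\zeta \red{\as^1_{i_1}}\cup\blue{\bs^0_{i_2}}\cup\green{\bs^0_{i_3}}$ and $\int_{\zeta'}\red{\bs^0_{j_1}}\cup\blue{\as^1_{j_2}}\cup\green{\bs^0_{j_3}}$ are non-zero. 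The temporal factor is never zero, so it imposes no condition; this gives the last sentence of the lemma.

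Next I observe that each spatial pairing is the canonical pairing of Definition~\ref{def:pairing-sheaf}, so it is linear in the chain argument. If $\zeta=0$, the first factor vanishes, and likewise for $\zeta'$. Taking the contrapositive, a non-zero invariant forces both $\zeta\neq0$ and $\zeta'\neq0$. Because the chain complex of a graph has only two terms, Eq.~\eqref{eq:H1_is_Z1} gives $H_1=Z_1$. Hence a non-zero cycle is the same as a non-zero homology class, and the condition can be stated in $H_1(\cdot)\setminus\{0\}$ exactly as in Eq.~\eqref{eq:necessary_condition}. The "in particular" clause follows directly: if $Z_1$ of the tensor sheaf vanishes on either factor, then every admissible $\zeta$ or $\zeta'$ is zero and the invariant vanishes for all choices.

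The main obstacle is making sure the reduction to product elements loses nothing. A general $\eta_2$ is a sum of products $\zeta\otimes\zeta'$, and general spacetime cocycles are sums of K\"unneth components. Cocycle components whose degree-$1$ pieces do not sit on three distinct factors should give zero, because some factor would receive total degree other than $1$, where the relevant pairing degree does not match. By trilinearity, the full invariant is then a sum of terms of the form in Eq.~\eqref{eq:invariant_factorization}. If on one factor every tensor-sheaf $1$-cycle is zero, then every term vanishes. I would verify this degree bookkeeping first, since the nonvanishing claim rests on it.
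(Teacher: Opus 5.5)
Your proposal is correct and follows the paper's proof, which likewise reads the lemma off the factorization \eqref{eq:invariant_factorization}. There the temporal factor equals $1$ by Lemma~\ref{lemma:temporal_factor}, a product in the field $\Fq$ is nonzero only if both spatial pairings are, and a pairing against a zero cycle vanishes; your use of $H_1=Z_1$ on a graph just makes the phrasing in $H_1\setminus\{0\}$ precise. Your extra paragraph on non-product cycles and other K\"unneth components is a sound strengthening the paper does not spell out (it rests on $Z_2(G_x\times G_y)=Z_1(G_x)\otimes Z_1(G_y)$ in top degree, with tensor-sheaf coefficients, together with cohomology invariance), but it is not needed for the lemma as stated, since the setup already fixes $\eta_2=\zeta\otimes\zeta'$ and the decomposition \eqref{eq:cocycle_decomposition}.
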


\begin{proof}
By Eq.~\eqref{eq:invariant_factorization} the invariant is a product of two
scalars in $\Fq$, hence non-zero only if each is non-zero; a pairing against a
vanishing cycle vanishes. The temporal factor contributes $1$ by
Lemma~\ref{lemma:temporal_factor}.
\end{proof}

This reduces the problem to a question internal to a \emph{single} classical sheaf
code: when does the tensor sheaf
$\Fc^{\rd}\otimes\Fc^{\bl}\otimes\Fc^{\gr}$ on a graph $G$ support a non-zero
$1$-cycle? We answer this next.

\subsection{The local cycle condition}
\label{sec:local_cycle_condition}

Fix one spatial factor and suppress the factor label $\nu$. A $1$-chain of the
tensor sheaf is then
\be\label{eq:tensor_1chain}
\zeta = \sum_{e \in E} z_e \cdot e
\ \in\ C_1\big(G, \Fc^{\rd}\otimes\Fc^{\bl}\otimes\Fc^{\gr}\big) ,
\qquad z_e \in \Fq ,
\ee
where we used that the stalk on an edge is
$\Fq \otimes \Fq \otimes \Fq \cong \Fq$ by Eq.~\eqref{eq:stalk_Fq}, so that a
single scalar $z_e$ suffices per edge. The vertex stalks, by contrast, are the
tensor products $\Fq^{m^{\rd}(v)} \otimes \Fq^{m^{\bl}(v)} \otimes
\Fq^{m^{\gr}(v)}$, and the restriction map of the tensor sheaf is the tensor
product of the three restriction maps.

\begin{lemma}[Local cycle condition]\label{lemma:local_cycle}
The $1$-chain \eqref{eq:tensor_1chain} is a cycle, $\partial_1 \zeta = 0$, if and
only if for every vertex $v \in V$
\be\label{eq:local_cycle_condition}
\ \sum_{e \in E(v)} z_e\; \lv^{\rd}_{e} \otimes \lv^{\bl}_{e}
\otimes \lv^{\gr}_{e} \;=\; 0 \ 
\ee
in $\Fq^{m^{\rd}(v)} \otimes \Fq^{m^{\bl}(v)} \otimes \Fq^{m^{\gr}(v)}$.
Equivalently, in components, for every vertex $v$ and every triple of indices
$(i,j,l)$,
\be\label{eq:local_cycle_components}
\sum_{e \in E(v)} z_e\,
\mathsf{H}^{\rd}_{i,e}\,\mathsf{H}^{\bl}_{j,e}\,\mathsf{H}^{\gr}_{l,e}
= \Big\langle \vc{z}\big|_{E(v)},\
\hv^{\rd}_{i} * \hv^{\bl}_{j} * \hv^{\gr}_{l} \Big\rangle = 0 ,
\ee
where $\hv^{\rd}_{i}$, $\hv^{\bl}_{j}$ and $\hv^{\gr}_{l}$ denote the
$i^\text{th}$, $j^\text{th}$ and $l^\text{th}$ rows of the local parity-check
matrices $\mathsf{H}^{\rd}_{v}$, $\mathsf{H}^{\bl}_{v}$, $\mathsf{H}^{\gr}_{v}$,
and $*$ is the Schur (component-wise) product.
\end{lemma}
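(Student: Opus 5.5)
The plan is to compute $\partial_1\zeta$ directly from the definition of the sheaf boundary, Eq.~\eqref{eq:boundary-sheaf}, applied to the tensor sheaf $\Fc^{\rd}\otimes\Fc^{\bl}\otimes\Fc^{\gr}$ on the graph $G$. This is the tensor-sheaf analogue of the computation in the proof of Proposition~\ref{Prop:sheaf_homology}.

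\textbf{Restriction maps of the tensor sheaf.} The first step is to record that, for an incidence $e\succ v$, the restriction map of the tensor sheaf is the tensor product of the three restriction maps,
\[
\bar\Fc^{\rd}_{v\shortleftarrow e}\otimes\bar\Fc^{\bl}_{v\shortleftarrow e}\otimes\bar\Fc^{\gr}_{v\shortleftarrow e}:\ \Fq\otimes\Fq\otimes\Fq\to\Fc^{\rd}(v)\otimes\Fc^{\bl}(v)\otimes\Fc^{\gr}(v).
\]
By Eq.~\eqref{eq:restriction_map_column}, each factor sends a scalar $x$ to $x\,\lv^{i}_e$. Under the identification $\Fq^{\otimes 3}\cong\Fq$, $1\otimes1\otimes1\mapsto 1$, the edge value $z_e$ is therefore sent to $z_e\,\lv^{\rd}_e\otimes\lv^{\bl}_e\otimes\lv^{\gr}_e$. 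I would state this as the definition of the tensor-product presheaf: stalkwise tensor products with tensor-product incidence maps. Functoriality is trivially inherited, and there are only two levels on a graph anyway.

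\textbf{Vertex-wise cycle condition.} Evaluating $\partial_1\zeta$ at $v$ gives $(\partial_1\zeta)(v)=\sum_{e\in E(v)} z_e\,\lv^{\rd}_e\otimes\lv^{\bl}_e\otimes\lv^{\gr}_e$. Since $C_0$ is the direct sum of the vertex stalks, $\partial_1\zeta=0$ holds iff every component vanishes, which is exactly Eq.~\eqref{eq:local_cycle_condition}. For odd $p$, Remark~\ref{rem:signs} (bipartite orientation) makes all incidence signs at a vertex equal, so a common sign factors out and does not affect vanishing.

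\textbf{Component form.} Expand in the standard product basis $\vec e_i\otimes\vec e_j\otimes\vec e_l$. The $(i,j,l)$ coefficient of $\lv^{\rd}_e\otimes\lv^{\bl}_e\otimes\lv^{\gr}_e$ is $\mathsf H^{\rd}_{i,e}\mathsf H^{\bl}_{j,e}\mathsf H^{\gr}_{l,e}$, because $\lv^{i}_e$ is the $e$-th column of $\mathsf H^{i}_v$. Summing against $z_e$ gives the left side of Eq.~\eqref{eq:local_cycle_components}. The identification with $\langle \vc z|_{E(v)},\hv^{\rd}_i*\hv^{\bl}_j*\hv^{\gr}_l\rangle$ is just the definition of the Schur product, since the $e$-th entry of the Schur product of the rows is that same triple product, and of the standard bilinear pairing on $\Fq^{E(v)}$.

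No step is genuinely hard. The only point needing care is the bookkeeping in the first step: making precise that the tensor sheaf's incidence maps are tensor products and that the edge stalk is canonically $\Fq$. I would settle this explicitly before computing.
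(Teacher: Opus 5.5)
Your proposal is correct and matches the paper's proof: evaluate the sheaf boundary at each vertex using the tensor-product restriction maps, each sending $z_e$ to $z_e\,\lv^{\rd}_e\otimes\lv^{\bl}_e\otimes\lv^{\gr}_e$. Then read off the $(i,j,l)$ component as the product of parity-check entries, which is a pairing with the Schur product of rows. Your added remark on incidence signs for odd $p$, via the bipartite orientation of Remark~\ref{rem:signs}, is a harmless refinement that the paper leaves implicit.
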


\begin{proof}
By Eq.~\eqref{eq:boundary-sheaf} the boundary of $\zeta$ evaluated at a vertex $v$
is 
\be
(\partial_1\zeta)(v) = \sum_{e \in E(v)}
\big(\bar\Fc^{\rd}\otimes\bar\Fc^{\bl}\otimes\bar\Fc^{\gr}\big)_{v
\shortleftarrow e}(z_e),
\ee
and the restriction map of a tensor sheaf is the tensor
product of the factor restriction maps
[Eq.~\eqref{eq:restriction_map_column}], giving
Eq.~\eqref{eq:local_cycle_condition}. Taking the $(i,j,l)$ component of a
tensor product of vectors multiplies the corresponding components, so
\be
\big[\lv^{\rd}_e \otimes \lv^{\bl}_e \otimes \lv^{\gr}_e\big]_{ijl}
= \mathsf{H}^{\rd}_{i,e}\mathsf{H}^{\bl}_{j,e}\mathsf{H}^{\gr}_{l,e},
\ee
using that
$\lv_e$ is the $e^\text{th}$ column of $\mathsf{H}_v$, i.e.\
$[\lv_e]_i = \mathsf{H}_{i,e}$. Collecting the three factors into a Schur product
of rows gives Eq.~\eqref{eq:local_cycle_components}.
\end{proof}

\begin{remark}\label{rem:restrictive}
For a \emph{single} sheaf the cycle condition at a vertex is
$\sum_{e\in E(v)} z_e \lv_e = 0$, i.e.\ $m(v)$ scalar equations, which is exactly
the local codeword condition \eqref{eq:local_code_condition_2} and is satisfied by
a space of dimension $\dim \cC_v$. For the tensor sheaf the condition
\eqref{eq:local_cycle_condition} imposes instead
$m^{\rd}(v)\,m^{\bl}(v)\,m^{\gr}(v)$ scalar equations on the same
$|E(v)|=\Delta$ unknowns $\{z_e\}$. The number of constraints therefore grows as
the product of the three local check numbers while the number of variables stays
fixed, so generic local codes admit no non-zero solution and the invariant
vanishes. Non-triviality of the twisted theory is thus a genuine and rather
restrictive design requirement on the local codes, not something one gets for
free.
\end{remark}

\subsection{The fundamental class and the multiplication property}
\label{sec:fundamental_class_multiplication}

A natural and maximally symmetric choice of cycle is the fundamental class, i.e.\
the all-ones chain.

\begin{lemma}[Fundamental class]\label{lemma:fundamental_class}
Let $\zeta = [G] \equiv \sum_{e \in E} e$, i.e.\ $\vc{z} = \bone$. Then $\zeta$ is
a $1$-cycle of the tensor sheaf if and only if for every vertex $v$ and every
triple of indices $(i,j,l)$,
\be\label{eq:fundamental_class_condition}
\sum_{e \in E(v)}
\Big[\hv^{\rd}_{i} * \hv^{\bl}_{j} * \hv^{\gr}_{l}\Big]_{e} = 0 .
\ee
Equivalently, writing $\cC^{i,\perp}_{v} = \operatorname{rowspan}
\mathsf{H}^{i}_{v}$ for the dual local code of colour $i$ at $v$,
\be\label{eq:triple_orthogonality}
\sum_{e \in E(v)} \big(\vc{a} * \vc{b} * \vc{c}\big)_e = 0
\quad \forall\,
\vc{a} \in \cC^{\rd,\perp}_{v},\
\vc{b} \in \cC^{\bl,\perp}_{v},\
\vc{c} \in \cC^{\gr,\perp}_{v} .
\ee
\end{lemma}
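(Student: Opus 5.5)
The plan is to specialize Lemma~\ref{lemma:local_cycle} to the all-ones chain and then pass from parity-check rows to arbitrary elements of the dual local codes by multilinearity.

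\emph{Step 1: specialize the local cycle condition.} By Lemma~\ref{lemma:local_cycle}, the chain $\zeta=\sum_e z_e\, e$ is a cycle if and only if, for every vertex $v$ and every index triple $(i,j,l)$,
\[
\langle \vc{z}|_{E(v)},\,\hv^{\rd}_i*\hv^{\bl}_j*\hv^{\gr}_l\rangle=0 .
\]
Setting $\vc{z}=\bone$ turns the pairing with the all-ones vector into the plain coordinate sum $\sum_{e\in E(v)}[\,\cdot\,]_e$. This is exactly Eq.~\eqref{eq:fundamental_class_condition}, which proves the first equivalence.

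\emph{Step 2: pass to dual codes.} Consider the map
\[
(\vc{a},\vc{b},\vc{c})\mapsto\sum_{e\in E(v)}(\vc{a}*\vc{b}*\vc{c})_e=\sum_e a_e b_e c_e .
\]
It is $\Fq$-trilinear on $\Fq^{E(v)}\times\Fq^{E(v)}\times\Fq^{E(v)}$.

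For the direction from \eqref{eq:fundamental_class_condition} to \eqref{eq:triple_orthogonality}, expand arbitrary $\vc{a}\in\cC^{\rd,\perp}_v=\operatorname{rowspan}\mathsf{H}^{\rd}_v$ as $\vc{a}=\sum_i \alpha_i\hv^{\rd}_i$, and likewise $\vc{b}$ and $\vc{c}$ over the rows of $\mathsf{H}^{\bl}_v$ and $\mathsf{H}^{\gr}_v$. Trilinearity writes the sum as a linear combination of the row-triple sums, each of which vanishes.

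For the converse, observe that the rows themselves lie in the row spans, so Eq.~\eqref{eq:triple_orthogonality} contains Eq.~\eqref{eq:fundamental_class_condition} as a special case.

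\emph{Expected difficulty.} There is no real obstacle; the argument is bookkeeping. The one subtlety is to make sure the identification $\cC^{i,\perp}_v=\operatorname{rowspan}\mathsf{H}^i_v$ matches the coordinate ordering used in the restriction maps, namely $[\lv_e]_i=\mathsf{H}_{i,e}$. Over $\Fq$ one should also check that no trace or sign enters. None does, since the pairing here is the bilinear dot product in $\Fq$ and, by Remark~\ref{rem:signs}, the co-restrictions at each vertex carry a common sign.
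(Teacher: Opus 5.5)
Your proposal is correct and follows the paper's proof essentially verbatim: you set $\vc{z}=\bone$ in the component form of Lemma~\ref{lemma:local_cycle}, then pass between the row condition and the dual-code condition. That second step uses the fact that the rows of $\mathsf{H}^{i}_{v}$ span $\cC^{i,\perp}_{v}$, together with trilinearity of $(\vc{a},\vc{b},\vc{c})\mapsto\sum_e(\vc{a}*\vc{b}*\vc{c})_e$. Your remark that no trace or extra sign enters over $\Fq$ is a harmless additional check; the paper covers it implicitly through Remark~\ref{rem:signs}.
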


\begin{proof}
Setting $z_e=1$ in Eq.~\eqref{eq:local_cycle_components} gives
Eq.~\eqref{eq:fundamental_class_condition} directly. For the second form, note
that $\hv^{\rd}_{i}$ ranges over the rows of $\mathsf{H}^{\rd}_{v}$ as $i$ varies,
and these rows span $\cC^{\rd,\perp}_{v}$; since
$\vc{a}\mapsto \sum_e (\vc{a}*\vc{b}*\vc{c})_e$ is linear in each argument
separately, the condition on all rows is equivalent to the condition on all
elements of the spans.
\end{proof}

Condition \eqref{eq:triple_orthogonality} is precisely a \emph{multiplication
property} of the three dual local codes, as we now make explicit.

\begin{proposition}[Local cycle condition as a multiplication property]
\label{prop:multiplication_property_equivalence}
The fundamental class $[G]$ is a $1$-cycle of the tensor sheaf
$\Fc^{\rd}\otimes\Fc^{\bl}\otimes\Fc^{\gr}$ if and only if, at every vertex $v$,
\be\label{eq:multiplication_property_local}
\cC^{\rd,\perp}_{v} * \cC^{\bl,\perp}_{v} \ \subseteq\ \cC^{\gr}_{v} ,
\ee
and by the symmetry of Eq.~\eqref{eq:triple_orthogonality} in the three colours,
equivalently
$\cC^{\bl,\perp}_{v} * \cC^{\gr,\perp}_{v} \subseteq \cC^{\rd}_{v}$ or
$\cC^{\rd,\perp}_{v} * \cC^{\gr,\perp}_{v} \subseteq \cC^{\bl}_{v}$.
\end{proposition}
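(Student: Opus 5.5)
The plan is to start from Lemma~\ref{lemma:fundamental_class}, which already reduces ``$[G]$ is a $1$-cycle of the tensor sheaf'' to the triple-orthogonality condition \eqref{eq:triple_orthogonality} at every vertex. What remains is a vertex-by-vertex piece of linear algebra over $\Fq$, namely identifying \eqref{eq:triple_orthogonality} with the Schur-product inclusion \eqref{eq:multiplication_property_local}. I would fix a vertex $v$ throughout and work in the ambient space $\Fq^{E(v)}$ with the standard bilinear form $\langle \vc{x},\vc{y}\rangle=\sum_{e\in E(v)}x_e y_e$.

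The key identity is $\sum_e(\vc{a}*\vc{b}*\vc{c})_e=\langle \vc{a}*\vc{b},\,\vc{c}\rangle$. Hence \eqref{eq:triple_orthogonality} says exactly that every product $\vc{a}*\vc{b}$, with $\vc{a}\in\cC^{\rd,\perp}_v$ and $\vc{b}\in\cC^{\bl,\perp}_v$, lies in $(\cC^{\gr,\perp}_v)^{\perp}$. Here $\cC^{\rd,\perp}_v*\cC^{\bl,\perp}_v$ should be read as the linear span of such products (the standard convention). The condition is linear in $\vc{c}$, so testing on the products alone is equivalent to testing on their span.

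Next I use double duality in a finite-dimensional space with a non-degenerate form: $(\cC^{\gr,\perp}_v)^\perp=\cC^{\gr}_v$. This holds because $\cC^{\gr}_v=\ker\mathsf{H}^{\gr}_v$, the dual $\cC^{\gr,\perp}_v$ is $\operatorname{rowspan}\mathsf{H}^{\gr}_v$, and $(\operatorname{rowspan}A)^\perp=\ker A$, as already used in Proposition~\ref{prop:nondegen-sheaf}. Combining these gives \eqref{eq:multiplication_property_local} at each $v$. Quantifying over all $v$ and invoking Lemma~\ref{lemma:fundamental_class} then gives the ``if and only if''.

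For the equivalent forms, I observe that the trilinear form $(\vc{a},\vc{b},\vc{c})\mapsto\sum_e a_eb_ec_e$ is symmetric under permutations of its arguments, since the Schur product is commutative and associative. Condition \eqref{eq:triple_orthogonality} is therefore invariant under permuting the three colours. Running the same argument with the roles of $\gr$ and $\rd$ (respectively $\bl$) exchanged yields the other two inclusions.

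The only step needing care is the notational one: making explicit that the Schur product of codes means the span. No genuine obstacle is expected beyond this, and the rest is routine linear algebra.
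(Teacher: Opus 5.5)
Your proposal is correct and follows the paper's proof essentially step for step. You reduce via Lemma~\ref{lemma:fundamental_class} to triple orthogonality, rewrite $\sum_e(\vc{a}*\vc{b}*\vc{c})_e=\langle\vc{a}*\vc{b},\vc{c}\rangle$, use $(\cC^{\gr,\perp}_v)^{\perp}=\cC^{\gr}_v$ together with the fact that the products span the Schur product code, and obtain the permuted forms from the symmetry of the trilinear form. One small wording fix: passing from the products to their span relies on $(\cC^{\gr,\perp}_v)^{\perp}$ being a subspace, i.e.\ on linearity in $\vc{a}*\vc{b}$, not on linearity in $\vc{c}$.
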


\begin{proof}
For any $\vc{a},\vc{b},\vc{c}\in\Fq^{\Delta}$ one has
$\sum_e(\vc{a}*\vc{b}*\vc{c})_e = \langle \vc{a}*\vc{b},\, \vc{c}\rangle$. Hence
Eq.~\eqref{eq:triple_orthogonality} says that $\vc{a}*\vc{b}$ is orthogonal to
every $\vc{c} \in \cC^{\gr,\perp}_{v}$, i.e.\
$\vc{a}*\vc{b} \in \big(\cC^{\gr,\perp}_{v}\big)^{\perp} = \cC^{\gr}_{v}$, for all
$\vc{a}\in\cC^{\rd,\perp}_{v}$ and $\vc{b}\in\cC^{\bl,\perp}_{v}$. Since the
Schur products $\vc{a}*\vc{b}$ span $\cC^{\rd,\perp}_{v} * \cC^{\bl,\perp}_{v}$ by
definition, this is Eq.~\eqref{eq:multiplication_property_local}. The three
colour-permuted forms are equivalent because
Eq.~\eqref{eq:triple_orthogonality} is symmetric under permuting
$(\vc{a},\vc{b},\vc{c})$.
\end{proof}

\begin{remark}
Proposition~\ref{prop:multiplication_property_equivalence} converts a homological
requirement on the spacetime complex --- existence of a $3$-cycle supporting the
triple cup product --- into a condition on a \emph{single} local code of length
$\Delta=O(1)$. This is what makes the requirement checkable: one need only verify
a multiplication property for the $O(1)$-sized local codes of each spatial factor,
and the corresponding $1$-cycles $\zeta,\zeta'$, hence the spatial
$2$-cycle $\eta_2=\zeta\otimes\zeta'$ and the spacetime
$3$-cycle $\tilde\eta_3=\eta_2\otimes S^1_t$, then exist globally by
Lemma~\ref{lemma:local_cycle}. For a Sipser-Spielman code with a uniform local
code (Definition~\ref{def:sipser-spielman-code}) the condition is a single
statement about $\cC_0$, independent of the vertex.
\end{remark}

\subsection{Multiplication property of classical codes}
\label{sec:multiplication_property}

For completeness we recall the notion invoked above. Throughout, $A,B,C \subseteq
\Fq^{n}$ are linear codes of the same length.

\begin{definition}[Schur product]\label{def:schur_product}
The \emph{Schur} (or component-wise, or Hadamard) product of two vectors
$\vc{a},\vc{b}\in\Fq^{n}$ is
\be\label{eq:schur_vectors}
\vc{a} * \vc{b} = \big(a_1 b_1,\ a_2 b_2,\ \ldots,\ a_n b_n\big) ,
\ee
and the Schur product of two codes is the span of the pairwise products,
\be\label{eq:schur_codes}
A * B = \operatorname{span}_{\Fq}
\big\{ \vc{a} * \vc{b} \ :\ \vc{a}\in A,\ \vc{b}\in B \big\} .
\ee
\end{definition}

\begin{definition}[Multiplication property]\label{def:multiplication_property}
The triple of codes $(A,B,C)$ has the \emph{multiplication property} if
\be\label{eq:multiplication_property}
A * B \ \subseteq\ C .
\ee
\end{definition}

\begin{lemma}[Dual formulation]\label{lemma:multiplication_dual}
$(A,B,C)$ has the multiplication property if and only if
\be\label{eq:multiplication_dual}
\sum_{i=1}^{n} \big(\vc{a} * \vc{b} * \vc{c}\big)_i = 0
\qquad \forall\,
\vc{a}\in A,\ \vc{b}\in B,\ \vc{c}\in C^{\perp} .
\ee
\end{lemma}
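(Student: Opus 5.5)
The plan is to reduce the statement to linear algebra over $\Fq$ with the standard bilinear form $\langle \vc{x},\vc{y}\rangle=\sum_i x_i y_i$. The key is the identity $\sum_{i=1}^n(\vc{a}*\vc{b}*\vc{c})_i=\langle \vc{a}*\vc{b},\vc{c}\rangle$, which holds because both sides equal $\sum_i a_ib_ic_i$. This is the same identity used in the proof of Proposition~\ref{prop:multiplication_property_equivalence}. It converts the triple-sum condition \eqref{eq:multiplication_dual} into the orthogonality statement that $\vc{a}*\vc{b}\in (C^\perp)^\perp$ for all $\vc{a}\in A$, $\vc{b}\in B$.

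For the forward direction, assume $A*B\subseteq C$. Then every pairwise product $\vc{a}*\vc{b}$ lies in $C$. By the definition of $C^\perp$, it pairs to zero with every $\vc{c}\in C^\perp$, which is exactly \eqref{eq:multiplication_dual}.

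For the converse, assume \eqref{eq:multiplication_dual}. Then each generator $\vc{a}*\vc{b}$ of $A*B$ in Definition~\ref{def:schur_product} lies in $(C^\perp)^\perp$. Since $(C^\perp)^\perp$ is a linear subspace, the full span $A*B$ lies in it as well. It remains to use $(C^\perp)^\perp=C$ for a linear code $C\subseteq\Fq^n$. This follows from the non-degeneracy of the standard form on $\Fq^n$: $\dim C^\perp=n-\dim C$, so $\dim (C^\perp)^\perp=\dim C$, while $C\subseteq (C^\perp)^\perp$ holds trivially.

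The only step needing care is this double-dual identity. Over finite fields $C\cap C^\perp$ may be non-zero, but that does not matter here: the dimension count relies only on the non-degeneracy of the pairing on the ambient space $\Fq^n$, not on any non-degeneracy when the form is restricted to $C$.
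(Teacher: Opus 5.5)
Your proposal is correct and follows the paper's proof essentially verbatim: rewrite the triple sum as $\langle \vc{a}*\vc{b},\vc{c}\rangle$, read the condition as $\vc{a}*\vc{b}\in (C^\perp)^\perp = C$ for all $\vc{a}\in A$, $\vc{b}\in B$, and pass to the span $A*B$. Your dimension-count justification of $(C^\perp)^\perp=C$ is a detail the paper leaves implicit. It relies only on non-degeneracy of the form on $\Fq^n$, not on its restriction to $C$.
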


\begin{proof}
$\sum_i(\vc{a}*\vc{b}*\vc{c})_i = \langle \vc{a}*\vc{b},\vc{c}\rangle$, so
Eq.~\eqref{eq:multiplication_dual} states that $\vc{a}*\vc{b} \in
(C^{\perp})^{\perp} = C$ for all $\vc{a}\in A$, $\vc{b}\in B$, which is
Eq.~\eqref{eq:multiplication_property} since such products span $A*B$.
\end{proof}

Equation \eqref{eq:multiplication_dual} makes the symmetry of the condition
manifest: it is invariant under permuting the three arguments, so
$A*B\subseteq C$, $A*C^{\perp}\subseteq B^{\perp}$ and
$B*C^{\perp}\subseteq A^{\perp}$ are all the same statement. It is this symmetric
form that appears as the local cycle condition
\eqref{eq:triple_orthogonality}, with
$(A,B,C^{\perp}) = \big(\cC^{\rd,\perp}_{v}, \cC^{\bl,\perp}_{v},
\cC^{\gr,\perp}_{v}\big)$.

\begin{remark}[Examples and remarks]\label{rem:multiplication_examples}
Multiplication properties are restrictive but far from vacuous. The extreme cases
are instructive: for the repetition code $\mathrm{Rep}(n)$ one has
$\mathrm{Rep}*\mathrm{Rep}=\mathrm{Rep}$, so
$(\mathrm{Rep},\mathrm{Rep},\mathrm{Rep})$ has the multiplication property --- this
is why the temporal factor of Eq.~\eqref{eq:spacetime_complex}, being a repetition
code, imposes no condition, consistent with
Lemma~\ref{lemma:temporal_factor}. At the other extreme, for the full space
$A=B=\Fq^{n}$ one has $A*B=\Fq^{n}$ and only $C=\Fq^{n}$ works. More usefully,
Reed--Solomon codes satisfy
$\mathrm{RS}_{k_1} * \mathrm{RS}_{k_2} \subseteq \mathrm{RS}_{k_1+k_2-1}$, since
multiplying polynomials adds degrees, and Reed--Muller codes satisfy the analogous
$\mathrm{RM}(r_1,m)*\mathrm{RM}(r_2,m)\subseteq\mathrm{RM}(r_1+r_2,m)$. Both
families therefore furnish local codes for which the fundamental class is a
tensor-sheaf $1$-cycle, at the cost of a rate penalty: the product code has
strictly larger dimension than its factors, so demanding
$A*B\subseteq C$ forces $C$ to be large, hence $C^{\perp}$ small; this tension is
quantified by the product Singleton bound of
Ref.~\cite{randriambololona2013upper} and its equality cases
\cite{mirandola2015critical}, and for random codes the square $A*A$ is generically
as large as possible \cite{cascudo2015squares}. A systematic account of products
and powers of linear codes under componentwise multiplication is given in
Ref.~\cite{randriambololona2015products}. Multiplication properties were
introduced in cryptography, where they underlie multiplicative linear secret
sharing and secure multi-party computation \cite{cramer2000general}; in the
quantum-code setting the same condition appears as the triorthogonality of
magic-state distillation codes \cite{bravyi2012magic} and, closest to the present
construction, in recent qLDPC codes with transversal non-Clifford gates built from
products of algebraic codes and from sheaves
\cite{golowich2024quantum, lin2024transversal}.
\end{remark}

We close by summarizing the logical structure of this section. Non-triviality of
the cohomology invariant in Eq.~\eqref{eq:path_inegral_cup} requires, by
Lemmas~\ref{lemma:temporal_factor}--\ref{lemma:necessary_condition}, a non-zero
tensor-sheaf $1$-cycle on each of the two \emph{spatial} tensor factors of the
hypergraph product, the temporal factor being automatically non-trivial; by
Lemma~\ref{lemma:local_cycle} the existence of such a cycle is equivalent to the
local cycle condition \eqref{eq:local_cycle_condition}, a finite linear system at
each vertex; and by
Proposition~\ref{prop:multiplication_property_equivalence} the natural solution
$\zeta=[G_x]$ and $\zeta'=[G_y]$ exists precisely when the dual local
codes satisfy the multiplication
property \eqref{eq:multiplication_property_local}. Choosing local codes with a
multiplication property is therefore the concrete design principle for obtaining a
non-trivial twist, and hence a genuinely non-Abelian qLDPC code, from a
hypergraph-product construction.

\section{Instantiation: Twisted hypergraph-product code with constant-rate and non-trivial cohomology invariants from arbitrary input good classical codes}
\label{sec:code_instantiation}

\subsection{Code construction}

In the following construction, we can start from an arbitrary 2D HGP code as the tensor product of arbitrary input good classical codes $\C$ and $\C'$ and use the generalized sheaf formalism in Sec.~\ref{sec:subdivision} to subdivde the HGP code into generalized quantum sheaf codes with non-uniform local codes.   We then design another two copies of 2D HGP codes and then gauge (twist) them into the corresponding non-Abelian 2D HGP codes.  

We start with the following constructions of the underlying 2D sheaf complexes for both the untwisted and twisted 2D HGP codes:

\begin{construction}\label{con:2d-hgp}(\emph{2D subdivided HGP sheaf complexes}) Consider an arbitrary input 2D hypergraph-product code $HGP(\bar\C, {\bar{\C}}'^\top)$ with $\bar\C= \ker (\bar\Hs)$ and $\bar\C'= \ker (\bar\Hs')$ being arbitrary asymptotically good classical codes, with ${\bar\C}'^\top = \ker {\bar\Hs}'^\top$ being the transposed code of $\bar\C'$. We then subdivide the classical code $\bar\C$ and its transposed code $\bar\C^\top=\ker \bar\Hs^\top$  into sheaf complexes $X$ and $X^*$ respectively, and similarly  $\bar\C'$ and ${\bar\C}'^\top = \ker {\bar\Hs}'^\top$ into $X'$ and ${X'}^*$ respectively, with the local codes having the alternating pattern of single-parity check (SPC) and repetition (Rep) codes   [see Fig.~\ref{fig:subdivision}]. 
Let $G_x, G_y$ be the graphs for each subdivided classical sheaf codes in the two factors of the product construction. We then consider the following three copies of untwisted subdivided hypergraph product codes and there associated subdivided sheaf complexes:
\begin{align}\label{eq:three_copies}
     \mathcal{A}^{\rd} =& X \otimes {X'}^* \equiv  C^\bullet\big(G_x,\Fc_x\big) \otimes
C^\bullet\big(G_y,\Fc_y^{\perp}\big) \cr  
     \mathcal{A}^{\bl} =& X^* \otimes X'_R \equiv C^\bullet\big(G_x,\Fc_x^{\perp}\big) \otimes
C^\bullet\big(G_y,\Fc^{0}\big) \cr
     \mathcal{A}^{\gr} =& X_R \otimes X' \equiv C^\bullet\big(G_x,\Fc^0\big) \otimes
C^\bullet\big(G_y,\Fc_y\big), \cr
\end{align}
where $\Fc_\nu^{\perp}$ is the dual sheaf of $\Fc_\nu$ ($\nu=x,y$), meaning their local codes on each vertex are dual to each other (i.e., SPC $= \text{Rep}^{\perp}$), and $\Fc^0= \Z_2$ is the constant sheaf corresponding to the usual single-parity-check (SPC) local code on each vertex which is dual to the repetition code (Rep).   Note that both the subdivided sheaf complexes $X$ obtained from $\bar\C=\ker \bar\Hs$ and $X^*$ from $\bar\C^\top=\ker \bar\Hs^\top$ corresponds to the same graph $G_x$ [see Fig.~\ref{fig:subdivision}], and the same graph $G_x$ is used for $X_R=C^\bullet\big(G_x,\Fc^0\big)$ where all the local codes are chosen to be SPC; similarly the three sheaf complexes are defined on the same graph $G_y$ in $y$-factor.   The three different HGP codes are hence all defined on the same square complex: 
\be
\L =  G_x \otimes G_y,
\ee
while equipped with different sheaf structure specified above.  The construction includes the special case that the graphs and sheafs in the two classical code factor are identical $G_x=G_y$, $\Fc_x=\Fc_y=\Fc$, and $X=X'$, i.e., 
\begin{align}\label{eq:three_copies_identical_factors}
     \mathcal{A}^{\rd} =& X \otimes X^* \equiv C^\bullet\big(G_x,\Fc\big) \otimes
C^\bullet\big(G_y,\Fc^{\perp}\big) \cr  
     \mathcal{A}^{\bl} =&   X^* \otimes X_R \equiv C^\bullet\big(G_x,\Fc^{\perp}\big) \otimes
C^\bullet\big(G_y, \FF_2\big) \cr
     \mathcal{A}^{\gr} =&  X_R \otimes X \equiv  C^\bullet\big(G_x, \FF_2 \big) \otimes
C^\bullet\big(G_y,\Fc\big). \cr
\end{align}

\label{construction}
\end{construction}

Equivalently, we can also re-write the above construction in terms of  hypergraph-product of  classical sheaf codes of the $x$- and $y$-factors  as: 
\begin{align}
\C^{\rd} =& \text{HGP}(\T^{\rd} (G_x, \{\cC^x_v\}_{v \in V_y}),  \T^{\rd} (G_y, \{\cC^{y\perp}_v\}_{v \in V_x}))   \cr
\C^{\bl} =& \text{HGP}(\T^{\bl} (G_x, \{\cC^{x\perp}_v\}_{v \in V_y}),  \T^{\bl} (G_y, \{\text{Rep}^{\perp}\})  \cr
\C^{\gr} =& \text{HGP}(\T^{\gr} (G_x, \{\text{Rep}^{\perp}\}),  \T^{\gr} (G_y, \{\cC^y_v\}_{v \in V_y})).   \cr
\end{align}
We note that in the general sheaf code the local codes $\C^x_v$, $\C^y_v$ and their duals can be different on different vertices $v$.
Again in the special case where the two graphs are identical $G_x=G_y=G=(V, E)$,  we have $\C^x_v=\C^y_v=\C_v$, and the above setup can be simplified as
\begin{align}
\C^{\rd} =& \text{HGP}(\T^{\rd} (G, \{\cC_v\}_{v \in V}),  \T^{\rd} (G, \{\cC^{\perp}_v\}_{v \in V}))   \cr
\C^{\bl} =& \text{HGP}(\T^{\bl} (G, \{\cC^{\perp}_v\}_{v \in V}),  \T^{\bl} (G, \{\text{Rep}^{\perp}\})  \cr
\C^{\gr} =& \text{HGP}(\T^{\gr} (G, \{\text{Rep}^{\perp}\}),  \T^{\gr} (G, \{\cC_v\}_{v \in V})).   \cr
\end{align}

We note that Construction \ref{con:2d-hgp} has not only defined the families of untwisted 2D HGP codes, but also the corresponding 2D sheaf complexes where the twisted 2D HGP codes are supported.   In particular, with the corresponding sheaf structure $\Fc_\nu, \Fc_\nu^{\perp}$ and $\Fc^0=\FF_2$, we can hence  precisely define the twisted (non-Abelian) 2D HGP codes via the general Clifford stabilizer construction in Eq.~\eqref{eq:stabilizer_summary}, and with the corresponding path-integral description given in Eq.~\eqref{eq:path_inegral_cup}. 

Now if we pick the spatial 2-cycle as the fundamental class of the spatial square complex $\L$, i.e.,
\be
\eta_2=\zeta \otimes \zeta'=[\L]=[G_x] \otimes [G_y],
\ee
meaning each of its classical code factor is the fundamental class on the graph, i.e., $\zeta=[G_x]$ and $\zeta'=[G_y]$, we know that it clearly satisfies the \emph{local cycle condition} in Lemma \ref{lemma:local_cycle} and more specifically Lemma \ref{lemma:fundamental_class} according to the multiplication property in Proposition \ref{prop:multiplication_property_equivalence}, i.e., $\cC^{\rd,\perp}_{v} * \cC^{\bl,\perp}_{v} \ \subseteq\ \cC^{\gr}_{v}$.
For the $x$-factor, the condition becomes
\be
\cC^{x \perp}_v * \cC^x_v  \subseteq \text{Rep}^{\perp}.
\ee 
As stated before, this  can be proven by re-writing it into Schur product as
\be
\sum_{e \in E(v)} (\vc{a}* \vc{b}*\vc{1})_e= \sum_{e \in E(v)} (\vc{a}* \vc{b})_e=\langle \vc{a}, \vc{b} \rangle=\vc{0}
\ee
with $\vc{a} \in \cC^{x \perp}_v, \  \vc{b} \in \cC^x_v, \ \vc{1} \in \text{Rep}$ (representing the all-1 repetition codeword), where the third equality directly follows from the dual code condition. 
The $y$-factor follows similarly.

Finally, we also consider the following alternative but more specific construction which will give the same code parameter scaling: 
\begin{construction}\label{con:2d-hgp-Sipser-Spielman}(\emph{2D HGP from standard Siper-Spielman codes})
Although in the above we have been focused on subdividng arbitrary HGP codes into sheaf codes, we do not have to do the subdivision procedure if the input good classical codes are already the standard Sipser-Spielman code with uniform local code $\C^{\nu}$ defined on a graph $G_\nu$.  In that case, we can still construct three copies of sheaf complexes in exactly the same manner as Eq.~\eqref{eq:three_copies}.  The only difference is that the local codes $\C^{\nu}$ corresponding to the sheaf $\Fc^{\nu}$ are no longer alternating patterns of SPC and Rep, arbitrary  $\C^{\nu}$ suffices as long we pick the dual local code ${\C^{\nu}}^{\perp}$ for the dual sheaf $\Fc_{\nu}^{\perp}$.
\end{construction}

\subsection{Subcomplex symmtery and logical CZ gate structure via Poincar\'e duality of sheaf}

We now consider the 0-form subcomplex symmetries in the untwisted qLDPC codes defined on the product sheaf complex in Eq.~\eqref{eq:three_copies} and the corresponding logical CZ gate structure.

\subsubsection{From Poincar\'e duality to the non-degenerate pairing}\label{ss:PD-pairing}

We start by restating the Poincar\'e duality result proven in Ref.~\cite{li2025poincar}.

\begin{theorem}[{Poincar\'e duality on sheaf complexes~\cite[Theorem 3.17]{li2025poincar}}]
\label{thm:poincare-duality-sheaf-codes}
    Suppose $X$ is a $t$-dimensional sparse cell complex with a locally acyclic sheaf $\Fc$, then we have a duality between $\Fc$ and $\Fc^\perp$. To be precise, for any $0 \leq i \leq t$, there is an isomorphism:
   \begin{equation}\label{eq:PD-t1}
  \text{PD}: H^i(X, \mathcal{F}^\perp) \xrightarrow{\;\cong\;} H_{t-i}(X, \mathcal{F}).
\end{equation}
\end{theorem}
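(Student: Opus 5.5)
The plan is to build the duality map at the cochain level from the cap product with a fundamental class. Then show it is a quasi-isomorphism by a local-to-global argument driven by local acyclicity. First I will fix what ``locally acyclic'' and ``$\Fc^\perp$'' mean for a general sparse cell complex $X$ of dimension $t$. At each cell $\sigma$, the stalk $\Fc^\perp(\sigma)$ is identified with the annihilator of the image of the restriction maps out of the star of $\sigma$. In the graph case this is the dual local code, as in the subdivision of Sec.~\ref{sec:subdivision}. Local acyclicity means the link (co)chain complex of $\Fc$ at every cell has homology only in its top degree. The two sheaves are set up so that the top-dimensional stalks pair to give a well-defined fundamental cycle $[X]\in C_t(X,\Fc\otimes\Fc^\perp)$, namely the sum over top cells of the canonical pairing element. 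Using the Leibniz rule of Proposition~\ref{prop:cap-properties}(a), I will check that $\partial[X]=0$: at each $(t-1)$-cell the boundary is the identity element of the local pairing, which vanishes because the local codes are mutually dual, exactly as in the computation $\langle \vc{a},\vc{b}\rangle=0$ used for Construction~\ref{con:2d-hgp}. Then $\mathrm{PD}(x)=x\frown[X]$ is a chain map $C^i(X,\Fc^\perp)\to C_{t-i}(X,\Fc)$ up to sign, by Proposition~\ref{prop:cap-properties}(a), and it descends to (co)homology by part (c).

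To show $\mathrm{PD}$ is an isomorphism, I will filter both complexes by the skeleton, or equivalently by cell dimension in the star poset. I will compare the associated graded pieces cell by cell. On a single cell $\sigma$ of dimension $i$, the graded piece of the map reduces to a local map from $\Fc^\perp(\sigma)$ to the top homology of the local link complex of $\Fc$ at $\sigma$. The local complex at $\sigma$ is the chain complex of $\Fc$ restricted to the star of $\sigma$, shifted by $i$. Local acyclicity says this local complex is concentrated in one degree, and that its homology is identified with $\Fc^\perp(\sigma)$ by the definition of the dual sheaf. So each graded piece is an isomorphism, and a spectral sequence argument (or the five lemma applied inductively over skeleta) gives that $\mathrm{PD}$ is a quasi-isomorphism. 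A cleaner alternative I would try first is a double complex. Take the bicomplex whose columns are the local star complexes, and compute its total homology in two ways. Local acyclicity collapses one computation to $C^\bullet(X,\Fc^\perp)$, and the other collapses to $C_{t-\bullet}(X,\Fc)$. The identification of the two collapses is exactly the cap with $[X]$.

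I expect the main obstacle to be the local identification step. It has to show that the top local homology of $\Fc$ over the star of $\sigma$ is canonically $\Fc^\perp(\sigma)$, and that the cap product induces precisely this identification rather than some other isomorphism. This is where the adjointness $\bar\Fc_{\sigma\shortleftarrow\tau}=(\Fc_{\sigma\shortrightarrow\tau})^T$ of Definition~\ref{def:presheaf} and the even-incidence property of Definition~\ref{def:poset} enter, and sparsity is only needed to keep these local complexes finite and uniform. If the direct identification is messy, I would fall back on a dimension count. Proposition~\ref{prop:nondegen-sheaf} gives $\dim H_{t-i}(X,\Fc)=\dim H^{t-i}(X,\Fc)$, so it would suffice to show injectivity of $\mathrm{PD}$, checked against a dual $\mathrm{PD}$ for $\Fc^\perp$ via the pairing $\langle x\cup y,[X]\rangle$. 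As a sanity check, the graph case $t=1$ should reproduce the isomorphisms $H^0(G,\Fc^\perp)\cong H_1(G,\Fc)$ already derived after Lemma~\ref{lemma:subdivision-normal-form}.
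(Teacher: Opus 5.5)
The paper does not prove this statement. It quotes it from \cite{li2025poincar} and only verifies the local-acyclicity hypothesis for the subdivided graph sheaves (Appendix~\ref{app:acyclic}). Judged as a proof, your proposal has a genuine gap at the quasi-isomorphism step.

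Your bicomplex is the right object. Take $K_{p,q}=\bigoplus_{\sigma\in X(p)}\bigoplus_{\tau\in X_{\ge\sigma}(q)}\Fc(\tau)$, with the local boundary of $X_{\ge\sigma}$ in the $q$-direction, the map $(\sigma,\tau)\mapsto\sum_{\sigma\precdot\sigma'\preceq\tau}(\sigma',\tau)$ in the $p$-direction, and total degree $q-p$.
\begin{itemize}
\item Taking homology in $q$ first uses local acyclicity exactly as you say. Each row is exact except in degree $t$, where it is $\Fc^\perp(\sigma)$. What survives is $C^\bullet(X,\Fc^\perp)$, with $C^p$ sitting in total degree $t-p$.
\item The other collapse does \emph{not} follow from local acyclicity. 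For fixed $\tau$ the column is $C^\bullet(X_{\le\tau};\FF_q)\otimes\Fc(\tau)$, the constant-coefficient cochain complex of the closed cell. It reduces to a single copy of $\Fc(\tau)$ in degree $p=0$, which is what produces $C_\bullet(X,\Fc)$ with its boundary map, only because every closed cell of a (regular) cell complex is a ball. This is what ``cell complex'' provides, as opposed to an arbitrary graded incidence poset.
\end{itemize}
None of the ingredients you invoke (local acyclicity, adjoint (co)restrictions, even incidence, sparsity) supplies this, and without it the statement is false. Take the poset $\{v\prec e\}$ with $\Fc(v)=\Fc(e)=\FF_2$ and identity maps. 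Then $\Fc$ is locally acyclic, $\Fc^\perp(v)=0$ and $\Fc^\perp(e)=\FF_2$. Hence $H^1(X,\Fc^\perp)=\FF_2$ while $H_0(X,\Fc)=0$.

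Your other two routes do not avoid this.
\begin{itemize}
\item The skeletal filtration cannot have the graded pieces you describe, because $C_\bullet(X,\Fc)$ is not assembled from the star complexes: a top cell lies in the star of every one of its faces. If you filter $C^\bullet(X,\Fc^\perp)$ and $C_\bullet(X,\Fc)$ by cell dimension, the graded map is just the cochain-level map $C^i(X,\Fc^\perp)\to C_{t-i}(X,\Fc)$, which is not an isomorphism. For the subdivided sheaf of Proposition~\ref{prop:subdivision-params}, $\dim C^0(G,\Fc^\perp)=n+N-m$ while $\dim C_1(G,\Fc)=N$. Correcting that overcounting is precisely the column collapse above.
\item The fallback is circular. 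Proposition~\ref{prop:nondegen-sheaf} compares $H_{t-i}(X,\Fc)$ with $H^{t-i}(X,\Fc)$, not with $H^i(X,\Fc^\perp)$. Injectivity of $\mathrm{PD}$, i.e.\ non-degeneracy of $\langle x\cup y,[X]\rangle$ on cohomology, is the theorem itself.
\end{itemize}

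Finally, the statement only asserts that an isomorphism exists, and the bicomplex zig-zag supplies one without $[X]$ or cap products. Identifying it with $\alpha\mapsto\alpha\frown[X]$ is the separate Theorem~\ref{thm:cap-product-poincare-duality-map}. That theorem needs a simplicial approximation, since the cap product of Definition~\ref{def:cap-simplicial} is only defined on simplicial complexes. For this statement, drop the $\partial[X]=0$ and cap-product steps and add closed-cell acyclicity of $X$ to the bicomplex argument.
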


% Now that we have shown that our product sheaf from taking the 3-fold tensor product of three subdivided sheaf codes satisfies the local acyclicity property for the product sheaf of the 3-dimensional cubical complex, we can apply the Poincar\'e duality on sheaf complexes from Theorem~\ref{thm:poincare-duality-sheaf-codes}.
% Because of the subdivision performed to our cochain complexes before we took the 3-fold tensor product of them, $X$ is a cell complex. This admits an explicit isomorphism between the cocycles and cycles of $\widetilde{X}$ as defined in the following theorem restated from Ref.~\cite{li2025poincar}.

\begin{theorem}[{Cap product induces Poincar\'e duality map~\cite[Theorem~4.23]{li2025poincar}}]
\label{thm:cap-product-poincare-duality-map}
    Suppose $X$ is a cell complex that admits a simplicial approximation. Let $[X] \in C_t(X, \Fc^\perp \otimes \Fc)$ be the fundamental class defined by $[X] = \sum_{\sigma \in X(t)} \sigma$, then for each $0 \leq i \leq t$, there is an isomorphism 
    \begin{equation}
        \mathrm{PD}\,:\,H^i(X, \Fc^\perp) \to H_{t-i}(X, \Fc),
    \end{equation}
    given by $\mathrm{PD}[\mathbf{\alpha}] = [\mathbf{\alpha}] \frown [X].$
\end{theorem}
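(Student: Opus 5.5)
The plan is to build the map explicitly at the chain level, check that it descends to (co)homology, and then show it is an isomorphism by comparing it with the abstract duality of Theorem~\ref{thm:poincare-duality-sheaf-codes}. Throughout, the hypothesis that $X$ admits a simplicial approximation lets me pass to a simplicial subdivision $X'$. On $X'$ the cup and cap products of Definitions~\ref{def:cup-simplicial} and~\ref{def:cap-simplicial} are defined on the nose. The subdivision chain maps, as in Lemma~\ref{lemma:subdivision-normal-form}, are chain homotopy equivalences, so it suffices to argue on $X'$ and transport back.

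\emph{Step 1: $[X]$ is a cycle.} On a top cell $\sigma$, I interpret $[X](\sigma)$ as the canonical element of $\Fc^\perp(\sigma)\otimes\Fc(\sigma)$, namely the element corresponding to the coordinate pairing. In the codes of interest this stalk is a scalar, as in Eq.~\eqref{eq:local_system_2D}. The boundary at a $(t-1)$-cell $\tau$ is then $\sum_{\sigma\succ\tau}\bar\Fc^\perp_{\tau\leftarrow\sigma}\otimes\bar\Fc_{\tau\leftarrow\sigma}$ applied to this canonical element. Its vanishing is exactly the statement that the local code of $\Fc^\perp$ at $\tau$ is the dual of that of $\Fc$. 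This is the same triple-orthogonality computation as in Lemma~\ref{lemma:fundamental_class} and Proposition~\ref{prop:multiplication_property_equivalence}, with one factor removed ($\langle a,b\rangle=0$ for $a\in\cC_\tau^{\perp}$, $b\in\cC_\tau$). So $\partial[X]=0$.

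\emph{Step 2: descent.} By the Leibniz rule in Proposition~\ref{prop:cap-properties}(a), $\partial(\alpha\frown[X])=d\alpha\frown[X]\pm\alpha\frown\partial[X]=d\alpha\frown[X]$. Hence $\alpha\mapsto\alpha\frown[X]$ sends cocycles to cycles and coboundaries to boundaries, and so induces $\mathrm{PD}:H^i(X,\Fc^\perp)\to H_{t-i}(X,\Fc)$. Both sides have the same dimension by Theorem~\ref{thm:poincare-duality-sheaf-codes}, so it remains to prove injectivity (or surjectivity).

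\emph{Step 3: isomorphism.} I expect this to be the main obstacle. I would first try a local-to-global argument. Filter both complexes by the cells of $X$, giving the double-complex or skeletal filtration used to prove Theorem~\ref{thm:poincare-duality-sheaf-codes}. On the closed star of each cell, local acyclicity of $\Fc$ says the local (co)homology is concentrated in a single degree. On that single group, capping with the restricted fundamental class is an explicit map between two finite-dimensional spaces, namely the local code and the dual of the dual local code. I expect this map to be the identity under $(\cC^\perp)^\perp=\cC$, so it is an isomorphism. A comparison of the spectral sequences, or an induction on cells using Mayer--Vietoris and the five lemma, then upgrades the local isomorphisms to the global one.

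If the spectral-sequence bookkeeping becomes unwieldy, the fallback is to check directly that the abstract isomorphism of Theorem~\ref{thm:poincare-duality-sheaf-codes} is induced by a chain map that agrees with $-\frown[X]$ up to chain homotopy, via the front/back-face formula. The delicate point is compatibility with subdivision: capping with $[X]$ on cells must match capping with $[X']$ after subdivision. I expect this to follow from naturality of the cap product under the subdivision chain maps, since the fundamental class of $X'$ maps to that of $X$.
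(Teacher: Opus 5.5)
The paper gives no proof of this statement; it imports it from Ref.~\cite{li2025poincar}, so your proposal has to stand on its own. Steps~1 and~2 are fine. At a codimension-one cell, $\partial[X]=0$ is exactly the orthogonality of the local codes of $\Fc$ and $\Fc^\perp$. Proposition~\ref{prop:cap-properties}(a) then makes $-\frown[X]$ descend to (co)homology.

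The gap is Step~3, which is the whole content of the theorem. Theorem~\ref{thm:poincare-duality-sheaf-codes} already supplies an abstract isomorphism, so what must be shown is that \emph{this particular} map is injective. The local-to-global comparison you sketch does not work as stated, because $-\frown[X]$ is not a filtered map for the cell (upper-set) filtration in which local acyclicity is phrased. (Note also that local acyclicity is a statement about $X_{\ge\sigma}$, not about closed stars.) By Definition~\ref{def:cap-simplicial}, a cochain supported on an $i$-simplex $\rho$ is sent to chains on the back faces $[v_i,\ldots,v_t]$ of the top simplices having $\rho$ as front face. For $i\ge 1$ these back faces do not contain $\rho$, so they leave $X_{\ge\rho}$. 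Your sketch therefore provides no filtered map, hence no map of spectral sequences or Mayer--Vietoris ladder for the five lemma to act on. For the same reason, the ``local map'' you expect to be the identity under $(\cC^\perp)^\perp=\cC$ is not yet defined.

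The missing idea is the dual-block comparison from the classical simplicial proof that the duality isomorphism is cap product with the fundamental class.
\begin{itemize}
\item Pass to the barycentric subdivision, with vertices ordered by increasing dimension of their carriers. Then in every top simplex the front $i$-face lies in the subdivision of an $i$-cell $\rho$, and the back face is a simplex of the dual block $D(\rho)$. Capping a cochain transported from $X$ therefore lands in dual-block chains and respects the filtration by dual-block dimension.
\item Local acyclicity identifies $H_\bullet(D(\rho),\partial D(\rho);\Fc)$ with the local top homology $H_t(X_{\ge\rho},\Fc)\cong\Fc^\perp(\rho)$, concentrated in degree $t-i$. One then checks that, on the graded piece indexed by $\rho$, the cap product realizes exactly this identification.
\item Naturality of $\frown$ along the simplicial approximation of the identity, which pushes $[X']$ to $[X]$, transports the result back to $X$.
\end{itemize}
Your fallback names this, but it is precisely what must be carried out.

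Two smaller points. First, Lemma~\ref{lemma:subdivision-normal-form} concerns the specific Rep/SPC subdivision of a Tanner graph and does not give subdivision invariance for a general sheaf on a cell complex. You need separately that the subdivided sheaf is locally acyclic, that its dual is the subdivision of $\Fc^\perp$, and that subdivision preserves sheaf (co)homology. Second, the statement as printed omits the local-acyclicity hypothesis of Theorem~\ref{thm:poincare-duality-sheaf-codes}. Your dimension count uses it, and it is necessary.

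In the only case the paper actually uses, $t=1$, Step~3 closes directly.
\begin{itemize}
\item For $i=0$, $\beta\frown[G]$ is the common-edge-value map. The discussion after Lemma~\ref{lemma:subdivision-normal-form} already shows this is a bijection $H^0(G,\Fc^\perp)\to H_1(G,\Fc)$.
\item For $i=1$, $x\frown[G]$ places $x(e)\,\lv_e$ (the column of $\Hs_{v_1}$) at the last vertex $v_1$ of each $e=[v_0,v_1]$. This is onto $H_0(G,\Fc)$: the columns of each full-row-rank $\Hs_v$ span $\Fc(v)$, and adding the boundary of a single-edge chain moves a contribution at $v_0$ to $v_1$. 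Equality of dimensions then gives injectivity.
\end{itemize}
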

% \guanyu{(Add cap product definition in the preliminary)}

In the following we will use Kunneth theorem to factorize the cup-product structure into each tensor factor corresponding to the subdivided classical sheaf codes defined on the subdivided graph $G$ according to Lemma \ref{lemma:factorization}.
For the above two theorems to apply, we have to show that the sheaf defined on the subdivided graph $G$ with is \emph{locally acyclic}, see Appendix \ref{app:acyclic} for the proof.

Next, we introduce a non-degenerate pairing on the subdivided classical sheaf codes that we have constructed. The required bilinear form is obtained simply by composing Poincar\'e duality with the canonical pairing between cohomology and homology.

The canonical pairing induces a bilinear map:
\begin{align}
& \langle \cdot, \cdot \rangle\,:\, H^1(G, \Fc) \times H_1(G, \Fc) \to \FF_2,\qquad \cr
& \langle [\as^1], [\eta_1] \rangle = \sum_{e \in E} \as^1(e)\eta_1(e),
\end{align}
and this pairing is non-degenerate. Note that this is merely Proposition~\ref{prop:nondegen-sheaf} specialized to the case where $i = 1$.
For $t=1$, together with the canonical pairing, the Poincar\'e duality isomorphism
\begin{align}
& \mathrm{PD}: H^0(G,\mathcal{F}^{\perp}) \xrightarrow{\;\sim\;} H_1(G,\mathcal{F}), \cr  
& \mathrm{PD}[\bs^0]=[\bs^0 \frown [G]],
\end{align}
induces a bilinear form
\begin{align}\label{eq:P-def}
&\mathcal{P}:  H^1(G, \mathcal{F}) \times H^0(G, \mathcal{F}^\perp) \to \FF_2,
\end{align}
such that we have
\begin{align}
\mathcal{P}([\as^1],[\bs^0]) &:= \langle [\as^1], \mathrm{PD}[\bs^0]\rangle \nonumber\\
&= \langle [\as^1], [\bs^0 \frown [G]]\rangle \nonumber\\
&= \langle [\bs^0] \cup [\as^1], [G]\rangle. \nonumber
\end{align}
Most importantly, $\mathcal{P}$ is non-degenerate.

We then have the following corollary:
\begin{corollary}[Kronecker duality]\label{cor:kronecker-dual}
The  bases $\{\as^1_i\}$ of $H^1(G, \Fc)$ and its $\mathcal{P}$-dual basis $\{\bs^0_j\}$ of $H_0(G, \Fc^{\perp})$
satisfy $\mathcal{P}([\as^1],[\bs^0]) = \delta_{ij}$.
\end{corollary}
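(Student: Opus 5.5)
This is pure linear algebra once the form $\mathcal{P}$ is known to be non-degenerate, so I would not redo any topology. First I fix the target space for the dual basis. The statement writes $H_0(G,\Fc^{\perp})$, but the domain of $\mathcal{P}$ in Eq.~\eqref{eq:P-def} is $H^0(G,\Fc^\perp)$, so I read the dual basis as living in $H^0(G,\Fc^\perp)$. This causes no dimension mismatch: by Lemma~\ref{lemma:subdivision-normal-form}, $\dim H^0(G,\Fc^\perp)=\dim H_1(G,\Fc)=\dim H^1(G,\Fc)$, with both equal to $n-r$. Alternatively, the Poincar\'e duality map of Theorem~\ref{thm:cap-product-poincare-duality-map} identifies $H^0(G,\Fc^\perp)\cong H_1(G,\Fc)$, and Proposition~\ref{prop:nondegen-sheaf} gives $\dim H_1=\dim H^1$. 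Call the common dimension $k$.

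Next I establish that $\mathcal{P}$ is non-degenerate, which the text asserts without proof. By construction, $\mathcal{P}([\as^1],[\bs^0])=\langle[\as^1],\mathrm{PD}[\bs^0]\rangle$. Here $\mathrm{PD}$ is an isomorphism by Theorem~\ref{thm:cap-product-poincare-duality-map}; this needs the local acyclicity proved in Appendix~\ref{app:acyclic}. The canonical pairing $H^1\times H_1\to\FF_2$ is non-degenerate by Proposition~\ref{prop:nondegen-sheaf}. Composing a non-degenerate pairing with an isomorphism in one slot gives a non-degenerate pairing, so the induced map $\Phi:H^0(G,\Fc^\perp)\to H^1(G,\Fc)^{*}$, $[\bs^0]\mapsto\mathcal{P}(\cdot,[\bs^0])$, is injective. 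Since the two spaces have equal finite dimension, $\Phi$ is an isomorphism.

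The dual basis then comes directly from $\Phi$. Fix any basis $\{\as^1_i\}_{i=1}^k$ of $H^1(G,\Fc)$, let $\{\as^{1*}_j\}$ be the algebraic dual basis of $H^1(G,\Fc)^*$, and set $[\bs^0_j]:=\Phi^{-1}(\as^{1*}_j)$. By definition $\mathcal{P}([\as^1_i],[\bs^0_j])=\as^{1*}_j(\as^1_i)=\delta_{ij}$, and the $[\bs^0_j]$ form a basis because $\Phi^{-1}$ is an isomorphism. Concretely, one can take any basis $\{\bs'^0_j\}$, form the Gram matrix $M_{ij}=\mathcal{P}(\as^1_i,\bs'^0_j)$, which is invertible by non-degeneracy, and change basis by $M^{-1}$. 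Uniqueness of the dual basis also follows from injectivity of $\Phi$.

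The only step needing real care is the non-degeneracy step, specifically that Theorem~\ref{thm:cap-product-poincare-duality-map} applies to the subdivided graph. That requires local acyclicity of $\Fc^\perp$ (and of $\Fc$), which is deferred to Appendix~\ref{app:acyclic}, together with the existence of a simplicial approximation, which is trivial for a graph. If local acyclicity were unavailable, I would fall back on the explicit common-edge-value isomorphism $H^0(G,\Fc^\perp)\cong H_1(G,\Fc)$ stated after Lemma~\ref{lemma:subdivision-normal-form}, and check that it agrees with $\bs^0\mapsto\bs^0\frown[G]$ edge by edge. On a graph the cap with the all-ones chain simply restricts the vertex stalk value to the edge, so this agreement should hold.
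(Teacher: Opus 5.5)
Your proposal is correct and follows the same route as the paper. The paper's one-line proof implicitly takes $\eta_1^j$ to be the basis of $H_1(G,\Fc)$ dual to $\{\as^1_i\}$ under the canonical pairing (Proposition~\ref{prop:nondegen-sheaf}) and sets $\bs^0_j=\mathrm{PD}^{-1}\eta_1^j$, which is exactly your $\Phi^{-1}(\as^{1*}_j)$; your version simply makes explicit the dimension count and the invertibility of $\Phi$, which the paper leaves unstated.

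Your reading of $H_0(G,\Fc^{\perp})$ as $H^0(G,\Fc^{\perp})$ is right. Your fallback is also sound: on a graph, $\bs^0\frown[G]$ is the front-vertex co-restriction, which for a cocycle is the common edge value. Hence $\mathrm{PD}$ coincides with the common-edge-value isomorphism established after Lemma~\ref{lemma:subdivision-normal-form}, without needing the cited duality theorem.
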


\begin{proof}
We have the following equality
\be
\mathcal{P}([\as^1_i], [\bs^0_j]) = \langle [\as^1_i], \text{PD}[\bs^0_j]\rangle = \langle [\as^1_i], [\eta_1^j]\rangle= \delta_{ij},
\ee
where $\eta^j_1=\text{PD}[\bs^0_j] \in H_1(G, \Fc)$ and the last equality comes from the non-degenerate cyclce-cocycle pairing according to Proposition~\ref{prop:nondegen-sheaf}.
\end{proof}

\subsubsection{Logical CZ structure}

Now we choose a subset of spatial basis cocycles $\{\red{\alpha^{1}_{\vc{i_1}}}\}$, $\{\blue{\beta^{1}_{\vc{i_2}}}\}$ and $\{\green{\gamma^{0}_{\vc{i_3}}}\}$ according to the Kunneth decomposition as in Eq.~\eqref{eq:cocycle_decomposition} and summarized below:
\begin{align}\label{eq:spatial_cocycle_decomposition}
\red{{\alpha}^{1}_{\vc{i_1}}} &= \red{\bs^0_{i_1}} \otimes
\red{\as^1_{j_1}}, \cr
\blue{{\beta}^{1}_{\vc{i_2}}} &= \blue{\as^1_{i_2}} \otimes
\blue{\cs'^0}, \cr
\green{{\gamma}^{0}_{\vc{i_3}}} &= \green{\cs^0} \otimes
\green{\bs^0_{j_3}}.
\end{align}
% \begin{align}\label{eq:spatial_cocycle_decomposition}
% \red{\alpha^{1}_{\vc{i_1}}} &= \red{\as^1_{i_1}} \otimes
% \red{\bs^0_{j_1}}, \cr
% \blue{\beta^{1}_{\vc{i_2}}} &= \blue{\bs^0_{i_2}} \otimes
% \blue{\as^1_{j_2}}, \cr
% \green{\gamma^{0}_{\vc{i_3}}} &= \green{\bs^0_{i_3}} \otimes
% \green{\bs^0_{j_3}}.
% \end{align}
This specific basis is chosen with the interest of gauging the transversal CZ gates between the $\rd$ and $\bl$ copies.   Here $\blue{\cs'^0}$ is the unique 0-cocycle in the classical sheaf code 1-complex $C^\bullet\big(G_y, \FF_2\big)$ with the constant sheaf $\FF_2$ or equivalently  $\T^{\bl} (G_y, \text{Rep}^{\perp})\equiv \T^{\bl} (G_y, \text{SPC})$ with the trivial single-parity check (SPC) local code, which is also essentially just the unique 0-cocycle on the graph $G_y$.  Similarly,  $\green{\cs^0}$ is the unique 0-cocycle on the sheaf complex $C^\bullet\big(G_x, \FF_2\big)$ or equivalently $\T^{\gr} (G_x, \text{Rep}^{\perp})\equiv \T^{\gr} (G_x, \text{SPC})$, and essentialy the graph unique 0-cocycle on the graph $G_x$.  

We choose this specific subset of basis cocycles to consider the transversal logical CZ gates between the $\rd$ and $\bl$ copies. According to Eq.~\eqref{eq:logical_CZ_action},  we have
\be\label{eq:logical_CZ_basis}
\widetilde{\text{CZ}}^{\rd,\bl}(\green{\gamma^{0}_{\vc{i_3}}}) \;\widehat{=}\;
\prod_{\red{\vc{i_1}},\blue{\vc{i_2}}}
\big[\lo{\text{CZ}}^{\rd,\bl}_{\red{\vc{i_1}},\blue{\vc{i_2}}}
\big]^{\Lambda
(\red{\vc{i_1}}, \blue{\vc{i_2}}, \green{\vc{i_3}})}.
\ee
Here, the logical CZ gate structure is completely determined by 
\begin{align}
\Lambda
(\red{\vc{i_1}}, \blue{\vc{i_2}}, \green{\vc{i_3}})=&\int_{[\L]} \red{\alpha^1_{\vc{i_1}}} \cup \blue{\beta^1_{\vc{i_2}}} \cup \green{\gamma^0_{\vc{i_3}}} \cr
=& \int_{[\L] \frown \green{\gamma^0_{\vc{i_3}}}} \red{\alpha^1_{\vc{i_1}}} \cup \blue{\beta^1_{\vc{i_2}}} \equiv \int_{\green{\gamma^*_{2,\vc{i_3}}}} \red{\alpha^1_{\vc{i_1}}} \cup \blue{\beta^1_{\vc{i_2}}}, \cr
\end{align}
where we have used the fundamental class in the tensor sheaf $[\L] \in H_2(\L, \Fc^{\rd} \otimes  \Fc^{\bl} \otimes  \Fc^{\gr} ) $ and the sheaf 2-cycle $\green{\gamma^*_{2,\vc{i_3}}} $$\equiv$$  [\L] \frown \green{\gamma^0_{\vc{i_3}}} \in  H_2(\L, \Fc^{\rd} \otimes  \Fc^{\bl})$.  

We then apply the K\"unneth decomposition [see Lemma \ref{lemma:factorization}] of the triple cup product sum:
\begin{align}
& \int_{[\L]} \red{\alpha^1_{\vc{i_1}}} \cup \blue{\beta^1_{\vc{i_2}}} \cup \green{\gamma^0_{\vc{i_3}}}  \cr
=& \int_{[G_x]} \red{\bs^0_{i_1}} \cup \blue{\as^1_{i_2} } \cup \green{\cs^0} \cdot \int_{[G_y]} \red{\as^1_{j_1}} \cup \blue{\cs'^0} \cup \green{\bs^0_{j_3}}. \cr
\end{align}

Now by using the Poincar\'e duality of the classical sheaf code,  we have the following non-degenerate pairing for the $x$-factor:
\be
\mathcal{P}([\red{\bs^0_{i_1}}], [\blue{\as^1_{i_2}}])_x \equiv \int_{[G_x]} \red{\bs^0_{i_1}} \cup \blue{\as^1_{i_2}}  = \delta_{\red{i_1}, \blue{i_2}}. 
\ee
Since $\green{\cs^0} = \vc{1}$ is the all-1 vector,  we have the following triple cup product evaluation 
\be
\int_{[G_x]} \red{\bs^0_{i_1}} \cup \blue{\as^1_{i_2} } \cup \green{\cs^0} = \int_{[G_x]} \red{\bs^0_{i_1}} \cup \blue{\as^1_{i_2} } = \delta_{\red{i_1}, \blue{i_2}}.
\ee
Similarly, for the $y$-factor, we have 
\be
\mathcal{P}([\red{\as^1_{j_1}}],[\green{\bs^0_{j_3}}])_y = \int_{[G_y]} \red{\as^1_{j_1}}  \cup \green{\bs^0_{j_3}} =\delta_{\red{j_1}, \green{j_3}},
\ee
which leads to the triple cup product sum:
\be
\int_{[G_y]} \red{\as^1_{j_1}} \cup \blue{\cs'^0} \cup \green{\bs^0_{j_3}}  =  \int_{[G_y]} \red{\as^1_{j_1}}  \cup \green{\bs^0_{j_3}} = \delta_{\red{j_1}, \green{j_3}}.
\ee
Therefore, the logical CZ gate structure is given by 
\be\label{eq:intersection_structure}
\Lambda
(\red{\vc{i_1}}, \blue{\vc{i_2}}, \green{\vc{i_3}})=\int_{[\L]} \red{\alpha^1_{\vc{i_1}}} \cup \blue{\beta^1_{\vc{i_2}}} \cup \green{\gamma^0_{\vc{i_3}}} = \delta_{\red{i_1}, \blue{i_2}}  \delta_{\red{j_1}, \green{j_3}}.
\ee
The above logical CZ gate structure will determine the magic rate of the magic state fountain protocol in Sec.~\ref{sec:fountain}.

\subsection{Encoding rate and distance}
\label{sec:rate_and_distance}

We now determine the parameters $[[n,k,d]]$ of the codes of
Construction~\ref{con:2d-hgp}, in both the untwisted and the twisted case. The
conclusion is that they are the same as those obtained in
Ref.~\cite{ZhuKobayashiHsin2026} for the corresponding \emph{skeleton} codes, namely
constant rate $k=\Theta(n)$ and subsystem-code distance $d=\Omega(\sqrt{n})$. This
is not a coincidence: the three sheaf complexes of Eq.~\eqref{eq:three_copies} are
precisely subdivided from the skeleton chain complexes of Ref.~\cite{ZhuKobayashiHsin2026}.
The difference is only one of route: there, the classical code was first mapped to
a high-dimensional Poincar\'e CW complex and then truncated back to a $2$D skeleton;
here, the classical code is subdivided directly into a sheaf code
(Sec.~\ref{sec:subdivision}) and the sheaf carries the local structure that the
extra dimensions used to carry. The sheaf route has the advantage that the
parameters can be read off from those of the underlying classical code without any
intermediate manifold, which is what we do below.

\subsubsection{Inputs from the classical code}
\label{sec:rate_inputs}

Fix an asymptotically good classical LDPC code $\bar{\C}$ with parameters
$[\bar n,\bar k,\bar d]$,
\be\label{eq:good_code}
\bar{k}=\Theta(\bar n) , \qquad \bar{d}=\Theta(\bar n) ,
\ee
with a full-rank parity-check matrix $\Hs$, and subdivide it into a sheaf code on
a graph $G$ as in Sec.~\ref{sec:subdivision}, obtaining the sheaf $\Fc$; write
$\Fc^{\perp}$ for the sheaf with the dual local codes $\{\cC^{\perp}_v\}$, and
$\Fc^0=\FF_2$ for the constant sheaf, whose associated classical code is the
repetition code $\text{Rep}$. By Propositions~\ref{Prop:sheaf_homology}
and~\ref{Prop:sheaf_cohomology} the two classical codes attached to a graph factor
are $H_1(G,\Fc)$ and $H^0(G,\Fc)$, and by
Proposition~\ref{prop:subdivision-params} the subdivision preserves the dimension
and does not decrease the distance,
\begin{align}\label{eq:subdiv_params}
&\dim H_1(G,\Fc) = \bar k = \Theta(\bar n) , \cr
&\min\big\{|\eta_1| : 0 \neq [\eta_1] \in H_1(G,\Fc)\big\} \cr
&\qquad \ge\ w_c^{\min}\,\bar d = \Theta(\bar n) .
\end{align}
The corresponding statements for the transposed code, i.e.\ for
$H^0(G,\Fc)\cong\bar\C^{\top}$, are the quantities $\bar k^{\top}$ and
$\bar d^{\top}$ left open in Proposition~\ref{prop:subdivision-params};
throughout this subsection we assume the transposed code is likewise good,
$\bar k^{\top}=\Theta(\bar n)$ and $\bar d^{\top}=\Theta(\bar n)$.

The constant-sheaf factor behaves differently from the two good factors, and this
asymmetry drives everything below. For the repetition code on a connected graph,
\begin{align}\label{eq:rep_factor}
\dim H^0(G,\Fc^0) &= 1 , \qquad |\cs^0| = \Theta(\bar n) , \cr
\dim H_0(G,\Fc^0) &= 1 , \qquad |\cs_0| = 1 ,
\end{align}
where $\cs^0$ is the all-ones $0$-cocycle, supported on \emph{every} vertex, and
$\cs_0$ is a single vertex. So the repetition factor contributes only a single
homology class, but that class is heavy as a cocycle and light as a cycle.

Finally, since the qubits of each copy sit on the $1$-cells of
$\L=G_x\times G_y$ with $O(1)$-dimensional stalks, and $|E_\nu|=\Theta(\bar n)$,
\be\label{eq:n_scaling}
n = \Theta(\bar n^{2}) , \qquad\text{equivalently}\qquad
\bar n = \Theta(\sqrt{n}) .
\ee

\subsubsection{Encoding rate}
\label{sec:encoding_rate}

\begin{lemma}[Rate of the untwisted code]\label{lemma:rate_untwisted}
The untwisted $[[n,k,d]]$ code $\C = \C^{\rd}\otimes\C^{\bl}\otimes\C^{\gr}$ of
Construction~\ref{con:2d-hgp} has constant rate, $k=\Theta(n)$.
\end{lemma}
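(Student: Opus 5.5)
We compute the logical dimension of each of the three colour copies separately and add, since $\C=\C^{\rd}\otimes\C^{\bl}\otimes\C^{\gr}$ is three decoupled CSS codes on the same square complex $\L=G_x\times G_y$. By Eq.~\eqref{eq:k_untwisted}, copy $i$ encodes $k^i=\dim H^1(\L,\Fc^i)$, and since $\mathcal{A}^i$ is a tensor product of two graph cochain complexes, the K\"unneth formula over a field gives
\begin{align*}
\dim H^1(\L,\Fc^i) &= \dim H^1(G_x,\Fc^i_x)\,\dim H^0(G_y,\Fc^i_y) \\
&\quad+ \dim H^0(G_x,\Fc^i_x)\,\dim H^1(G_y,\Fc^i_y).
\end{align*}
The factor dimensions are then read off from the table of Lemma~\ref{lemma:subdivision-normal-form}. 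For $\Fc$ one has $\dim H^0=m-r$ and $\dim H^1=n-r$; for $\Fc^\perp$ these are exchanged. For the constant sheaf $\Fc^0$ on a connected graph, $\dim H^0=1$ and $\dim H^1=|E|-|V|+1$, which is at least $0$.

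For the \rd\ copy, $\mathcal{A}^{\rd}=C^\bullet(G_x,\Fc_x)\otimes C^\bullet(G_y,\Fc_y^\perp)$. The first K\"unneth term is $\dim H^1(G_x,\Fc_x)\cdot\dim H^0(G_y,\Fc_y^\perp)=(\bar n-\bar r)(\bar n'-\bar r')=\bar k\,\bar k'$. This is $\Theta(\bar n^2)$ because both input codes are good with $\bar k,\bar k'=\Theta(\bar n)$, as in Eq.~\eqref{eq:good_code}. Combined with $n=\Theta(\bar n^2)$ from Eq.~\eqref{eq:n_scaling}, this already gives $k\ge k^{\rd}=\Omega(n)$. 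If full row rank is not assumed, the same term reads $\dim\bar\C\cdot\dim\bar\C'$, which is still $\Theta(\bar n^2)$.

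For the upper bound, $k\le n$ trivially, since every copy's $k^i$ is bounded by its qubit count and the total qubit count is $n$. Hence $k=\Theta(n)$. Optionally one can also record $k^{\bl}$ and $k^{\gr}$: for example $k^{\gr}=\dim H^1(G_x,\Fc^0)\cdot\dim H^0(G_y,\Fc_y)+1\cdot\dim H^1(G_y,\Fc_y)$. Its second term is only $O(\bar n)$, and the first is at most $O(\bar n^2)$. These are not needed for the lower bound.

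The main obstacle is justifying the factor dimensions rather than the arithmetic. First, K\"unneth must hold for the sheaf tensor complex with the same degree conventions the HGP uses. This follows because Lemma~\ref{lemma:subdivision-normal-form} gives an explicit chain isomorphism to the original two-term complex plus an acyclic summand, and tensoring these preserves cohomology. Second, $G_x$ and $G_y$ must be connected, so that $\dim H^0(G,\Fc^0)=1$; if they are not, one restricts to a connected component or notes the count only grows. I would check the \rd\ term carefully against the transposition convention for $\Fc^\perp$ stated in Lemma~\ref{lemma:subdivision-normal-form}.
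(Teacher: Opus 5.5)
Your proposal is correct and follows the same route as the paper: a K\"unneth decomposition of $H^1(\L,\Fc^{i})$, a $\Theta(\bar n^2)=\Theta(n)$ lower bound coming from the \rd\ copy alone, and the trivial upper bound $k\le n$. You also pin down the surviving term explicitly as $\dim H^1(G_x,\Fc_x)\cdot\dim H^0(G_y,\Fc_y^{\perp})=\dim\bar\C\cdot\dim\bar\C'$ via Lemma~\ref{lemma:subdivision-normal-form}, which is sharper than the paper's appeal to ``both factors being good'': it needs only $\bar k,\bar k'=\Theta(\bar n)$, not goodness of the transposed codes, and the latter is incompatible with the full-row-rank assumption made there.
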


\begin{proof}
By Eq.~\eqref{eq:k_untwisted} the number of logical qubits of each copy is
$k^{i}=\dim H^1(\L,\Fc^{i})$, and by the K\"unneth theorem, applied to
$\L=G_x\times G_y$ (with $G_x=G_y=G$ in this case) exactly as in Sec.~\ref{sec:kunneth_factorization},
\begin{align}\label{eq:kunneth_H1}
H^1\big(\L,\Fc^{i}\big) =&
\Big[H^1\big(G_x,\Fc^{i}_x\big)\otimes H^0\big(G_y,\Fc^{i}_y\big)\Big] \cr
&\oplus \Big[H^0\big(G_x,\Fc^{i}_x\big)\otimes
H^1\big(G_y,\Fc^{i}_y\big)\Big].   \cr
\end{align}
For the \rd\ copy both factors are good, so by
Eqs.~\eqref{eq:subdiv_params} and~\eqref{eq:n_scaling}
\be\label{eq:k_red}
k^{\rd} \;\ge\; \Theta(\bar n)\cdot\Theta(\bar n) \;=\; \Theta(\bar n^{2})
\;=\; \Theta(n) ,
\ee
and since linear dimension is optimal, $k^{\rd}=\Theta(n)$. For the \bl\ and \gr\
copies one factor is the repetition code, which by Eq.~\eqref{eq:rep_factor}
contributes a single class, so the corresponding product in
Eq.~\eqref{eq:kunneth_H1} contributes
\be\label{eq:k_blue}
k^{\bl},\,k^{\gr} \;\ge\; \Theta(\bar n)\cdot 1 \;=\; \Omega(\bar n)
\;=\; \Omega(\sqrt{n}) .
\ee
The total number of logical qubit is hence 
\be
 k=k^{\rd}+k^{\bl}+k^{\gr}=\Theta(n).
\ee
\end{proof}

\begin{remark}\label{rem:asymmetry_rate}
The asymmetry between Eqs.~\eqref{eq:k_red} and~\eqref{eq:k_blue} is the sheaf
counterpart of the observation in Ref.~\cite{ZhuKobayashiHsin2026} that the secondary
copy carries only $\Omega(\sqrt n)$ logical qubits because the zeroth Betti number
of the corresponding complex is $1$. Here it is visible directly:
$\dim H^0(G,\Fc^0)=1$ for the constant sheaf. The \rd\ copy therefore carries all
of the constant-rate logical information, and the \bl\ and \gr\ copies are
sub-extensive; the \gr\ copy in particular is used for the gauging measurement of
the transversal CZ rather than for storage, with all its encoded logicals being gauge qubits in the subsystem code description as will disccused below.
\end{remark}

\begin{lemma}[Rate of the twisted code]\label{lemma:rate_twisted}
The twisted code $\tilde{\C}$ with parameters
$[[\tilde n = \Theta(n),\tilde k,\tilde d]]$ has constant rate,
$\tilde k=\Theta(\tilde n)=\Theta(n)$.
\end{lemma}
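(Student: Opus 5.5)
The plan is to show that twisting removes at most a sub-extensive number of logical qubits, so $\tilde k \ge k - O(\sqrt n)$. Lemma~\ref{lemma:rate_untwisted} then gives $\tilde k=\Theta(n)$. The twisted code lives on the same qubits as the untwisted one, so $\tilde n=n$. Its $Z$-stabilizers $B^i_f$ are unchanged. Its $X$-stabilizers are in bijection with the untwisted ones, each differing only by a diagonal CZ dressing. The logical reduction therefore comes entirely from the extra diagonal constraints imposed on the code space.

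First I would identify these constraints exactly, using the gauging picture of Sec.~\ref{sec:gauging_untwisted}. Gauging the $\green{\Z_2}$ symmetry projects onto the joint invariant sector of the operators $\widetilde{\text{CZ}}^{\rd,\bl}(\green{\rho^0})$, via Eq.~\eqref{eq:c_eom}. Likewise, the full twisted code $\tilde\C$ imposes the constraints of Eqs.~\eqref{eq:constraint_operator_sheaf}--\eqref{eq:constraint_operator_sheaf2}, one for each $0$-cocycle of each colour. The number of independent constraints is therefore at most
\be
\dim H^0(\L,\Fc^{\rd})+\dim H^0(\L,\Fc^{\bl})+\dim H^0(\L,\Fc^{\gr}).
\ee
Each constraint fixes the eigenvalue of one commuting diagonal Clifford logical operator, namely a product of logical CZs by Eq.~\eqref{eq:logical_CZ_action}. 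Such a constraint cuts the code-space dimension by at most a factor of $2$, so it removes at most one logical qubit. In the subsystem-code description it turns one logical qubit into a gauge qubit. I would formalize this by writing $\tilde\C$ as the joint $+1$ eigenspace of these commuting operators inside the untwisted code space. This identification follows from the gauging equivalence, in which the matter/flux sectors coincide on the zero-flux subspace by Lemma~\ref{lemma:stabilizer_commutation}. Next I would take $\log_2$ of the dimension and treat the remaining logicals as those not gauged.

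Second, I would bound the $H^0$ dimensions by K\"unneth: $H^0(\L,\Fc^i)=H^0(G_x,\Fc^i_x)\otimes H^0(G_y,\Fc^i_y)$. For the \bl\ and \gr\ copies, one factor is the constant sheaf with $\dim H^0=1$ by Eq.~\eqref{eq:rep_factor}, so their contributions are $O(\bar n)=O(\sqrt n)$. For the \rd\ copy, $H^0(\L,\Fc^{\rd})=H^0(G_x,\Fc_x)\otimes H^0(G_y,\Fc_y^\perp)$. By Lemma~\ref{lemma:subdivision-normal-form} with full-row-rank $\Hs$, we have $H^0(G_x,\Fc_x)=0$. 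The red constraints therefore vanish entirely, which fits the \rd\ copy being the storage copy. The total reduction is thus $O(\sqrt n)$, and $\tilde k\ge \Theta(n)-O(\sqrt n)=\Theta(n)$. Since $\tilde k\le \tilde n$, the bound is tight.

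The main obstacle I expect is rigorously justifying that the twisted code space is exactly the constrained subspace of the untwisted code space. The twisted $X$-stabilizers are not Pauli, so one cannot simply count independent generators. I would handle this by working on the zero-flux subspace $P_B$, where $\tilde\SS$ is abelian by Lemma~\ref{lemma:stabilizer_commutation}. There I would conjugate by the gauging map of Sec.~\ref{sec:gauging}, which relates $\tilde\C$ to the SPT picture. Counting via the path integral gives a cross-check: the path integral counts ground states as $\cZ[\L\times S^1_t]$, and that count equals the number of flat configurations satisfying the trilinear constraints. If this route stalls, I would instead bound $\dim\tilde\C$ below directly from the path-integral dimension count on $\L\times S^1$, using the factorized form in Eq.~\eqref{eq:intersection_structure}.
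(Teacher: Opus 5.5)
Your skeleton matches the paper's: one constraint per sheaf $0$-cocycle of each colour, each costing at most one logical qubit, counted via K\"unneth with the constant-sheaf factor. The gap is the step that disposes of the \rd\ constraints, which fails in the regime the lemma is about.

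You invoke full row rank to get $H^0(G_x,\Fc_x)=0$. But $H^0(G,\Fc)\cong\bar\C^{\top}$ by Lemma~\ref{lemma:subdivision-normal-form}, and Sec.~\ref{sec:rate_inputs} explicitly assumes the transposed code is good, $\bar k^{\top}=\Theta(\bar n)$. The passing mention of a full-rank $\Hs$ there is incompatible with that assumption, since $\bar k^{\top}=m-r$.

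The construction genuinely needs $\bar k^{\top}=\Theta(\bar n)$. The \gr\ labels $\green{\bs^0_{j_3}}\in H^0(G_y,\Fc_y)$ live in exactly the kind of space that full row rank kills. These labels are behind the $\Theta(\sqrt n)$ constraints the paper counts and behind the fountain. So under your hypothesis $H^0(\L,\Fc^{\gr})=0$, and you are analysing a different twisted code.

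In the intended regime, $\dim H^0(\L,\Fc^{\rd})=\dim H^0(G_x,\Fc_x)\cdot\dim H^0(G_y,\Fc_y^{\perp})=\Theta(\bar n)^2=\Theta(n)$. Your bound $\sum_i\dim H^0(\L,\Fc^i)$ is then extensive and yields only $\tilde k\ge k-\Theta(n)$, which proves nothing. The paper sidesteps this by asserting that the \rd- and \bl-labelled constraints are automatically satisfied, so that only the $\Theta(\sqrt n)$ \gr\ ones cut the code space.

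The missing idea is that you need not count constraints at all. Your $P_B$/path-integral route gives an exact count. On the zero-flux subspace the $\tilde A$'s generate an abelian group of signed permutations of flat configurations. Averaging over this group shows that $\dim\tilde\C$ equals the number of triples $([a],[b],[c])\in H^1(\L,\Fc^{\rd})\times H^1(\L,\Fc^{\bl})\times H^1(\L,\Fc^{\gr})$ satisfying Eqs.~\eqref{eq:constraint_operator_sheaf}--\eqref{eq:constraint_operator_sheaf2}. Every such constraint is bilinear in two of the three fields, so each triple $([a],0,0)$ is a solution. This gives $\tilde k\ge k^{\rd}=\Theta(n)$ in either regime.

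The same count is also what justifies your claim of at most one qubit lost per constraint. For bilinear constraints that claim is not automatic when they are imposed one after another on an already-restricted subspace. Fixing $[c]$ fixes this: the remaining conditions become linear in $[a]$ alone or, for each fixed $[a]$, linear in $[b]$. Hence $\log_2\dim\tilde\C\ge k-\sum_i\dim H^0(\L,\Fc^i)$.
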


\begin{proof}
The twisted code has the same physical qubits, so $\tilde n = n$. Its logical
qubits are labelled by the same $Z$-type operators
$\widetilde{Z}^{i}_{\eta_1}$ of Eq.~\eqref{eq:logical_Z_sheaf} as in the untwisted
code, so the only effect of the twist on the count is that some untwisted logical
states are projected out by the code-space constraints of
Eqs.~\eqref{eq:constraint_operator_sheaf}
and~\eqref{eq:constraint_operator_sheaf2}. Equivalently,
by Eq.~\eqref{eq:charge_is_CZ_sheaf} each independent charge parity operator
$\mathsf{C}^{i}$ that is not already trivial imposes one $\FF_2$ constraint.

By Eq.~\eqref{eq:cocycle_decomposition} the constraints labelled by
$\red{\alpha^1_{\vc{i_1}}}$ and $\blue{\beta^1_{\vc{i_2}}}$ are automatically
satisfied, because the corresponding transversal CZ operators
$\widetilde{\text{CZ}}^{\bl,\gr}$ and $\widetilde{\text{CZ}}^{\rd,\gr}$ act
trivially on the code space; only the constraints labelled by the \gr-type
$0$-cocycles $\green{\gamma^0_{\vc{i_3}}}=\green{\cs^0} \otimes
\green{\bs^0_{j_3}}$ survive. One of the two tensor factors of
$\green{\gamma^0_{\vc{i_3}}}$ is the constant sheaf, contributing a single class by
Eq.~\eqref{eq:rep_factor}, so
\be\label{eq:number_of_constraints}
\big|\{\green{\gamma^0_{\vc{i_3}}}\}\big| = 1\cdot\Theta(\bar n) + \Theta(\bar n)\cdot 1
= \Theta(\bar n) = \Theta(\sqrt{n}) ,
\ee
and therefore
\be\label{eq:k_twisted}
\tilde k \;=\; k - \Theta(\sqrt{n}) \;=\; \Theta(n) \;=\; \Theta(\tilde n) ,
\ee
using Lemma~\ref{lemma:rate_untwisted}. The twist removes only a sub-extensive
number of logical qubits and the constant rate survives.
\end{proof}

\subsubsection{Subsystem-code distance}
\label{sec:distance}

The subdivided sheaf complexes contain \emph{spurious} short cycles,
inherited from the subdivided graph $G$ with constant sheaf $\FF_2$ rather than from the input classical code. They only exist in the $\bl$ and $\gr$ copy, as shown by the following cycles:
\begin{align}\label{eq:spurious-cycle}
\blue{{\beta}^{1}_{\vc{i'_2}}} &= \blue{\bs_0^{i'_2}} \otimes
\blue{\fs^{j'_2}_1}, \cr
\green{{\gamma}_1^{\vc{i'_3}}} &= \green{\fs_1^{i'_3}} \otimes
\green{\bs_0^{j_3}}.
\end{align}
Here,  $\blue{\fs^{j'_2}_1} \in H_1(G, \FF_2)$ and $\green{\fs_1^{i'_3}} \in H_1(G, \Fc)$ are short 1-cycles determined by the girth of the graph $G$, which could have weight only $O(1)$ and oftentimes size $O(\log(n))$ for a random graph.  Meanwhile, $\blue{\bs_0^{i'_2}} \in H_0(G, \Fc^{\perp})$ and $\green{\bs_0^{j_3}} \in H_0(G, \FF_2)$ can occupy only a single vertex and has weight $O(1)$.  
Over all, the above short cycles can have weight only  $O(1)$. 

As in
Refs.~\cite{zhu2025topological,zhu2025transversal, ZhuKobayashiHsin2026} we therefore work with a
\emph{subsystem} code, in which only a chosen conjugate pair of (co)cycle bases
carries logical information and the remaining, short, classes are demoted to gauge
qubits.

\begin{lemma}[Subsystem-code distance]\label{lemma:subsystem}
Let a quantum code be defined on a sheaf complex, and choose a subset of a
1st homology basis $\{\eta_1^k\}$ to support the logical-$Z$ operators together with the
conjugate subset of a cohomology basis $\{\eta^1_{k'}\}$ supporting the logical-$X$
operators, such that the pairing of
Definition~\ref{def:pairing-sheaf} is canonical,
\be\label{eq:canonical_pairing}
\big\langle \eta^1_{k'},\, \eta_1^k \big\rangle \;=\; \delta_{k, k'} .
\ee
Then the resulting subsystem code has distance
\be\label{eq:subsystem_distance}
d \;=\; \min\Big(\min\{|\eta_1^k|\},\ \min\{|\eta^1_{k'}|\}\Big) .
\ee
\end{lemma}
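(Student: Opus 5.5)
We will treat this as the standard subsystem-code distance statement, specialized to the sheaf CSS setting, and argue directly from the definition of dressed logical operators. Concretely, we fix the code $C^0\xrightarrow{d^0}C^1\xrightarrow{d^1}C^2$ and let $\mathcal{K}$ denote the chosen set of logical classes. Their $Z$-representatives are the cycles $\eta_1^k$, and their $X$-representatives are the cocycles $\eta^1_{k'}$. By Proposition~\ref{prop:nondegen-sheaf}, we complete these to dual bases of $H_1$ and $H^1$, and we declare all remaining pairs to be gauge qubits. The subsystem distance is then the minimum weight of a \emph{dressed} logical operator, meaning a Pauli that commutes with all stabilizers and acts nontrivially on some bare logical qubit in $\mathcal{K}$, modulo stabilizers and gauge operators. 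Because the code is CSS, it suffices to treat $X$-type and $Z$-type operators separately.

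For the upper bound, each $\widetilde Z_{\eta_1^k}$ is itself a dressed logical. It commutes with every $A_{v,s}$ because $\partial\eta_1^k=0$, by Eq.~\eqref{eq:Z_logical_condition}. It anticommutes with the bare $\overline X$ on $\eta^1_k$ by Eq.~\eqref{eq:canonical_pairing}, so it acts nontrivially on logical qubit $k$. Its weight is $|\eta_1^k|$. The same argument applies to $\eta^1_{k'}$, which commutes with the $B_f$ and pairs nontrivially with $\eta_1^{k'}$. This gives $d\le$ the right-hand side of Eq.~\eqref{eq:subsystem_distance}.

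For the lower bound, take a $Z$-type dressed logical supported on a chain $z$. Commutation with the $X$-checks forces $\partial z=0$. Nontrivial action on logical qubit $k$ means $\langle \eta^1_k, z\rangle=1$; stabilizers and gauge operators shift $z$ only by boundaries and by gauge cycles, and both of these pair to zero with $\eta^1_k$. The key step is then to show $|z|\ge\min_k|\eta_1^k|$. This step is the main obstacle: a priori $z$ could be a short cycle (for example one of the spurious cycles of Eq.~\eqref{eq:spurious-cycle}) that happens to pair nontrivially with $\eta^1_k$. To close it, we would use the choice of logical basis in the intended application, namely that $\eta^1_k$ is supported in a region (for instance a product $\bs^0\otimes\as^1$) whose pairing with any cycle is computed factorwise by Lemma~\ref{lemma:factorization}. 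Any cycle with nonzero pairing must then project to a nonzero class of the good classical factor, so its weight is bounded below by that factor's distance. Applying the same argument to cocycles, with roles swapped, yields the $X$ bound.

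If the lemma is meant in full generality, without such structural input, we expect the correct reading to be that the minima in Eq.~\eqref{eq:subsystem_distance} are taken over \emph{all} representatives of the respective dressed classes. In that case the lower bound holds by definition, and the argument reduces to the upper-bound paragraph together with the observation that gauge operators cannot reduce a class to the identity. We would state this interpretation explicitly and then invoke it for Construction~\ref{con:2d-hgp}.
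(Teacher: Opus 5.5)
Your upper bound is the one implicit in the paper's proof: the chosen $\widetilde{Z}_{\eta_1^k}$ and the bare $X$-strings on $\eta^1_{k'}$ are nontrivial dressed logicals of weights $|\eta_1^k|$ and $|\eta^1_{k'}|$, so $d$ is at most the right-hand side of Eq.~\eqref{eq:subsystem_distance}.

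For the reverse inequality, the paper simply asserts that an $X$-type error can flip $\overline{Z}_{\eta_1^k}$ only if its support ``carries'' $\eta^1_k$. That is exactly the step you decline to take, and you are right to decline it. Commuting with the stabilizers and flipping $\overline{Z}_{\eta_1^k}$ only means the error is a cocycle $x$ with $\langle x,\eta_1^k\rangle=1$, i.e.\ it lies in a nontrivial class. Nothing forces it to contain the support of the chosen representative.

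The equality is in fact false for arbitrary choices. Replacing $\eta_1^k$ by $\eta_1^k+\partial_2 c$ preserves Eq.~\eqref{eq:canonical_pairing}, because $\langle \eta^1_{k'},\partial_2 c\rangle=\langle d^1\eta^1_{k'},c\rangle=0$. It changes neither the code nor $d$, yet in general changes $|\eta_1^k|$. Even with minimal-weight representatives, a sum of chosen classes, or a chosen class plus a gauge cycle, can be lighter than every chosen class. Already classically, the basis $\{1110,0111\}$ has weights $3,3$ while the codeword $1001$ has weight $2$. So in general only $d\le\mathrm{RHS}$ holds.

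Your reading (b) is therefore the correct repair, with one precision: the minimum must run over \emph{all} cycles $z$ with $\langle\eta^1_k,z\rangle\neq 0$ for some chosen $k$ (and dually over cocycles), not over the coset of each basis element separately. With that reading the lemma is just the definition of dressed distance, and the actual content moves to Lemma~\ref{lemma:distance_untwisted}.

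Your route (a) is what supplies that content. It differs from the paper, which bounds the weights of the chosen product classes via Eq.~\eqref{eq:tensor_weight_bound} and then appeals to this lemma; your route instead bounds every dressed logical directly. Two corrections are needed.

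First, (a) gives $|z|$ at least the distance of the relevant good classical factor, not $|z|\ge\min_k|\eta_1^k|$. It proves the lower bound needed for $\Omega(\sqrt{n})$, not the stated equality.

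Second, Lemma~\ref{lemma:factorization} does not apply, since $z$ is not a product chain; the right tool is a slant product.
\begin{itemize}
\item Take a chosen cocycle $\mathsf{b}^0\otimes\mathsf{a}^1\in C^0(G_x)\otimes C^1(G_y)$ with $d\mathsf{b}^0=0$. Let $z_{01}$ be the $C_0(G_x)\otimes C_1(G_y)$ component of $z$, and set $z'=(\mathsf{b}^0\otimes 1)z_{01}\in C_1(G_y)$.
\item Applying $\mathsf{b}^0\otimes 1$ to $\partial z=0$ kills the other component because $d\mathsf{b}^0=0$, leaving $\partial z'=0$.
\item Moreover $\langle \mathsf{b}^0\otimes\mathsf{a}^1,z\rangle=\langle\mathsf{a}^1,z'\rangle$ and $|z'|\le|z|$.
\end{itemize}
So a nonzero pairing makes $z'$ a nonzero codeword of the $y$-factor code. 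On the cocycle side, contracting the $1$-cycle factor of the conjugate chosen cycle against $x$ likewise yields a nonzero $0$-cocycle on the other factor. Its weight is bounded below by the minimum weight of that factor's $0$-cocycle code.
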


\begin{proof}
By Eq.~\eqref{eq:canonical_pairing} the chosen logical operators are canonically
conjugate, $\lo{Z}_{\eta_1^k}\lo{X}_{\eta^1_{k'}} = -\lo{X}_{\eta^1_{k'}}\lo{Z}_{\eta_1^k}$
if and only if $\eta=\eta'$. Hence an $X$-type error can flip the eigenvalue of
$\lo{Z}_{\eta_1^k}$ only if its support carries the conjugate cocycle $\eta^1_{k'}$,
and dually a $Z$-type error must carry $\eta_1^k$ to flip $\lo{X}_{\eta^1_{k'}}$. The
two distances are therefore $d_Z=\min\{|\eta_1^k|\}$ and
$d_X=\min\{|\eta^1_{k'}|\}$, and $d=\min(d_Z,d_X)$. Classes outside the chosen
subset support gauge qubits and do not enter.
\end{proof}

\begin{lemma}[Distance of the untwisted code]\label{lemma:distance_untwisted}
The untwisted code $\C$ of Construction~\ref{con:2d-hgp} has subsystem-code
distance $d=\Omega(\sqrt{n})$.
\end{lemma}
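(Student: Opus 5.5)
The plan is to apply Lemma~\ref{lemma:subsystem} copy by copy. For each of the \rd, \bl\ and \gr\ copies we pick a product (K\"unneth) basis of logical cocycles and cycles. We then verify the canonical pairing \eqref{eq:canonical_pairing}, and bound the weight of every chosen representative from below by $\Omega(\bar n)=\Omega(\sqrt n)$, using Eq.~\eqref{eq:n_scaling}. The short spurious classes of Eq.~\eqref{eq:spurious-cycle} are deliberately left out of the chosen subset and so become gauge qubits.

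\textbf{Red copy.} By the K\"unneth decomposition \eqref{eq:kunneth_H1}, the logical-$X$ representatives are product cocycles $\bs^0_{i}\otimes\as^1_{j}$ (and the mirror sector), and the logical-$Z$ representatives are product cycles $\as_1^{i}\otimes\bs_0^{j}$. The representatives are chosen through the Kronecker-dual bases of Corollary~\ref{cor:kronecker-dual}, so the pairing factorizes into $\delta_{ii'}\delta_{jj'}$, which is \eqref{eq:canonical_pairing}. For the weights, the weight of a product element is the product of the factor weights. Each factor is either a nonzero $1$-cycle, of weight at least $w_c^{\min}\bar d=\Theta(\bar n)$ by Eq.~\eqref{eq:subdiv_params}, or a nonzero $0$-(co)cycle, of weight at least $1$. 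Every product element contains at least one nontrivial degree-$1$ factor of the good code, so its weight is $\Omega(\bar n)$. For the $X$ side we need the cocycle distance: we use $H^0(G,\Fc^\perp)\cong H_1(G,\Fc)$ with $d^0(\Fc^\perp)\ge d$ (the remark after Lemma~\ref{lemma:subdivision-normal-form}), and $H^1$ classes paired with heavy $0$-cocycles through the Poincar\'e duality of Theorem~\ref{thm:poincare-duality-sheaf-codes}.

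\textbf{Blue and green copies.} For \bl\ and \gr\ we keep only the sectors $\as^1_{i}\otimes\cs'^0$ (respectively $\cs^0\otimes\bs^0_j$) together with their conjugate cycles $\as^{i}_1\otimes\cs'_0$. On the $X$ side, $\cs^0$ is the all-ones cocycle of weight $\Theta(\bar n)$, and a nonzero good-code factor also has weight $\Theta(\bar n)$. On the $Z$ side the cycle contains the heavy good-code $1$-cycle factor, while $\cs_0$ (a single vertex) contributes only a factor of $1$. The classes built from $H_1(G,\FF_2)$ girth cycles are exactly the spurious ones, and they are discarded.

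\textbf{Main obstacle.} The delicate step is that Lemma~\ref{lemma:subsystem} bounds the minimum weight of the chosen representatives, whereas the distance must control every dressed logical operator, i.e.\ the chosen representative plus any stabilizer and any gauge logical. I would handle this through the paired witness. Suppose an $X$-error $E$ flips $\lo Z_{\eta_1^k}$, that is, $\langle E,\eta_1^k\rangle\ne0$. Then $E$ must also pair nontrivially with every cycle homologous to $\eta_1^k$ modulo gauge classes. I would then lower-bound $|E|$ by a product-code (tensor) distance argument in the style of Tillich--Z\'emor: restrict $E$ to the column slices and row slices of $G_x\times G_y$, and use the $\Theta(\bar n)$ distances of both factor codes (including the transposed code, assumed good). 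The constant-sheaf factor needs care, because its cycle $\cs_0$ is light. There I would argue that pairing with $\cs^0\otimes\bs^0_j$ forces $E$ to have a nontrivial good-code component on every slice $G_x\times\{v\}$ it meets. This yields $|E|\ge\Theta(\bar n)=\Omega(\sqrt n)$.
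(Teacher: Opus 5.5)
Your first two paragraphs are the paper's proof: apply Lemma~\ref{lemma:subsystem} to K\"unneth product representatives, bound their weights through the product structure, and demote the girth classes of Eq.~\eqref{eq:spurious-cycle} to gauge qubits. Your identity $|u\otimes v|=|u||v|$ is exact and sharper than the $\max$ bound the paper quotes. You also correctly place the $\Omega(\bar n)$ of an $X$-representative in its $0$-cocycle factor, since an $H^1(G,\Fc)$ representative on a graph can have weight $1$.

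The duality needed for \eqref{eq:canonical_pairing} is the cycle--cocycle pairing of Proposition~\ref{prop:nondegen-sheaf} on each factor, not the pairing $\mathcal{P}$ of Corollary~\ref{cor:kronecker-dual}.

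There is one slip. For the \gr\ copy you keep $\cs^0\otimes\bs^0_j$, but that is a $0$-cochain on $\L$ (the \gr\ symmetry label), not a \gr\ logical. The paper simply demotes every \gr\ logical to a gauge qubit. If you want to keep a \gr\ sector, it must be $\cs^0\otimes\as^1_j$ with $\as^1_j\in H^1(G_y,\Fc_y)$ and conjugate cycles $\cs_0\otimes\as_1^j$, which again have weight $\Omega(\bar n)$.

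Your ``main obstacle'' paragraph goes beyond the paper, which never treats dressed operators. There, Lemma~\ref{lemma:subsystem} \emph{is} the notion of subsystem distance: the minimum weight over the chosen representatives. So for the lemma as stated, your first two paragraphs already suffice. As a sketch of the stronger dressed-distance bound it is the right idea, but it needs three fixes.

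(i) If ``modulo gauge classes'' means shifting $\eta_1^k$ by gauge cycles, the claim is false. Take a dressed logical $E=\xi_k+g^1$, where $g^1$ is the cocycle conjugate to a gauge cycle $g$. Then $\langle E,\eta_1^k+g\rangle=1+1=0$, since the cross terms vanish for dual bases. Only homology invariance is needed.

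(ii) For the constant-sheaf copy, $E$ should be paired with the bare $Z$-logical $\as_1^i\otimes\cs'_0$, not with $\cs^0\otimes\bs^0_j$. Since $G_y$ is connected, $\as_1^i\otimes\cs'_0$ is homologous to $\as_1^i\otimes v$ for \emph{every} $v\in V_y$; the difference is $\partial(\as_1^i\otimes p)$ for a path $p$. A cocycle $E$ flipping it is therefore nonzero on every slice $G_x\times\{v\}$, not merely on those it meets, and so $|E|\ge|V_y|=\Theta(\bar n)$.

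(iii) The $Z$ side of that copy needs the dual argument. Suppose $F$ is a cycle with $\langle\as^1_i\otimes\cs'^0,F\rangle\neq0$. Let $\varepsilon$ be the augmentation of $C_\bullet(G_y,\FF_2)$. Then $(I\otimes\varepsilon)F$ is a $1$-cycle of $\Fc_x^\perp$ pairing nontrivially with $\as^1_i$, hence nonzero. So $|F|\ge d_1(\Fc_x^\perp)$, which is $\Omega(\bar n)$ under the paper's assumption that the transposed code is good.
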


\begin{proof}
We use the weight bound for product (co)chains of Ref.~\cite{Bravyi:2014bq}: for
$u$ and $v$ on the two tensor factors,
\be\label{eq:tensor_weight_bound}
\min\big\{|u\otimes v|\big\} \;\ge\;
\max\big(\min\{|u|\},\ \min\{|v|\}\big) .
\ee
Take the logical operators supported on the K\"unneth basis classes of
Eq.~\eqref{eq:spatial_parts}. For the \rd\ copy both factors are good, so by
Eqs.~\eqref{eq:subdiv_params} and~\eqref{eq:tensor_weight_bound},
\begin{align}\label{eq:red_weight}
\min\big\{\big|\red{\alpha^1_{\vc{i_1}}}\big|\big\}
&= \min\big\{\big|\red{\as^1_{i_1}}\otimes\red{\bs^0_{j_1}}\big|\big\} \cr
&\ge\ \max\big(\Theta(\bar n),\,\Omega(1)\big) = \Omega(\sqrt{n}), \cr
\end{align}
and the same bound holds for the conjugate cycle
$\red{\alpha_1^{\vc{i_1}}}$. For the \bl\ copy one factor is the
constant sheaf. On the \emph{cocycle} side this factor is the all-ones $0$-cocycle
$\cs^0$, of weight $\Theta(\bar n)$ by Eq.~\eqref{eq:rep_factor}, so
\begin{align}\label{eq:blue_weight}
\min\big\{\big|\blue{\beta^1_{\vc{i_2}}}\big|\big\}
&= \min\big\{\big|\blue{\bs^0_{i_2}}\otimes\blue{\as^1_{j_2}}\big|\big\} \cr
&\ge\ \max\big(\Theta(\bar n),\,\Theta(\bar n)\big) = \Omega(\sqrt{n}),  \cr
\end{align}
and the same bound holds for the conjugate cycle
$\blue{\beta_1^{\vc{i_2}}}$. 
For the $\gr$ copy, all the logicals are considered as gauge qubits, and hence do not contribute to the subsystem-code distance. Collecting the above results leads to 
\begin{align}\label{eq:dX_dZ}
d_X =& \min\big\{|\red{\alpha^1_{\vc{i_1}}}|,\,|\blue{\beta^1_{\vc{i_2}}}|\big\}
= \Omega(\sqrt{n}) , \cr
\ d_Z =& \min\big\{|\red{\alpha_1^{\vc{i_1}}}|,\,|\blue{\beta_1^{\vc{i_2}}}|\big\} = \Omega(\sqrt{n}) ,
\end{align}
so $d=\min(d_X,d_Z)=\Omega(\sqrt{n})$ by Lemma~\ref{lemma:subsystem}.
\end{proof}

% \begin{remark}[Where the subsystem restriction is needed]\label{rem:spurious}
% On the \emph{cycle} side the constant-sheaf factor contributes $\cs_0$, a single
% vertex of weight $1$ by Eq.~\eqref{eq:rep_factor}, so the bound
% \eqref{eq:tensor_weight_bound} for classes involving $\cs_0$ degrades to
% $\Omega(1)$. These are exactly the short classes demoted to gauge qubits by
% Lemma~\ref{lemma:subsystem}; the same is true of the spurious classes
% $\fs_1\otimes\fs'_1$. Without the subsystem restriction the bare distance of the
% \gr\ copy would be only $\Omega(1)$ on the electric side, in agreement with the
% corresponding statement in Ref.~\cite{zhu2025topological}.
% \end{remark}

Passing to the twisted code requires one additional input, because $\tilde{\C}$ is
not a CSS code and its distance is not simply $\min(d_X,d_Z)$.

\begin{definition}[Distance of a Clifford stabilizer code]\label{def:distance_clifford}
For the Clifford stabilizer code $\tilde{\C}$ with stabilizer group
$\tilde{\SS}$, a logical operator is an $L$ with
$P_{\tilde{\C}}\,[L,\tilde{\SS}]\,P_{\tilde{\C}}=\I$ and
$P_{\tilde{\C}} L P_{\tilde{\C}} \neq \lo{I}$, and
\be\label{eq:distance_clifford}
\tilde d = \min\big\{|\mathrm{supp}(L)| \,:\, L \text{ logical}\big\} .
\ee
The subsystem-code distance is defined by restricting to the chosen subset of
logical operators, as in Lemma~\ref{lemma:subsystem}.
\end{definition}

\begin{assumption}[Condensation descendants]\label{assump:TQFT}
The topological operators of the gauge  theory are the basic electric and magnetic
operators together with their condensation descendants, the latter being sums of
condensed operators over the support of the condensation. Since a condensation
descendant is supported on a region of higher dimension than the operators being
condensed, it has weight no smaller than the basic operators.  For the
condensation-defect and SymTFT background underlying this statement see
Refs.~\cite{Roumpedakis2022, Kaidi2022, Apruzzi2021}.
\end{assumption}

\begin{lemma}[Distance of the twisted code]\label{lemma:distance_twisted}
Assuming Assumption~\ref{assump:TQFT}, the twisted code $\tilde{\C}$ with
parameters $[[\tilde n=\Theta(n),\tilde k,\tilde d]]$ has subsystem-code distance
$\tilde d = \Omega(\sqrt{\tilde n}) = \Omega(\sqrt{n})$.
\end{lemma}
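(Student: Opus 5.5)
We reduce the twisted-code distance to the untwisted one of Lemma~\ref{lemma:distance_untwisted}. Since $\tilde n=n$, it suffices to show that every logical operator of $\tilde{\C}$ in the chosen logical subset has support at least a constant times the weight of some nontrivial chosen untwisted logical. We classify logicals via Definition~\ref{def:distance_clifford} and Assumption~\ref{assump:TQFT}: up to multiplication by elements of $\tilde{\SS}$, every logical is an electric operator, a magnetic operator, or a condensation descendant of these.

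\emph{Electric sector.} The $Z$-type logicals are the Wilson operators $\widetilde{Z}^{i}_{\eta_1}$ of Eq.~\eqref{eq:logical_Z_sheaf}, unchanged by the twist. Their support is $|\eta_1|$ with $0\neq[\eta_1]\in H_1(\L,\Fc^i)$ in the chosen subset. Multiplying by $B$'s only shifts $\eta_1$ by a boundary. Multiplying by a dressed $\tilde A$ adds $X$-content, so it cannot make the operator lighter on its $Z$-part: an operator acting trivially on a qubit $(e,s)$ must have vanishing Pauli content there. So the bound $d_Z=\Omega(\sqrt n)$ from Eq.~\eqref{eq:dX_dZ} transfers directly.

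\emph{Magnetic sector.} The dressed operators $\cX^{i}(\zeta^1)$ of Eq.~\eqref{eq:magnetic_projected_sheaf} have Pauli-$X$ content exactly $\zeta^1$, and the projectors and CZ dressings are diagonal. Consider any operator $L$ whose action on the code space matches $\cX^{i}(\zeta^1)$. Expand $L$ in the Pauli basis; the $X$-part of each term is determined modulo the bare $X$-parts $A^i_{v,s}$ of the generators, i.e.\ modulo coboundaries $d^0\lambda^0$. Hence $|\mathrm{supp}\,L|\ge\min_{\lambda^0}|\zeta^1+d^0\lambda^0|$, and the right-hand side is $\Omega(\sqrt n)$ by Eqs.~\eqref{eq:red_weight}--\eqref{eq:blue_weight}.

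\emph{Descendants.} By Assumption~\ref{assump:TQFT}, condensation descendants have weight no smaller than the basic operators, so they obey the same bound.

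The remaining step is to restrict to the subsystem logical subset, as in Lemma~\ref{lemma:subsystem}. The \gr\ classes and the spurious classes of Eq.~\eqref{eq:spurious-cycle} are gauge. The twist removes only the $\Theta(\sqrt n)$ constraints of Lemma~\ref{lemma:rate_twisted}, which are labelled by \gr\ cocycles. We must check that the canonical pairing of Eq.~\eqref{eq:canonical_pairing} among the surviving \rd, \bl\ classes is preserved, so that the argument of Lemma~\ref{lemma:subsystem} applies.

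The main obstacle is the magnetic step. The dressed logicals are non-Pauli and act only on the projected subspace, so the claim that any equivalent operator still carries a representative of $[\zeta^1]$ in its support needs care. I would first try a cleaning-type argument. If $L$ acts trivially on a region $R$, then conjugating a chosen electric logical supported inside $R$ shows that $L$ commutes with it on the code space. Using the canonical pairing, this forces the $X$-content of $L$ to avoid the dual cycle. Where this cannot be made rigorous, we fall back on Assumption~\ref{assump:TQFT}.
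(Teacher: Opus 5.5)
Your proposal has the same skeleton as the paper's proof. The electric logicals are the untwisted Wilson operators, the magnetic ones are bounded below by their bare $X$-content as in Eq.~\eqref{eq:support_inequality}, all remaining operators are handed to Assumption~\ref{assump:TQFT}, and one restricts to the conjugate bases of Lemma~\ref{lemma:distance_untwisted}. The compatibility check you flag at the end is not done in the paper either; it simply reuses those bases.

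Your magnetic step is in fact sharper than the paper's. The $B^{i}_{f}$ confine code states to flat configurations, and the Wilson operators on sheaf $1$-cycles commute exactly with $\tilde{\SS}$. So the code space splits into sectors labelled by triples of cohomology classes. Any operator with a nonzero matrix element between sectors whose \rd\ classes differ by $[\zeta^1]$ must therefore contain \emph{some} Pauli term whose \rd\ $X$-part is a cocycle in $[\zeta^1]$ (some term, not every term as you wrote). This bounds every operator with the same code-space action, not just the representative $\cX^{i}(\zeta^1)$. The step you call the main obstacle therefore needs neither a cleaning argument nor the assumption.

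The step that fails as written is the electric one. Multiplying by twisted generators does not merely add $X$-content. Moving $A^{\gr}_{w,t}$ through the dressing of $\tilde{A}^{\bl}_{v,r}$ gives
\[
\tilde{A}^{\bl}_{v,r}\,\tilde{A}^{\gr}_{w,t}
=A^{\bl}_{v,r}\,A^{\gr}_{w,t}\,(\text{CZ dressings})\,
(-1)^{\int_{\eta_2}\red{\hat{a}^1}\cup\bar{v}_r\cup d\bar{w}_t}.
\]
The last factor is a \rd\ Wilson operator $\widetilde{Z}^{\rd}_{\chi}$ on an open chain. Indeed, by the Leibniz rule and $\partial\eta_2=0$, $\langle x^0,\partial_1\chi\rangle=\int_{\eta_2}x^0\cup d\bar{v}_r\cup d\bar{w}_t$. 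For $x^0=\bar{u}_s$ this is the generically nonzero triple-intersection number of Eq.~\eqref{eq:triple_intersection}.

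So stabilizer multiplication can change the \rd\ $Z$-content of $\widetilde{Z}^{\rd}_{\eta_1}$ by a non-cycle. Moreover, for a non-Pauli stabilizer group, operators with the same code-space action need not differ by stabilizer factors at all. Hence your argument does not show that $d_Z$ transfers. You must invoke Assumption~\ref{assump:TQFT} as the paper does: directly as the weight statement that no logical operator is lighter than the basic electric and magnetic ones, not merely as a classification up to stabilizers.
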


\begin{proof}
The logical operators of $\tilde{\C}$ were determined in
Sec.~\ref{sec:logical_operators_sheaf}. The electric ($Z$-type) logical operators
$\widetilde{Z}^{i}_{\eta_1}$ of Eq.~\eqref{eq:logical_Z_sheaf} are \emph{identical}
to those of the untwisted code, since the $Z$-stabilizers are untouched by the
twist; their minimal weights are therefore those of
Lemma~\ref{lemma:distance_untwisted}. The magnetic (dressed-$X$) logical operators
$\cX^{i}(\zeta^1)$ of Eq.~\eqref{eq:magnetic_projected_sheaf} are the bare $X$
strings decorated with $Z$-type projectors, so their support can only grow,
\be\label{eq:support_inequality}
\big|\mathrm{supp}\big(\cX^{i}(\zeta^1)\big)\big|
\;\ge\; \big|\mathrm{supp}\big(\lo{X}^{i}(\zeta^1)\big)\big|
\;=\; |\zeta^1| .
\ee
By Assumption~\ref{assump:TQFT} no other logical operator can be lighter than the
basic electric and magnetic ones. Restricting, as in
Lemma~\ref{lemma:subsystem}, to the same conjugate bases used in
Lemma~\ref{lemma:distance_untwisted},
\begin{align}\label{eq:twisted_distance}
\tilde d \;\ge &\; 
\min\big\{|\red{\alpha_1^{\vc{i_1}}}|,\,|\blue{\beta_1^{\vc{i_2}}}|,\,
|\red{\alpha^1_{\vc{i_1}}}|,\,|\blue{\beta^1_{\vc{i_2}}}|\big\}  \cr
=&\; \Omega(\sqrt{n}),
\end{align}
which is the same bound as in the untwisted case.
\end{proof}

\begin{remark}[Electric and magnetic distances]\label{rem:electric_magnetic}
Because $\tilde{\C}$ is not CSS, the natural refinement of $d_X$ and $d_Z$ is the
pair of \emph{magnetic} and \emph{electric} distances $d_{\cX}$ and $d_Z$,
measuring respectively the errors that fail to commute with the dressed
$X$-stabilizers and with the $Z$-stabilizers. With that reading, the \gr\ copy has subspace distance 
$d^{\gr}_{\cX}=\Omega(\sqrt n)$ but $d^{\gr}_{Z}=\Omega(1)$, again because of the
light cycle $\green{{\gamma}_1^{\vc{i'_3}}}$ of Eq.~\eqref{eq:spurious-cycle}; this is harmless because the \gr\
copy is not used for storage (Remark~\ref{rem:asymmetry_rate}) and the corresponding logicals are treated as gauge qubits.
\end{remark}

\subsubsection{Summary}
\label{sec:parameter_summary}

\begin{theorem}[Code parameters]\label{thm:parameters}
Let $\bar{\C}$ be an asymptotically good classical LDPC code and let $\C$ and
$\tilde{\C}$ be the untwisted and twisted codes of
Construction~\ref{con:2d-hgp} built from its subdivided sheaf code. Then both are
qLDPC codes with
\be\label{eq:final_parameters}
\big[[\,n,\ \Theta(n),\ \Omega(\sqrt{n})\,]\big] ,
\ee
i.e.\ constant encoding rate and subsystem-code distance $\Omega(\sqrt{n})$, the
distance statement for $\tilde{\C}$ holding under
Assumption~\ref{assump:TQFT}.
\end{theorem}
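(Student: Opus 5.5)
The plan is to assemble Theorem~\ref{thm:parameters} from the lemmas already established in this section, adding only the missing qLDPC and block-length bookkeeping.

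\emph{Block length and sparsity.} Take the three copies of Construction~\ref{con:2d-hgp} on $\L=G_x\times G_y$. The subdivided graphs have bounded degree, since $\bar\Hs$ has bounded row and column weight and every vertex of $G$ inherits one of these weights. All stalks therefore have dimension $O(1)$, so $n=\Theta(\bar n^2)$ as in Eq.~\eqref{eq:n_scaling}. For the untwisted code the generators are the rows of $d^0$ and $d^1$, whose weights are $O(1)$ by Secs.~\ref{sec:X_stabilizer}--\ref{sec:Z_stabilizer}. For the twisted code the $B$'s are unchanged. Each dressed $\tilde A^{i}_{v,s}$ of Eq.~\eqref{eq:stabilizer_summary} contains CZ gates only on pairs of edge qudits that share a $2$-cell with $v$ inside $\L_{\eta_2}$, because the cup product $\bar v_s\cup\bar e_r\cup\bar e'_t$ vanishes unless all three cells are faces of a common square. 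The number of such squares is bounded by the product of the two vertex degrees, so each twisted generator has support $O(1)$ and each qubit lies in $O(1)$ generators. This gives the qLDPC property for both codes, with $\tilde n=n$.

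\emph{Rate.} Invoke Lemma~\ref{lemma:rate_untwisted} for $k=\Theta(n)$. Then invoke Lemma~\ref{lemma:rate_twisted}: the twist removes at most $\dim H^0(\L,\Fc^{\gr})+\dim H^0(\L,\Fc^{\rd})+\dim H^0(\L,\Fc^{\bl})$ logical qubits. By K\"unneth with one constant-sheaf factor of $H^0$-dimension $1$, the \gr\ and \bl\ terms are $O(\bar n)=O(\sqrt n)$, and the \rd\ constraints are trivial on the code space. Hence $\tilde k=\Theta(n)$. One point must be checked here: the count of surviving constraints should be bounded \emph{above} by $O(\sqrt n)$, not merely estimated. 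I would bound it by $\dim H^0$ of the relevant tensor-product sheaf via K\"unneth.

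\emph{Distance.} For the untwisted code, apply Lemma~\ref{lemma:distance_untwisted}. This uses the subsystem reduction of Lemma~\ref{lemma:subsystem} with Kronecker-dual K\"unneth bases. These are provided by Corollary~\ref{cor:kronecker-dual}, whose use is justified by the local acyclicity of the subdivided sheaf (Appendix~\ref{app:acyclic}) and Theorems~\ref{thm:poincare-duality-sheaf-codes}--\ref{thm:cap-product-poincare-duality-map}. It also uses the product weight bound Eq.~\eqref{eq:tensor_weight_bound} together with Eq.~\eqref{eq:subdiv_params} and the assumed goodness of $\bar\C^\top$. For the twisted code, apply Lemma~\ref{lemma:distance_twisted} under Assumption~\ref{assump:TQFT}. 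Electric logicals are identical to the untwisted ones. Magnetic logicals have support at least that of the bare $X$ cocycle, by Eq.~\eqref{eq:support_inequality}. Condensation descendants are no lighter.

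\emph{Main obstacle.} The weakest link is the distance step. First, for the untwisted code one must ensure that the minimum over the whole \emph{class} $[\red{\alpha^1_{\vc{i_1}}}]$ (all representatives, modulo stabilizers, and including dressing by gauge-qubit operators) is $\Omega(\bar n)$, not just the weight of the chosen product representative. I would handle this with the standard hypergraph-product argument: restrict a representative to a slice $G_x\times\{v'\}$ on which the conjugate dual cycle is supported, and use the distance of the good factor code there. Second, for the twisted code the reduction to basic operators is exactly Assumption~\ref{assump:TQFT}. I would simply state the theorem conditionally on that assumption rather than attempt to prove it.
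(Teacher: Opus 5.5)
Your route is the paper's. Theorem~\ref{thm:parameters} is proved there by combining Lemmas~\ref{lemma:rate_untwisted}, \ref{lemma:rate_twisted}, \ref{lemma:distance_untwisted} and~\ref{lemma:distance_twisted} with the $O(1)$ weight of the $A$, $B$ generators and the locality of the CZ dressing. Your bookkeeping for the dressing (confined to the at most $\deg v\cdot\deg v'$ squares at $v\otimes v'$) correctly expands that last step. Two of your supplementary arguments, however, would not go through as written.

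\emph{Rate of $\tilde\C$.} You bound the loss by $\dim H^0(\L,\Fc^{\gr})+\dim H^0(\L,\Fc^{\rd})+\dim H^0(\L,\Fc^{\bl})$ and discard the \rd\ term as ``trivial on the code space'', echoing the proof of Lemma~\ref{lemma:rate_twisted}. But by K\"unneth and Lemma~\ref{lemma:subdivision-normal-form}, $\dim H^0(\L,\Fc^{\rd})=\dim H^0(G_x,\Fc_x)\cdot\dim H^0(G_y,\Fc_y^\perp)$. This is (transposed-code dimension)$\times$(code dimension) $=\Theta(n)$ under the standing assumption of Sec.~\ref{sec:rate_inputs} that $\bar\C^\top$ is good. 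Moreover, $\mathsf{C}^{\rd}(\mu^0)=\widetilde{\text{CZ}}^{\bl,\gr}(\mu^0)$ is not trivial on the untwisted $\bl\otimes\gr$ logical space. For $\mu^0=\mu_x\otimes\mu_y$, the nondegenerate pairings $H^0(G_x,\Fc_x)\times H^1(G_x,\Fc_x^\perp)$ and $H^0(G_y,\Fc_y^\perp)\times H^1(G_y,\Fc_y)$, together with the all-ones classes of the constant-sheaf factors, make it a nontrivial logical CZ between \bl\ and \gr\ classes. That it equals $\I$ on $\tilde\C$ is tautological and says nothing about lost dimension, so no $O(\sqrt n)$ constraint count is available this way.

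Replace the count by a lower bound. Every charge parity operator is bilinear in two \emph{different} colours, so it acts trivially on $\ket{\Phi}=\ket{\psi}_{\rd}\otimes\lo{\ket{0}}_{\bl}\otimes\ket{0\cdots0}_{\gr}$. Write $P=\prod_{v,t}(\I+\tilde A^{\gr}_{v,t})/2=2^{-\dim C^0(\L,\Fc^{\gr})}\sum_{\lambda}\tilde A^{\gr}(\lambda)$. In $\langle\Phi'|P|\Phi\rangle$ only cocycle labels $\lambda$ survive, and for those $\tilde A^{\gr}(\lambda)=\mathsf{C}^{\gr}(\lambda)$ acts as $\I$. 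Hence $P$ is a rescaled isometry on this $2^{k^{\rd}}$-dimensional space. Its image satisfies all twisted generators, because the relevant commutators are products of $B$'s that equal $+1$ there (Lemma~\ref{lemma:stabilizer_commutation}). Thus $\tilde k\ge k^{\rd}=\Theta(n)$.

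\emph{Distance.} You are right that Lemma~\ref{lemma:distance_untwisted} only bounds the chosen product representatives, and that a subsystem distance needs a bound over the whole dressed class. Your slice argument fails, though. Restricting a cocycle representative $z$ of $[\as^1\otimes\bs^0]$ to the single slice $G_x\times\{v'\}$ carrying the dual cycle $\as_1\otimes\bs_0$ only gives a $1$-cochain on $G_x$ pairing to $1$ with $\as_1$, which can have weight $1$. The $1$-cocycle factor is the light one; the $\Omega(\bar n)$ lives in the $0$-cocycle factor on the \emph{other} graph.

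Contract in the opposite direction instead. The face condition $(\mathrm{id}\otimes d^0_y)z_x=(d^0_x\otimes\mathrm{id})z_y$ shows that $w=\sum_e\as_1(e)\,z_x(e,\cdot)$ lies in $Z^0(G_y,\Fc_y^\perp)$ and vanishes on coboundaries. It is nonzero when $[z]$ has a nonzero component along a chosen $\as^1\otimes\bs^0$ and $\as_1$ is dual to that $\as^1$. Hence $|z|\ge|w|\ge d^0(G_y,\Fc_y^\perp)\ge\bar d$: many slices are each nonempty, rather than one slice being heavy. Dually, for a cycle representative $y$, contract the $G_y$ direction with the cocycle $\bs^0$. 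This is a weighted sum over all slices, since a single-slice restriction of a cycle is not a cycle. It yields a nonzero element of $Z_1(G_x,\Fc_x)$, so $|y|\ge d_1(G_x,\Fc_x)$.
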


\begin{proof}
Combine Lemmas~\ref{lemma:rate_untwisted}, \ref{lemma:rate_twisted},
\ref{lemma:distance_untwisted} and~\ref{lemma:distance_twisted}. The qLDPC
property is Eqs.~\eqref{eq:A_stabilizer_untwisted}
and~\eqref{eq:B_stabilizer_untwisted}, whose weights are $O(1)$ for bounded vertex
degree and bounded stalk dimension, together with the fact that the CZ dressing of
Eq.~\eqref{eq:stabilizer_summary} is supported on bounded-size clusters of cells.
\end{proof}

\begin{remark}[Comparison with the high-dimensional CW complex route in Ref.~\cite{ZhuKobayashiHsin2026}]\label{rem:skeleton_comparison}
Theorem~\ref{thm:parameters} reproduces exactly the parameter scaling  obtained for the
skeleton codes pulled back from the high-dimensional CW complex  in Ref.~\cite{ZhuKobayashiHsin2026}. What changes is the derivation. In the CW-complex
route the good classical code is first thickened into a high-dimensional
Poincar\'e complex, the (co)cycles are transported through the mapping between
classical (co)cycles and CW (co)cycles, and the parameters are then pulled back to
the $2$D skeleton; each step must be checked to preserve dimension and weight. In
the sheaf route the classical code is subdivided once
(Proposition~\ref{prop:subdivision-params}), and every quantity above is a
K\"unneth product of two single-graph quantities read off directly from
$\bar{\C}$ and its transpose. The intermediate manifold, and with it the spurious
(co)cycles it introduces in high degree, never appears --- although the subdivision produces short classes of its
own, so the subsystem-code restriction of Lemma~\ref{lemma:subsystem} is still
required.
\end{remark}

%======================================================================
\section{Logical non-Clifford operation via gauging measurement and
spacetime path integral}
\label{sec:logical_operation}
%======================================================================

We now devise
a general fault-tolerant protocol that implements a logical non-Clifford operation on the
constant-rate codes of Construction~\ref{con:2d-hgp}. The protocol is an
\emph{addressable gauging measurement}: one starts from two decoupled untwisted copies, gauges
the addressable transversal CZ gates between them by measuring the twisted \gr-type
$X$-stabilizers, and then ungauges, leaving behind a projection onto an eigenspace
of a product of logical CZ gates.  We then use the constant-rate codes of Construction~\ref{con:2d-hgp}
as an instantiation which can prepare $\Theta(\sqrt{n})$ disjoint logical CZ magic state in parallel. We then derive the result of this protocol in an alternative way with the spacetime path integral.

% The logic is the same as in Ref.~\cite{ZhuKobayashiHsin2026}, but the geometry is
% not. There the classical code was first mapped to a high-dimensional Poincar\'e CW
% complex, the gauged symmetry was a \emph{higher-form} symmetry supported on a
% cycle of codimension $3$, and the protocol had finally to be pulled back to a $2$D
% skeleton in order to be implementable. Here none of these steps occurs. The sheaf
% complex $\L=G_x\times G_y$ of Eq.~\eqref{eq:spatial_HGP} is already
% $\mathsf{d}=2$ dimensional, the symmetry being gauged is the $0$-form
% \emph{subcomplex} symmetry of Sec.~\ref{sec:0form_subcomplex}, labelled by a
% \gr-type sheaf $0$-cocycle, and the addressability that the higher-form
% construction had to manufacture by hand is supplied directly by
% $\dim H^0(\L,\Fc^{\gr})=\Theta(\sqrt{n})$. In this sense the present section is
% the sheaf counterpart of the pullback of Ref.~\cite{ZhuKobayashiHsin2026}, with the
% pullback itself rendered unnecessary.

\subsection{Addressable and parallel gauging measurement of logical CZ}
\label{sec:parallel_gauging}

\subsubsection{The symmetry operator and its logical action}
\label{sec:gauging_idea}

The operators to be measured are the transversal CZ operators of
Sec.~\ref{sec:transversal_CZ} between the \rd\ and \bl\ copies of the untwisted
code $\C=\C^{\rd}\otimes\C^{\bl}$, one for each basis \gr-type $0$-cocycle
$\green{\gamma^{0}_{\vc{i_3}}}=\green{\bs^0_{i_3}}\otimes\green{\bs^0_{j_3}}$ of
Eq.~\eqref{eq:spatial_parts},
\begin{align}\label{eq:measure_subcomplex}
\widetilde{\text{CZ}}^{\rd,\bl}\big(\green{\gamma^{0}_{\vc{i_3}}}\big)
=&\ (-1)^{\int_{\eta_2}\red{\hat{a}^1}\cup\blue{\hat{b}^1}\cup
\green{\gamma^{0}_{\vc{i_3}}}} \cr
\equiv&\ (-1)^{\int_{\eta_2^{\gamma}}\red{\hat{a}^1}\cup\blue{\hat{b}^1}} ,
\qquad
\eta_2^{\gamma} \equiv \eta_2 \frown \green{\gamma^{0}_{\vc{i_3}}},  \cr
\end{align}
where the second line is Eq.~\eqref{eq:cap_support}: capping the twist cycle
$\eta_2$ with the symmetry label produces the $2$-cycle on which the operator is
supported. Since
$\dim\eta_2^{\gamma}=2$, this is a codimension-$0$ operator, i.e.\ a
$0$-form symmetry supported on the proper subcomplex
$\L_{\eta_2^{\gamma}}$ of Eq.~\eqref{eq:subcomplex_support}. 

% What replaces the higher-form structure is multiplicity:
% there is one such operator for every basis class of $H^0(\L,\Fc^{\gr})$, and by
% Eq.~\eqref{eq:rep_factor} and Lemma~\ref{lemma:rate_twisted} there are
% $\Theta(\sqrt{n})$ of them.

The logical action of each of these
operators on the untwisted code is completely explicit,
\begin{align}\label{eq:logical_CZ_instantiated}
\widetilde{\text{CZ}}^{\rd,\bl}\big(\green{\gamma^{0}_{\vc{i_3}}}\big)
\;\widehat{=}\;&
\prod_{\red{\vc{i_1}},\blue{\vc{i_2}}}
\big[\lo{\text{CZ}}^{\rd,\bl}_{\red{\vc{i_1}},\blue{\vc{i_2}}}
\big]^{\Lambda(\red{\vc{i_1}},\blue{\vc{i_2}},\green{\vc{i_3}})} , \cr
\Lambda(\red{\vc{i_1}},\blue{\vc{i_2}},\green{\vc{i_3}})
=&\ \int_{\eta_2} \red{\alpha^1_{\vc{i_1}}} \cup \blue{\beta^1_{\vc{i_2}}} \cup \green{\gamma^0_{\vc{i_3}}} ,
\end{align}
with the multi-indices $\vc{i_1}=(i_1,j_1)$, $\vc{i_2}=(i_2,j_2)$,
$\vc{i_3}=(i_3,j_3)$ of Eq.~\eqref{eq:multi_index}. 

% Recall that the labels $j_2$
% and $i_3$ are frozen, since $\blue{\cs'^0}$ and $\green{\cs^0}$ are the unique
% $0$-cocycle classes of the two constant-sheaf factors; a \bl\ logical qubit is
% therefore labelled by $i_2$ alone and a \gr\ symmetry operator by $j_3$ alone.

The reason this transversal operator can be measured fault-tolerantly at all is
Eq.~\eqref{eq:charge_is_CZ_sheaf}: in the twisted code it coincides with the
\gr-type charge parity operator, which is by construction a product of local
twisted stabilizer generators,
\be\label{eq:CZ_factorizing}
\widetilde{\text{CZ}}^{\rd,\bl}\big(\green{\gamma^{0}_{\vc{i_3}}}\big)
= \mathsf{C}^{\gr}\big(\green{\gamma^{0}_{\vc{i_3}}}\big)
= \prod_{v,t}\big(\tilde{A}^{\gr}_{v,t}\big)^{\gamma_{v,t}} ,
\ee
where $\green{\gamma^{0}_{\vc{i_3}}}=\sum_{v,t}\gamma_{v,t}\bar{v}_t$ and the
Pauli-$X$ content cancels by the redundancy
\eqref{eq:X_redundancy_sheaf}. Equation~\eqref{eq:CZ_factorizing} is the whole
content of gauging at the operator level: a non-local generalized global symmetry
is traded for a collection of $O(1)$-weight local gauge constraints. Measuring the
$\tilde{A}^{\gr}_{v,t}$ individually --- which is possible, since by
Lemma~\ref{lemma:stabilizer_commutation} twisted $X$-stabilizers of the
\emph{same} colour commute exactly --- and multiplying the outcomes along each
basis $0$-cocycle $\green{\gamma^{0}_{\vc{i_3}}}$ recovers the eigenvalue of the
corresponding symmetry operator. The measurement is therefore \emph{addressable},
one outcome per basis class $[\green{\gamma^{0}_{\vc{i_3}}}]$, and \emph{parallel}, all classes
being read off from the same round of local measurements.

\medskip
\noindent\textit{Gauging and ungauging in spacetime path integral.}
What this measurement does was also described in the path-integral language in
Sec.~\ref{sec:gauging_untwisted}: gauging the $0$-form subcomplex CZ symmetry of
the untwisted code produces the twisted code, Eq.~\eqref{eq:gauged_partition},
and inserting a \gr\ Wilson operator pins the value of the transversal CZ
operator to a chosen sector, Eq.~\eqref{eq:constraint_shifted}. Specializing the
labels used there to the K\"unneth basis at hand,
$\green{\rho^{0}_{k}} \to \green{\gamma^{0}_{\vc{i_3}}}$ and
$\green{\tilde{\rho}^{1}_{k}} \to \green{\tilde{\gamma}^{1}_{\vc{i_3}}}$ of
Eq.~\eqref{eq:spatial_cocycle_decomposition}, the sector label $\green{w_{k}}$
becomes the outcome $\green{\rho_{\vc{i_3}}}$ of the stabilizer measurement,
Eq.~\eqref{eq:outcome_product} below. The protocol realizes the gauging by
measurement rather than by post-selection, so that the sector is random and
recorded rather than chosen; by Eq.~\eqref{eq:charge_measured} the recorded value
is the total \gr-charge parity weighted by $\green{\gamma^{0}_{\vc{i_3}}}$.

Ungauging, the inverse operation, condenses the \gr\ electric charge and returns
the untwisted theory \eqref{eq:Z_untwisted}. Microscopically it is the $Z$-basis
measurement of every \gr\ qubit in step~3 below: it freezes $\green{c^1}$ to a
definite configuration $\green{\theta^1}$, which the constraints of that step
force to be pure gauge, $\green{\theta^1}=d^0\cV^0$, so that after the correction
\eqref{eq:correction} the \gr\ copy decouples. What survives the round trip is
precisely the information gained about the sector, i.e.\ the constraint
\eqref{eq:constraint_shifted}, which reappears as the last generator of $\cS_4$
in Eq.~\eqref{eq:stabilizer_group_4}.

\medskip
\noindent\textit{The spacetime picture.}
Operationally the protocol is therefore a gauging at time $t_i$, mapping
$\C=\C^{\rd}\otimes\C^{\bl}$ to the twisted code $\tilde{\C}$, a dwell of $O(d)$
rounds of error correction inside $\tilde{\C}$, and an ungauging at $t_f$ mapping
back to $\C$. The two transitions are spacetime domain walls $W$ and $W'$
separating the $\red{\ZZ_2}\times\blue{\ZZ_2}$ gauge theory outside the interval
$[t_i,t_f]$ from the twisted
$\red{\ZZ_2}\times\blue{\ZZ_2}\times\green{\ZZ_2}$ theory inside it, so that the
twisted action \eqref{eq:non-Abelian_action} is supported on the slab
$\L\otimes[t_i,t_f]$. The \gr\ charge worldlines are then \emph{relative} cycles
terminating on $W$ and $W'$, and the cycle $\green{\tilde{\rho}_{1,k}}$ of
Eq.~\eqref{eq:wilson_insertion} may be taken to run across the slab, so that
$\green{\rho_{\vc{i_3}}}$ is the parity of the worldlines it links at any time
between $t_i$ and $t_f$. Since ${\ts}^1$ contributes a single point in the time
direction, a representative of $\green{\tilde{\gamma}^{1}_{\vc{i_3}}}$ may be
chosen with the same spatial support as $\green{\gamma^{0}_{\vc{i_3}}}$ on one
time slice, which is why the spacetime and the equal-time descriptions agree; this
is Lemma~\ref{lemma:temporal_factor} again.

\subsubsection{The gauging measurement protocol}
\label{sec:protocol}

We now list the protocol explicitly. Throughout, $v\in\L(0)$, $e\in\L(1)$ and
$f\in\L(2)$, and $p,q,s$ are the vector indices of the \rd, \bl\ and \gr\ stalks.

\begin{enumerate}

\item \textit{Initialization.} Prepare the \rd\ and \bl\ copies as two decoupled
untwisted sheaf codes $\C=\C^{\rd}\otimes\C^{\bl}$ of
Sec.~\ref{sec:untwisted_code}, with the logical qubits initialized in
$\lo{\ket{+}}$ or $\lo{\ket{0}}$ according to the desired connectivity of the
logical CZ measurement; call the resulting logical state $\lo{\ket{\psi}}_i$.
Initialize every qubit of the \gr\ copy in $\ket{0}$. The stabilizer group is
\be\label{eq:stabilizer_group_1}
\cS_1 = \big\langle\, A^{\rd}_{v,s},\ B^{\rd}_{f},\ A^{\bl}_{v,r},\ B^{\bl}_{f},\
Z^{\gr}_{e,t},\ B^{\gr}_{f},\ \lo{Z}^{\gr}_{\eta_1} \,\big\rangle ,
\ee
where $\lo{Z}^{\gr}_{\eta_1}=\widetilde{Z}^{\gr}_{\eta_1}$ runs over a basis of
$H_1(\L,\Fc^{\gr})$, Eq.~\eqref{eq:logical_Z_sheaf}. The last two families are
redundant given $Z^{\gr}_{e,t}$, but we record them because they survive the later
steps. Since all \gr\ qubits are in $\ket{0}$, every CZ dressing acts trivially
and we may equally well write $\cS_1$ with the twisted generators,
\be\label{eq:stabilizer_group_1_dressed}
\cS_1 = \big\langle\, \tilde{A}^{\rd}_{v,s},\ B^{\rd}_{f},\
\tilde{A}^{\bl}_{v,r},\ B^{\bl}_{f},\ Z^{\gr}_{e,t},\ B^{\gr}_{f},\
\lo{Z}^{\gr}_{\eta_1} \,\big\rangle .
\ee

\item \textit{Gauging (at $t_i$).} Measure all \gr-type twisted $X$-stabilizers
$\tilde{A}^{\gr}_{v,t}$, obtaining outcomes $(-1)^{\mu_{v,t}}=\pm 1$. These
commute exactly with one another by Lemma~\ref{lemma:stabilizer_commutation}, so
the measurement is a single round of commuting local checks. Multiplying the
outcomes along a basis $0$-cocycle and using Eq.~\eqref{eq:CZ_factorizing},
\begin{align}\label{eq:outcome_product}
\widetilde{\text{CZ}}^{\rd,\bl}\big(\green{\gamma^{0}_{\vc{i_3}}}\big)
&= (-1)^{\green{\rho_{\vc{i_3}}}} , \cr
\green{\rho_{\vc{i_3}}} &\equiv \sum_{v,t}\gamma_{v,t}\,\mu_{v,t}
\quad (\text{mod } 2) ,
\end{align}
for every class $[\green{\gamma^{0}_{\vc{i_3}}}]\in H^0(\L,\Fc^{\gr})$. By
Eq.~\eqref{eq:logical_CZ_instantiated} each such outcome is a projective
measurement of a product of logical CZ gates,
$\prod_{\red{\vc{i_1}},\blue{\vc{i_2}}}[\lo{\text{CZ}}^{\rd,\bl}_{\red{\vc{i_1}},
\blue{\vc{i_2}}}]^{\Lambda (\red{\vc{i_1}},\blue{\vc{i_2}},\green{\vc{i_3}})}=(-1)^{\green{\rho_{\vc{i_3}}}}$. The state now lies in
the twisted code, with
\begin{align}\label{eq:stabilizer_group_2}
\cS_2 = \big\langle\, & \tilde{A}^{\rd}_{v,s},\ B^{\rd}_{f},\
\tilde{A}^{\bl}_{v,r},\ B^{\bl}_{f}, \cr
& (-1)^{\mu_{v,t}}\tilde{A}^{\gr}_{v,t},\ B^{\gr}_{f},\
\lo{Z}^{\gr}_{\eta_1} \,\big\rangle ,
\end{align}
the single-qubit operators $Z^{\gr}_{e,t}$ having been removed as they
anticommute with the newly measured generators. One then remains in $\tilde{\C}$
for $O(d)$ rounds of error correction, so that the outcomes
\eqref{eq:outcome_product} are reliable in the presence of measurement errors.

\begin{figure}[t]
    \centering   \includegraphics[width=1\linewidth]{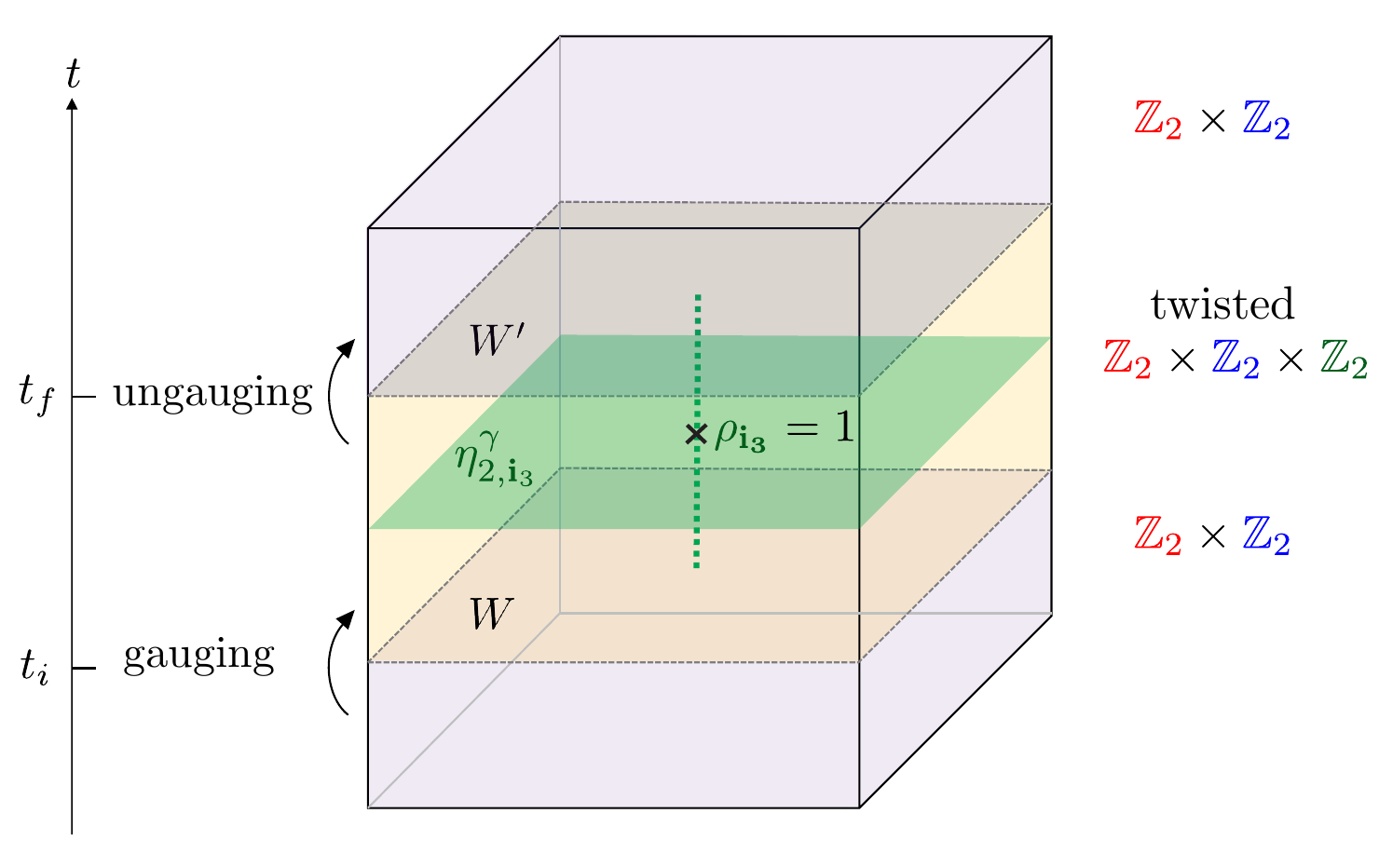}
    \caption{Spacetime picture of the gauging measurement protocol, with
time $t$ running upward. Below $t_i$ and above $t_f$ the system is the untwisted
theory, two decoupled copies $\red{\ZZ_2}\times\blue{\ZZ_2}$ of the
hypergraph-product sheaf code; in the slab $t_i<t<t_f$ it is the twisted
$\red{\ZZ_2}\times\blue{\ZZ_2}\times\green{\ZZ_2}$ theory. The two slab boundaries
are spacetime domain walls: $W$ at $t_i$ implements the gauging (measurement of
all \gr-type dressed $X$-stabilizers $\tilde{A}^{\gr}_{v,t}$), and $W'$ at $t_f$
the ungauging (measurement of every \gr\ qubit in the $Z$ basis); the slab is kept
for $O(d)$ rounds of error correction in between. The green sheet is the capped
subcomplex $\tilde\eta^{\gamma}_{2,\vc{i_3}}$ inserted into the path integral,
whose measurement outcome is recorded as $\green{\rho_{\vc{i_3}}}$; here the
outcome $\green{\rho_{\vc{i_3}}}=1$ is realized by a charge worldline (dotted)
piercing it. Evaluating the path integral on this spacetime reproduces the logical
action \eqref{eq:output_logical_state} of the protocol.}
\label{fig:protocol}
\end{figure}

\item \textit{Ungauging (at $t_f$).} Measure every qubit of the \gr\ copy in the
$Z$ basis, with outcomes $(-1)^{\mu_{e,t}}$, giving
\begin{align}\label{eq:stabilizer_group_3}
\cS_3 = \big\langle\, & \tilde{A}^{\rd}_{v,s},\ B^{\rd}_{f},\
\tilde{A}^{\bl}_{v,r},\ B^{\bl}_{f},\ (-1)^{\mu_{e,t}}Z^{\gr}_{e,t}, \cr
& B^{\gr}_{f},\ \lo{Z}^{\gr}_{\eta_1},\
(-1)^{\green{\rho_{\vc{i_3}}}}
\mathsf{C}^{\gr}\big(\green{\gamma^{0}_{\vc{i_3}}}\big)
\,\big\rangle .
\end{align}
The individual $\tilde{A}^{\gr}_{v,t}$ have been dropped, since they anticommute
with the overlapping single-qubit measurements, but their product
$\mathsf{C}^{\gr}(\green{\gamma^{0}_{\vc{i_3}}})$ is $X$-free by
Eq.~\eqref{eq:CZ_factorizing} and therefore survives. Collect the outcomes into
the \gr-type $1$-cochain
$\green{\theta^1}\equiv\sum_{e,t}\mu_{e,t}\,\bar{e}_t\in C^1(\L,\Fc^{\gr})$. The
redundant generators in $\cS_3$ constrain it twice:
\begin{itemize}
\item $B^{\gr}_{f}=+1$ for every $f$ forces
$(d^1\green{\theta^1})(f)=0$, i.e.\ $\green{\theta^1}\in Z^1(\L,\Fc^{\gr})$ is a
sheaf $1$-cocycle;
\item $\lo{Z}^{\gr}_{\eta_1}=+1$ for every basis cycle forces
$\langle\green{\theta^1},\eta_1\rangle=0$ for all
$[\eta_1]\in H_1(\L,\Fc^{\gr})$, and hence $[\green{\theta^1}]=0$ in
$H^1(\L,\Fc^{\gr})$ by the non-degeneracy of the pairing,
Proposition~\ref{prop:nondegen-sheaf}.
\end{itemize}
Therefore $\green{\theta^1}=d^0\cV^0$ is exact, for some $0$-cochain
$\cV^0=\sum_{v,t}\cV_{v,t}\bar{v}_t\in C^0(\L,\Fc^{\gr})$, and the outcomes can be
returned to $\mu_{e,t}=0$ by the correction
\be\label{eq:correction}
\R = \prod_{v,t}\big(\tilde{A}^{\gr}_{v,t}\big)^{\cV_{v,t}} ,
\ee
i.e.\ by a product of twisted stabilizers over any $\cV^0$ with
$d^0\cV^0=\green{\theta^1}$; the ambiguity in $\cV^0$ is a $0$-cocycle, which by
Eq.~\eqref{eq:constraint_operator_sheaf2} acts trivially. After the correction,
\begin{align}\label{eq:stabilizer_group_4}
\cS_4 = \big\langle\, & A^{\rd}_{v,s},\ B^{\rd}_{f},\ A^{\bl}_{v,r},\
B^{\bl}_{f},\ Z^{\gr}_{e,t}, \cr
& B^{\gr}_{f},\ \lo{Z}^{\gr}_{\eta_1},\
(-1)^{\green{\rho_{\vc{i_3}}}}
\mathsf{C}^{\gr}\big(\green{\gamma^{0}_{\vc{i_3}}}\big)
\,\big\rangle .
\end{align}
The twisted generators $\tilde{A}^{\rd}_{v,s}$, $\tilde{A}^{\bl}_{v,r}$ have been
replaced by the bare ones $A^{\rd}_{v,s}$, $A^{\bl}_{v,r}$, because all \gr\
qubits are back in $\ket{0}$ and the dressings $\text{CZ}^{\bl,\gr}$,
$\text{CZ}^{\rd,\gr}$ are trivialized. The \gr\ copy is thus completely decoupled
and the state has returned to the untwisted code space $\C$. The only difference
between $\cS_4$ and $\cS_1$ is the extra constraint
$\mathsf{C}^{\gr}(\green{\gamma^{0}_{\vc{i_3}}})=
\widetilde{\text{CZ}}^{\rd,\bl}(\green{\gamma^{0}_{\vc{i_3}}})
=(-1)^{\green{\rho_{\vc{i_3}}}}$, i.e., a projection onto an eigenspace of the
transversal CZ operator.

\item \textit{Output.} Writing $\hat{P}_{\pm 1}[\,\cdot\,]$ for the projector onto
the $\pm1$ eigenspace of the indicated operator, the logical output state is
\begin{align}\label{eq:output_logical_state}
\lo{\ket{\psi}}_f
=&\ \prod_{\green{\vc{i_3}}}
\hat{P}_{1-2\green{\rho_{\vc{i_3}}}}
\Big[\widetilde{\text{CZ}}^{\rd,\bl}
\big(\green{\gamma^{0}_{\vc{i_3}}}\big)\Big]\ \lo{\ket{\psi}}_i \cr
=&\ \prod_{\green{\vc{i_3}}}
\hat{P}_{1-2\green{\rho_{\vc{i_3}}}}
\Big[\prod_{\red{\vc{i_1}},\blue{\vc{i_2}}}
\big[\lo{\text{CZ}}^{\rd,\bl}_{\red{\vc{i_1}},\blue{\vc{i_2}}}
\big]^{\Lambda(\red{\vc{i_1}},\blue{\vc{i_2}},\green{\vc{i_3}})}\Big]\
\lo{\ket{\psi}}_i,  \cr
\end{align}
with $1-2\green{\rho_{\vc{i_3}}}\equiv(-1)^{\green{\rho_{\vc{i_3}}}}=\pm1$.

\end{enumerate}

% \begin{remark}[What the sheaf formulation buys]\label{rem:no_pullback}
% Every object in the protocol lives on the $2$D complex $\L$: the measured checks
% $\tilde{A}^{\gr}_{v,t}$ sit on vertices, the measured qubits on edges, and the
% symmetry labels in $H^0(\L,\Fc^{\gr})$. In
% Ref.~\cite{ZhuKobayashiHsin2026} the corresponding operators lived on cells of
% degrees $2$, $3$ and $4$ of a $15$-dimensional complex and had to be transported
% to the $2$D skeleton before the protocol could be run. The dictionary
% \eqref{eq:skeleton_dictionary} identifies the two, but only the sheaf version is
% manifestly local on the physical lattice.
% \end{remark}

\subsection{Instantiation: preparing $\Theta(\sqrt{n})$ disjoint magic states via the magic state
fountain}\label{sec:fountain}

\begin{figure*}[t]
    \centering   \includegraphics[width=0.65 \linewidth]{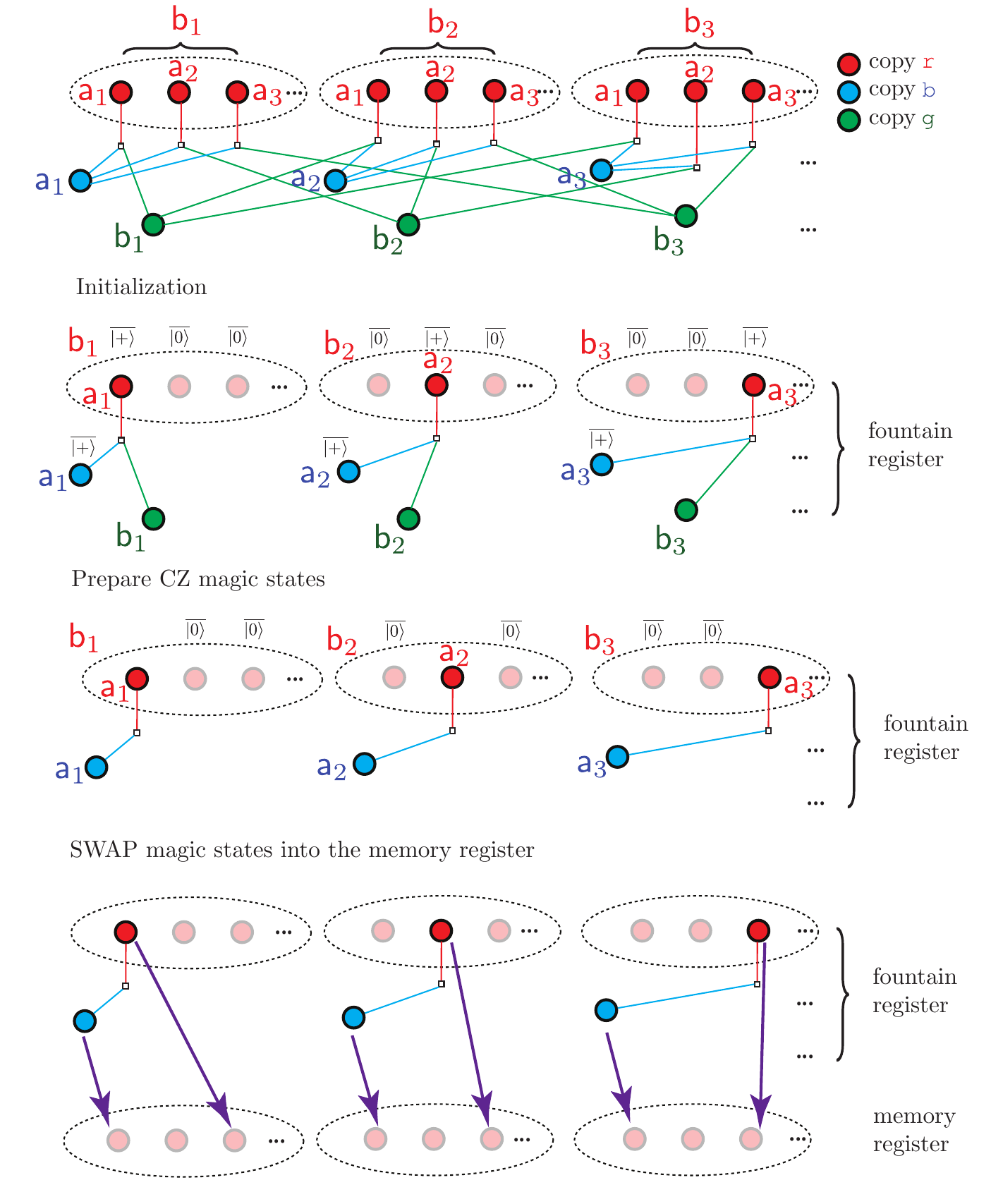}
    \caption{The magic state fountain for
Construction~\ref{con:2d-hgp}, read off from the interaction hypergraph of the
logical gate structure $\Lambda(\red{\vc{i_1}},\blue{\vc{i_2}},\green{\vc{i_3}})$.
\emph{Row 1 (Initialization).} Each logical qubit of the \rd\ copy is a red
vertex, each logical qubit of the \bl\ copy a blue vertex, each symmetry label of
the \gr\ copy a green vertex, and a hyperedge joins a triple whenever
$\Lambda=1$. By Eq.~\eqref{eq:intersection_structure} the \rd\ vertices fall into
$\Theta(\sqrt{n})$ clusters $b_1,b_2,b_3,\dots$, the members $a_1,a_2,a_3,\dots$
of a cluster sharing a single \bl\ partner and being distinguished by their \gr\
label. \emph{Row 2 (Prepare CZ magic states).} Initializing exactly one \rd\
logical qubit per cluster in $\lo{\ket{+}}$ and the rest in $\lo{\ket{0}}$, and
all \bl\ qubits in $\lo{\ket{+}}$, leaves exactly one active hyperedge per
cluster, so the gauging measurement projects each of the $\Theta(\sqrt{n})$
disjoint \rd--\bl\ pairs onto a logical CZ magic state
$\lo{\ket{\text{CZ}}}$. \emph{Rows 3 and 4 (SWAP into the memory register).} The
magic states prepared in the fountain register are moved by logical SWAPs (purple
arrows) into the memory register, where they are consumed by gate teleportation,
leaving the fountain register free for the next run.}
\label{fig:interaction-hypergraph}
\end{figure*}

In this subsection, we consider the instantiation with Construction~\ref{con:2d-hgp} in Sec.~\ref{sec:code_instantiation}.
If all logical qubits are initialized in $\lo{\ket{+}}$, Eq.~\eqref{eq:output_logical_state}
prepares a highly entangled logical resource state in which many logical qubits are
coupled together. For parallel computation it is preferable to prepare magic
states on \emph{disjoint} pairs of logical qubits, which can then be consumed
independently by gate teleportation. This is the \emph{magic state fountain} idea of
Refs.~\cite{zhu2023non,zhu2025topological, ZhuKobayashiHsin2026}, and the explicit form
\eqref{eq:logical_CZ_instantiated} of the gate structure makes the required
initialization transparent.

It is convenient to read $\Lambda(\red{\vc{i_1}},\blue{\vc{i_2}},\green{\vc{i_3}})$ as a hypergraph. Put a \rd\ vertex for each
logical qubit $\red{\vc{i_1}}=(i_1,j_1)$ of the \rd\ copy, a \bl\ vertex for each
logical qubit $\blue{i_2}$ of the \bl\ copy, and a \gr\ vertex for each symmetry
operator $\green{j_3}$, and join a triple whenever
$\Lambda(\red{\vc{i_1}},\blue{\vc{i_2}},\green{\vc{i_3}})=1$. By
Eq.~\eqref{eq:intersection_structure} the \rd\ vertex $(i_1,j_1)$ is joined to
the unique \bl\ vertex $\blue{i_2}=i_1$ and the unique \gr\ vertex
$\green{j_3}=j_1$. The \rd\ vertices therefore fall into $\Theta(\sqrt{n})$
clusters labelled by $i_1$, each cluster containing the $\Theta(\sqrt{n})$
logical qubits with the same $i_1$ and different $j_1$, all sharing a single \bl\
partner and distinguished by their \gr\ label. This is the sheaf translation of
the cluster structure found in Ref.~\cite{ZhuKobayashiHsin2026}.

\begin{theorem}[Magic state fountain]\label{thm:fountain}
There is a gauging measurement protocol on the twisted $2$D hypergraph-product
sheaf code of Construction~\ref{con:2d-hgp} that prepares $\Theta(\sqrt{n})$
logical CZ magic states on disjoint pairs of logical qubits, protected by
subsystem-code distance $\Omega(\sqrt{n})$ (the latter under
Assumption~\ref{assump:TQFT}).
\end{theorem}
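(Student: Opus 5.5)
We run the protocol of Sec.~\ref{sec:protocol} with a specific initialization read off from the gate structure \eqref{eq:intersection_structure}, and then show that the output \eqref{eq:output_logical_state} factorizes into disjoint two-qubit projections. By Eq.~\eqref{eq:intersection_structure}, $\Lambda(\red{\vc{i_1}},\blue{\vc{i_2}},\green{\vc{i_3}})=\delta_{i_1,i_2}\delta_{j_1,j_3}$. Hence the measured operator labelled by $\green{j_3}$ acts as
\[
\prod_{i_1}\lo{\text{CZ}}^{\rd,\bl}_{(i_1,j_3),\,i_1}
\]
on the logical space. The \bl\ labels $j_2$ and \gr\ labels $i_3$ are frozen by the unique constant-sheaf classes $\blue{\cs'^0}$ and $\green{\cs^0}$.

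\textbf{Step 1: choose the initialization.} Pick an injective assignment $i\mapsto j(i)$ from the $\Theta(\bar n)$ values of $i_1$ to distinct values of $j_3$. This is possible because both index sets have size $\Theta(\bar n)$ by Eq.~\eqref{eq:subdiv_params}. We then initialize:
\begin{itemize}
\item the \rd\ logical qubit $(i,j(i))$ in $\lo{\ket{+}}$ for each $i$;
\item every other \rd\ logical qubit in $\lo{\ket{0}}$;
\item every \bl\ logical qubit $i_2$ in $\lo{\ket{+}}$.
\end{itemize}
Since $\lo{\text{CZ}}$ acts trivially when the \rd\ qubit is in $\lo{\ket{0}}$, the operator labelled by $j_3=j(i)$ reduces on this state to the single gate $\lo{\text{CZ}}_{(i,j(i)),i}$. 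The operators labelled by $j_3\notin \mathrm{im}(j)$ act as the identity. Distinct $i$ give disjoint pairs, because both the \rd\ qubit and the \bl\ partner $i_2=i$ differ.

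\textbf{Step 2: evaluate the output.} Substituting into Eq.~\eqref{eq:output_logical_state}, the product of projectors factorizes into commuting projectors. Each one acts as $\hat P_{\pm}[\lo{\text{CZ}}]$ on $\lo{\ket{+}}\lo{\ket{+}}$ of a single pair. Projecting $\ket{++}$ onto an eigenspace of CZ gives the following, depending on the outcome:
\begin{itemize}
\item a state locally Clifford-equivalent to $\lo{\ket{\text{CZ}}}$, namely $\ket{11}$ or the $+1$ component;
\item or, more precisely, $\hat P_{\pm}\ket{++}$, which is $(\ket{00}+\ket{01}+\ket{10})/\sqrt3$ or $\ket{11}$.
\end{itemize}
We must check that this is the intended ``CZ magic state'' notion. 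If a nondeterministic outcome is a concern, the alternative is to combine the result with a known Pauli frame, as in Refs.~\cite{zhu2025topological,ZhuKobayashiHsin2026}, so that each pair yields a usable magic resource. The number of pairs equals the number of $i_1$ values, $\Theta(\bar n)=\Theta(\sqrt n)$ by Eq.~\eqref{eq:n_scaling}. All pairs are read out in one run, since all $\tilde A^{\gr}_{v,t}$ commute exactly (Lemma~\ref{lemma:stabilizer_commutation}).

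\textbf{Step 3: fault tolerance.} Initialization, the $O(d)$ rounds in $\tilde{\C}$, and ungauging all preserve the chosen logical subsystem. The relevant distances are those of Lemmas~\ref{lemma:distance_untwisted} and \ref{lemma:distance_twisted}, both $\Omega(\sqrt n)$, the latter under Assumption~\ref{assump:TQFT}. The outcomes $\green{\rho_{\vc{i_3}}}$ are products of repeated local measurements, so measurement errors are handled by $O(d)$ repetition: they form a classical repetition-code decoding along the temporal factor, cf.\ Lemma~\ref{lemma:temporal_factor}. The \gr\ logicals are gauge qubits and are fixed to $\lo{\ket 0}$ by the $\lo Z^{\gr}_{\eta_1}$ generators, so their short cycles \eqref{eq:spurious-cycle} do not reduce the protected distance.

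\textbf{Main obstacle.} The hardest part is Step 3, specifically arguing that errors during the slab cannot corrupt a $\green{\rho_{\vc{i_3}}}$ or the logical data with weight below $\Omega(\sqrt n)$. This requires a bound on the weight of the \gr\ charge parity operators, which are $\Theta(\bar n)$ heavy via $\green{\cs^0}$, together with the spacetime path-integral decoding picture. I would defer to the spacetime argument of Sec.~\ref{sec:spacetime_path_integral} and the subsystem distance lemmas rather than prove a full threshold.
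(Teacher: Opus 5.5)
Your proposal follows the paper's proof step for step. It uses the same initialization: one red logical qubit per $i_1$-cluster in $\lo{\ket{+}}$, selected by an injective or bijective assignment into the $j_1=j_3$ labels, all other red qubits in $\lo{\ket{0}}$, and all blue qubits in $\lo{\ket{+}}$. It makes the same observation that each $\widetilde{\text{CZ}}^{\rd,\bl}(\green{\gamma^0_{\vc{i_3}}})$ then reduces to a single logical CZ on a disjoint pair. It factorizes Eq.~\eqref{eq:output_logical_state} over pairs in the same way, and it defers the distance to Theorem~\ref{thm:parameters} just as the paper does.

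There are two small corrections.

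\textbf{Step~2: the outcome $\green{\rho}=1$.} The state $\lo{\ket{11}}$ is a stabilizer state and is not locally Clifford-equivalent to $\lo{\ket{\text{CZ}}}$. Your Pauli-frame fallback cannot help either: the $-1$ eigenspace of CZ is spanned by $\ket{11}$ alone, so the outcome $\green{\rho}=1$ simply forfeits that pair. The count therefore has to be made as in the paper. The initial logical state is a product over pairs and the projectors act on disjoint pairs. Hence each pair independently returns $\green{\rho}=0$ with probability $\|\hat{P}_{+1}[\lo{\text{CZ}}]\lo{\ket{++}}\|^2=3/4$, and the recorded outcome heralds which pairs succeeded. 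The number of magic states is thus $\Theta(\sqrt{n})$ in expectation, and with high probability by a Chernoff bound. It is not one per pair deterministically, so ``number of pairs $=\Theta(\sqrt n)$'' is not yet the claimed count.

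\textbf{Step~1: existence of the injection.} The fact that ``both index sets have size $\Theta(\bar n)$'' does not by itself give an injection from the $i_1$ labels into the $j_3$ labels. You should restrict to $\min(|\{i_1\}|,|\{j_3\}|)=\Theta(\bar n)$ clusters. The paper's bijection has the same looseness unless the two factors are identical.
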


\begin{proof}
Fix a bijection $i_1 \mapsto \pi(i_1)$ between the $\Theta(\sqrt{n})$ values of
the $x$-factor label $i_1$ and the $\Theta(\sqrt{n})$ values of the $y$-factor
label $j_1$; both index sets have this cardinality by
Eqs.~\eqref{eq:subdiv_params} and~\eqref{eq:rep_factor}. In each cluster $i_1$
initialize the single \rd\ logical qubit $\red{\vc{i_1}}=(i_1,\pi(i_1))$ in
$\lo{\ket{+}}$ and all remaining \rd\ logical qubits of that cluster in
$\lo{\ket{0}}$; initialize all \bl\ logical qubits in $\lo{\ket{+}}$.

A logical CZ acts as the identity whenever one of its two arguments is in
$\lo{\ket{0}}$, so in
Eq.~\eqref{eq:logical_CZ_instantiated} only the \rd\ qubits prepared in
$\lo{\ket{+}}$ contribute. For the symmetry operator labelled by $\green{j_3}$ the
surviving factors are those with $\red{j_1}=\green{j_3}$ and
$\blue{i_2}=\red{i_1}$, and among the active \rd\ qubits there is exactly one with
$\pi(i_1)=j_3$, namely $i_1=\pi^{-1}(j_3)$. Hence
\begin{align}\label{eq:single_pair}
\widetilde{\text{CZ}}^{\rd,\bl}\big(\green{\gamma^{0}_{\vc{i_3}}}\big)
&\;\widehat{=}\;
\lo{\text{CZ}}^{\rd,\bl}_{\red{\vc{i_1}},\blue{i_2}} , \cr
\red{\vc{i_1}} &= \big(\pi^{-1}(\green{j_3}),\,\green{j_3}\big) ,
\qquad \blue{i_2} = \pi^{-1}(\green{j_3}), \cr
\end{align}
a \emph{single} logical CZ. Distinct $\green{j_3}$ give distinct
$\pi^{-1}(\green{j_3})$ and hence act on disjoint pairs, so the
$\Theta(\sqrt{n})$ projectors in Eq.~\eqref{eq:output_logical_state} commute and
factorize over pairs,
\be\label{eq:fountain_output}
\lo{\ket{\psi}}_f = \prod_{\green{j_3}}
\hat{P}_{1-2\green{\rho_{\vc{i_3}}}}
\big[\lo{\text{CZ}}^{\rd,\bl}_{\red{\vc{i_1}},\blue{i_2}}\big]\
\lo{\ket{++}}_{\red{\vc{i_1}},\blue{i_2}} .
\ee
On each pair, the outcome $\green{\rho}=0$ yields the logical CZ magic state
\begin{align}\label{eq:CZ_magic}
\hat{P}_{+1}\big[\lo{\text{CZ}}^{\rd,\bl}\big]\lo{\ket{++}}
= \lo{\ket{\text{CZ}}}
\equiv \tfrac{1}{\sqrt{3}}\big(\lo{\ket{00}}+\lo{\ket{01}}+\lo{\ket{10}}\big),
\end{align}
while $\green{\rho}=1$ yields
$\hat{P}_{-1}[\lo{\text{CZ}}^{\rd,\bl}]\lo{\ket{++}}=\lo{\ket{11}}$. Each outcome
occurs with constant probability, so a single execution prepares $\Theta(\sqrt{n})$
disjoint magic states on average. The distance statement is
Theorem~\ref{thm:parameters}, the $O(d)$ rounds of error correction in step~2 of
Sec.~\ref{sec:protocol} ensuring that the outcomes \eqref{eq:outcome_product} are
themselves protected.
\end{proof}

\begin{remark}[Yield of magic state production]\label{rem:fountain_rate}
The $\Theta(\sqrt{n})$ count is set by the constant-sheaf factors: it is the
number of independent \gr-type symmetry labels, $\dim H^0(\L,\Fc^{\gr})$, which by
Eq.~\eqref{eq:rep_factor} is $1\cdot\Theta(\bar n)+\Theta(\bar n)\cdot 1$. It is
the same quantity that appeared as the number of constraints removed by the twist
in Eq.~\eqref{eq:number_of_constraints}, and as the sub-extensive logical
dimension of the \bl\ copy in Eq.~\eqref{eq:k_blue}: one and the same
asymmetry of Construction~\ref{con:2d-hgp} governs all three.
\end{remark}

\subsection{Spacetime path integral as logical action}
\label{sec:spacetime_path_integral}

We now rederive the logical action \eqref{eq:output_logical_state} from the
spacetime path integral \eqref{eq:path_inegral_cup}, which both confirms the
operator-level derivation and exhibits the protocol as a spacetime object.

\subsubsection{General formalism}
\label{sec:general_formalism}

Write the spacetime path integral on the sheaf complex
$\tilde{\L}=\L\otimes I_t$ generically as
\be\label{eq:generic_path_integral}
\cZ = \sum_{\vec{\vc{c}}}\ \prod_{\sigma} w_\sigma\big(\vec{\vc{c}}\big) ,
\qquad
\vec{\vc{c}} \equiv \{\red{a^1},\blue{b^1},\green{c^1}\} ,
\ee
the sum running over sheaf $1$-cochain configurations of the three colours. Its
value is the contraction of the corresponding spacetime tensor network.

The path integral acquires an operator meaning when $\tilde{\L}$ carries
\emph{state boundaries} $\B_I$ and $\B_F$, i.e.\ when it is a spacetime cylinder
$\tilde{\L}=\L\otimes I_t$ whose two ends are copies of the spatial complex $\L$
and at which the cochain variables are left uncontracted. For each boundary
configuration $\vec{\vc{c}}_\B$ the path integral gives an amplitude
$\cZ(\vec{\vc{c}}_\B)$, which defines boundary states via
$\bket{\vec{\vc{c}}_\B}{\psi_\B}=\cZ(\vec{\vc{c}}_\B)$; the bulk then defines an
operator $\T$ between them. This generalizes the usual TQFT association of states
to boundaries of spacetime manifolds: here the ket and bra are associated with
sheaf complexes with state boundaries.

Basis cohomology classes at the two boundaries correspond one-to-one with logical
$Z$-basis states. Denote the $i^{\text{th}}$ input and output basis classes by
$I_i$ and $F_i$, so that a general boundary class reads
$[a_{\text{in}}]=\sum_i u'_i I_i$ and $[a_{\text{out}}]=\sum_i u''_i F_i$, with
$\FF_2$ coefficient vectors $\vec{u}'$ and $\vec{u}''$ labelling the logical basis
states $\ket{\vec{u}'}$, $\ket{\vec{u}''}$. Choosing a bulk cocycle basis
$\{\alpha_j\}$ and expanding $[a_{\text{bulk}}]=\sum_j l_j\alpha_j$, the
restrictions of $\alpha_j$ to the boundaries define $\FF_2$-valued restriction
matrices $\cI_{i,j}$ and $\cF_{i,j}$, and
\be\label{eq:matrix_element}
\boket{\vec{u}''}{\T}{\vec{u}'}
= \sum_{\vec{l}:\ \cI\cdot\vec{l}=\vec{u}',\ \cF\cdot\vec{l}=\vec{u}''}
\cZ\big(\vec{l}\,\big) .
\ee
In our protocol $\T$ is not unitary but a measurement channel: there are
non-trivial \emph{relative} classes $C_i$ connecting the two boundaries --- the
\gr\ charge worldlines --- and an arbitrary such class expands as
$\sum_i\green{\rho_i}C_i$. The path integral therefore depends on the outcome
vector $\green{\vec{\rho}}$, and the logical action is the collection of operators
$\T^{\green{\vec{\rho}}}$ with
\be\label{eq:matrix_element_rho}
\boket{\vec{u}''}{\T^{\green{\vec{\rho}}}}{\vec{u}'}
= \sum_{\vec{l}:\ \cI\cdot\vec{l}=\vec{u}',\ \cF\cdot\vec{l}=\vec{u}''}
\cZ^{\green{\vec{\rho}}}\big(\vec{l}\,\big) .
\ee

\subsubsection{Imaginary-time versus real-time evolution}
\label{sec:imaginary_time}

Before evaluating Eq.~\eqref{eq:matrix_element_rho} we recall the physical status
of the path integral, following Ref.~\cite{Tsui:2019ykk}. The Euclidean path
integral computes matrix elements of $e^{-TH_\infty}$ for a Hamiltonian with
eigenvalues $0$ and $\infty$; $\T$ is therefore a projector onto the ground space,
and can be read as a non-unitary circuit built from local unitaries and stabilizer
projectors $P_\sigma=\frac{1}{2}(\I+S_\sigma)$. It corresponds to a
\emph{post-selected} evolution in which every stabilizer measurement returns $+1$.

Actual error correction is a real-time process in which those projectors are
replaced by measurements with outcomes $\pm 1$; the $-1$ outcomes mark topological
defects, captured by a defect-decorated path integral. If the decoder closes the
defects in a homologically trivial way, topological invariance makes the
defect-decorated path integral equal to the defect-free one, so the logical action
computed below is also the logical action of the error-corrected process. This is
the sheaf-code instance of the spacetime-code perspective developed in
Refs.~\cite{Bauer:2023awl,Bauer:2024qpc,Bauer:2024alh,Davydova:2025ylx}; here it
extends to a Clifford stabilizer code, so that the corresponding circuit is
non-Clifford.

\subsubsection{Derivation of the logical action}
\label{sec:derivation_logical_action}

The input of the protocol is the untwisted $\red{\ZZ_2}\times\blue{\ZZ_2}$ sheaf
gauge theory, whose boundary cocycle bases we denote
$\{\red{I}_{\red{\vc{i_1}}}\}$, $\{\blue{I}_{\blue{\vc{i_2}}}\}$ and
$\{\red{F}_{\red{\vc{i_1}}}\}$, $\{\blue{F}_{\blue{\vc{i_2}}}\}$. In the bulk we
are in the twisted theory, with the spacetime cocycle basis
$\{\red{\tilde{\alpha}^{1}_{\vc{i_1}}}\}$,
$\{\blue{\tilde{\beta}^{1}_{\vc{i_2}}}\}$,
$\{\green{\tilde{\gamma}^{1}_{\vc{i_3}}}\}$ of
Eq.~\eqref{eq:spatial_cocycle_decomposition}. The \rd\ and \bl\ bulk classes
restrict to the boundaries as
$\red{\tilde{\alpha}^{1}_{\vc{i_1}}}|_{\B_I}=\red{I}_{\red{\vc{i_1}}}$ and
$\red{\tilde{\alpha}^{1}_{\vc{i_1}}}|_{\B_F}=\red{F}_{\red{\vc{i_1}}}$, and
likewise for \bl, so the restriction matrices are diagonal,
\begin{align}\label{eq:restriction_matrices}
\cI_{\red{\vc{i_1}'},\red{\vc{i_1}}} &=
\delta_{\red{\vc{i_1}'},\red{\vc{i_1}}} , \cr
\cI_{\blue{\vc{i_2}'},\blue{\vc{i_2}}} &=
\delta_{\blue{\vc{i_2}'},\blue{\vc{i_2}}} ,
\end{align}
all other entries vanishing,
and similarly for $\cF$. The \gr\ sector instead contributes relative classes: the
charge worldlines terminating on $\B_I$ and $\B_F$ are labelled by the classes
dual to $\{\green{\tilde{\gamma}^{1}_{\vc{i_3}}}\}$, with expansion coefficients
$\green{\vec{\rho}}=\{\green{\rho_{\vc{i_3}}}\}$ --- precisely the measurement
outcomes \eqref{eq:outcome_product}.

Parameterizing the bulk classes by $\red{\vec{n}}=\{\red{n_{\vc{i_1}}}\}$,
$\blue{\vec{m}}=\{\blue{m_{\vc{i_2}}}\}$ and
$\green{\vec{l}}=\{\green{l_{\vc{i_3}}}\}$, the path-integral weight conditioned
on the outcomes is, from Eq.~\eqref{eq:path_inegral_cup},
\begin{align}\label{eq:Z_component}
&\cZ^{\green{\vec{\rho}}}\big(\red{\vec{n}},\blue{\vec{m}},\green{\vec{l}}\big) \cr
=& \prod_{\green{\vc{i_3}}}(-1)^{\green{\rho_{\vc{i_3}}}\green{l_{\vc{i_3}}}}
\cdot \prod_{\red{\vc{i_1}},\blue{\vc{i_2}}}
\Big[(-1)^{\red{n_{\vc{i_1}}}\blue{m_{\vc{i_2}}}\green{l_{\vc{i_3}}}}
\Big]^{\int_{\tilde{\eta}_3}\red{\tilde{\alpha}^{1}_{\vc{i_1}}}\cup
\blue{\tilde{\beta}^{1}_{\vc{i_2}}}\cup
\green{\tilde{\gamma}^{1}_{\vc{i_3}}}}. \cr
\end{align}
The exponent is exactly the invariant of Sec.~\ref{sec:condition_cohomology}, and
by Lemma~\ref{lemma:factorization} the temporal factor drops out,
\be\label{eq:exponent_is_Lambda}
\int_{\tilde{\eta}_3}\red{\tilde{\alpha}^{1}_{\vc{i_1}}}\cup
\blue{\tilde{\beta}^{1}_{\vc{i_2}}}\cup\green{\tilde{\gamma}^{1}_{\vc{i_3}}}
= \int_{\eta_2}\red{\alpha^{1}_{\vc{i_1}}}\cup\blue{\beta^{1}_{\vc{i_2}}}\cup
\green{\gamma^{0}_{\vc{i_3}}}
= \Lambda\big(\red{\vc{i_1}},\blue{\vc{i_2}},\green{\vc{i_3}}\big) ,
\ee
by Eq.~\eqref{eq:spacetime_equals_spatial}. The spacetime derivation and the
operator derivation are thus controlled by the same coefficient matrix; this
identity is what the pullback to the skeleton accomplished in
Ref.~\cite{ZhuKobayashiHsin2026}, and here it is a one-line consequence of the
K\"unneth factorization.

By Eq.~\eqref{eq:restriction_matrices} the matrix
\eqref{eq:matrix_element_rho} is diagonal: if
$(\red{\vec{u}'},\blue{\vec{v}'})\neq(\red{\vec{u}''},\blue{\vec{v}''})$ no
$(\red{\vec{n}},\blue{\vec{m}})$ satisfies both boundary conditions and the
element vanishes. On the diagonal,
$(\red{\vec{n}},\blue{\vec{m}})=(\red{\vec{u}'},\blue{\vec{v}'})$ is fixed and only
$\green{\vec{l}}$ is summed, so the sum factorizes over $\green{\vc{i_3}}$,
\be\label{eq:T_factorization}
\T^{\green{\vec{\rho}}} = \prod_{\green{\vc{i_3}}}
\tilde{\T}^{\green{\rho_{\vc{i_3}}}}_{\green{\vc{i_3}}} .
\ee
Each factor is evaluated by summing $\green{l_{\vc{i_3}}}\in\{0,1\}$. For
$\green{\rho_{\vc{i_3}}}=0$,
\begin{align}\label{eq:T_rho_zero}
&\boket{\red{\vec{n}},\blue{\vec{m}}}
{\tilde{\T}^{\green{0}}_{\green{\vc{i_3}}}}
{\red{\vec{n}},\blue{\vec{m}}} \cr
=&\ 1 + \prod_{\red{\vc{i_1}},\blue{\vc{i_2}}}
\big[(-1)^{\red{n_{\vc{i_1}}}\blue{m_{\vc{i_2}}}}\big]^{\Lambda} \cr
\propto&\ \boket{\red{\vec{n}},\blue{\vec{m}}}
{\hat{P}_{+1}\Big[\prod_{\red{\vc{i_1}},\blue{\vc{i_2}}}
\big[\lo{\text{CZ}}^{\rd,\bl}_{\red{\vc{i_1}},\blue{\vc{i_2}}}\big]^{\Lambda}
\Big]}{\red{\vec{n}},\blue{\vec{m}}} ,
\end{align}
where we used
$(-1)^{\red{n_{\vc{i_1}}}\blue{m_{\vc{i_2}}}}\equiv
\lo{\text{CZ}}^{\rd,\bl}_{\red{\vc{i_1}},\blue{\vc{i_2}}}$ on the logical
$Z$-basis and dropped an overall factor of $\frac{1}{2}$. For
$\green{\rho_{\vc{i_3}}}=1$ the sum carries the extra sign
$(-1)^{\green{l_{\vc{i_3}}}}$ and gives
$1-\prod[\,\cdot\,]^{\Lambda}\propto\hat{P}_{-1}[\,\cdot\,]$ instead. Combining
with Eq.~\eqref{eq:T_factorization},
\begin{align}\label{eq:logical_action_form}
\T^{\green{\vec{\rho}}}
=&\ \prod_{\green{\vc{i_3}}}
\hat{P}_{1-2\green{\rho_{\vc{i_3}}}}
\Big[\prod_{\red{\vc{i_1}},\blue{\vc{i_2}}}
\big[\lo{\text{CZ}}^{\rd,\bl}_{\red{\vc{i_1}},\blue{\vc{i_2}}}
\big]^{\Lambda(\red{\vc{i_1}},\blue{\vc{i_2}},\green{\vc{i_3}})}\Big] \cr
\equiv&\ \prod_{\green{\vc{i_3}}}
\hat{P}_{1-2\green{\rho_{\vc{i_3}}}}
\Big[\widetilde{\text{CZ}}^{\rd,\bl}
\big(\green{\gamma^{0}_{\vc{i_3}}}\big)\Big] ,
\end{align}
which reproduces Eq.~\eqref{eq:output_logical_state}, now derived from the
spacetime path integral rather than from the stabilizer bookkeeping of
Sec.~\ref{sec:protocol}. 

In the instantiation with Construction \ref{con:2d-hgp} and with the explicit evaluation
$\Lambda=\delta_{\red{i_1},\blue{i_2}}\delta_{\red{j_1},\green{j_3}}$ of
Eq.~\eqref{eq:logical_CZ_instantiated} and the initialization of
Theorem~\ref{thm:fountain}, each factor of
Eq.~\eqref{eq:logical_action_form} is a projector onto a single logical CZ acting
on a disjoint pair.

\begin{remark}[Comparison with the CW-complex route]\label{rem:pullback_comparison}
In Ref.~\cite{ZhuKobayashiHsin2026} the corresponding logical action was first
derived on the high-dimensional spacetime complex, in terms of cocycles
$\red{\tilde{\alpha}^{8}}$, $\blue{\tilde{\beta}^{6}}$,
$\green{\tilde{\gamma}^{3}}$ and their Poincar\'e duals, and then \emph{pulled
back} to the $2$D skeleton, where it became a statement about
$\red{\bar{\alpha}^{1}}$, $\blue{\bar{\beta}^{1}}$, $\green{\bar{\gamma}^{0}}$.
Here the degrees $1,1,0$ appear from the outset, being fixed by
$\mathsf{d}=2$ and the K\"unneth decomposition
\eqref{eq:cocycle_decomposition}; the pullback has no analogue because there is
nothing to pull back from. The Poincar\'e duals of the CW route are likewise
replaced by intrinsic sheaf objects, the capped cycle
$\eta_2^{\gamma}=\eta_2\frown\green{\gamma^{0}}$ of
Eq.~\eqref{eq:measure_subcomplex} on the operator side and the pairing
$\mathcal{P}$ of Eq.~\eqref{eq:P-def} on the logical side.
\end{remark}

%======================================================================
\section{Almost-constant magic rate from local codes with a multiplication
property}
\label{sec:high_yield}
%======================================================================

It is convenient to measure the usefulness of a gauging-measurement protocol not
by the absolute number of magic states it produces but by that number per
physical qudit.

\begin{definition}[\textit{Magic rate}]\label{def:magic_rate}
For a code of block length $n$ admitting a gauging measurement that prepares
$s$ logical CZ magic states on disjoint pairs of logical qudits in a single run,
the \emph{magic rate} is
\be\label{eq:magic_rate}
\varrho \;\equiv\; \frac{s}{n} .
\ee
We say the magic rate is \emph{almost constant} if
$\varrho \ge n^{-\epsilon}$ for arbitrarily small $\epsilon>0$.
\end{definition}

The quantity $s$ is what Ref.~\cite{GolowichTamoZhu2026} calls the \emph{logical
yield} of a transversal gate; we use ``yield'' only when quoting results from
that reference, and ``magic rate'' for the corresponding statements here.

Construction~\ref{con:2d-hgp} produces a constant-rate code, but only
$\Theta(\sqrt{n})$ independently addressable transversal logical CZ gates
(Remark~\ref{rem:asymmetry_rate}, Lemma~\ref{lemma:rate_twisted}), hence only
$\Theta(\sqrt{n})$ magic states per run of the gauging measurement
(Theorem~\ref{thm:fountain}), i.e.\ a magic rate
$\varrho=\Theta(n^{-1/2})$ that decays with system size. The reason is visible in
Eq.~\eqref{eq:rep_factor}: one tensor factor of the \bl\ and \gr\ sheaves is the
constant sheaf $\Fc^0$, for which $\dim H^0 = 1$. The constant sheaf was chosen
only because it makes the local cycle condition
\eqref{eq:multiplication_property_local} hold for free,
$\cC^{\perp}_{v} * \cC_{v} \subseteq \text{Rep}^{\perp}$ being a restatement of
duality. Removing it requires classical LDPC codes whose local codes satisfy a
genuine multiplication property while retaining large dimension and distance.

Such codes have recently been constructed by Golowich, Tamo and
Zhu~\cite{GolowichTamoZhu2026}, over an alphabet $\Fq$ of large constant size ---
which is exactly why the $\Fq$ generalization of
Sec.~\ref{sec:Fq_generalization} was needed. In this section we feed them into
Construction~\ref{con:2d-hgp}. The outcome is a family of twisted sheaf codes
with $n^{1-\epsilon}$ independently addressable transversal logical CZ gates on
\emph{disjoint} pairs of logical qudits, hence an \emph{almost constant} magic
rate $\varrho \ge n^{-\epsilon}$, at essentially unchanged distance.

\subsection{Sheaves and systems of local codes: a dictionary}
\label{sec:dictionary}

The objects of Ref.~\cite{GolowichTamoZhu2026} are \emph{systems of local codes},
which are precisely the sheaves of Definition~\ref{def:presheaf} on a graph. We
record the translation, which is the content of \cite[Fact
2.21]{GolowichTamoZhu2026}.

\begin{lemma}[Dictionary]\label{lemma:dictionary}
Let $G=(V,E)$ be a bipartite graph with all edges oriented from left to right, and
let $\Fc$ be a sheaf on $G$ over $\Fq$ with local parity-check matrices
$\Hs_v$, local codes $\cC_v=\ker\Hs_v$ and assembled co-restrictions
$\Fc_{v \shortrightarrow E(v)} = \Hs_v^{T}$, so that
$\im \Fc_{v\shortrightarrow E(v)} = \cC^{\perp}_{v}$. Define the associated
\emph{Tanner code}
\begin{align}\label{eq:tanner_from_sheaf}
\T^{\dag}(G,\Fc) = \Big\{ c \in \Fq^{E} \ :\ & c|_{E(v)} \in
\im \Fc_{v \shortrightarrow E(v)} \cr
& = \cC^{\perp}_{v}
\ \ \forall v \in V \Big\} .
\end{align}
Then
\be\label{eq:dictionary}
H_1\big(G,\Fc^{\perp}\big) \;=\; \T^{\dag}(G,\Fc) \;\cong\; H^0\big(G,\Fc\big) ,
\ee
the isomorphism sending $c \in \T^{\dag}(G,\Fc) $ to the $0$-cocycle $z$ determined by
$c|_{E(v)} = \Fc_{v\shortrightarrow E(v)}(z_v)$. Moreover, if $G$ has maximum
degree $\Delta$,
\be\label{eq:dictionary_distance}
d_1\big(G,\Fc^{\perp}\big) = d\big(\T^{\dag}(G,\Fc) \big)
\;\leq\; \tfrac{\Delta}{2}\, d^{0}\big(G,\Fc\big) .
\ee
\end{lemma}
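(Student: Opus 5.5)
The plan is to unwind the definitions of the two (co)homology groups on a graph, where $H_1=Z_1$ and $H^0=Z^0$ because there are no $2$-cells and no $(-1)$-cells. Then $H_1(G,\Fc^{\perp})$ is the set of edge words whose restriction to every $E(v)$ lies in the local cycle code of $\Fc^{\perp}$. By the same argument as Proposition~\ref{Prop:sheaf_homology}, that local cycle code is the kernel of the local check matrix of $\Fc^{\perp}$. Since $\Fc^{\perp}$ is the sheaf with local codes $\cC_v^{\perp}$, this kernel is $\cC_v^{\perp}=\im\Fc_{v\shortrightarrow E(v)}$. This gives the equality $H_1(G,\Fc^{\perp})=\T^{\dag}(G,\Fc)$ literally, as subspaces of $\Fq^{E}$. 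For general $q$ I would invoke Remark~\ref{rem:signs}: with all edges oriented left to right, every incidence at a given vertex carries a common sign, which can be absorbed into the local matrices, so no sign issue arises.

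For the isomorphism with $H^0(G,\Fc)$, I will mimic the argument after Lemma~\ref{lemma:subdivision-normal-form}. A $0$-cochain $z$ assigns $z_v\in\Fc(v)$ to each vertex. At vertex $v$, the vector $\Fc_{v\shortrightarrow E(v)}(z_v)=\Hs_v^T z_v\in\Fq^{E(v)}$ lists the values on incident edges. The cocycle condition (Proposition~\ref{Prop:sheaf_cohomology}, with the orientation signs) says that the two endpoint values agree on each edge $e=[v,v']$. The agreed values then define a single edge word $c$ with $c|_{E(v)}=\Hs_v^T z_v\in\cC_v^{\perp}$ for all $v$, so $c\in\T^{\dag}$.

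The map $z\mapsto c$ is linear, and it is injective and surjective provided each $\Hs_v^T$ is injective, i.e.\ each local check matrix has full row rank (the standing convention, as in Sec.~\ref{sec:subdivision}). Injectivity holds because $c=0$ forces $\Hs_v^Tz_v=0$, hence $z_v=0$. Surjectivity holds because any $c\in\T^{\dag}$ has $c|_{E(v)}\in\im\Hs_v^T$, so it has a unique preimage $z_v$; these $z_v$ automatically agree along edges, giving a cocycle. This gives the stated inverse description $c|_{E(v)}=\Fc_{v\shortrightarrow E(v)}(z_v)$. I expect this full-rank and sign bookkeeping to be the main, though minor, obstacle. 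If full row rank is not assumed, I would quotient $\Fc(v)$ by $\ker\Hs_v^T$, or state the result for the reduced stalks.

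For the distance bound, take a minimum-weight nonzero cocycle $z$ with image $c$. Each nonzero edge of $c$ has both endpoint stalks nonzero, and each nonzero $z_v$ can account for at most $\Delta$ nonzero incident edges. Double counting gives $2|c|\le\Delta\,|\{v:z_v\ne0\}|\le\Delta|z|$. Hence $d(\T^{\dag})\le|c|\le\frac{\Delta}{2}d^0(G,\Fc)$. The equality $d_1(G,\Fc^{\perp})=d(\T^{\dag})$ is immediate from the first step, since the two codes coincide as sets of edge words.
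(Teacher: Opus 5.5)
Your proposal is correct and follows essentially the same route as the paper. The first equality is Proposition~\ref{Prop:sheaf_homology} applied to $\Fc^{\perp}$, with the bipartite orientation handling signs; the isomorphism is Proposition~\ref{Prop:sheaf_cohomology} together with injectivity of each $\Fc_{v\shortrightarrow E(v)}=\Hs_v^{T}$; and the distance bound is the same double count $2|c|\le\Delta|z|$.

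Your explicit remark that injectivity requires each $\Hs_v$ to have full row rank is worth keeping. The paper's proof uses this hypothesis tacitly. Without it, the kernel of $z\mapsto c$ is exactly $\bigoplus_v\ker\Hs_v^{T}$, and the isomorphism fails. The same injectivity is also what guarantees $c\neq 0$ in your distance step.
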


\begin{proof}
The first equality of Eq.~\eqref{eq:dictionary} is
Proposition~\ref{Prop:sheaf_homology} applied to $\Fc^{\perp}$: a $1$-chain is a
cycle iff its restriction to each $E(v)$ lies in
$\ker \Hs^{\perp}_{v} = (\cC^{\perp}_{v})^{\perp\perp}$, i.e.\ in
$\cC^{\perp}_{v}$. The bipartite orientation guarantees that every edge of $E(v)$
enters the boundary with the same sign, so no relative signs appear inside the
local constraint; for $\mathrm{char}\,\Fq=2$ the hypothesis is unnecessary
(Remark~\ref{rem:signs}). The isomorphism with $H^0(G,\Fc)$ is
Proposition~\ref{Prop:sheaf_cohomology}: a $0$-cochain $z$ is a cocycle iff
$\Fc_{v\shortrightarrow e}(z_v)=\Fc_{v'\shortrightarrow e}(z_{v'})$ on every edge
$e=(v,v')$, i.e.\ iff the common value assembles into an element of $\T^{\dag}(G,\Fc) $,
and the map is invertible because each $\Fc_{v\shortrightarrow E(v)}$ is
injective. For Eq.~\eqref{eq:dictionary_distance}, each non-zero edge of
$c$ forces both of its endpoints to be non-zero in $z$, and each vertex meets at
most $\Delta$ edges, so $|z|\ge 2|c|/\Delta$.
\end{proof}

Under Eq.~\eqref{eq:dictionary} the classical sheaf codes of
Propositions~\ref{Prop:sheaf_homology} and~\ref{Prop:sheaf_cohomology} and the
Tanner codes of Ref.~\cite{GolowichTamoZhu2026} are the same objects, and the
Schur product of local codes, Definition~\ref{def:schur_product}, matches their
product $\Fc * \Fc'$ of systems of local codes, defined by
$\im(\Fc * \Fc')_{v\shortrightarrow E(v)} =
\im\Fc_{v\shortrightarrow E(v)} * \im\Fc'_{v\shortrightarrow E(v)}$. We write
$\Fc^{*2} = \Fc * \Fc$.

\subsection{The classical input}
\label{sec:classical_input}

We use the following, which collects \cite[Theorem 4.1 and Lemma
5.5]{GolowichTamoZhu2026} in the parameter regime of \cite[Corollary
5.1]{GolowichTamoZhu2026}, i.e.\ the regime with \emph{constant} check weight.

\begin{theorem}[Golowich--Tamo--Zhu \cite{GolowichTamoZhu2026}]
\label{thm:GTZ_classical}
For every $\epsilon>0$ and every prime $p$ there are constants
$u = O_{\epsilon}(1)$ and $\Delta = O_{\epsilon}(1)$, with $q=p^{u}$, and an
infinite family of bipartite graphs $\Gamma$ of maximum degree $\Delta$ carrying a
sheaf $\Fc$ over $\Fq$ with the following properties, where $N=|E(\Gamma)|$:
\begin{enumerate}[label=(\roman*)]
\item \emph{(multiplication property)} at every vertex $v$,
$\cC^{\perp}_{v} * \cC^{\perp}_{v} \subseteq \im \Fc^{*2}_{v\shortrightarrow
E(v)}$, by definition of $\Fc^{*2}$;
\item \emph{(dimension)} $\dim \T^{\dag}(\Gamma,\Fc) \ \ge\ N^{1-\epsilon}$, and there is
an \emph{extendable} set $M \subseteq E(\Gamma)$ --- one contained in an
information set --- with $|M| = \dim \T^{\dag}(\Gamma,\Fc) \ge N^{1-\epsilon}$;
\item \emph{(distance)} $d\big(\T^{\dag}(\Gamma,\Fc)\big) \ge N^{1-\epsilon}$ and
$d\big(\T^{\dag}(\Gamma,\Fc^{*2})\big) \ge N^{1-\epsilon}$;
\item \emph{(decoding)} both Tanner codes admit radius-$N^{1-\epsilon}$
$\mathrm{poly}(N)$-time decoders.
\end{enumerate}
\end{theorem}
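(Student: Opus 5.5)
This statement is imported from Ref.~\cite{GolowichTamoZhu2026}, so the plan is to translate its cited results into the present sheaf language through the dictionary of Lemma~\ref{lemma:dictionary}, rather than to reprove the construction. First, I will fix the parameter regime of \cite[Corollary 5.1]{GolowichTamoZhu2026}: given $\epsilon$ and $p$, choose the extension degree $u=O_\epsilon(1)$ so that $q=p^u$ is large enough for the algebraic (Reed--Solomon/AG-type) local codes of length $\Delta=O_\epsilon(1)$ to exist. The graphs $\Gamma$ will be their bipartite expanding graphs, oriented left to right, so that Lemma~\ref{lemma:dictionary} applies even in odd characteristic (Remark~\ref{rem:signs}).

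Item (i) is essentially definitional. Define $\Fc^{*2}$ vertexwise by $\im\Fc^{*2}_{v\shortrightarrow E(v)}=\cC_v^\perp*\cC_v^\perp$; the inclusion is then an equality. The only thing to check is that this assignment is again a valid system of local codes, i.e.\ a sheaf in the sense of Definition~\ref{def:presheaf}. That holds because a graph has no functoriality constraints beyond edge incidences: choose $\Hs^{*2}_v$ whose rows are a basis of $\cC_v^\perp*\cC_v^\perp$, so the assembled co-restriction is injective.

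For (ii) and (iii) I will invoke \cite[Theorem 4.1]{GolowichTamoZhu2026}, which gives the rate and distance of $\T^\dag(\Gamma,\Fc)$ and $\T^\dag(\Gamma,\Fc^{*2})$ as Tanner codes over expanders with AG-type local codes. The local codes are chosen so that both $\cC_v^\perp$ and its Schur square still have good relative distance, which the product Singleton bound permits at constant rate when $q$ is large. The bounds $N^{1-\epsilon}$ rather than linear ones reflect the constant-check-weight trade-off in their Corollary 5.1. The extendable set $M$ is \cite[Lemma 5.5]{GolowichTamoZhu2026}: a set of coordinates of size $\dim\T^\dag$ contained in an information set. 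To obtain it I would greedily extend linearly independent coordinate functionals on the code. Item (iv) is their decoding statement, inherited verbatim, since decoding depends only on the Tanner code and not on the sheaf presentation.

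The main obstacle is the bookkeeping in matching conventions. Their ``Tanner code'' uses $c|_{E(v)}\in\im\Fc_{v\shortrightarrow E(v)}$, which is our $\cC_v^\perp$, not $\cC_v$, and their $\Fc^{*2}$ is formed on these images. I must make sure the distance claim for $\Fc^{*2}$ is stated for the code with local constraints $\cC_v^\perp*\cC_v^\perp$, which is what the local cycle condition, Proposition~\ref{prop:multiplication_property_equivalence}, later needs. I would verify this on a single vertex using Lemma~\ref{lemma:dictionary}, and then transfer globally via Eq.~\eqref{eq:dictionary}.
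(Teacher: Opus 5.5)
Citing \cite{GolowichTamoZhu2026} and translating through Lemma~\ref{lemma:dictionary} is what the paper does. Your treatment of three items is fine: (i) is definitional, with the rows of $\Hs^{*2}_v$ a basis of $\cC_v^\perp*\cC_v^\perp$; (iv) is inherited; and for $M$, any linear code has an information set, so greedily choosing $\dim\T^{\dag}(\Gamma,\Fc)$ independent coordinates works. What is wrong is the construction you attribute to the citation. The codes of \cite{GolowichTamoZhu2026} are not Sipser--Spielman codes on an expander family with AG local codes. As the paper states right after the theorem, they are \emph{punctured $t$-fold tensor powers} of a constant-size algebraic-geometry code $C$ with $C^{*2}$ of large distance. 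They are arranged on a graph whose vertices are the axis-parallel lines of a hypercube, and the puncturing is what keeps the check weight constant.

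This matters, because the mechanism you describe cannot produce (ii). You require $\cC_v^\perp*\cC_v^\perp$ to have good relative distance. By the product Singleton bound you invoke (cf.\ Remark~\ref{rem:why_large_q}), this forces $\cC_v^\perp$ to have rate $r\le 1/2$. For a Tanner code on an expander, the Sipser--Spielman dimension bound is just the constraint count, $\dim\T^{\dag}(\Gamma,\Fc)\ge(2r-1)N$, which is now $\le 0$. Expansion therefore gives you distance but no guaranteed dimension at all, let alone $N^{1-\epsilon}$.

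The tensor structure is what rescues the dimension. The line constraints are massively redundant, and dimension and distance are multiplicative, $\dim C^{\otimes t}=(\dim C)^t$ and $d(C^{\otimes t})=d(C)^t$, even when $C$ has rate below $1/2$. Already for $t=2$, $C\otimes C$ is the Tanner code on $K_{n,n}$ with local code $C$; its dimension is $r^2n^2$, while the counting bound gives $(2r-1)n^2\le 0$. With a base code of constant size, rate and relative distance decay geometrically in $t$, and this is what turns linear parameters into $N^{1-\epsilon}$. Your remark that $N^{1-\epsilon}$ ``reflects the constant-check-weight trade-off'' only makes sense in this tensor picture; for expander Tanner codes, constant check weight costs nothing.

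As a citation your argument stands, but the supporting sketch should be replaced by the tensor-power one. Had you tried to make (ii) self-contained along the lines you wrote, it would fail.
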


The codes are punctured $t$-fold tensor powers of a constant-sized
algebraic-geometry code $C$ with $C^{*2}$ of large distance, arranged on a
$t$-partite graph whose vertices are the axis-parallel lines of a hypercube
$[2n]^{t}$; the puncturing is what makes a tensor code expressible as a Tanner
code with constant check weight. Property~(i) is inherited from the multiplication
property of the algebraic base code, since the product of two polynomials of
degree $<k$ has degree $<2k$.

\begin{remark}[Why $\Fq$ is unavoidable]\label{rem:why_large_q}
A local code of length $\Delta$ whose Schur square is still a proper code must
have rate $\le 1/2$, and to have both large dimension and large distance at rate
$\le 1/2$ it must be close to MDS, which over $\FF_2$ is impossible for
$\Delta>2$. This is the precise sense in which the constant sheaf of
Construction~\ref{con:2d-hgp} was forced upon us at $q=2$, and it is removed by
the $\Fq$ theory of Sec.~\ref{sec:Fq_generalization}.
\end{remark}

\subsection{The construction}
\label{sec:high_yield_construction}

\begin{construction}[High-magic-rate $2$D HGP sheaf complexes]\label{con:high_yield}
Let $\Gamma$ and $\Fc$ be as in Theorem~\ref{thm:GTZ_classical}, and take
$G_x = G_y = \Gamma$. Equip the square complex $\L = G_x \times G_y$ with the
three product sheaves
\begin{align}\label{eq:high_yield_sheaves}
\Fc^{\rd} &= \Fc \otimes \Fc , \cr
\Fc^{\bl} &= \Fc \otimes \Fc , \cr
\Fc^{\gr} &= \big(\Fc^{*2}\big)^{\perp} \otimes \big(\Fc^{*2}\big)^{\perp} ,
\end{align}
and take the twist to be the fundamental class
$\eta_2 = \zeta\otimes\zeta' = [\Gamma]\otimes[\Gamma] = [\L]$. The twisted code
$\tilde{\C}$ is then defined by Eq.~\eqref{eq:Clifford_group_Fq}, with
$\tilde{\eta}_3 = \eta_2\otimes I_t$.
\end{construction}

\begin{lemma}[Admissibility]\label{lemma:admissible}
The fundamental class $[\L]$ is a $2$-cycle of
$\Fc^{\rd}\otimes\Fc^{\bl}\otimes\Fc^{\gr}$, so
Construction~\ref{con:high_yield} defines a twisted sheaf code with a non-trivial
cohomology invariant.
\end{lemma}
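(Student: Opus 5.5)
The plan is to reduce admissibility to the one-dimensional local cycle condition on each graph factor. Then we verify that condition with the multiplication property of Theorem~\ref{thm:GTZ_classical}(i), using Proposition~\ref{prop:multiplication_property_equivalence}. Since $\L=\Gamma\times\Gamma$ and all three sheaves are product sheaves, the tensor sheaf factorizes:
\[
\Fc^{\rd}\otimes\Fc^{\bl}\otimes\Fc^{\gr}
=\big(\Fc\otimes\Fc\otimes(\Fc^{*2})^{\perp}\big)\boxtimes\big(\Fc\otimes\Fc\otimes(\Fc^{*2})^{\perp}\big),
\]
and $[\L]=[\Gamma]\otimes[\Gamma]$. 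The boundary on a product complex is $\partial\otimes I+I\otimes\partial$ (signs irrelevant in characteristic $2$, and handled by the bipartite orientation of Remark~\ref{rem:signs} otherwise). It therefore suffices to show that $[\Gamma]$ is a $1$-cycle of the one-factor tensor sheaf $\Fc\otimes\Fc\otimes(\Fc^{*2})^{\perp}$ on $\Gamma$. Then $\partial[\L]=\partial[\Gamma]\otimes[\Gamma]+[\Gamma]\otimes\partial[\Gamma]=0$. This is exactly the structure of Lemma~\ref{lemma:factorization}.

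For the factor statement we apply Lemma~\ref{lemma:fundamental_class} and Proposition~\ref{prop:multiplication_property_equivalence} with colours $(\rd,\bl,\gr)$ carrying local codes $\cC_v$, $\cC_v$, and the local code of $(\Fc^{*2})^{\perp}$. The condition to check at every vertex is $\cC^{\rd,\perp}_v*\cC^{\bl,\perp}_v\subseteq\cC^{\gr}_v$. The left side is $\cC_v^{\perp}*\cC_v^{\perp}$. For the right side we use the dictionary of Lemma~\ref{lemma:dictionary}: the sheaf $\Fc^{*2}$ has co-restriction image $\im\Fc^{*2}_{v\shortrightarrow E(v)}=\cC_v^{\perp}*\cC_v^{\perp}$, so its local code is $(\cC_v^{\perp}*\cC_v^{\perp})^{\perp}$. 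The dual sheaf $(\Fc^{*2})^{\perp}$ uses the dual local code, namely $\cC_v^{\perp}*\cC_v^{\perp}$ itself. The condition therefore reads $\cC_v^{\perp}*\cC_v^{\perp}\subseteq\cC_v^{\perp}*\cC_v^{\perp}$, which holds tautologically. This is Theorem~\ref{thm:GTZ_classical}(i).

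The main point of care is bookkeeping of the dual/perp conventions between ``local code'' ($\ker\Hs_v$) and ``image of co-restriction'' ($\operatorname{rowspan}\Hs_v$). One must confirm that $(\Fc^{*2})^{\perp}$ indeed has local code $\cC_v^{\perp}*\cC_v^{\perp}$ and not its dual. I would check this directly via Lemma~\ref{lemma:fundamental_class}'s triple-orthogonality form, $\sum_e(a*b*c)_e=0$ for $a,b\in\cC_v^{\perp}$ and $c\in(\cC^{\gr}_v)^{\perp}=(\cC_v^{\perp}*\cC_v^{\perp})^{\perp}$. This holds because $\sum_e(a*b*c)_e=\langle a*b,c\rangle$.

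Over $\Fq$ with odd $p$, one uses that all edges at a vertex carry a common sign, so the $\FF_2$ argument of Lemma~\ref{lemma:local_cycle} goes through verbatim. Finally, for non-triviality of the invariant we note that the cycle is nonzero, since $[\Gamma]\neq0$. We would then exhibit a nonzero pairing by choosing cocycles via the dimension and distance guarantees (ii)--(iii), deferring the explicit evaluation to the later matching lemma. For admissibility itself, the cycle property is the substantive claim.
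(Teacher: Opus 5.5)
Your proposal is correct and follows essentially the same route as the paper. You reduce, via Lemma~\ref{lemma:fundamental_class} and Proposition~\ref{prop:multiplication_property_equivalence}, to $\cC^{\rd,\perp}_v*\cC^{\bl,\perp}_v\subseteq\cC^{\gr}_v$ on each factor, identify $\cC^{\gr}_v=\im\Fc^{*2}_{v\shortrightarrow E(v)}=\cC_v^{\perp}*\cC_v^{\perp}$, and conclude by Theorem~\ref{thm:GTZ_classical}(i), which holds by definition of $\Fc^{*2}$. Your explicit computation $\partial([\Gamma]\otimes[\Gamma])=\partial[\Gamma]\otimes[\Gamma]\pm[\Gamma]\otimes\partial[\Gamma]=0$ also spells out the last step, which the paper only attributes to Lemma~\ref{lemma:local_cycle}.
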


\begin{proof}
By Lemma~\ref{lemma:fundamental_class} and
Proposition~\ref{prop:multiplication_property_equivalence} it suffices to check the
multiplication property on each spatial factor, i.e.\
$\cC^{\rd,\perp}_{v} * \cC^{\bl,\perp}_{v} \subseteq \cC^{\gr}_{v}$. Here
$\cC^{\rd,\perp}_{v} = \cC^{\bl,\perp}_{v} = \im\Fc_{v\shortrightarrow E(v)}$,
while $\cC^{\gr,\perp}_{v}=\im(\Fc^{*2})^{\perp}_{v\shortrightarrow E(v)}
= \big(\im\Fc^{*2}_{v\shortrightarrow E(v)}\big)^{\perp}$ and hence
$\cC^{\gr}_{v} = \im\Fc^{*2}_{v\shortrightarrow E(v)}$. The required inclusion is
therefore exactly Theorem~\ref{thm:GTZ_classical}(i). Both spatial factors are
identical, so the same holds for $\zeta'$, and
$\eta_2=\zeta\otimes\zeta'$ is a $2$-cycle by
Lemma~\ref{lemma:local_cycle}.
\end{proof}

Note the symmetry of the condition, Eq.~\eqref{eq:triple_orthogonality}: the same
inclusion may be read as
$\cC^{\bl,\perp}_{v}*\cC^{\gr,\perp}_{v}\subseteq\cC^{\rd}_{v}$, which is the form
we use below. In particular $\im\Fc_{v\shortrightarrow E(v)} \subseteq
\big(\im\Fc^{*2}_{v\shortrightarrow E(v)}\big)^{\perp}$, so
\be\label{eq:tanner_inclusion}
\T^{\dag}(\Gamma,\Fc) \ \subseteq\ \T^{\dag}\big(\Gamma,(\Fc^{*2})^{\perp}\big)
= \T^{\dag}\big(\Gamma,\Fc^{\gr}_{\nu}\big) ,
\ee
and consequently the extendable set $M$ of Theorem~\ref{thm:GTZ_classical}(ii) is
extendable for the \gr\ Tanner code as well: if $\T^{\dag}(\Gamma,\Fc)|_{M}=\Fq^{M}$ then
a fortiori $\T^{\dag}(\Gamma,\Fc^{\gr}_{\nu})|_{M}=\Fq^{M}$.

\subsection{Logical CZ structure: a perfect matching}
\label{sec:high_yield_CZ}

We now compute the coefficient $\Lambda$ of Eq.~\eqref{eq:logical_CZ_Fq} for
Construction~\ref{con:high_yield}. The outcome replaces the Kronecker-delta
structure $\Lambda = \delta_{i_1 i_2}\delta_{j_1 j_3}$ of
Eq.~\eqref{eq:logical_CZ_instantiated} by a \emph{perfect matching}: each \gr\
symmetry label picks out exactly one \rd--\bl\ pair, and distinct labels pick out
disjoint pairs.

For $e \in M$ let $\bs_{e} \in \T^{\dag}(\Gamma,\Fc)$ be a codeword with
$\bs_{e}|_{M} = \delta_{e}$, which exists because $M$ is extendable, and let
$\bs^0_{e} \in H^0(\Gamma,\Fc)$ be the corresponding $0$-cocycle under
Lemma~\ref{lemma:dictionary}; likewise let $\cs^0_{e}\in
H^0(\Gamma,\Fc^{\gr}_{\nu})$ correspond to a $\cs_{e}\in
\T^{\dag}(\Gamma,\Fc^{\gr}_{\nu})$ with $\cs_{e}|_{M}=\delta_{e}$, which exists by
Eq.~\eqref{eq:tanner_inclusion}. Finally let $\bar{e}\in C^1(\Gamma,\Fc)$ be the
elementary $1$-cochain of Eq.~\eqref{eq:cochain_expansion_untwisted} supported on
the single edge $e$ with unit value.

\begin{lemma}[Matching structure]\label{lemma:matching}
Adopt the K\"unneth labelling of Eq.~\eqref{eq:spatial_cocycle_decomposition}, and
for $(e,f) \in M\times M$ choose
\begin{align}\label{eq:high_yield_basis}
\red{\alpha^{1}_{(e,f)}} &= \red{\bar{e}} \otimes \red{\bs^0_{f}} , \cr
\blue{\beta^{1}_{(e,f)}} &= \blue{\bs^0_{e}} \otimes \blue{\bar{f}} , \cr
\green{\gamma^{0}_{(e,f)}} &= \green{\cs^0_{e}} \otimes \green{\cs^0_{f}} .
\end{align}
Then, with $\eta_2=[\L]$,
\be\label{eq:Lambda_matching}
\Lambda\big(\red{(e_1,f_1)},\blue{(e_2,f_2)},\green{(e_3,f_3)}\big)
= \delta_{e_1 e_2}\,\delta_{e_1 e_3}\ \delta_{f_1 f_2}\,\delta_{f_2 f_3} ,
\ee
and the classes in Eq.~\eqref{eq:high_yield_basis} are linearly independent, so
they label $|M|^2$ logical qudits of the \rd\ copy, $|M|^2$ of the \bl\ copy, and
$|M|^2$ independent symmetry generators.
\end{lemma}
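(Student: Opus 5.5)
The plan is to reduce $\Lambda$ to a product of two one-dimensional pairings using the K\"unneth factorization of Lemma~\ref{lemma:factorization}, and then to evaluate each one-dimensional pairing by the cocycle/Tanner-code dictionary of Lemma~\ref{lemma:dictionary}.

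With $\eta_2=[\Gamma]\otimes[\Gamma]$ and the product cochains of Eq.~\eqref{eq:high_yield_basis}, Eq.~\eqref{eq:invariant_factorization} gives
\[
\Lambda=\int_{[\Gamma]}\red{\bar e_1}\cup\blue{\bs^0_{e_2}}\cup\green{\cs^0_{e_3}}\cdot\int_{[\Gamma]}\red{\bs^0_{f_1}}\cup\blue{\bar f_2}\cup\green{\cs^0_{f_3}} .
\]
A sign bookkeeping step may be needed for odd $p$, but by Remark~\ref{rem:signs} every $0$-cochain sits in a slot where the sign is $+1$. So it suffices to prove one identity for a single graph factor:
\[
\int_{[\Gamma]}\bar e\cup \bs^0_{e'}\cup \cs^0_{e''}=\delta_{ee'}\delta_{ee''}
\]
for $e,e',e''\in M$, together with its mirror version with the $1$-cochain in the middle slot.

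For the single-factor identity I would evaluate the cup product on the one edge $e$ in the support of $\bar e$, using Definition~\ref{def:cup-simplicial} for degrees $(1,0,0)$. The two $0$-cochains are pushed to the edge by the co-restrictions, and the result is multiplied against the fundamental class coefficient $1$. The pushed-forward values are exactly the edge values of the corresponding Tanner codewords. By Lemma~\ref{lemma:dictionary}, $\Fc_{v\shortrightarrow e}(\bs^0_{e'}(v))=\bs_{e'}(e)$, and likewise $\cs_{e''}(e)$ on the \gr\ side. The integral is therefore $\bs_{e'}(e)\,\cs_{e''}(e)$. Since $e\in M$ and $\bs_{e'}|_M=\delta_{e'}$, $\cs_{e''}|_M=\delta_{e''}$, this equals $\delta_{ee'}\delta_{ee''}$. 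The second factor is handled the same way: it evaluates on edge $f_2$ to $\bs_{f_1}(f_2)\cs_{f_3}(f_2)=\delta_{f_1f_2}\delta_{f_2f_3}$. Multiplying the two factors gives Eq.~\eqref{eq:Lambda_matching}.

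For linear independence, the plan is to pair against suitable dual objects.
\begin{itemize}
\item \textbf{Green labels.} The $\cs^0_{e}\otimes\cs^0_f$ are independent in $H^0=Z^0$ because their restrictions to $M\times M$ under the dictionary are the distinct unit vectors $\delta_e\otimes\delta_f$.
\item \textbf{Red and blue classes.} For $\red{\alpha}$ and $\blue{\beta}$ I would use the matching formula itself as a nondegenerate pairing. If $\sum c_{ef}\red{\alpha_{(e,f)}}$ were a coboundary, its cup with any fixed blue class and green label would vanish by Proposition~\ref{prop:cup-properties}(c) and Lemma~\ref{lemma:invariant_Fq}. Choosing $\blue{(e,f)}$ and $\green{(e,f)}$ then isolates $c_{ef}=0$. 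The same argument applies symmetrically for blue.
\end{itemize}

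The main obstacle I expect is the edge-level evaluation. One must confirm that the front/back face convention on an edge $[v_0v_1]$ routes each $0$-cochain through the correct endpoint. One must also confirm that the co-restrictions from the two endpoints agree, which is the cocycle condition, so that the answer is well defined. Finally, the tensor-sheaf pairing with $[\Gamma]$ must reduce to the product of scalars, using that edge stalks are $\Fq$.
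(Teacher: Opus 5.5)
Your proposal is correct and follows the paper's proof essentially step by step. You use the K\"unneth factorization into two single-graph pairings, then evaluate each pairing edgewise via the dictionary of Lemma~\ref{lemma:dictionary}, using the cocycle condition so that both endpoint co-restrictions give the Tanner-codeword value. You then get linear independence by pairing against the matched labels of the other two colours. The only differences are minor. You prove independence of the green labels directly, from their restrictions $\delta_e\otimes\delta_f$ to $M\times M$, rather than by pairing. For the coboundary-invariance step you should also cite Lemma~\ref{lemma:admissible}, which shows that $[\L]$ is a tensor-sheaf cycle; the paper invokes it together with Proposition~\ref{prop:cup-properties}(c).
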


\begin{proof}
By Lemma~\ref{lemma:factorization} the invariant factorizes over the two spatial
graphs,
\be\label{eq:Lambda_factorized}
\Lambda = \Big\langle [\Gamma],\, \red{\bar{e}_1}\cup\blue{\bs^0_{e_2}}\cup
\green{\cs^0_{e_3}}\Big\rangle \cdot
\Big\langle [\Gamma],\, \red{\bs^0_{f_1}}\cup\blue{\bar{f}_2}\cup
\green{\cs^0_{f_3}}\Big\rangle .
\ee
On a graph the cup product of a $1$-cochain with $0$-cocycles is evaluated
edgewise: for a $0$-cocycle $x^0$ with associated Tanner codeword
$x=\Phi(x^0)$ of Lemma~\ref{lemma:dictionary} one has
$(x^0\cup y^1)_{e}=x_{e}\,y^1_{e}$ and $(y^1\cup x^0)_{e}=y^1_{e}\,x_{e}$, since
$\Fc_{v\shortrightarrow e}(x^0_v)=\Fc_{v'\shortrightarrow e}(x^0_{v'})=x_e$ for a
cocycle. Since the edge stalks of the tensor sheaf are
$\Fq\otimes\Fq\otimes\Fq\cong\Fq$ and $[\Gamma]$ is the all-ones $1$-chain, the
first factor of Eq.~\eqref{eq:Lambda_factorized} is
\begin{align}\label{eq:Lambda_x_eval}
&\sum_{e\in E(\Gamma)} \red{\bar{e}_1}(e)\,\blue{\bs_{e_2}}(e)\,
\green{\cs_{e_3}}(e) \cr
&\qquad = \blue{\bs_{e_2}}(e_1)\,\green{\cs_{e_3}}(e_1)
= \delta_{e_1 e_2}\,\delta_{e_1 e_3} ,
\end{align}
using $e_1\in M$ together with $\bs_{e_2}|_M=\delta_{e_2}$ and
$\cs_{e_3}|_M=\delta_{e_3}$. The second factor is
$\bs_{f_1}(f_2)\,\cs_{f_3}(f_2)=\delta_{f_1f_2}\delta_{f_2f_3}$ by the same
computation, giving Eq.~\eqref{eq:Lambda_matching}.

That $\Lambda$ depends only on the cohomology classes is
Proposition~\ref{prop:cup-properties}(c) together with
Lemma~\ref{lemma:admissible}; concretely, replacing $\red{\bar{e}_1}$ by
$\red{\bar{e}_1}+d^0\chi$ changes Eq.~\eqref{eq:Lambda_x_eval} by
$\sum_v \big\langle \Fc_{v\shortrightarrow E(v)}(\chi_v),\,
(\blue{\bs_{e_2}}*\green{\cs_{e_3}})|_{E(v)}\big\rangle$, which vanishes because
$\cC^{\bl,\perp}_{v}*\cC^{\gr,\perp}_{v}\subseteq\cC^{\rd}_{v}$ is the
colour-permuted multiplication property. Independence follows from
Eq.~\eqref{eq:Lambda_matching}: if $\sum_{(e,f)}\lambda_{ef}
\red{\alpha^1_{(e,f)}}$ were a coboundary, pairing it against
$\blue{\beta^1_{(e,f)}}$ and $\green{\gamma^0_{(e,f)}}$ would give
$\lambda_{ef}=0$ for every $(e,f)$, and similarly for the other two families.
\end{proof}

\begin{corollary}[Addressable disjoint logical CZ]\label{cor:disjoint_CZ}
In Construction~\ref{con:high_yield}, for every $(e,f)\in M\times M$ the
transversal operator $\widetilde{\text{CZ}}^{\rd,\bl}(\green{\gamma^0_{(e,f)}})$
of Eq.~\eqref{eq:CZ_Fq} implements a \emph{single} logical CZ,
\be\label{eq:single_logical_CZ}
\widetilde{\text{CZ}}^{\rd,\bl}\big(\green{\gamma^{0}_{(e,f)}}\big)
\ \widehat{=}\
\lo{\text{CZ}}^{\rd,\bl}_{\red{(e,f)},\,\blue{(e,f)}}(1) ,
\ee
acting on the \rd\ and \bl\ logical qudits carrying the same label. Distinct
labels act on disjoint pairs, so all $|M|^2$ operators commute and may be applied,
or measured, in parallel.
\end{corollary}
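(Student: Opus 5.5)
The proof combines the general logical-action formula of Sec.~\ref{sec:logical_CZ_untwisted} with the matching structure of Lemma~\ref{lemma:matching}.

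\emph{Step 1: logical action.} By Eq.~\eqref{eq:logical_CZ_Fq}, for any $0$-cocycle $\rho^0\in Z^0(\L,\Fc^{\gr})$ the transversal operator $\widetilde{\text{CZ}}^{\rd,\bl}(\rho^0)$ acts on the untwisted code as
\[
\prod_{\red{\alpha},\blue{\beta}}\lo{\text{CZ}}^{\rd,\bl}_{\red{\alpha},\blue{\beta}}\big(\Lambda^{\red{\alpha}\blue{\beta}}(\rho^0)\big).
\]
Here $\Lambda$ is the triple pairing against $\eta_2=[\L]$. This formula is valid because $[\L]$ is a cycle of the tensor sheaf (Lemma~\ref{lemma:admissible}), so the derivation in Eqs.~\eqref{eq:CZ_conjugation_X}--\eqref{eq:zeta_is_cycle} goes through over $\Fq$.

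\emph{Step 2: choice of logical bases.} Complete the red and blue families of Eq.~\eqref{eq:high_yield_basis} to bases of $H^1(\L,\Fc^{\rd})$ and $H^1(\L,\Fc^{\bl})$. The logical qudits are then labelled by a basis $\{\red{\alpha^1}\}$ together with its dual cycles, in the subsystem sense of Lemma~\ref{lemma:subsystem}. The labelled qudits $\red{(e,f)}$ and $\blue{(e,f)}$ are the ones whose $X$-logicals are $\red{\alpha^1_{(e,f)}}$ and $\blue{\beta^1_{(e,f)}}$. Any remaining basis classes are either demoted to gauge qudits or handled in Step 3.

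\emph{Step 3: reading off the coefficients.} Substitute $\rho^0=\green{\gamma^0_{(e,f)}}$. Equation~\eqref{eq:Lambda_matching} gives
\[
\Lambda\big(\red{(e_1,f_1)},\blue{(e_2,f_2)},\green{(e,f)}\big)=\delta_{e_1e}\,\delta_{e_2e}\,\delta_{f_1f}\,\delta_{f_2f}.
\]
So among the labelled pairs, exactly one coefficient is nonzero, namely the one at $\red{(e,f)},\blue{(e,f)}$, and it equals $1\in\Fq$. This yields Eq.~\eqref{eq:single_logical_CZ}, a single $\lo{\text{CZ}}(1)$.

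Since $\Lambda$ depends only on cohomology classes (Lemma~\ref{lemma:matching}), the result does not depend on the chosen representatives.

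\emph{Step 4: disjointness and commutation.} Distinct labels $(e,f)\neq(e',f')$ act on different red qudits and on different blue qudits, so the pairs are disjoint. All the physical operators are diagonal, so they commute exactly, and their logical images commute as well. This allows the operators to be applied, or measured, in parallel.

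\emph{Main obstacle.} The hard point is Step 2: showing that pairs \emph{outside} the labelled family, i.e.\ basis classes beyond the $|M|^2$ chosen ones, do not receive extra CZ couplings. The fallback, if this cannot be ruled out, is the fountain argument of Theorem~\ref{thm:fountain}. There, the complementary red and blue basis elements are either treated as gauge qudits or initialized in $\lo{\ket{0}}$, which kills any extra CZ factors, and the corollary is stated on the labelled subspace. I would first try to choose the completing basis elements so that their pairing with $\green{\gamma^0_{(e,f)}}$ vanishes, by pairing them against the dual classes supported off $M$.
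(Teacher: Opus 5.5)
Your Steps 1, 3 and 4 are exactly the paper's proof. You substitute Eq.~\eqref{eq:Lambda_matching} into Eq.~\eqref{eq:logical_CZ_Fq} and note that, among the labelled classes, the only surviving multiplier is at $\red{(e,f)},\blue{(e,f)}$ with value $1$. Disjointness then follows from injectivity of $(e,f)\mapsto(\red{(e,f)},\blue{(e,f)})$. The paper says nothing beyond this and tacitly reads the corollary on the labelled family.

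The obstacle you flag is real, but your preferred fix in the last paragraph cannot work. Under a change of \rd\ and \bl\ bases the coupling matrix transforms as $\Lambda\mapsto P^{T}\Lambda Q$, so its rank is basis-independent. For $\green{\gamma^0_{(e,f)}}$ that rank is at least $2$. By K\"unneth, $H^1(\L,\Fc\otimes\Fc)=[H^1\otimes H^0]\oplus[H^0\otimes H^1]$ for both copies. The labelled \rd\ classes sit in the first summand and the labelled \bl\ classes in the second. However, the cross classes $\red{\bs^0_e\otimes\bar f}$ and $\blue{\bar e\otimes\bs^0_f}$ also couple, with $\Lambda=\pm\,\bs_e(e)\cs_e(e)\,\bs_f(f)\cs_f(f)=\pm1$ by the same edgewise evaluation as Eq.~\eqref{eq:Lambda_x_eval}.

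The same problem breaks disjointness on the full space. For $e_2\neq e$, the cycle $\bs_{e_2}*\cs_e\in H_1(\Gamma,\Fc)$ annihilates every $\bar e_1$ with $e_1\in M$. Whenever it is nonzero, non-degeneracy forces some complementary class $\bar a\otimes\bs_f$ to couple through $\green{\gamma^0_{(e,f)}}$ to the labelled \bl\ qudit $(e_2,f)$.

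Hence, on the full code space, $\widetilde{\text{CZ}}^{\rd,\bl}(\green{\gamma^0_{(e,f)}})$ is never a single logical CZ for any basis choice. Demoting the complementary classes to gauge qudits is not enough either. A CZ between a logical qudit and an unfixed gauge qudit does not normalize the gauge group, so its action on the logical subsystem depends on the gauge state. What you need is your fallback: fix all complementary \rd\ and \bl\ classes in $\lo{\ket{0}}$. On that subspace your Steps 3--4 give the corollary verbatim, and this is the gauge-fixing the paper performs in the proof of Theorem~\ref{thm:high_yield_fountain}.
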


\begin{proof}
Immediate from Eq.~\eqref{eq:logical_CZ_Fq} and
Eq.~\eqref{eq:Lambda_matching}: the only non-vanishing multiplier for
$\green{\vc{i_3}}=(e,f)$ is the one with $\red{\vc{i_1}}=\blue{\vc{i_2}}=(e,f)$,
and its value is $1$. Disjointness is the statement that the map
$(e,f)\mapsto\big(\red{(e,f)},\blue{(e,f)}\big)$ is injective.
\end{proof}

Corollary~\ref{cor:disjoint_CZ} should be compared with
Sec.~\ref{sec:fountain}, where the disjointness had to be engineered by
initializing all but one \rd\ logical qudit per cluster in $\lo{\ket{0}}$, at a
cost of a factor $\sqrt{n}$ in the magic rate. Here the matching is built into
the code:
\emph{no} gauge-fixing of \rd\ or \bl\ logical qudits is needed to make the
logical CZ gates disjoint. Gauge-fixing is still required for the short
(``spurious'') classes outside the chosen bases, exactly as in
Lemma~\ref{lemma:subsystem}.

\subsection{Parameters}
\label{sec:high_yield_parameters}

\begin{theorem}[Code parameters and magic rate]\label{thm:high_yield_parameters}
For every $\epsilon>0$ and every prime $p$, Construction~\ref{con:high_yield}
yields an infinite family of twisted sheaf codes over $\Fq$, $q=p^{O_\epsilon(1)}$,
with constant stabilizer weight, block length $n$, and
\be\label{eq:high_yield_params}
\Big[\big[\, n,\ \ k \ \ge\ n^{1-\epsilon},\ \
d \ \ge\ n^{(1-\epsilon)/2} \,\big]\Big]_q ,
\ee
supporting $s = |M|^{2}$ independently addressable transversal logical CZ gates
on disjoint pairs of logical qudits, hence an almost constant magic rate
\be\label{eq:high_yield_rate}
\varrho \;=\; \frac{s}{n} \;=\; \frac{|M|^{2}}{n} \;\ge\; n^{-\epsilon} .
\ee
Here $d$ is the subsystem-code distance of Lemma~\ref{lemma:subsystem} and
Definition~\ref{def:distance_clifford}.
\end{theorem}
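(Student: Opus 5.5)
Our plan is to read off all parameters from single-graph quantities of Theorem~\ref{thm:GTZ_classical} via the K\"unneth theorem, paralleling Lemmas~\ref{lemma:rate_untwisted}--\ref{lemma:distance_twisted}, with the constant-sheaf factor replaced by the genuinely multiplicative sheaves $\Fc$ and $(\Fc^{*2})^{\perp}$.

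\emph{Block length and weight.} Since $\Gamma$ has maximum degree $\Delta=O_\epsilon(1)$, the stalks are $O(1)$-dimensional over $\Fq$ with $q=p^{O_\epsilon(1)}$. Hence every stabilizer generator of Eq.~\eqref{eq:Clifford_group_Fq}, including its $\text{CZ}(\lambda)$ dressing, has constant weight, and we get $n=\Theta(N^2)$ with $N=|E(\Gamma)|$, i.e.\ $N=\Theta(\sqrt n)$.

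\emph{Magic rate and dimension.} Corollary~\ref{cor:disjoint_CZ} gives $s=|M|^2$ disjoint addressable logical CZ gates. Theorem~\ref{thm:GTZ_classical}(ii) gives $|M|\ge N^{1-\epsilon}$, so $s\ge N^{2-2\epsilon}\ge c\,n^{1-\epsilon}$; absorbing the constant by shrinking $\epsilon$ yields Eq.~\eqref{eq:high_yield_rate}. For $k$, we do not need the full K\"unneth count. The $|M|^2$ \rd\ classes $\red{\alpha^1_{(e,f)}}$ and \bl\ classes $\blue{\beta^1_{(e,f)}}$ are independent by Lemma~\ref{lemma:matching}, so the untwisted code has at least $2|M|^2$ logical qudits. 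The twist imposes at most $\dim H^0(\L,\Fc^{\gr})$ constraints (Lemma~\ref{lemma:constraints_Fq}), which by Corollary~\ref{cor:disjoint_CZ} identify each \rd--\bl\ pair to the CZ-eigenspace. We therefore count one surviving qudit per pair (the \rd\ one), giving $k\ge|M|^2\ge n^{1-\epsilon}$. The alternative is to work in the subsystem picture of Lemma~\ref{lemma:subsystem}, retaining only the chosen bases.

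\emph{Distance.} This is the main obstacle. By Lemma~\ref{lemma:subsystem} it suffices to bound the weights of the chosen cocycles $\red{\bar e}\otimes\red{\bs^0_f}$, $\blue{\bs^0_e}\otimes\blue{\bar f}$ and their conjugate cycles \emph{up to cohomology}, i.e.\ the minimum over the classes. We plan to use the product weight bound Eq.~\eqref{eq:tensor_weight_bound}, which makes the weight at least the larger factor distance. On one factor, any nonzero $0$-cocycle of $\Fc$ has weight at least $\tfrac{2}{\Delta}d(\T^{\dag}(\Gamma,\Fc))\ge\tfrac2\Delta N^{1-\epsilon}$ by Lemma~\ref{lemma:dictionary} and Theorem~\ref{thm:GTZ_classical}(iii). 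Since $0$-cocycles admit no coboundary ambiguity, this gives $d\ge\Omega(N^{1-\epsilon})=\Omega(n^{(1-\epsilon)/2})$ for the $X$-type logicals.

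For the conjugate $Z$-type cycles, we first construct them using the pairing $\mathcal P$ and local acyclicity (Appendix~\ref{app:acyclic}, Theorem~\ref{thm:poincare-duality-sheaf-codes}). The cycle factor in $H_1(\Gamma,\Fc^{\perp})=\T^{\dag}(\Gamma,\Fc)$ then has weight at least $N^{1-\epsilon}$. For the \gr\ sector, the needed $Z$-side bound comes from $d(\T^{\dag}(\Gamma,\Fc^{*2}))\ge N^{1-\epsilon}$, which is why that hypothesis appears in (iii); we expect it to enter through the dual $(\Fc^{*2})^{\perp}$ in the same way. Short spurious classes are demoted to gauge qudits as in Lemma~\ref{lemma:subsystem}. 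Finally, the twisted distance follows as in Lemma~\ref{lemma:distance_twisted} under Assumption~\ref{assump:TQFT}, since dressing only enlarges supports.

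The delicate point we expect is checking that the conjugate pairs really satisfy the canonical pairing Eq.~\eqref{eq:canonical_pairing} while every factor is a good-code (co)cycle. If it fails, we plan to pass to $\mathcal P$-dual bases via Corollary~\ref{cor:kronecker-dual}, at the cost of constant factors.
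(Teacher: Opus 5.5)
Your block length, stabilizer weight and magic-rate arguments match the paper's. Your $X$-side distance argument is correct and more precise than the paper's: the weight of any representative of $[\bar e\otimes\bs^0_f]$ is controlled by the $0$-cocycle factor $\bs^0_f\in H^0(\Gamma,\Fc)$, hence is at least $\tfrac{2}{\Delta}d(\T^{\dag}(\Gamma,\Fc))$. Your dimension count is essentially the paper's fallback remark ``$k\ge s$'', and is at the same heuristic level.

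The genuine gap is the $Z$-side (electric) distance.

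\begin{itemize}
\item In Construction~\ref{con:high_yield} the red and blue sheaves are $\Fc\otimes\Fc$, not $\Fc_x\otimes\Fc_y^{\perp}$ as for the red copy of Construction~\ref{con:2d-hgp}. So the electric logicals are classes in $H_1(\Gamma,\Fc)\otimes H_0(\Gamma,\Fc)\oplus H_0(\Gamma,\Fc)\otimes H_1(\Gamma,\Fc)$.
\item The $1$-cycle factor lies in $H_1(\Gamma,\Fc)$. This is the Tanner code with local codes $\cC_v=\ker\Hs_v$, which by Lemma~\ref{lemma:dictionary} equals $\T^{\dag}(\Gamma,\Fc^{\perp})$. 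It is not $H_1(\Gamma,\Fc^{\perp})=\T^{\dag}(\Gamma,\Fc)$, as you claim.
\item The cycles $\bs^0_f\frown[\Gamma]$ live in $H_1(\Gamma,\Fc^{\perp})$ and are not chains of the red or blue complexes.
\item The $\mathcal P$-dual fallback via Corollary~\ref{cor:kronecker-dual} takes $\mathrm{PD}$-images of $H^0(\Gamma,\Fc^{\perp})\cong\T^{\dag}(\Gamma,\Fc^{\perp})$. These land back in $H_1(\Gamma,\Fc)$, so they face the same issue.
\end{itemize}

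Meanwhile, the $H_0(\Gamma,\Fc)$ factor conjugate to $\bs^0_f$ can be a single stalk coordinate $p$ at a vertex where $\bs^0_f\neq0$. Take $\zeta\in H_1(\Gamma,\Fc)$ with $\zeta(e)\neq 0$, which exists because $[\bar e]\neq0$. Then $\zeta\otimes p$ is an electric logical of weight $|\zeta|$ with non-zero pairing against $[\bar e\otimes\bs^0_f]$. It acts on the chosen logical qudit, so it cannot be demoted to a gauge operator.

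Hence $d_Z$ is governed by $d_1(\Gamma,\Fc)=d(\T^{\dag}(\Gamma,\Fc^{\perp}))$. Theorem~\ref{thm:GTZ_classical}(iii) bounds only $d(\T^{\dag}(\Gamma,\Fc))$ and $d(\T^{\dag}(\Gamma,\Fc^{*2}))$. The latter is the cycle distance of the green copy, $H_1(\Gamma,(\Fc^{*2})^{\perp})=\T^{\dag}(\Gamma,\Fc^{*2})$. That is a gauge sector, so it does not help here.

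The paper's proof does not supply this input either. It disposes of the cycle side by asserting that the relevant classes are products of two factor cycles of the same minimal weights, giving $N^{2(1-\epsilon)}$. That is incompatible with the max-form bound \eqref{eq:tensor_weight_bound} it invokes, and with the operators $\zeta\otimes p$ above, whose weight is at most $N$.

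To close the argument you need an additional input: $d(\T^{\dag}(\Gamma,\Fc^{\perp}))\ge N^{1-\epsilon}$, or at least this bound for cycles meeting $M$. It would have to be extracted from the tensor-code structure of the Golowich--Tamo--Zhu construction, not from Theorem~\ref{thm:GTZ_classical} as quoted. With it, your product-bound argument gives $d_Z=\Omega(N^{1-\epsilon})=\Omega(n^{(1-\epsilon)/2})$, and the rest of your outline goes through.
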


\begin{proof}
\emph{Length.} The qudits sit on the $1$-cells of $\L=\Gamma\times\Gamma$, of
which there are $2|V(\Gamma)||E(\Gamma)| = \Theta(N^2)$, with stalk dimensions
$O(\Delta)=O(1)$; hence $n=\Theta(N^{2})$ and $N=\Theta(\sqrt{n})$.

\emph{Magic rate.} By Lemma~\ref{lemma:matching} and
Corollary~\ref{cor:disjoint_CZ} there are $|M|^{2}$ disjoint pairs, and
$|M|\ge N^{1-\epsilon'}$ by Theorem~\ref{thm:GTZ_classical}(ii), so
$s \ge N^{2(1-\epsilon')} = n^{1-\epsilon'}$ and
$\varrho = s/n \ge n^{-\epsilon'}$; rename $\epsilon'\to\epsilon$.

\emph{Dimension.} By Eq.~\eqref{eq:kunneth_H1},
$k^{\rd} = \dim H^1(\Gamma,\Fc)\cdot\dim H^0(\Gamma,\Fc)
+ \dim H^0(\Gamma,\Fc)\cdot\dim H^1(\Gamma,\Fc)$. The second factor is
$\dim\T^{\dag}(\Gamma,\Fc)\ge N^{1-\epsilon}$ by Lemma~\ref{lemma:dictionary}, and the
first is $\dim H^1(\Gamma,\Fc) = |E(\Gamma)| - \rk d^0 \ge N - \sum_{v}m_v =
\Omega(N)$, the local codes having rate $\le 1/2$
(Remark~\ref{rem:why_large_q}). Hence $k \ge k^{\rd} = \Omega(N^{2-\epsilon})
= \Omega(n^{1-\epsilon/2})$; alternatively and more simply, $k\ge s = \varrho n$.

\emph{Distance.} We use the subsystem-code distance of
Lemma~\ref{lemma:subsystem}, restricted to the conjugate bases of
Eq.~\eqref{eq:high_yield_basis} and their duals, and the product weight bound
\eqref{eq:tensor_weight_bound}. On the cocycle side the \bl\ and \gr\ labels are
tensor products of two Tanner codewords, of weight
$\ge d(\T^{\dag}(\Gamma,\Fc))\ge N^{1-\epsilon}$ and
$\ge d(\T^{\dag}(\Gamma,\Fc^{*2}))\ge N^{1-\epsilon}$ respectively by
Theorem~\ref{thm:GTZ_classical}(iii), so their minimal weights are
$\ge N^{1-\epsilon} = n^{(1-\epsilon)/2}$; on the cycle side the corresponding
classes are products of two factor cycles of the same minimal weights, giving
$\ge N^{2(1-\epsilon)}$, which is larger. The binding constraint is therefore the
single-factor bound, $d = \Omega\big(n^{(1-\epsilon)/2}\big)$. For the twisted
code the electric logical operators are unchanged and the magnetic ones are only
dressed, Eq.~\eqref{eq:support_inequality}, so
Lemma~\ref{lemma:distance_twisted} applies verbatim under
Assumption~\ref{assump:TQFT} and gives the same bound.

\emph{Locality.} The graph has constant degree $\Delta=O_\epsilon(1)$ and the
stalks are of constant dimension, so the $A$ and $B$ generators of
Eq.~\eqref{eq:AB_Fq} have weight $O(1)$; the CZ dressing of
Eq.~\eqref{eq:dressing_Fq} is supported on bounded clusters of cells, so
$\tilde{\SS}$ is generated by constant-weight Clifford operators and the code is
qLDPC.
\end{proof}

\begin{remark}[Comparison with Construction~\ref{con:2d-hgp}]
\label{rem:yield_comparison}
The two instantiations trade encoding rate against magic rate:
Construction~\ref{con:2d-hgp} gives $[[n,\Theta(n),\Omega(\sqrt{n})]]_2$ with
magic rate $\varrho=\Theta(n^{-1/2})$, whereas
Construction~\ref{con:high_yield} gives
$[[n, n^{1-\epsilon}, n^{(1-\epsilon)/2}]]_q$ with
$\varrho \ge n^{-\epsilon}$. The distance is the same up to the $n^{\epsilon}$
factor and the encoding rate drops only by $n^{-\epsilon}$, while the magic rate
improves quadratically, from $n^{-1/2}$ to almost constant. In the language of
Ref.~\cite{GolowichTamoZhu2026}, whose logical yield $s$ is our
$\varrho\,n$, Construction~\ref{con:2d-hgp} sits at
$d\cdot s = \Theta(n)$ --- the barrier that all previous constant-weight
constructions obeyed --- while Construction~\ref{con:high_yield} achieves
$d\cdot s = n^{3(1-\epsilon)/2} \gg n$.
\end{remark}

\subsection{Gauging measurement at almost-constant magic rate}
\label{sec:high_yield_fountain}

The protocol of Sec.~\ref{sec:protocol} applies verbatim, with the $\Fq$
modifications of Sec.~\ref{sec:subcomplex_Fq}: the measured checks are the \gr-type
twisted stabilizers $\tilde{A}^{\gr}(\lambda^0)$, which commute exactly among
themselves by Lemma~\ref{lemma:commutation_Fq}, and the outcome attached to the
label $\green{\gamma^0_{(e,f)}}$ is now an element $\green{w_{(e,f)}}\in\Fq$
rather than a bit.

\begin{theorem}[Magic state fountain at almost-constant rate]\label{thm:high_yield_fountain}
There is a gauging measurement protocol on the twisted sheaf code of
Construction~\ref{con:high_yield} that prepares $s \ge n^{1-\epsilon}$ logical CZ
magic states on disjoint pairs of logical qudits in a single run --- an almost
constant magic rate $\varrho \ge n^{-\epsilon}$ --- protected by subsystem-code
distance $\Omega(n^{(1-\epsilon)/2})$ (the latter under
Assumption~\ref{assump:TQFT}).
\end{theorem}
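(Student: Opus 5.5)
The plan is to run the four-step protocol of Sec.~\ref{sec:protocol} on the codes of Construction~\ref{con:high_yield}, with the $\Fq$ modifications of Sec.~\ref{sec:subcomplex_Fq}, and then read off the output from Eq.~\eqref{eq:output_logical_state} using the matching structure of Corollary~\ref{cor:disjoint_CZ}. The proof has the same shape as that of Theorem~\ref{thm:fountain}. The only changes are that no per-cluster gauge fixing is needed, and that outcomes and projectors are $\Fq$-valued.

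\emph{Initialization and gauging.} Initialize all \rd\ and \bl\ logical qudits labelled by $(e,f)\in M\times M$ in $\lo{\ket{+}}$, the uniform superposition over $\Fq$. Set the \gr\ copy to $\ket{0}$, so that every dressing is trivial. Gauge fix any classes outside the chosen bases of Eq.~\eqref{eq:high_yield_basis}, as in Lemma~\ref{lemma:subsystem}.

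Next, measure all $\tilde{A}^{\gr}(\lambda^0)$. Concretely, one measures the generators $\tilde{A}^{\gr}_{v,t}(\alpha)$ for a basis of $\alpha$, which amounts to measuring the $\Fq$-valued eigenvalue label. These commute exactly by Lemma~\ref{lemma:commutation_Fq}. Combining outcomes along each $\green{\gamma^0_{(e,f)}}$ via $\tilde{A}^{\gr}(\mu^0)=U^{\gr}(\mu^0)$ from Lemma~\ref{lemma:constraints_Fq} gives an outcome $\green{w_{(e,f)}}\in\Fq$. This outcome is the eigenvalue label of $\widetilde{\text{CZ}}^{\rd,\bl}(\green{\gamma^0_{(e,f)}})$.

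\emph{Ungauging.} Ungauging proceeds exactly as in step~3. The $Z$-outcome cochain $\green{\theta^1}$ is forced to be a cocycle and to pair to zero with every basis cycle, so it is exact by Proposition~\ref{prop:nondegen-sheaf}, which holds over any field. It is therefore removed by the correction $\R$. The resulting stabilizer group is $\cS_4$, with the extra constraints that $\widetilde{\text{CZ}}^{\rd,\bl}(\green{\gamma^0_{(e,f)}})$ has eigenvalue $\omega^{\mathrm{Tr}(\cdot)}$ indexed by $\green{w_{(e,f)}}$.

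\emph{Output.} By Corollary~\ref{cor:disjoint_CZ}, each constrained operator is a single logical $\lo{\text{CZ}}(1)$ on the pair $(\red{(e,f)},\blue{(e,f)})$, and distinct labels act on disjoint pairs. The projectors therefore commute and factorize over pairs. Each pair ends up in a state $\hat{P}_{w}[\lo{\text{CZ}}]\lo{\ket{++}}$. Concretely this is the uniform superposition of $\lo{\ket{x,y}}$ over all $(x,y)$ with $xy$ in a fixed fiber of the map $xy\mapsto$ eigenvalue label.

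Counting gives $s=|M|^2\ge n^{1-\epsilon}$ and $\varrho\ge n^{-\epsilon}$, as in Theorem~\ref{thm:high_yield_parameters}. The distance claim is that same theorem together with Lemma~\ref{lemma:distance_twisted}, and $O(d)$ rounds in the twisted phase protect the outcomes.

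\emph{Expected obstacle.} The main subtlety is the claim that every outcome sector yields a useful CZ magic state, with constant probability. Over $\FF_2$ the $\green{\rho}=1$ branch gives the stabilizer state $\lo{\ket{11}}$. Over $\Fq$, with the $\omega^{\mathrm{Tr}}$ phase, the sectors are labelled by the values of $\mathrm{Tr}(xy)$ or of $xy$. I would argue as follows. For each sector, the post-measurement state is non-stabilizer unless it is a product of computational states. The sector probabilities are $|\{(x,y):\mathrm{Tr}(xy)=c\}|/q^2$, which are bounded below by a constant independent of $n$ because $q=O_\epsilon(1)$. This shows that a constant fraction of the $|M|^2$ pairs, or all of them up to known Pauli/Clifford corrections, yield magic states. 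This preserves the $n^{1-\epsilon}$ scaling, possibly after halving $\epsilon$.
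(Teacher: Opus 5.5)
Your proposal follows the paper's proof essentially step for step: initialize the \rd\ and \bl\ logical qudits in $\lo{\ket{+}}$ with the classes outside the bases of Eq.~\eqref{eq:high_yield_basis} gauge-fixed to $\lo{\ket{0}}$, run steps 2--4 of Sec.~\ref{sec:protocol}, factorize the output over the $|M|^2$ disjoint pairs by Corollary~\ref{cor:disjoint_CZ}, and take the counting and the distance from Theorem~\ref{thm:high_yield_parameters}. Your outcome-statistics paragraph goes beyond the paper, which only notes that $w=0$ yields the magic state; the constant-fraction argument is the right way to close that point, since the fiber $\{xy=0\}$ has probability $(2q-1)/q^2\ge 1/q$, independently on each pair, with $q=p^{O_\epsilon(1)}$ constant, whereas your alternative claim that every sector yields a usable magic state (via non-stabilizerness or Clifford corrections to the $w=0$ state) is neither proved nor needed and should be dropped.
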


\begin{proof}
Initialize all \rd\ and \bl\ logical qudits in $\lo{\ket{+}}$ and gauge-fix the
short classes outside the bases \eqref{eq:high_yield_basis} to $\lo{\ket{0}}$;
initialize the \gr\ qudits in $\ket{0}$. Run steps 2--4 of
Sec.~\ref{sec:protocol}. By Corollary~\ref{cor:disjoint_CZ} the resulting logical
action \eqref{eq:output_logical_state} factorizes over the $|M|^2$ disjoint pairs,
\begin{align}\label{eq:high_yield_output}
\lo{\ket{\psi}}_f = \prod_{(e,f)\in M\times M}
&\hat{P}_{\green{w_{(e,f)}}}
\Big[\lo{\text{CZ}}^{\rd,\bl}_{\red{(e,f)},\blue{(e,f)}}\Big] \cr
&\times\ \lo{\ket{++}}_{\red{(e,f)},\blue{(e,f)}} ,
\end{align}
where $\hat{P}_{w}$ projects onto the $\omega^{\mathrm{Tr}(w)}$ eigenspace. Each
projector is applied to a distinct pair, so no post-selection couples different
pairs, and the outcome $\green{w}=0$ yields the logical CZ magic state on that
pair. The distance statement is
Theorem~\ref{thm:high_yield_parameters}, and the $O(d)$ rounds of error
correction of step~2 protect the outcomes.
\end{proof}

\begin{remark}[Relation to transversal $C^{r-1}Z$]\label{rem:relation_CrZ}
Ref.~\cite{GolowichTamoZhu2026} uses the multiplication property to build
$r$-fold products of Tanner complexes supporting transversal $C^{r-1}Z$ gates
of large \emph{logical yield} directly, i.e.\ non-Clifford gates on an
$r$-dimensional complex. Here the same
classical input is used differently: the complex stays $\mathsf{d}=2$
dimensional, the transversal gate is the \emph{Clifford} gate CZ, and the
non-Clifford logical operation is obtained not from the gate itself but from
\emph{gauging} it, Sec.~\ref{sec:gauging_untwisted}. The $r=3$ multiplication
property is nevertheless exactly what is needed, because the twist
$\int_{\eta_2}\red{a^1}\cup\blue{b^1}\cup\green{\gamma^0}$ is a triple cup product
with one slot in degree $0$ --- two dynamical gauge fields and one symmetry
label --- rather than the three degree-$1$ slots of a $C^2Z$ gate. The price of
staying in two dimensions is that the distance scales as $n^{1/2}$ rather than
$n^{1/3}$, which is an improvement, and the benefit is that the non-Clifford
operation is measurement-based and addressable.
\end{remark}

\begin{remark}[Realization on qubits]\label{rem:qubit_realization}
Construction~\ref{con:high_yield} is stated over $\Fq$ with $q=p^{u}$ and
$u=O_{\epsilon}(1)$, because the classical input of
Theorem~\ref{thm:GTZ_classical} is algebraic
(Remark~\ref{rem:why_large_q}). For $p=2$ this costs nothing in practice: each
$\Fq$-qudit is a block of $u$ qubits, the twisted stabilizers
\eqref{eq:twisted_generators_Fq} become qubit Clifford operators of weight
$O(u\Delta)=O(1)$, and by
Proposition~\ref{prop:cz_descent_app} of Appendix~\ref{app:alphabet} each
$\Fq$-CZ appearing in the dressing \eqref{eq:dressing_Fq} and in the transversal
operator \eqref{eq:CZ_Fq} is a depth-one circuit of $u$ ordinary qubit CZ gates.
The code parameters transform as
$[[n,k,d]]_{q}\rightsquigarrow[[un,\,uk,\,\ge d]]_{2}$, so the magic rate
$\varrho$ changes only by the constant factor $1/u$ and remains almost constant.
Crucially, no concatenation with a multiplication-friendly code is required:
that machinery is needed only to descend a genuinely \emph{trilinear} transversal
gate, whereas the operators here are bilinear in the dynamical fields, the third
cup-product slot being a classical label. See Appendix~\ref{app:alphabet} for the
statement and for the general alphabet-reduction results of
Refs.~\cite{GolowichLin2025,GolowichGuruswami2025,Nguyen2024,Wills2024}.
\end{remark}

%======================================================================
\section{Discussion and Outlook}
\label{sec:discussion}
%======================================================================

We have shown that the combinatorial machinery normally reserved for
triangulated manifolds --- cup products, cap products, a chain-cochain pairing,
and with them the path integral of a twisted Dijkgraaf-Witten gauge theory --- is
already present inside a simplicial (cell) complex with sheaf structure. The
consequences are a family of non-Abelian (Clifford stabilizer) qLDPC codes built
directly on $2$D hypergraph-product complexes, a $0$-form subcomplex symmetry that
supplies addressable transversal logical CZ gates without any transversal CCZ, and
a gauging measurement protocol that turns those gates into a magic state fountain
at an almost constant magic rate. We close with what we see as the main open
directions.

\subsection*{Code parameters}

The distance in both instantiations scales as $n^{1/2}$ up to the
$n^{-\epsilon}$ loss in Sec.~\ref{sec:high_yield}, which is the generic ceiling
for a $2$D product of two graph codes; the trade is that the construction stays
$2$D, whereas the transversal-gate route to non-Clifford operations requires a
$3$D product and the more demanding connectivity that comes with it. Whether the
twisted sheaf codes can be pushed to linear distance, or made asymptotically good,
is the obvious question. Two routes seem worth trying: replacing the hypergraph
product by a balanced product or a lifted product while retaining enough of the
cup-product structure to define the twist, and using the higher-dimensional
expansion of the sheaf directly rather than the K\"unneth factorization of
Sec.~\ref{sec:kunneth_factorization}. 

A second parameter question is the magic rate itself.
Theorem~\ref{thm:high_yield_parameters} gives $\varrho\ge n^{-\epsilon}$ for every
$\epsilon>0$, with an alphabet size $q=p^{O_\epsilon(1)}$ that blows up as
$\epsilon\to0$; the obstruction is the Singleton-type bound on codes with a
multiplication property~\cite{randriambololona2013upper, mirandola2015critical},
which forces a rate loss for any fixed alphabet. A genuinely \emph{constant}
magic rate would require either a different mechanism for satisfying the local
cycle condition of Lemma~\ref{lemma:local_cycle}.

\subsection*{Decoding}

Decoding is largely untouched here. The twisted code has the same $Z$-checks as
the untwisted one and dressed $X$-checks with the same supports, so the syndrome
graph is unchanged and any qLDPC decoder applies at the level of the Pauli part;
what is new is that the stabilizer group is non-Pauli, so error propagation
through the CZ dressing has to be tracked. Establishing a phenomenological noise
threshold taking into account measurement errors for the $O(d)$ rounds of dressed $X$-stabilizer measurement in
Sec.~\ref{sec:protocol} are the natural next steps. An important step would be to generalize the just-in-time decoder 
in the topological-code case~\cite{Bauer:2023awl, Bauer:2024qpc, Bauer:2024alh,
Davydova:2025ylx} to the sheaf code setting. 

\subsection*{Beyond CZ, and beyond $\ZZ_2$}

The twist used throughout is the type-III Dijkgraaf-Witten cocycle, which produces
logical CZ magic states and hence, by gate teleportation, logical CZ gates. Other
group cohomology classes, and other finite gauge groups, give different twisted
sheaf gauge theories; type-I and type-II twists, non-Abelian gauge groups, and
fermionic (spin) versions are all available in the manifold setting and should
have sheaf analogues.  More broadly, the subdivision of
Sec.~\ref{sec:subdivision} makes \emph{any} classical LDPC code into a simplicial
object, so any state-sum invariant that a simplicial complex supports --- Turaev-Viro
among them~\cite{Turaev:1992hq, walker2021universalstatesum} --- can in principle be
written on a qLDPC code. Whether those invariants give useful codes, and whether
the resulting anyon theories admit universal gate sets, is open.

On the algebraic side, the $\Fq$ theory of Sec.~\ref{sec:Fq_generalization} is a
$(\ZZ_p)^u$ gauge theory whose twist is selected by the Frobenius tensor
$T_{ijk}=\mathrm{Tr}(\theta_i\theta_j\theta_k)$. It would be interesting to ask
which twisted $(\ZZ_p)^u$ theories arise this way as $\Fq$ varies --- i.e.\ to
characterize the image of the map from field structures to group cohomology
classes --- and whether the ones that do are distinguished physically. The
alphabet reduction of Appendix~\ref{app:alphabet} is easy for us because our
operators are bilinear in the dynamical fields; constructing binary classical codes
with a genuine multiplication property, which would remove the detour through
$\Fq$ altogether, remains the outstanding problem inherited from
Refs.~\cite{GolowichTamoZhu2026, GolowichLin2025, GolowichGuruswami2025,
Nguyen2024, Wills2024}.

\subsection*{Physics of non-Abelian qLDPC codes}

The twisted codes constructed here are non-Abelian topological orders that live on
complexes with no manifold structure, and the defect-network reading of
Sec.~\ref{sec:untwisted_code} (Figs.~\ref{fig:gapped_boundary_picture}
and~\ref{fig:gapped_boundary_dual}) suggests that the right physical picture is a
network of TQFT patches glued along gapped interfaces, in the spirit of
Ref.~\cite{Aasen2020}. This raises concrete questions. What is the anyon content
of a sheaf gauge theory --- is there a sensible fusion category attached to a
sheaf complex, as opposed to a manifold? The non-Abelian fusion rules and the
Borromean-ring braiding statistics derived in the CW-complex
setting~\cite{ZhuKobayashiHsin2026} should have intrinsic sheaf derivations, with
the triple intersection of Fig.~\ref{fig:triple_intersection} again providing the
mechanism. More speculatively, the abundance of top-dimensional cycles that makes
subcomplex symmetries possible is also what distinguishes these complexes from
manifolds; a classification of the generalized symmetries of a sheaf gauge theory,
in the spirit of the topological-operator perspective of
Refs.~\cite{Roumpedakis2022, Kaidi2022, Apruzzi2021}, would put the $0$-form
subcomplex symmetry of Sec.~\ref{sec:0form_subcomplex} into a systematic
framework.

\section{Acknowledgments}

\textit{Note added.---} Around the time of submitting this paper, we have noticed another related work  \cite{li2026nonabelian}. The difference is that the present paper introduces not only the non-Abelian sheaf qLDPC codes but also the associated twisted gauge theory, and focuses on achieving almost-constant magic rate.  

G.Z. is supported by the U.S. Department of Energy, Office of Science, National Quantum Information Science Research Centers, Co-design Center for Quantum Advantage (C2QA) under contract number DE-SC0012704. G.Z. thanks Louis Golowich and Zac Tamo for the collaboration on a related project. S.J.S.T. acknowledges funding from the NUS Development Grant and is grateful for the support from IBM during his internship including the companionship from his fellow interns Keller Blackwell, Anqi Gong, Shubham Jain, and Adam Wills.

\textbf{AI disclosure:} All ideas and proofs are the work of the human authors. Claude and ChatGPT were used to polish the writing and improve the structure of the paper, and to find references.

\begin{appendix}

\section{Local acyclicity of the subdivided sheaf}\label{app:acyclic}

\begin{definition}[Locally acyclic sheaf]\label{def:locally-acyclic-sheaf}
    Suppose $X$ is a $t$-dimensional cell complex with a sheaf $\Fc$, we say $\Fc$ is \emph{locally acyclic} if for each $i$-cell $\sigma$, the cohomology $H^j(X_{\geq \sigma}, \Fc) = 0$ for $i \leq j \leq t-1$, i.e., the following sequence is exact
\begin{widetext}   
    \begin{equation}
        0 \xrightarrow{} \Fc(\sigma) \xrightarrow{} \prod_{\sigma_{i+1} \in X_{\geq \sigma}(i+1)} \Fc(\sigma_{i+1}) \xrightarrow{} \prod_{\sigma_{i+2} \in X_{\geq \sigma}(i+2)} \Fc(\sigma_{i+2}) \xrightarrow{} \ldots \xrightarrow{} \prod_{\sigma_y \in X_{\geq \sigma}(t)} \Fc(\sigma_t)
    \end{equation}
    \end{widetext} 
\end{definition}

We now proceed to state and prove a proposition regarding local acyclicity on a graph.

\begin{proposition}[Criterion for local acyclicity on a graph]
\label{prop:graph-local-acyclicity}
    Let $G$ be a graph and let $\Fc$ be the sheaf defined from local parity-check matrices $\{\mathsf{H}_v\}_{v \in V}$ as in   Eqs.~\eqref{eq:local_system} and~\eqref{eq:co-restriction_map}. If each
    $\mathsf{H}_v$ has full row rank, then $\Fc$ is locally acyclic.
\end{proposition}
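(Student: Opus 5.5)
The plan is to unpack Definition~\ref{def:locally-acyclic-sheaf} for $t=1$ and check the only two cell types. Since $G$ has dimension $t=1$, the range $i\le j\le t-1=0$ is nonempty only for $i=0$. For an edge ($i=1$) the condition is vacuous: the sequence is just $0\to\Fc(e)$, and there is no cohomology degree to check. So everything reduces to the vertices: for each $v\in V$ we need $H^0(G_{\ge v},\Fc)=0$, which amounts to exactness at the first spot of
\[
0\longrightarrow \Fc(v)\longrightarrow \prod_{e\in E(v)}\Fc(e),
\]
that is, injectivity of the assembled co-restriction $\Fc_{v\shortrightarrow E(v)}:\Fc(v)\to\FF_2^{E(v)}$.

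Next I identify this map concretely. By Eq.~\eqref{eq:co-restriction_map}, the component into $\Fc(e)$ is $\Pi_e\circ\Hs_v^{\top}$, so the assembled map is $f\mapsto(\lv_e^{\top}f)_{e\in E(v)}$. This is exactly $\Hs_v^{\top}:\FF_2^{m(v)}\to\FF_2^{E(v)}$. The star $G_{\ge v}$ consists of $v$ and its incident edges, and the $0$-cochains there are just $\Fc(v)$. Its coboundary is this map, so $H^0(G_{\ge v},\Fc)=\ker\Hs_v^{\top}$.

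The key step is then linear algebra: $\ker\Hs_v^{\top}=0$ if and only if the rows of $\Hs_v$ are linearly independent, i.e.\ $\Hs_v$ has full row rank. This is precisely the hypothesis, so the sequence is exact at every vertex and $\Fc$ is locally acyclic. The same argument works verbatim over $\Fq$, and the sign conventions of Remark~\ref{rem:signs} do not affect injectivity.

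I expect the only real obstacle to be bookkeeping rather than mathematics. One has to make sure the local complex $X_{\ge\sigma}$ uses the co-restriction maps of Eq.~\eqref{eq:co-restriction_map}, not the restriction maps. One also has to confirm that no condition is imposed at the top degree $t=1$, so that surjectivity onto $\prod_e\Fc(e)$ is not required; that would fail whenever $\dim\cC_v>0$. Finally, I will remark that the subdivision presentations of Sec.~\ref{sec:subdivision} satisfy the hypothesis: $R_w=[I_{w-1}\mid\mathbf1]$ and the all-ones SPC row have full row rank, and the case $R_1$ gives the zero stalk, which is trivially injective. Hence the subdivided sheaves $\Fc$ and $\Fc^{\perp}$ are locally acyclic.
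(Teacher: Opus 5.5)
Your proposal is correct and follows the paper's own proof: since $t=1$, only vertices need checking; the local coboundary on the star $G_{\ge v}$ is $\Hs_v^{\top}$; and full row rank gives injectivity, hence $H^0(G_{\ge v},\Fc)=0$. Your additional checks are accurate refinements of the paper's argument: the condition is vacuous at edges, surjectivity onto $\prod_{e}\Fc(e)$ is not required, and $R_1$ gives a zero stalk.
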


\begin{proof}
    Since $G$ is 1-dimensional, the only case we have to show is for $i =0$ when $\sigma$ is a vertex. Fix a vertex $v \in V$. Then, $G_{\geq u}$ consists of the vertex $v$ along with the incident edges $E(v)$. Therefore, its local cochain complex is simply
    \begin{equation}
        0 \longrightarrow \Fc(v) \xrightarrow{\delta^0_u} \prod_{e \in E(v)} \Fc(e)  \longrightarrow 0,
    \end{equation}
    where $\delta^0_u$ is the map whose components are the co-restriction maps $\Fc_{v \shortrightarrow e}$. By Eq.~\eqref{eq:co-restriction_map}, this is
    exactly the transpose of the local parity-check matrix:
    \begin{equation}
    \delta^0_u = \mathsf{H}_v^\top : \FF_2^{m(v)} \longrightarrow
    \FF_2^{E(v)}.
    \end{equation}
    Hence local acyclicity at $u$ is equivalent to injectivity of
    $\mathsf{H}_v^\top$. Since $\mathsf{H}_v$ has full row rank,
    \[
    \rk(\mathsf{H}_v^\top)=\rk(\mathsf{H}_v)=m(v)=\dim \Fc(v),
    \]
    so $\mathsf{H}_v^\top$ is injective. Thus
    $H^0(G_{\ge u},\Fc)=\ker \delta^0_u=0$ for every vertex $v$, which is exactly the
    required local acyclicity condition.
\end{proof}

Now, we state a corollary regarding the local acyclicity of the subdivided sheaf.

\begin{corollary}[Local acyclicity of the subdivided sheaf]
\label{cor:subdivided-locally-acyclic-sheaf}
The subdivided sheaf $\Fc$ of Section~\ref{sec:subdivision} is locally acyclic. The same statement also holds for the dual sheaf $\Fc^\perp$.
\end{corollary}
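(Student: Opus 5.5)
The plan is to reduce the corollary directly to Proposition~\ref{prop:graph-local-acyclicity}. That proposition needs only one hypothesis: every local parity-check matrix $\mathsf{H}_v$ defining the sheaf has full row rank. Since the subdivided graph $G$ is one-dimensional, the only local complexes to check are the vertex stars $G_{\ge v}$. So I will go through the two vertex types of the subdivision, $V_h$ (original checks) and $V'$ (original bits), and confirm the rank condition for the local check presentations fixed just before Lemma~\ref{lemma:subdivision-normal-form}.

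For $\Fc$, the vertices split as follows.
\begin{itemize}
\item At an original check vertex $v_h$ the local code is $\mathrm{SPC}(t_h)$, presented by the single row $(1,\ldots,1)$. This row is nonzero because $\Hs$ has no zero rows, so $t_h\ge 1$. Hence it has rank $1=\dim\Fc(v_h)$.
\item At a new vertex $v'_j$ the local code is $\mathrm{Rep}(w_j)$, presented by $R_{w_j}=[I_{w_j-1}\mid\mathbf 1]$. The identity block shows its rank is $w_j-1=\dim\Fc(v'_j)$.
\item In the degenerate case $w_j=1$, $R_1$ is the empty $0\times 1$ matrix. The stalk is then $0$ and $\mathsf{H}_v^\top$ is trivially injective. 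I will note this separately.
\end{itemize}
Once full row rank holds at every vertex, the proposition says each $\mathsf{H}_v^\top=\delta^0_v$ is injective, which is exactly local acyclicity.

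For $\Fc^\perp$ the same graph is used with the two local codes interchanged: $\mathrm{Rep}(t_h)$ sits at the check vertices and $\mathrm{SPC}(w_j)$ at the bit vertices. The same conventions apply, and the paper already states that "these local check matrices have full row rank". The argument therefore repeats verbatim, with $w_j\ge1$ coming from the absence of zero columns.

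The only point needing care is making sure the sheaf in Proposition~\ref{prop:graph-local-acyclicity} is literally the subdivided one. The subdivision section defines stalks as $\FF_2^{w_c(v')-1}$ and $\FF_2$, with co-restrictions given by columns of the chosen $\mathsf{H}_v$ via Eq.~\eqref{eq:co-restriction_map}. The stalk dimension must equal the number of rows, which is why the paper deliberately chose the $w-1$ independent repetition checks rather than all pairwise checks. Using a redundant presentation would break injectivity, so I will explicitly invoke the fixed full-rank presentations. Beyond that, nothing is hard.
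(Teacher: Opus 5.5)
Your proposal is correct and follows essentially the same argument as the paper. Both verify that the fixed local check matrices (the all-ones row for $\mathrm{SPC}$, and the $(w-1)\times w$ full-rank presentation for $\mathrm{Rep}$) have full row rank at both vertex types, invoke Proposition~\ref{prop:graph-local-acyclicity}, and handle $\Fc^\perp$ by swapping the two local codes. Your added care about the degenerate cases ($w_j=1$, and $t_h,w_j\ge 1$ from the absence of zero rows and columns) and about avoiding redundant presentations is a harmless refinement that the paper leaves implicit.
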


\begin{proof}
    For an original check vertex $v_h \in V_h$ of degree $w_r(v_h)$, the local code is $\cC_{v_h} = \mathrm{SPC}(d)$. We can choose the $1 \times d$ parity-check matrix
    \[\mathsf{H}_{v_h} = \begin{bmatrix} 1 & 1 & \cdots & 1 \end{bmatrix},\]
    which has full row rank.
    For a new vertex introduced $v' \in V'$ of degree $w_c(v')$, the local code is a length-$w_c(v')$ repetition code. We can also choose the $\left(w_c(v') - 1\right) \times w_c(v')$ parity-check matrix so that it has full row rank. 
    Thus, every local parity-check matrix appearing in our subdivided sheaf has full row rank and Proposition~\ref{prop:graph-local-acyclicity} tells us that $\Fc$ is locally acyclic.
    Because the dual sheaf $\Fc^\perp$ simply swaps the local codes, the same argument holds and $\Fc^\perp$ is also locally acyclic.
\end{proof}

%======================================================================
\section{Alphabet reduction from $\Fq$ to $\FF_2$}
\label{app:alphabet}
%======================================================================

The high-magic-rate construction of Sec.~\ref{sec:high_yield} is stated over an
alphabet $\Fq$ with $q=p^{u}$ and $u=O_{\epsilon}(1)$, because the classical
codes with a non-trivial multiplication property supplied by
Theorem~\ref{thm:GTZ_classical} are algebraic and do not exist over $\FF_2$
(Remark~\ref{rem:why_large_q}). Hardware, and the standard magic states, are
binary. This appendix records what is needed to pass from $\Fq$ to $\FF_2$, and
in particular proves the statement invoked in
Remark~\ref{rem:qubit_realization}. Throughout $\omega=e^{2\pi i/p}$ and
$\mathrm{Tr}=\mathrm{Tr}_{\Fq/\FF_p}$ as in Eq.~\eqref{eq:omega_trace}; for
concreteness the reader may take $p=2$, so that $\omega=-1$ and an $\Fq$-qudit is
a block of $u$ qubits.

\subsection{Three layers of the problem}
\label{app:three_layers}

It is useful to separate three questions that are often conflated.

\medskip
\noindent\textit{The Hilbert space.} An $\Fq$-qudit \emph{is} $u$ qudits of
dimension $p$. Fixing an $\FF_p$-basis $\{\theta_1,\ldots,\theta_u\}$ of $\Fq$
and writing $x=\sum_i x_i\theta_i$ identifies
$\ket{x} \leftrightarrow \ket{x_1\cdots x_u}$. Nothing to prove.

\medskip
\noindent\textit{The code.} An $\Fq$-linear code $C\subseteq\Fq^{n}$ is in
particular an $\FF_p$-linear code of length $un$ and $\FF_p$-dimension
$u\dim_{\Fq}C$, and its weights obey
\be\label{eq:app_weights}
|c|_{\Fq} \;\le\; |c|_{\FF_p} \;\le\; u\,|c|_{\Fq} ,
\ee
since a non-zero $\Fq$-symbol contributes between $1$ and $u$ non-zero
$\FF_p$-symbols. Hence $[n,k,d]_{q}\rightsquigarrow[un,uk,\ge d]_{p}$: the rate
is preserved exactly, the absolute distance is preserved, and only the
\emph{relative} distance falls, by the factor $u$. Moreover the generalized
Paulis of Eq.~\eqref{eq:qudit_pauli} are generated by single-$\FF_p$-qudit
Paulis: with the trace-dual basis $\{\theta^{*}_j\}$ determined by
$\mathrm{Tr}(\theta_i\theta^{*}_j)=\delta_{ij}$,
\be\label{eq:app_pauli_descent}
X(\theta_i) = X_i , \qquad Z(\theta^{*}_j) = Z_j ,
\ee
so an $\Fq$ stabilizer code is already an $\FF_p$ stabilizer code, with check
weights multiplied by at most $u$. For constant $u$ this layer is free.

\medskip
\noindent\textit{The gate.} This is where the content lies. Expanding the
diagonal gates of Eqs.~\eqref{eq:qudit_CZ} in the basis $\{\theta_i\}$,
\begin{align}
\mathrm{Tr}(\lambda x y) &= \sum_{i,j} B^{\lambda}_{ij}\, x_i y_j ,
&B^{\lambda}_{ij} &= \mathrm{Tr}\big(\lambda\,\theta_i\theta_j\big) ,
\label{eq:app_bilinear}\\
\mathrm{Tr}(\lambda x y z) &= \sum_{i,j,k} T^{\lambda}_{ijk}\, x_i y_j z_k ,
&T^{\lambda}_{ijk} &= \mathrm{Tr}\big(\lambda\,\theta_i\theta_j\theta_k\big) ,
\label{eq:app_trilinear}
\end{align}
so an $\Fq$-CZ becomes a pattern of $\FF_p$-CZ gates governed by the bilinear
form $B^{\lambda}$, and an $\Fq$-CCZ a pattern of $\FF_p$-CCZ gates governed by
the trilinear form $T^{\lambda}$. The bilinear case is elementary; the trilinear
case is the whole difficulty.

\subsection{The bilinear case: a change of basis}
\label{app:bilinear}

\begin{proposition}[$\Fq$-CZ is a depth-one $\FF_p$-CZ circuit]
\label{prop:cz_descent_app}
Let $\lambda\in\Fq^{\times}$. There exist $\FF_p$-bases $\{\theta_i\}$ and
$\{\phi_j\}$ of $\Fq$, used respectively on the first and second qudit, such that
\be\label{eq:app_cz_diag}
\text{CZ}(\lambda) \;=\; \prod_{i=1}^{u} \text{CZ}_{\,i,i} ,
\ee
where $\text{CZ}_{i,i}$ is the ordinary $\FF_p$-qudit CZ between the $i$-th
component of the first block and the $i$-th component of the second. Thus
$\text{CZ}(\lambda)$ is a depth-one circuit of $u$ elementary CZ gates. If the
same basis is used in both blocks, $\text{CZ}(\lambda)$ is a circuit of at most
$u(u+1)/2$ elementary CZ gates in depth $O(u)$.
\end{proposition}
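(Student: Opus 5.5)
The plan is to reduce the claim to linear algebra over $\FF_p$ via the bilinear expansion \eqref{eq:app_bilinear}. Writing $x=\sum_i x_i\theta_i$ for the first block and $y=\sum_j y_j\phi_j$ for the second, the phase of $\text{CZ}(\lambda)$ on $\ket{x,y}$ is $\omega^{\sum_{i,j}B_{ij}x_iy_j}$ with $B_{ij}=\mathrm{Tr}(\lambda\theta_i\phi_j)$. Since $\prod_{i,j}\text{CZ}_{i,j}^{B_{ij}}$ acts on computational basis states by exactly this phase, the gate is a product of elementary CZ gates (with multiplicities $B_{ij}\in\FF_p$) along the nonzero entries of $B$. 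Everything therefore comes down to choosing bases in which $B$ is as simple as possible.

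For two independent bases, fix any $\FF_p$-basis $\{\theta_i\}$ for the first block. Because $\lambda\ne0$, the form $(x,y)\mapsto\mathrm{Tr}(\lambda xy)$ is non-degenerate: it is the non-degenerate trace form of \eqref{eq:trace_delta} composed with the bijection $x\mapsto\lambda x$. So there is a unique dual basis $\{\phi_j\}$ with $\mathrm{Tr}(\lambda\theta_i\phi_j)=\delta_{ij}$; explicitly, $\phi_j=\lambda^{-1}\theta^{*}_j$ with $\{\theta^*_j\}$ the trace-dual basis of \eqref{eq:app_pauli_descent}. Then $B=I$, the product collapses to $\prod_i\text{CZ}_{i,i}$, and these $u$ gates act on pairwise disjoint qudit pairs, so the circuit has depth one. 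This proves \eqref{eq:app_cz_diag}. The only point to stress is that relabelling the second block's computational basis by $\{\phi_j\}$ is a legitimate (local, $\FF_p$-linear) identification of the $\Fq$-qudit with $u$ $\FF_p$-qudits, so the code and Pauli descent of App.~\ref{app:three_layers} still apply.

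For the same basis on both blocks, $B$ is a symmetric matrix over $\FF_p$, and the gate is $\prod_{i\le j}$ of CZ's with multiplicities $B_{ij}$ (counting $(i,j)$ and $(j,i)$ as distinct qudit pairs). That gives at most $u^2$ gates in general, not $u(u+1)/2$. Hitting the stated count is the main obstacle, and I would try two routes. First, if $\mathrm{Tr}(\lambda\,\cdot\,\cdot)$ admits an orthonormal-type basis, $B$ can be made diagonal or sparse. Second, the $u(u+1)/2$ bound may be meant as counting distinct unordered index pairs. Failing both, I would settle for at most $u^2$ gates, which is all the paper needs. The depth bound follows from an edge colouring of the bipartite support graph of $B$, a subgraph of $K_{u,u}$: by K\H{o}nig's theorem it can be coloured with $u$ colours, so the CZ's can be scheduled in depth at most $u=O(u)$.
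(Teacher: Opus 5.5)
Your proof of the main claim is the paper's own: fix $\{\theta_i\}$, take the basis dual to it under the non-degenerate form $\mathrm{Tr}(\lambda xy)$, and you get $B^{\lambda}=I$. Your explicit choice $\phi_j=\lambda^{-1}\theta^{*}_j$ is correct. Your depth bound for a general $B$ is also correct: edge-colouring the bipartite support graph of $B$ (a subgraph of $K_{u,u}$) by K\H{o}nig's theorem gives depth at most $u$, and it uses the right graph.

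On the same-basis clause your objection is correct, and it lands exactly on the step where the paper's proof fails. The paper argues that since $B^{\lambda}$ is symmetric, ``the associated quadratic form has at most $u(u+1)/2$ non-zero coefficients.'' But $\mathrm{Tr}(\lambda xy)=\sum_{i,j}B^{\lambda}_{ij}x_iy_j$ is bilinear in two \emph{different} registers. So $x_iy_j$ and $x_jy_i$ are different qudit pairs, and symmetry saves nothing. For an arbitrary common basis the bound is in fact false. In $\FF_8=\FF_2[t]/(t^3+t+1)$ one has $\mathrm{Tr}(t^k)=1,0,0,1,0,1,1$ for $k=0,\dots,6$. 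With $\lambda=1$ and the common basis $\{1,t^3,t^5\}=\{1,1+t,1+t+t^2\}$ this gives
\[
B^{1}=\begin{pmatrix}1&1&1\\1&1&0\\1&0&1\end{pmatrix},
\]
i.e.\ seven qubit CZ gates, exceeding $u(u+1)/2=6$. Since the multilinear expansion of the phase is unique, no circuit of CZ gates alone uses fewer. So this step cannot be closed as stated. Your second route does not rescue it either: counting unordered pairs $\{i,j\}$ is not counting gates, because $\text{CZ}_{i,j}\neq\text{CZ}_{j,i}$ for $i\neq j$.

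You can prove one of two corrected statements instead.
\begin{itemize}
\item For an arbitrary common basis: at most $u^2$ gates $\text{CZ}_{i,j}^{B_{ij}}$ in depth at most $u$. This is your fallback; write $\prod_{i,j}$ rather than $\prod_{i\le j}$. Nothing else in the paper uses the $u(u+1)/2$ count.
\item Completing your first route, an existential statement stronger than the clause: some single basis, used on both blocks, makes $B^{\lambda}$ diagonal.
  \begin{itemize}
  \item For odd $p$ this is the diagonalization of symmetric bilinear forms in characteristic $\neq 2$, giving $u$ gates $\text{CZ}_{i,i}^{c_i}$.
  \item For $p=2$, $\mathrm{Tr}(\lambda x^2)=\mathrm{Tr}(\mu x)$ with $\mu^2=\lambda$ is a non-zero linear functional. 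So the form is non-degenerate and non-alternating, hence congruent to the identity over $\FF_2$.
  \end{itemize}
  Either way one common basis already gives $u$ gates in depth one. The clause therefore holds only under this existential reading, and then trivially.
\end{itemize}
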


\begin{proof}
The pairing $(x,y)\mapsto\mathrm{Tr}(\lambda x y)$ is $\FF_p$-bilinear and
non-degenerate: the trace form is non-degenerate on $\Fq$ by
Eq.~\eqref{eq:trace_delta}, and multiplication by $\lambda\neq0$ is invertible.
Choose any basis $\{\theta_i\}$ for the first block and let $\{\phi_j\}$ be its
dual with respect to this pairing, $\mathrm{Tr}(\lambda\theta_i\phi_j) =
\delta_{ij}$, which exists by non-degeneracy. Then $B^{\lambda}_{ij}=\delta_{ij}$
in Eq.~\eqref{eq:app_bilinear}, so $\mathrm{Tr}(\lambda xy)=\sum_i x_i y_i$ and
the gate factorizes as Eq.~\eqref{eq:app_cz_diag}. For the same-basis statement,
$B^{\lambda}$ is a symmetric $u\times u$ matrix, so the associated quadratic form
has at most $u(u+1)/2$ non-zero coefficients, and the corresponding CZ circuit
can be scheduled in $O(u)$ layers by edge-colouring the graph of $B^{\lambda}$.
\end{proof}

\begin{example}[$\Fq=\FF_4$]\label{ex:F4_cz_app}
Let $\FF_4=\FF_2[\alpha]/(\alpha^2+\alpha+1)$, so that $\mathrm{Tr}(x)=x+x^2$ and
$\mathrm{Tr}(0)=\mathrm{Tr}(1)=0$,
$\mathrm{Tr}(\alpha)=\mathrm{Tr}(\alpha^2)=1$. The basis
$\{\theta_1,\theta_2\}=\{\alpha,\alpha^2\}$ is \emph{self-dual}:
\begin{align}\label{eq:app_F4_selfdual}
\mathrm{Tr}(\alpha\cdot\alpha) &= \mathrm{Tr}(\alpha^2) = 1 , \cr
\mathrm{Tr}(\alpha\cdot\alpha^2) &= \mathrm{Tr}(1) = 0 , \cr
\mathrm{Tr}(\alpha^2\cdot\alpha^2) &= \mathrm{Tr}(\alpha) = 1 ,
\end{align}
so $B^{1}=\I$ and $\text{CZ}(1)=\text{CZ}_{1,1}\,\text{CZ}_{2,2}$: the $\FF_4$ CZ
gate is literally two qubit CZ gates applied in parallel.
\end{example}

Proposition~\ref{prop:cz_descent_app} is the reason that
Remark~\ref{rem:qubit_realization} holds. Both operators of the construction that
carry the twist --- the transversal CZ operator $\widetilde{\text{CZ}}^{\rd,\bl}
(\rho^0)$ of Eq.~\eqref{eq:CZ_Fq} and the stabilizer dressing
$U^{\rd}(\lambda^0)$ of Eq.~\eqref{eq:dressing_Fq} --- are triple cup products in
which \emph{one slot is a fixed classical cochain}, the symmetry label $\rho^0$
respectively the stabilizer label $\lambda^0$. They are therefore bilinear in the
dynamical gauge fields, with an $\Fq$-multiplier read off from the classical
label, and Proposition~\ref{prop:cz_descent_app} applies to each two-qudit factor
separately. Blocking $u$ qudits together does not merge distinct blocks, so the
disjointness of the logical CZ pairs established in
Corollary~\ref{cor:disjoint_CZ} survives, and within a matched pair of blocks the
induced gate is itself a perfect matching of the $u$ components by
Eq.~\eqref{eq:app_cz_diag}.

\subsection{The trilinear case}
\label{app:trilinear}

For a genuinely trilinear transversal gate the previous argument fails: symmetric
trilinear forms cannot in general be diagonalized, so there is no change of basis
making $\mathrm{Tr}(\lambda xyz)=\sum_i x_iy_iz_i$.

\begin{example}[$\FF_4$ again]\label{ex:F4_ccz_app}
With $\theta_1=\alpha$, $\theta_2=\alpha^2$ and $\lambda=1$ one finds
$T_{111}=\mathrm{Tr}(\alpha^3)=0$, $T_{112}=\mathrm{Tr}(\alpha^4)=1$,
$T_{122}=\mathrm{Tr}(\alpha^5)=1$, $T_{222}=\mathrm{Tr}(\alpha^6)=0$, so that,
using the full symmetry of $T$,
\begin{align}\label{eq:app_F4_ccz}
\mathrm{Tr}(xyz) =\ & x_1y_1z_2+x_1y_2z_1+x_2y_1z_1 \cr
& + x_1y_2z_2+x_2y_1z_2+x_2y_2z_1 ,
\end{align}
i.e.\ six qubit CCZ gates, with the diagonal terms $x_1y_1z_1$ and $x_2y_2z_2$
\emph{absent}. No relabelling turns Eq.~\eqref{eq:app_F4_ccz} into
$x_1y_1z_1+x_2y_2z_2$: the two tensors have different ranks.
\end{example}

Two things go wrong at once: the gate count grows as $O(u^{3})$ rather than
$O(u)$, and the induced logical action is no longer a product of CCZ gates on
disjoint triples, so a transversal $\Fq$-CCZ of logical yield $s$ does not
automatically descend to a binary transversal CCZ of yield $us$. The classical
tool that repairs this is a \emph{multiplication-friendly embedding}: a pair of
$\FF_p$-linear maps $\pi:\Fq\to\FF_p^{m}$, $\sigma:\FF_p^{m}\to\Fq$ with
$xy=\sigma(\pi(x)*\pi(y))$, where $*$ is the componentwise product. The least
such $m$ is the bilinear complexity of $\Fq/\FF_p$, which is $O(u)$ by the
algebraic-geometry construction of Chudnovsky and Chudnovsky; in the cryptography
literature the same object is an \emph{arithmetic secret sharing} scheme
\cite{cramer2000general}. For $\FF_4/\FF_2$ the Karatsuba identity gives $m=3$
explicitly, with $\pi(x)=(x_0,x_1,x_0+x_1)$ and
$\sigma(u_1,u_2,u_3)=(u_1+u_2)+(u_1+u_3)\alpha$.

Four recent works use this machinery to reduce the alphabet of quantum codes with
transversal non-Clifford gates, in four different places:
Ref.~\cite{Wills2024} fixes the alphabet at $q=2^{10}$ and converts the resulting
$10$-qubit magic states to conventional CCZ and $T$ magic states with constant
overhead; Ref.~\cite{Nguyen2024} applies arithmetic secret sharing to the
\emph{classical} codes before the quantum construction, obtaining asymptotically
good binary codes with transversal CCZ; Ref.~\cite{GolowichGuruswami2025}
introduces a quantum analogue of multiplication-friendly codes and applies it as
a concatenation \emph{after} the quantum construction, which makes the reduction
iterable; and Ref.~\cite{GolowichLin2025} packages the result as two lemmas of
general applicability, reducing the alphabet from $q=p^{u}$ to any power of $p$
with a multiplicative $u^{O(1)}$ loss in the other parameters
\cite[Lemma 3.44]{GolowichLin2025}, and reducing the transversal-gate sparsity
from $w$ to $1$ with a multiplicative $w^{O(1)}$ loss
\cite[Lemma 3.45]{GolowichLin2025}. It is these two lemmas that
Ref.~\cite{GolowichTamoZhu2026} invokes to pass from its
$\Fq$-alphabet statement to a statement valid for every prime power including
$q=2$.

We emphasize that Construction~\ref{con:high_yield} does \emph{not} need any of
this. The non-Clifford content of the protocol does not reside in a transversal
gate at all: it resides in the gauging measurement, i.e.\ in measuring the
\gr-type twisted stabilizers and recording the outcome
(Sec.~\ref{sec:high_yield_fountain}). The trilinear form
$\int_{\eta_2}\red{a^1}\cup\blue{b^1}\cup\green{c^1}$ is still present --- it is
the Dijkgraaf--Witten twist --- but it is never executed as a physical
three-qudit gate; it is executed as a CZ-dressed stabilizer measurement whose
third slot is classical. Converting a non-Clifford \emph{gate} problem into a
Clifford \emph{measurement} problem is precisely the trade the gauging
construction makes, and alphabet reduction is easy on the Clifford side. What
does still require a large alphabet is the classical input itself, and nothing
above changes that.

\end{appendix}

\bibliographystyle{apsrev4-2}
\bibliography{mybib_merge, biblio}

\end{document}